\documentclass{article}[12pt]

\usepackage{amssymb}
\usepackage{amsmath}
\usepackage{enumerate}
\usepackage{latexsym}
\usepackage{mathrsfs}
\usepackage{amsthm}
\usepackage{verbatim}
\usepackage{graphicx}
\usepackage{epstopdf}
\usepackage{epsfig}
\usepackage{yhmath}

\usepackage{color}
\usepackage[dvipsnames]{xcolor}
\usepackage{calligra}

\usepackage{subfig}
\usepackage{float}
\usepackage{titlesec}
\usepackage{bm}
\usepackage[colorlinks,linkcolor=blue,anchorcolor=green,citecolor=red]{hyperref}
\usepackage[T1]{fontenc}

\usepackage{charter}
\usepackage{pythonhighlight}
\usepackage{fancyhdr}
\usepackage{amscd}
\usepackage{cleveref}

\usepackage{caption}
\usepackage{algorithm} 
\usepackage{algorithmic} 

\usepackage{physics}

\newtheorem{theorem}{Theorem}[section]
\newtheorem{corollary}[theorem]{Corollary}
\newtheorem{lemma}[theorem]{Lemma}
\newtheorem{proposition}[theorem]{Proposition}

\newtheorem{remark}[theorem]{Remark}
\newtheorem{definition}[theorem]{Definition}

\usepackage{apptools}
\usepackage{amsfonts}
\newcommand{\ff}{\ensuremath{\mathfrak{f}}}

\makeatletter
\@addtoreset{equation}{section}
\makeatother

\makeatletter 
\@addtoreset{equation}{section}
\makeatother  
\renewcommand\theequation{\oldstylenums{\thesection}%
	.\oldstylenums{\arabic{equation}}}

\usepackage{geometry}
\usepackage{listings}
\usepackage{setspace}

\allowdisplaybreaks 

\newcommand{\edit}[1]{{#1}} 

\newcommand{\cconst}{\frac{4\pi}{\sqrt{3}}}
\newcommand{\Icell}{\left[-\frac{2\pi}{\sqrt{3}}, \frac{2\pi}{\sqrt{3}}\right]}

\newcommand{\LtwoBrill}{L^2_k\!\left(\Icell\right)}

\newcommand{\Ucell}{\mathscr{C}_{m,n}}
\newcommand{\Cell}{\mathscr{C}}

\begin{document}
	
\title{Pseudo-magnetism in a strained discrete honeycomb lattice}
\author{Xuenan Li\thanks{Department of Applied Physics and Applied Mathematics, Columbia University, xl3383@columbia.edu} \text{ and }Michael I. Weinstein\thanks{Department of Applied Physics and Applied Mathematics,  and Department of Mathematics, Columbia University, miw2103@columbia.edu}}

\maketitle

\begin{abstract}
Slowly varying nonuniform strains of non-magnetic wave propagating media with honeycomb symmetry induce an effective- or {\it pseudo-magnetic} field, a phenomenon observed first in graphene \cite{castro2009electronic}, and later in photonic crystals, e.g. \cite{barczyk2024observation,barsukova2024direct,guglielmon2021landau}  and other physical settings. Starting with a discrete nearest-neighbor tight-binding model of a non-uniformly strained honeycomb  medium, we derive the continuum effective magnetic Dirac Hamiltonian governing the envelope dynamics of wave packets, which are spectrally localized near a Dirac point (conical band degeneracy) of the unperturbed honeycomb.  For  unidirectional deformations of bounded gradient, which preserve translation invariance along the ``armchair'' direction, we prove the existence of time-harmonic states which are plane-wave like (pseudo-periodic) along the armchair direction and exponentially localized transverse to it. We also obtain the leading order multi-scale structure of such modes for small deformation gradients. Their transverse localization are determined by the eigenstates of a one dimensional effective Dirac Hamiltonian. Our rigorous results apply to deformations which induce an approximate  perpendicular constant pseudo-magnetic field (Landau gauge), and yield states
with nearly flat band (Landau level) spectrum and hence very high density of states. In contrast, the analogous deformation which preserves translations in the zigzag direction induces no such localization. 
Corroborating numerical simulations for the different deformation types are presented. 
\end{abstract}


\tableofcontents

\section{Introduction}

Graphene and its synthetic variants are spatially periodic two-dimensional  media, whose material properties have the symmetries of a honeycomb tiling of $\mathbb R^2$. Wave propagation in these media is governed by wave equations, continuum (Schr\"odinger, Maxwell,\dots) or discrete, which are invariant under these symmetries.  Central to the remarkable properties of honeycomb media is the presence of {\it Dirac points}.  These are energy/quasi-momentum pairs at which two consecutive (Floquet-Bloch) dispersion surfaces touch conically. Wave packet initial conditions  which are spectrally localized to a size $\delta$ neighborhood of a Dirac point, give rise to a time-evolving wave packet whose envelope, on a time scale $\delta^{-2+\epsilon}$, is governed by a 2D effective Dirac equation. Thus, such quasiparticle states  evolve as though they are massless relativistic particles.
 A mathematically rigorous analysis in the context of the underlying continuum Schr\"{o}dinger equation with a honeycomb lattice potential was given in   \cite{fefferman2014wave}; see also \cite{lee2019elliptic}. 

Shortly after the graphene's discovery, it was observed that slowly varying non-uniform deformations of graphene give rise to a {\it pseudo-magnetic effect}
\cite{castro2009electronic}. Based on a tight-binding (discrete) models, it was deduced that the wave packet envelope satisfies an effective magnetic Dirac equation. Thus, in the absence of any external magnetic field, quasi-particles modeled by wave packets move as though they are charged particles in a  magnetic field. Further, when the deformation is chosen to induce an effective magnetic field  which is constant and perpendicular to the two-dimensional plane of the material (schematic of Figure \ref{fig:honeycomb-intro}(c)), the spectrum of waves consists of highly degenerate states, which arise as  ``Landau levels'' in effective magnetic Dirac Hamiltonian (schematic of Figure \ref{fig:honeycomb-intro}(d)).  The significance of Landau level and other ``flat band'' spectra is that it can be leveraged to enhance strong interactions and nonlinear effects. The rich phenomena associated with Landau Levels in condensed matter physics, have inspired investigations, in diverse physical settings, to realize the pseudo-magnetic effect.

A particular direction concerns photonics and, in particular, photonic crystals. While the tight-binding models explain 
pseudo-magnetism in condensed matter settings, this approximation does not valid in photonic settings. In \cite{guglielmon2021landau} the effective dynamics was studied in the setting of general continuum wave equations, and where both magnetic and effective electric potentials arise in the effective Dirac Hamiltonian. Experimental observation of photonic Landau levels in 
 3D photonic crystal was observed in \cite{barczyk2024observation,barsukova2024direct}. Further, an extension of  the 2D theory to 3D gives an excellent agreement with experiments \cite{barsukova2024direct}.

Our goal in this article, is to present a rigorous derivation of wave localization due to a non-uniform deformation induced pseudo-magnetic field.  We study this question in the context of class of  tight-binding with general slowly varying hopping coefficients. Our results, stated more precisely below, apply to unidirectional deformations with bounded gradients. In this setting  we prove the validity of the magnetic Dirac operator, explore (Landau gauge type) deformations which give rise to approximate Landau level spectra, as well as deformations that produce other behaviors. In particular, we prove that the localized states of an effective asymptotic Dirac operator, seed localized states of the underlying tight-binding Hamiltonian. Finally, we provide numerical corroboration of our results. 

The family of tight-binding models studied in this article captures the essential physical behavior in the condensed matter setting; see e.g. \cite{guinea2010energy,settnes2016pseudomagnetic}.  As discussed in \cite{guglielmon2021landau}, tight-binding 
  is not the appropriate approximation in, for example, photonic crystals. 
   We remark however that our discrete model  captures the main properties of waves which are spectrally localized near the Dirac point for both Schr\"odinger and Helmholtz equations, and is more amenable to full rigorous analysis since one need not control arbitrarily high energies; indeed the spectrum of the tight-binding Hamiltonian is bounded. Our plan is to extend the present work to  continuum models in a future work.


Before summarizing our main results, we next present a more precise formulation of the tight-binding model.

\begin{figure}[!htb]
    \centering
    \subfloat[]{
		\includegraphics[height=1.2in]{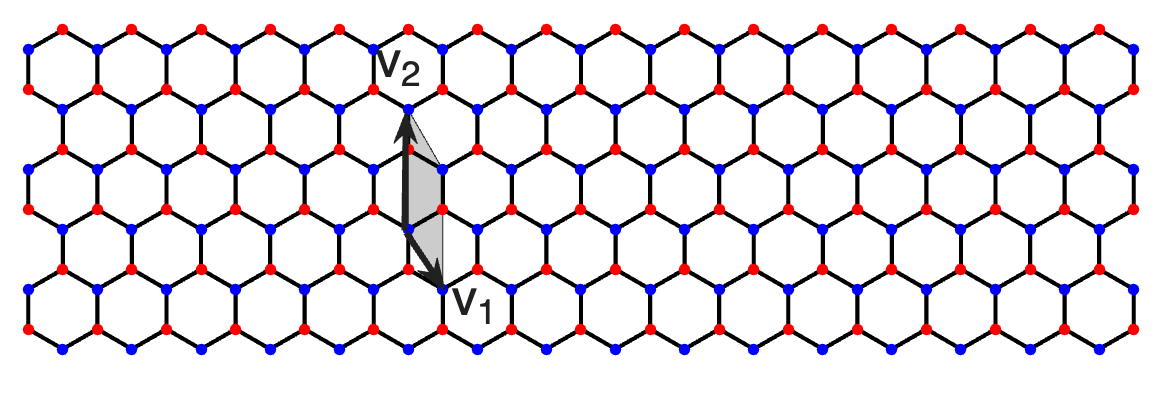}\label{fig:honeycomb-ac-undeformed}
	}\hfil
    \subfloat[]{
		\includegraphics[height=1.3in]{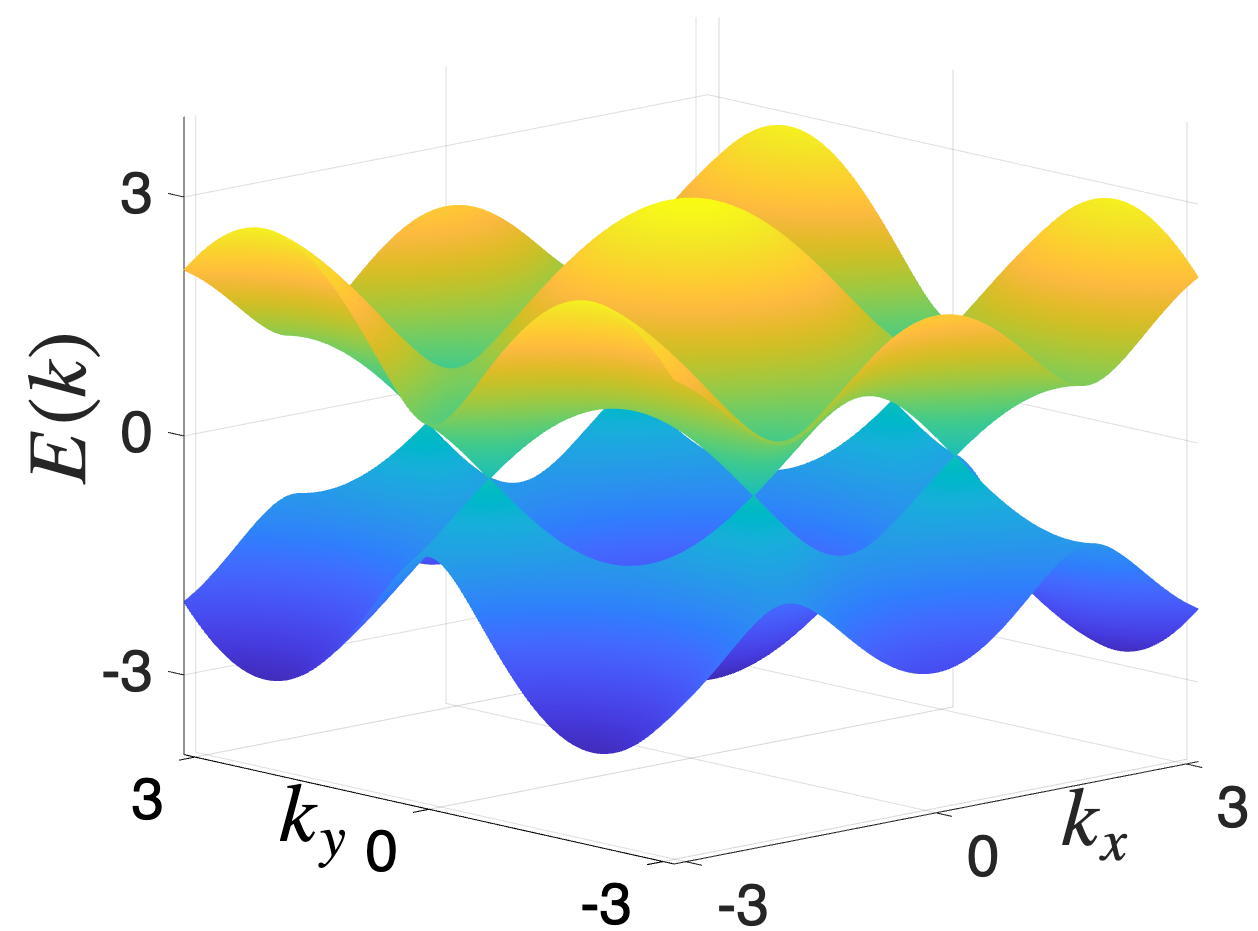}\label{fig:honeycomb-2d-band}
	}\\
    \subfloat[]{
		\includegraphics[height=1.3in]{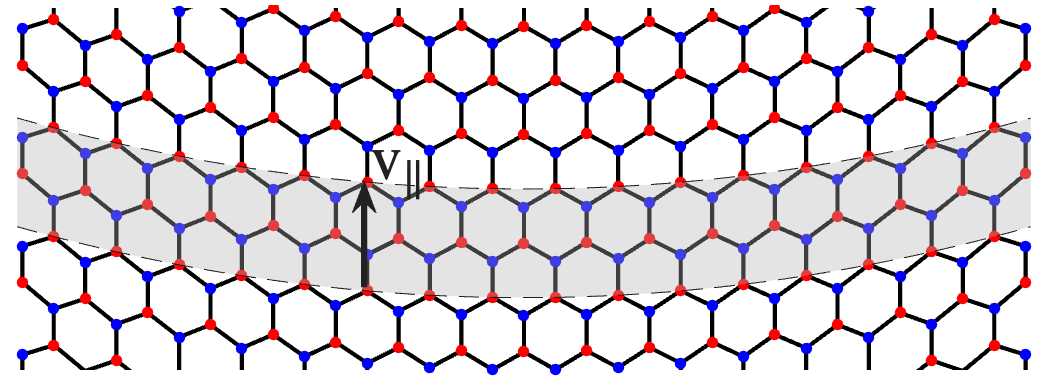}
	}\hfil
    \subfloat[]{
		\includegraphics[height=1.3in]{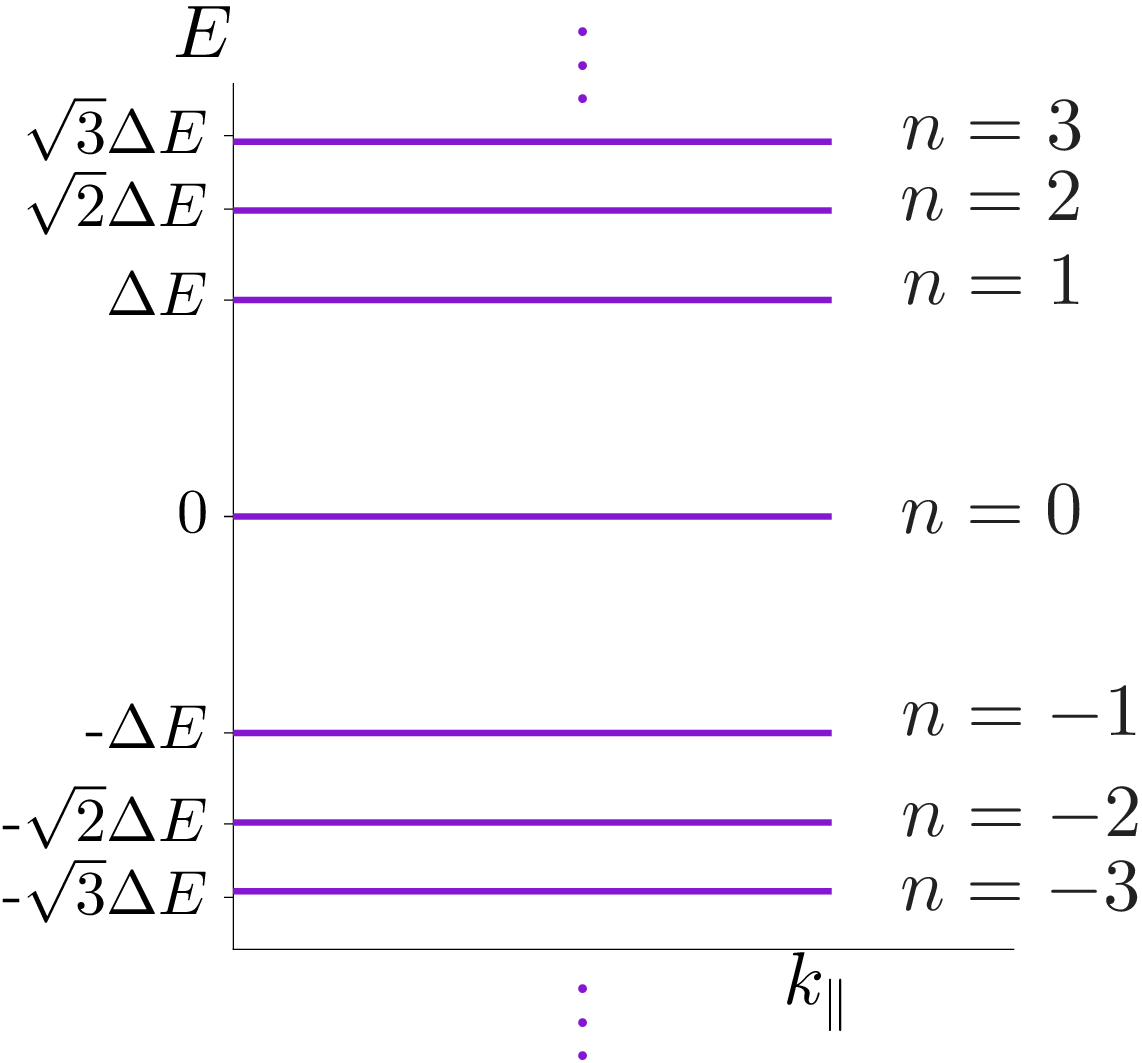}
	}
	\caption{The undeformed and deformed honeycomb and their band structures: (a) the undeformed honeycomb with the unit cell highlighted by shading; (b) the band structure of the tight-binding Hamiltonian $H^0$ for the undeformed honeycomb with Dirac points occurring at the intersection of the two dispersion surfaces, at the six vertices of the Brillouin zone $\mathcal{B}$; (c) the deformed honeycomb with a slowly-varying quadratic deformation $\bm u(\delta \bm X) = (0,\delta^2X_1^2)^T$; (d) Landau levels generated by the effect of strain in (c) on the spectrum.}
	\label{fig:honeycomb-intro}
\end{figure}

\subsection{Tight-binding model for a slowly varying deformation of the honeycomb}
Let $A_{m,n}, B_{m,n}$ denote the vertices of the honeycomb lattice; see  Figure \ref{fig:honeycomb-intro}a and the more detailed discussion in Section  \ref{sec:dirac-pt-honeycomb}. Each $A-$ site has $3$ nearest neighbor $B-$sites and each $B-$site has $3$ nearest neighbor $A-$sites. The positions of the three nearest $B-$sites to $A_{m,n}$ are $B_{m+m_\nu,n+n_\nu}$, $\nu=1,2,3$ and the three nearest $A-$sites to $B_{m,n}$ are $A_{m-m_\nu,n-n_\nu}$ with $\nu=1,2,3$. We take $|A_{m,n}-B_{m,n}|=1$, for all $(m,n)\in\mathbb Z^2$, to be the lattice spacing between nearest neighbors. 

Introduce a smooth displacement field  $\bm x\mapsto \bm u(\bm x)$.
 For $\delta>0$ and small we define a slowly-strained honeycomb lattice whose vertices are at the positions
\begin{align*}
    \widetilde A_{m,n}=A_{m,n}+\bm u(\delta A_{m,n}),\qquad \widetilde B_{m,n}=B_{m,n}+\bm u(\delta B_{m,n})\ .
\end{align*}
 The undistorted honeycomb lattice, $\delta=0$, with a choice of period lattice vectors and period cell,  is displayed in Figure
 \ref{fig:honeycomb-intro}a and an example of a weakly and non-uniformly (quadratically) strained 
lattice is displayed in Figure \ref{fig:honeycomb-intro}b.

A wave function, $\bm \psi$, is an assignment of a complex amplitude to each site. We group the amplitudes by cell and write:
\[ \bm \psi = (\bm \psi_{m,n})_{(m,n)\in\mathbb Z^2},\quad\textrm{where}\quad \bm \psi_{m,n}=\begin{pmatrix}
    \psi_{m,n}^A\\ \psi_{m,n}^B
\end{pmatrix} \in \mathbb{C}^2.\]

The  strength of interaction between the amplitude at a given site and the amplitudes at the three nearest neighbor sites, depends on the distance between nearest neighbor sites. For two nearest neighbor sites $\widetilde{A}$ and $\widetilde{B}$, we define:
\begin{equation} t(\widetilde{A}, \widetilde{B}) = h(|\widetilde{A}- \widetilde{B}|) = t(\widetilde{B}, \widetilde{A}), \label{eq:tAB-def}
\end{equation}
 where $h$ is a specified smooth scalar function.  We set $t_0=h(1)$ and $t_1=h'(1)$. 

The nearest neighbor tight-binding model, $H^\delta$ (undeformed, $\delta=0$ and deformed, $\delta>0$)  acts on $\bm \psi\in l^2(\mathbb Z^2;\mathbb C)$
 is defined by: 
\begin{align}
    H^\delta \bm \psi = \begin{pmatrix}
        (H^{\delta} \bm \psi)_{m,n}^A\\
        (H^{\delta}\bm \psi)_{m,n}^B
    \end{pmatrix} =  \begin{pmatrix}
       \sum_{\nu=1}^3\  t(\widetilde{A}_{m,n}, \widetilde{B}_{m+m_\nu,n+n_\nu}) \psi_{m+m_\nu,n+n_\nu}^B\\
       \sum_{\nu=1}^3\ t(\widetilde{B}_{m,n}, \widetilde{A}_{m-m_\nu,n-n_\nu}) \psi_{m-m_\nu,n-n_\nu}^A
    \end{pmatrix}\ .\label{eq:Hdelta}
\end{align}

For $\delta=0$, $\widetilde{A}_{m,n}={A}_{m,n}$ and $\widetilde{B}_{m,n}={B}_{m,n}$. Hence, for the undeformed lattice, $t({A}_{m,n},{B}_{m,n})=h(1)$; all hopping coefficients of the undeformed Hamiltonian $H^0$ are equal. This yields $H^0$, the tight-binding model for the undeformed honeycomb with uniform hopping coefficient $t_0$. The Hamiltonian $H_0$ has a band structure, deducible via the two-dimensional discrete Fourier transform, consisting of two band dispersion surfaces which meet conically at high symmetry quasimomenta, the vertices of the hexagonal Brillouin zone; see Figure \ref{fig:honeycomb-intro}(b).

For the deformed tight-binding Hamiltonian ($0<\delta\ll1$), as hopping coefficients in \eqref{eq:Hdelta} we replace $t(\widetilde{A}_{m,n}, \widetilde{B}_{m+m_\nu,n+n_\nu})$ and $t(\widetilde{B}_{m,n}, \widetilde{A}_{m-m_\nu,n-n_\nu})$ in \eqref{eq:Hdelta} by their
 $O(\delta)$-corrections. For example, the in-cell hopping coefficients are obtained by expansion of $\delta\mapsto h(|\widetilde{A}_{m,n}- \widetilde{B}_{m,n}|) $ through order $\delta$ and dropping terms of order $ O(\delta^2)$:
\begin{align*}
    t(\widetilde{A}_{m,n}, \widetilde{B}_{m,n}) &\equiv t_0 + \delta t_1 \bm{e}_1^T \nabla_{\bm X} \bm{u}(\delta A_{m,n}) \bm{e}_1.\
\end{align*}
Analogous expressions for the hoppings are displayed below in   
\eqref{eqn:hopping-def-honeycomb}. The resulting Hamiltonian, $H^\delta$, varies on  two spatial scales: the $O(1)$ length scale of the lattice, and the long $O(\delta^{-1})$ length scale of the lattice deformations.


\subsection{Summary of results}

\begin{itemize}
    \item Using the multiple scale structure of our Hamiltonian, we first construct formal asymptotic expansion of eigenpairs $(\bm \psi^\delta,E^\delta)$ eigenvalue problems $H^\delta \bm \psi=E\bm \psi$ subject to appropriate homogeneous boundary conditions at $\infty$. Eigenmodes have a two-scale wave packet structure:
     \begin{equation}\label{eq:formal-mode}
   \bm \psi_{m,n}^\delta = e^{i\bm K\cdot\bm x} \bm \Phi(\bm X;\delta)\Big|_{\bm X = \delta ( m\bm v_1 + n\bm v_2)},\quad \textrm{where}\quad   (m,n)\in\mathbb Z^2, 
          \end{equation}
 and are constructed as a formal series $\bm \Phi(\bm X;\delta)=\sum_{j\geq 0} \delta^j \bm \Phi_j(\bm X) $.  If, for example, $\bm X\mapsto \bm \Phi(\bm X;\delta)$ decays sufficiently rapidly at $|\bm X|\mapsto \infty$, then the discrete Fourier transform of \eqref{eq:formal-mode} is spectrally concentrated in a $\delta-$ width neighborhood to the high symmetry quasi-momentum $\bm K$.
      
  At leading order in $\delta$, we find that 
  \[ \bm \psi_{m,n}^\delta \approx e^{i\bm K\cdot\bm x} \bm \Phi_0(\bm X)\Big|_{\bm X = \delta ( m\bm v_1 + n\bm v_2)}\quad \textrm{and}\quad E^\delta \approx \delta E_1
  \]
  where -- up to a unitary change of variables --  $(\bm \Phi_0(\bm X), E_1)$ is  an eigenpair, of an effective magnetic Dirac operator:
  \begin{align}\label{eq:D_A}
  \mathcal{D}_{\bm A} &= \big(-i\partial_{X_1} -A_1(X_1,X_2)\big)\sigma_1 + \big(-i\partial_{X_2} -A_2(X_1,X_2)\big)\sigma_2\ ,\end{align}
with effective magnetic potential $\bm A_{\text{eff}} = (A_1,A_2)$, given explicitly in terms of the deformation $u(\bm X)$. 


    \item Motivated by the case of Landau gauge for a constant perpendicular magnetic field, we focus on   unidirectional deformations $\bm u=(0,d(X_1))^T$, for which the deformed structure are translation invariant in the $\bm v_2$ (vertical) direction. Hence, the spectral theory of $H^\delta$ acting in the space $l^2(\mathbb Z^2;\mathbb{C}^2)$ is reducible to a family of $q_\parallel-$ pseudo-periodic spectral problems, parameterized by $q_\parallel\in[-\pi,\pi)$.  We are particularly interested in the discrete eigenvalues. These are   nontrivial solutions of the form $\bm \psi_{m,n} =e^{in q_\parallel}\  \bm \Psi_m(q_\parallel)$, where
   \begin{align}
   \big( H^\delta(q_\parallel) \bm \Psi \big)_m  &= E(q_\parallel) \ \bm \Psi_m,\quad \bm \Psi(q_\parallel)=(\bm \Psi_m)_{m\in\mathbb Z}\in l^2(\mathbb Z;\mathbb C^2) .
   \label{eq:kpar-EVP}
 \end{align} 
Given a continuous family of solutions $q_\parallel\in I\mapsto \bm \Psi(q_\parallel)$, for $q_\parallel$ varying over a subinterval $ I\subset [-\pi,\pi)$, we may form continuous superposions to produce states which have full spatial localization over the lattice; equivalently, such states are in $l^2(\mathbb Z^2;\mathbb{C}^2)$. 


For the $q_\parallel-$ pseudo-periodic eigenvalue problem \eqref{eq:kpar-EVP} with $q_\parallel = \delta k_\parallel$, our multiple scale expansion yields approximate eigenpairs $(\bm \Psi^\delta (k_\parallel),E^\delta(k_\parallel))$ 
where $\bm \Psi^\delta(k_\parallel)$ is $k_\parallel-$ pseudo-periodic in the $\bm v_2$ direction and spatially
localized in transverse directions. Up to coordinate change, the mode profile has the structure 
\begin{subequations}\label{eqn:ansatz-1d-intro}
   \begin{align}
    &\bm \psi^{\rm approx}_{m,n}(k_\parallel) = \delta^{\frac{1}{2}}e^{i \bm K \cdot \bm x} e^{i \frac{k_\parallel}{2 \pi} \bm X \cdot \bm a_2} \bm \Psi^{\#}_0\left(\frac{\bm X \cdot \bm a_1}{|\bm a_1|}; k_\parallel\right) \bigg|_{\substack{\bm x = \Cell_{m,n}\\\bm X = \delta \Cell_{m,n}}}\\
    &E^{\rm approx}(k_\parallel) = \delta E_1(k_\parallel), \qquad |k_\parallel| \lesssim 1,
\end{align}
\end{subequations}
where $\big(\bm \Psi_0^{\#}(X_1;k_\parallel), E_1(k_\parallel)\big)$ is an $L^2_{X_1}(\mathbb R)-$ eigenpair of a one-dimensional 
Dirac operator arising from \eqref{eq:D_A}: 
    \begin{align}\label{eqn:1d-dirac-intro}
        \mathcal{D}(k_\parallel) := \frac{3}{2} \Big[-i\partial_{X_1} \sigma_1 + \kappa(X_1;k_\parallel) \sigma_2\Big], \qquad \kappa(X_1;k_\parallel) = \frac{k_\parallel}{3} - \frac{t_1}{2} d'(X_1)\ ;
    \end{align}
    see also \eqref{eqn:dirac-1d}. For $k_\parallel$ in an interval about $0$, the $L^2(\mathbb R)$ spectrum of $\mathcal{D}(k_\parallel)$ consists of a continuous part, with a spectral gap containing an odd number of eigenvalues. The eigenvalue curve $k_\parallel\mapsto E_0(k_\parallel)$, for which $E_0(0)=0$, is ``topologically protected''; it cannot be removed by a spatially localized perturbation of $\kappa$.
    
    \item Introduce $l^2_{k_\parallel}$, the space of all $\bm \psi=(\bm \psi_{m,n})_{m,n\in\mathbb Z}$ which satisfy:
    \begin{subequations}\label{eqn:intro-lk}
        \begin{align}
   &\bm \psi^\delta_{m,n+1} =e^{i(\delta k_\parallel)} \bm \psi^\delta_{m,n},\quad \textrm{(pseudo-periodicity)} \label{eqn:intro-lk-1}\\
        &m\in\mathbb Z \longmapsto  
        \bm \psi_{m,n}\in l^2_m(\mathbb Z)\ \textrm{(transverse localization)} .\label{eqn:intro-lk-2}
    \end{align}
    \end{subequations}
   Our main theorem is Theorem \ref{thm:main}, which states that for $0<\delta<\delta_0$ sufficiently small and $|k_\parallel|\lesssim 1$ the following holds: any eigenpair of $\mathcal{D}(k_\parallel)$  gives rise to an eigenpair of the {\it $l^2_{k_\parallel}$-eigenvalue problem}
        \begin{align}\label{eqn:intro-eigen-1d}
        &(H^\delta \bm \psi)_{m,n} = E^\delta \bm \psi^\delta_{m,n},\quad \bm \psi\in l^2_{k_\parallel}
        \end{align}
    Moreover, we have the following bound for the correctors:
    \begin{subequations}\label{eqn:intro-corrector-est}
        \begin{align}
        &\sup_{n \in \mathbb{Z}}\Big\| \bm \psi^\delta_{m,n}(k_\parallel)-\ \bm \psi^{\rm approx}_{m,n}(k_\parallel) \Big\|_{l^2_m(\mathbb{Z};\mathbb{C}^2)} \lesssim \delta, \label{eqn:intro-corrector-est-1}\\
        &|E^\delta(k_\parallel) - \delta E_1^{\rm approx}(k_\parallel)| \lesssim \delta^2.\label{eqn:intro-corrector-est-2}
    \end{align}
    \end{subequations}
  Here, the leading approximations to $(\psi^\delta,E^\delta)$ are displayed in  \eqref{eqn:ansatz-1d-intro}. 

    \item We perform numerical simulations for two classes of deformations of the underlying honeycomb structure: (1) armchair oriented deformations (see e.g. Figure \ref{fig:ac-edge-def}) and zigzag oriented   deformations (see e.g. Figure \ref{fig:zz-edge-def}). For these configurations, we investigate the spectral characteristics of the resulting Hamiltonian in a neighborhood of the Dirac point. 

    For the quadratic deformation along the AC orientation, which is our main focus, the numerical eigenvalues of the Hamiltonian $H^\delta$ in \eqref{eqn:intro-eigen-1d} are nearly independent of the quasi-momentum $q_\parallel$ near the Dirac point $q_\parallel=0$ (see Figure \ref{fig:intro-numerics}(b)), indicating the presence of Landau levels. The eigenmodes are localized in the bulk, as shown in Figures \ref{fig:intro-numerics}(b-c). For the quadratic deformation along the ZZ orientation, the Dirac point is not opened into flat bands; the crossing persists from one side (see Figures \ref{fig:zz-def}(b) and (d)). This agrees with the effective Hamiltonian being purely continuous spectrum near the Dirac point, so no discrete localized modes emerge.
    
    \begin{figure}[!htb]
	\centering
    \subfloat[]{
	\includegraphics[height=1.75in]{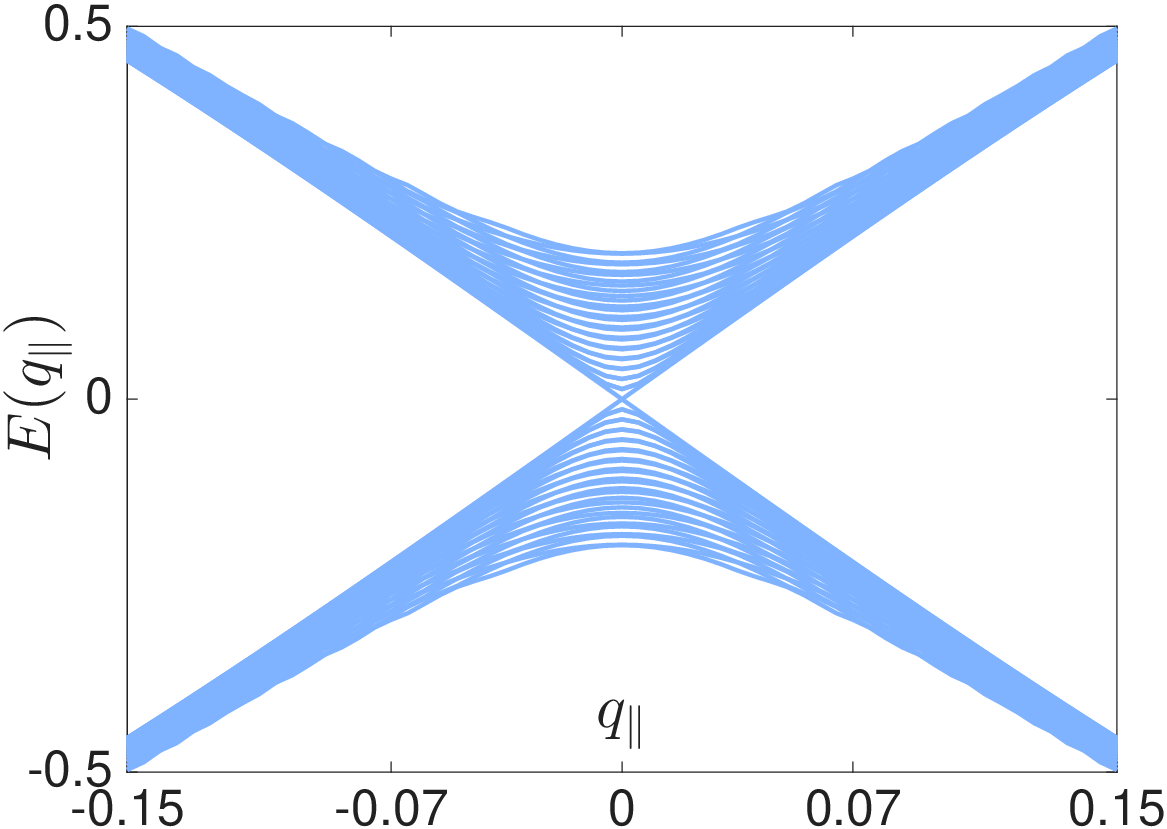}
	}\hfil
	\subfloat[]{
		\includegraphics[height=1.75in]{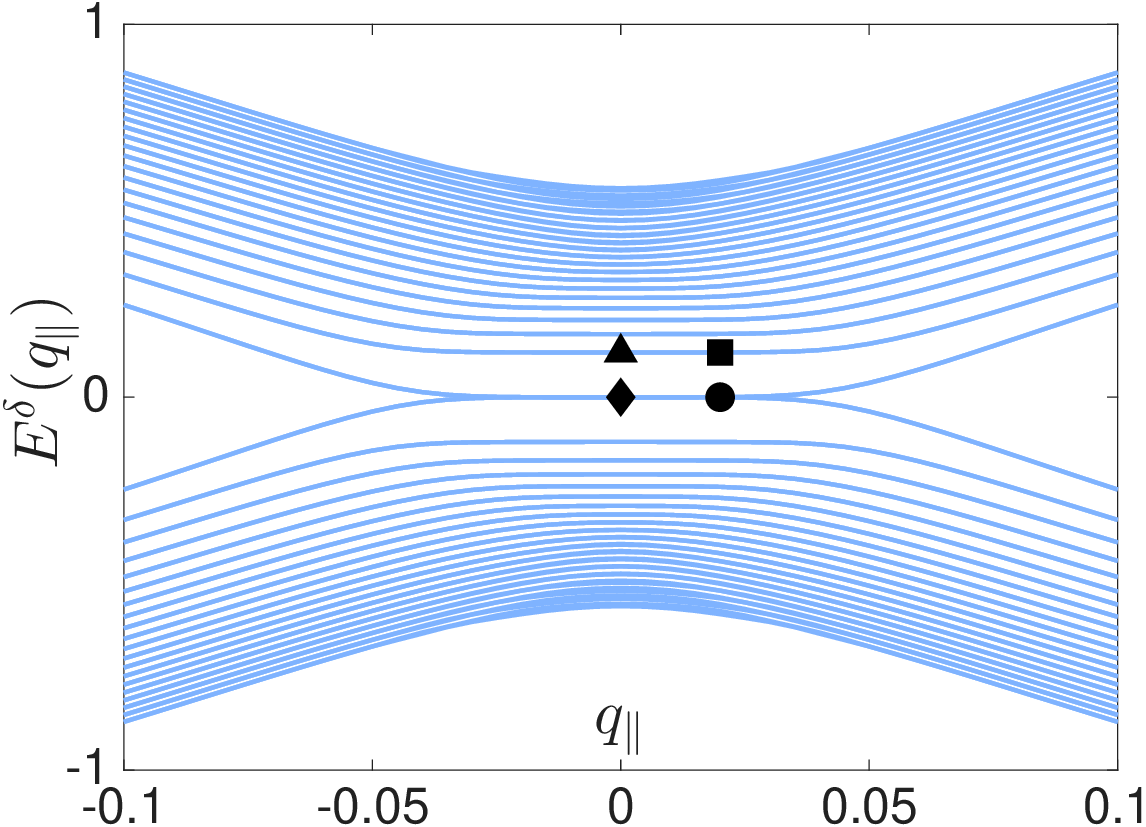}
	}\\
    \subfloat[]{
		\includegraphics[height=1.75in]{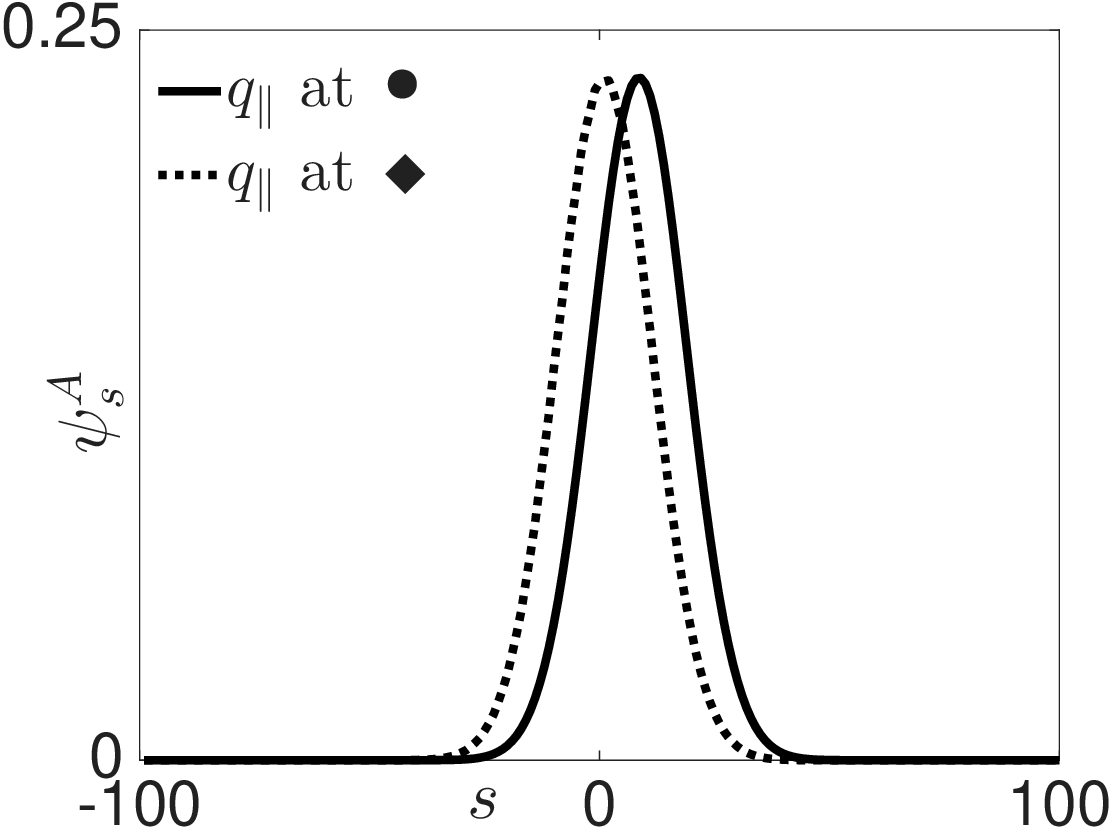}
	}\hfil
	\subfloat[]{
		\includegraphics[height=1.75in]{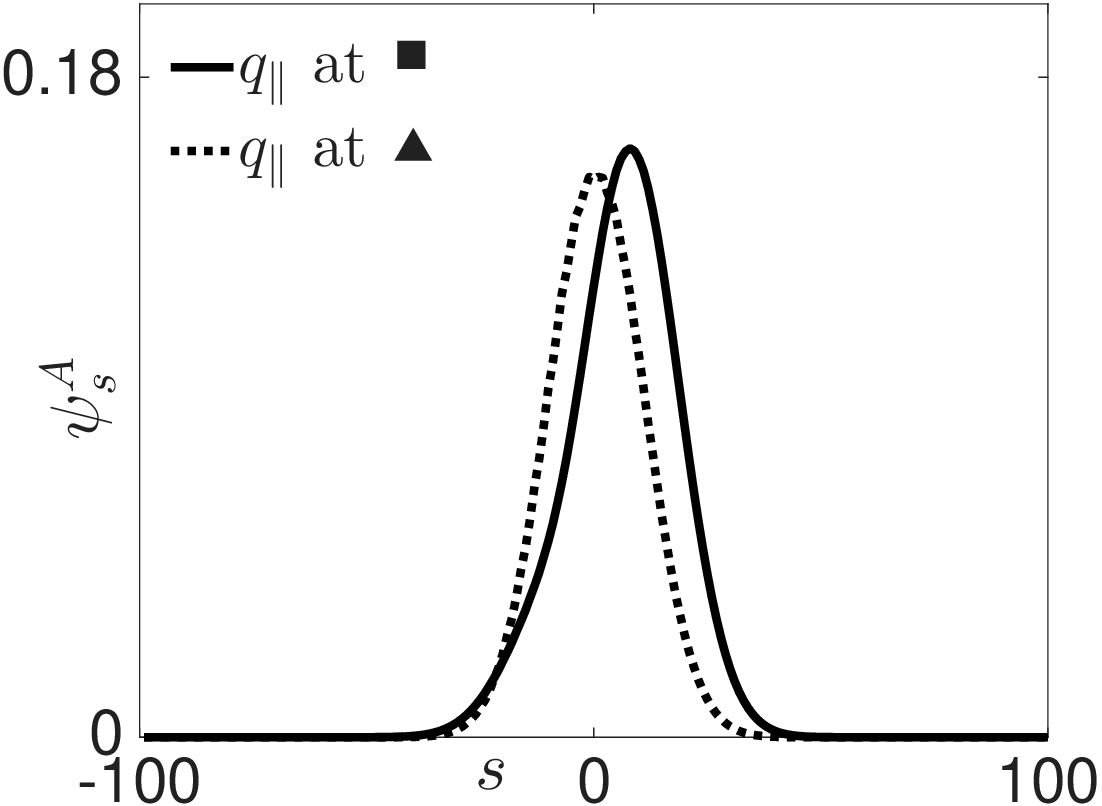}
	}
    \caption{(a) Conical (Dirac) point occurs at $q_\parallel = 0$; (b) Nonuniform deformation $\bm u = (0,X_1^2)^T$ (approximation to Landau gauge vector potential) creates a bulk gap with quite flat (Landau level like) localized mode eigenvalue curves $E^\delta(q_\parallel)$ in the gap; (c) Localized eigenstates corresponding to the diamond and circle in (b); (d) Localized eigenstates corresponding to the triangle and square in (b).}
	\label{fig:intro-numerics}
\end{figure}
\end{itemize}

\subsection{Related work}
 In \cite{zhang2025pseudo} it is shown that the dynamics of a class of wave-packets of deformed continuum honeycomb media is governed, on large but finite time scales, by the effective magnetic Dirac operator, derived in \cite{guglielmon2021landau}; see also \cite{barsukova2024direct}. 
   The approach taken is similar to that in \cite{fefferman2014wave} for the dynamics of wavepackets in undistorted honeycomb structures.
The article \cite{bal2026macroscopic} develops, via pseudo-differential calculus, a framework for studying the large (finite) time dynamics in discrete systems with slowly varying hopping coefficients.

Spatial localization  in deformed honeycomb structures, as we show arises from the eigenstates of an asymptotic (effective) Dirac Hamiltonian with a domain wall potential. Analogous localization phenomena arise in the study of {\it edge states} of honeycomb or other media with band degeneracies, when perturbed along a line defect or ``edge''; see e.g. \cite{chaban2025edge,drouot2020edge,drouot2021bulk,drouot2020defect,fefferman2017topologically,lee2019elliptic}.


 \subsection{Future directions} 
 \begin{enumerate}
     \item Our current results are based on a tight-binding model with slowly varying coefficients. A future direction is to generalize this setting to a continuum theory that captures not only the low-energy regime but also higher-energy bands.
     
     \item We show point spectrum of an effective Dirac operator give rise to 
     \edit{$l^2_{k_\parallel}$-eigenstates} of the underlying deformed tight-binding Hamiltonian, $H^\delta$. Do all such localized states of the discrete operator arise in this manner? We believe this to be the case, and that a quasi-mode approach, like that used in \cite{drouot2020edge}, would be applicable.

     \item In the present tight-binding model, only nearest-neighbor couplings are included, and the resulting effective Dirac Hamiltonian contains only a pseudo-magnetic potential. The addition of next-nearest-neighbor couplings (see e.g. the Haldane model in \cite{haldane1988model}), would give rise to effective Dirac Hamiltonians with a pseudo-electric potential as well (see e.g. \cite{guglielmon2021landau}).

     \item \textit{Spatially compact deformations and scattering resonances:} In practice, strain fields are applied over a finite region (due to sample size and fabrication limits). Therefore,  it is natural to consider the case of strain which is confined to a large compact set. In this case, we do not expect the effective Dirac Hamiltonian to support localized states. We do however expect {\it scattering resonances} (see e.g. \cite{dyatlov2019mathematical,tang2000resonance}) which capture the effect of slow energy leakage out of the region where the structure is deformed. An analytical and numerical investigation would be of great interest. 

    \item \textit{Triaxial deformation:} 
    A second deformation, unlike the unidirectional quadratic deformation $\bm u (X_1, X_2) = (0,X_1^2)^T$, which also generates a constant pseudo-magnetic field is the {\it triaxial deformation}  \cite{guinea2010energy,settnes2016pseudomagnetic}, for defined by $\bm u(X_1,X_2) = (2X_1 X_2, X_1^2 - X_2^2)^T$. This deformation is rotational symmetric and gives rise to the effective magnetic potential $\bm A_\text{eff} = 2t_1(-X_2, X_1)$ in \eqref{eqn:eff-mag}. This is the well-known  \textit{symmetric gauge} magnetic potential for a constand perpendicular magnetic field. The spectral properties of the effective Hamiltonian $\mathcal{D}_{\bm A}$ in \eqref{eqn:dirac-mag} for the symmetric gauge have been extensively studied; the spectrum of $\mathcal{D}_{\bm A}$  consists is discrete spectrum and consists of  eigenvalues of infinite multiplicity (Landau levels). We believe our strategy for unidirectional deformations can be extended to bounded gradient regularizations of the symmetric gauge. 
     



 \end{enumerate}

\subsection{Outline of the article}
The remainder of this article is organized as follows. Section \ref{sec:dirac-pt-honeycomb} reviews the discrete honeycomb lattice and the undeformed tight-binding Hamiltonian. Section \ref{sec:derivation} introduces the slowly strained tight-binding model and derives the effective magnetic Dirac description via a formal discrete multiscale expansion. In Section \ref{sec:uni-displacement}, we introduce unidirectional deformations, which preserve vertical periodicity, and reduce the eigenvalue problem to a one-dimensional one. In Section \ref{sec:main-thm}, we state our main Theorem, Theorem \ref{thm:main}, and outline the sketch of the proof. In Section \ref{sec:num-comparison}, we present numerical results for quadratic deformations and their linear regularizations. We prove Theorem \ref{thm:main} in detail in Section \ref{sec:proof}. We conclude our paper with a brief summary and some future directions in Section \ref{sec:conclusion}.

\subsection{Notation and conventions}
\begin{enumerate}[(1)]
	\item $\bar{z}$ denotes the complex conjugate of $z \in \mathbb{C}$.
	
	\item $x \lesssim y$ means that there exists a constant $C > 0$ such that $x \leq Cy$.
	
	\item $l^2(\mathbb{Z}; \mathbb{C}^2) = \{(x_n)_{n \in \mathbb{Z}} \: | \: x_n \in \mathbb{C}^2 \text{ and } \sum_{n \in \mathbb{Z}} |x_n|^2 < \infty\}$ and $l^2(\mathbb{Z}^2; \mathbb{C}^2) = \{(x_{m,n})_{m,n\in \mathbb{Z}} \: | \: x_{m,n} \in \mathbb{C}^2 \text{ and } \sum_{m,n \in \mathbb{Z}} |x_{m,n}|^2 < \infty\}$.
	
	\item The Schwartz space $\mathcal{S} = \mathcal{S}(\mathbb{R},\mathbb{C})$: $f(x) \in \mathcal{S}(\mathbb{R})$ if $f\in C^\infty(\mathbb{R})$ and for every pair of multi-indices $\alpha,\beta\in \mathbb{N}_0^n$, we have $\sup_{x\in\mathbb{R}} \, |x^\alpha \,\partial^\beta f(x)| < \infty$, where $x^\alpha := x_1^{\alpha_1}\cdots x_n^{\alpha_n}$ and $\partial^\beta := \partial_{x_1}^{\beta_1}\cdots \partial_{x_n}^{\beta_n}$.
	
	\item Hilbert space $H^s(\mathbb{R}) = H^s(\mathbb{R}, \mathbb{C}) = \{f : \mathbb{R} \mapsto \mathbb{C} \:|\: \int_\mathbb{R} (1+|k|^2)^s|\widehat{f}(k)|^2 \: dk < \infty\}$.
	
	\item $L^{2,1}(\mathbb{R}) = \Big\{ \widehat{f}(\xi) \:  \big| \: \int_\mathbb{R} (1+|\xi|^2) \: |\widehat{f}(\xi)|^2 \: d\xi < \infty\Big\}$ and the norm in $L^{2,1}(\mathbb{R})$ space is defined as
	\begin{align}
		\|\widehat{f}\|_{L^{2,1}(\mathbb{R})}^2 = \int_\mathbb{R} (1+|\xi|^2) |\widehat{f}(\xi)|^2 \: d\xi.\label{eqn:L21-space}
	\end{align}
    By Plancherel, the condition $\widehat f\in L^{2,1}(\mathbb R)$ is equivalent to $f\in H^1(\mathbb R)$.


    \item The space of $C_b^\infty(\mathbb{R})$ consists of functions on $\mathbb{R}$ whose derivatives of all orders are bounded, i.e.
    \begin{align}
        C_b^\infty(\mathbb{R}) = \Big\{ f \in L^\infty(\mathbb{R}) \ \Big| \ f^{(k)} \in L^\infty(\mathbb{R}), \text{ for all } k \geq 1\Big\}.\label{eqn:C_b_inf}
    \end{align}
	
	\item Pauli matrices $\sigma_1, \sigma_2, \sigma_3$ are 
	\begin{equation}\label{eqn:pauli}
		\sigma_1 = \begin{pmatrix}
			0 & 1 \\
			 1 & 0
		\end{pmatrix}, \quad \sigma_2 = \begin{pmatrix}
		0 & -i \\
		i & 0
		\end{pmatrix}, \quad \sigma_3 = \begin{pmatrix}
		1 & 0\\
		0 & -1
		\end{pmatrix}.
	\end{equation}
\end{enumerate}

\paragraph{Acknowledgement} This work was supported in part by NSF grants: DMS1908657, DMS-1937254 and DMS-2510769, and Simons Foundation Math+X Investigator Award \#376319. Part of this research was carried out during the 2023-24 academic year, when MIW was a Visiting Member in the School of Mathematics, Institute of Advanced Study, Princeton, supported by the Charles Simonyi Endowment, and a Visiting Fellow in the Department of Mathematics at Princeton University.
\section{
Tight-binding model on the honeycomb lattice and Dirac points}\label{sec:dirac-pt-honeycomb}

In this section we discuss the discrete honeycomb lattice and the associated tight-binding model of bulk (undeformed) graphene. This operator is well-known to have Dirac points; its  two dispersion surfaces (bands) which touch conically at the vertices of the Brillouin zone;  see e.g. \cite{castro2009electronic,fefferman2024discrete,wallace1947band}. 

\subsection{Triangular lattice} 


The equilateral triangular lattice is the set $ \Lambda = \mathbb{Z} \bm{v}_1 \oplus \mathbb{Z} \bm{v}_2$ of all integer linear combinations of the lattice vectors
\begin{equation}\label{eqn:v1v2-honeycomb}
	\bm{v}_1 = \Big(\frac{\sqrt{3}}{2}, -\frac{3}{2}\Big)^T, \qquad \bm{v}_2 = \Big(0,3\Big)^T.
\end{equation}
Introduce the dual lattice basis $\{\bm a_1,\bm a_2\}$, 
\begin{align}
    \bm{a}_1 = \frac{4\pi}{\sqrt{3}}\left(1,0\right)^T, \qquad \bm{a}_2 = \frac{4\pi}{3}\left(\frac{\sqrt{3}}{2},\frac{1}{2}\right)^T\ ,\quad \bm{v}_i \cdot \bm{a}_j = 2\pi\delta_{ij}, \label{eqn:dual-vec}
\end{align}
and the dual lattice
$\Lambda^*= \mathbb{Z} \bm{a}_1 + \mathbb{Z} \bm{a}_2$; see Figure \ref{fig:honeycomb-diff-cell} and Figure \ref{fig:honeycomb-dual}.

The Brillouin zone $\mathcal{B}$ is a choice of fundamental domain for $\mathbb{R}^2 / \Lambda^*$, here chosen to be Brillouin zone is the hexagon shown in Figure \ref{fig:honeycomb-dual}.  with high-symmetry corner points $\bm K$ and $\bm K'$ (marked by solid and open dots, respectively, in Figure \ref{fig:honeycomb-dual}). The vertices of the $\mathcal B$ are the points:
\begin{equation}\label{eqn:dirac-pt-honeycomb}
	\bm{K} = \frac{1}{3} \bm{a}_1, \qquad \bm{K}' = -\frac{1}{3} \bm{a}_1 + \bm{a}_2,
\end{equation}
and their rotations by $2\pi/3$ with respect to the center of the hexagon.

\subsection{Honeycomb lattice} 

The honeycomb lattice is the union of two interpenetrating triangular lattices, with base points at   $A_{0,0} := (0,0)^T$ and and $B_{0,0} := (-\sqrt{3}/2, -1/2)^T$: 
\begin{align}
\mathbb{H} = \Lambda_A \cup \Lambda_B,\quad\textrm{where  $\Lambda_A= A_{0,0} + \Lambda$ and $\Lambda_B=B_{0,0} + \Lambda$}.\label{eqn:def-H-lattice}
\end{align}
The plane, $\mathbb R^2$, can be tiled by parallelograms, generated by $\bm v_1$ and $\bm v_2$ and such that each of these parallelograms contains two sites or nodes of $\mathbb H$, an $A-$node and a $B-$node. The parallelograms are indexed by pairs $(m,n)\in\mathbb Z^2$, and the nodes of $\mathbb H$ in the $(m,n)$ cell are: 
\begin{align}
    A_{m,n} = A_{0,0} + m \bm{v}_1 + n\bm{v}_2\quad  {\rm and}\quad  B_{m,n} =  B_{0,0} + m \bm{v}_1 + n\bm{v}_2. \label{eqn:node-AB-mn}
\end{align}

%

The honeycomb lattice, $\mathbb H$, is a bipartite lattice; each $A-$node has three nearest neighbhor $B-$nodes and each $B-$node has three nearest neighbor $A-$nodes. Starting with an $A-$ or $B-$node), the nearest neighbor $A-$ and $B-$ nodes are  obtained by displacement  by $\pm\bm e_\nu,\ \nu=1,2,3$, where 
\begin{equation}\label{eqn:ref-e-vec}
	\bm{e}_1 = \Big(\frac{\sqrt{3}}{2}, \frac{1}{2}\Big)^T, \quad \bm{e}_2 = \Big(-\frac{\sqrt{3}}{2}, \frac{1}{2}\Big)^T, \quad \bm{e}_3 = \Big(0,-1\Big)^T.
\end{equation}
Hence, for $\nu=1,2,3$,
\begin{subequations}\label{eqn:def-nn-connection}
    \begin{align}
    &\textrm{the nearest $B-$nodes to $A_{m,n}$ are  $B_{m+m_\nu,n+ n_\nu}$, and } \label{eqn:def-nn-connection-1}\\
    &\textrm{the nearest $A-$nodes to $B_{m,n}$ are  $A_{m-m_\nu,n-n_\nu}$, where } \label{eqn:def-nn-connection-2}\\
    &\textrm{ $(m_1,n_1)=(0,0)$, $(m_2,n_2)=(2,1)$, and $(m_3,n_3)=(1,1)$ }.\label{eqn:def-nn-connection-3}
\end{align}
\end{subequations}

\begin{figure}[!htb]
\centering
\begin{minipage}[t]{0.4\textwidth}
  \centering
  \subfloat[]{\includegraphics[width=0.85\linewidth]{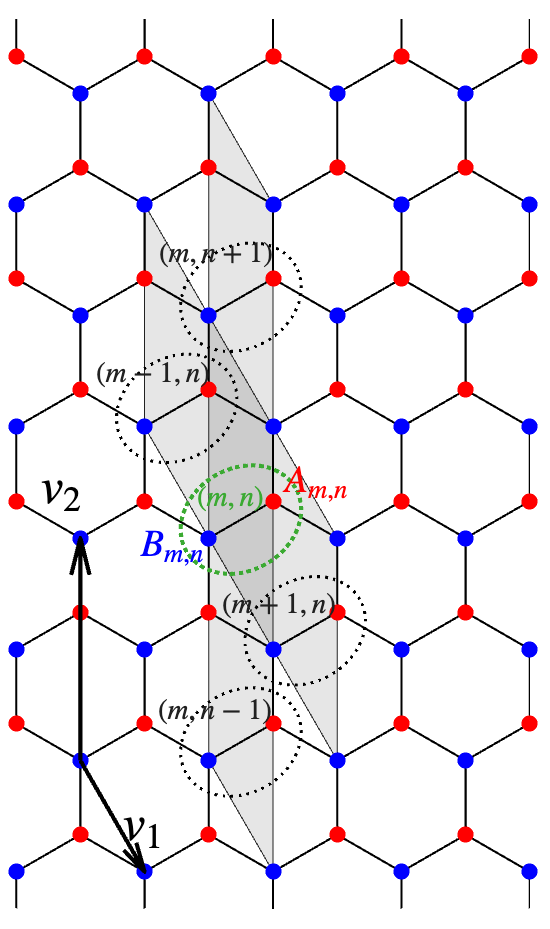}\label{fig:honeycomb-diff-cell}}
\end{minipage}
\begin{minipage}[t]{0.5\textwidth}
  \centering
  \subfloat[]{\includegraphics[width=0.6\linewidth]{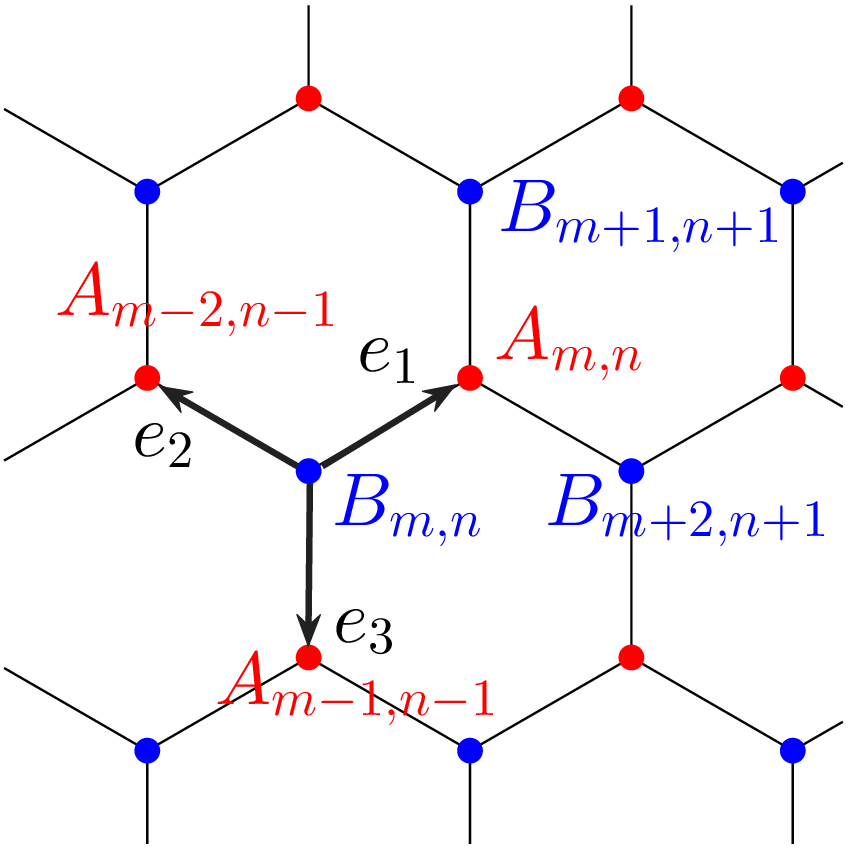}\label{fig:honeycomb-connect}}

  \subfloat[]{\includegraphics[width=0.75\linewidth]{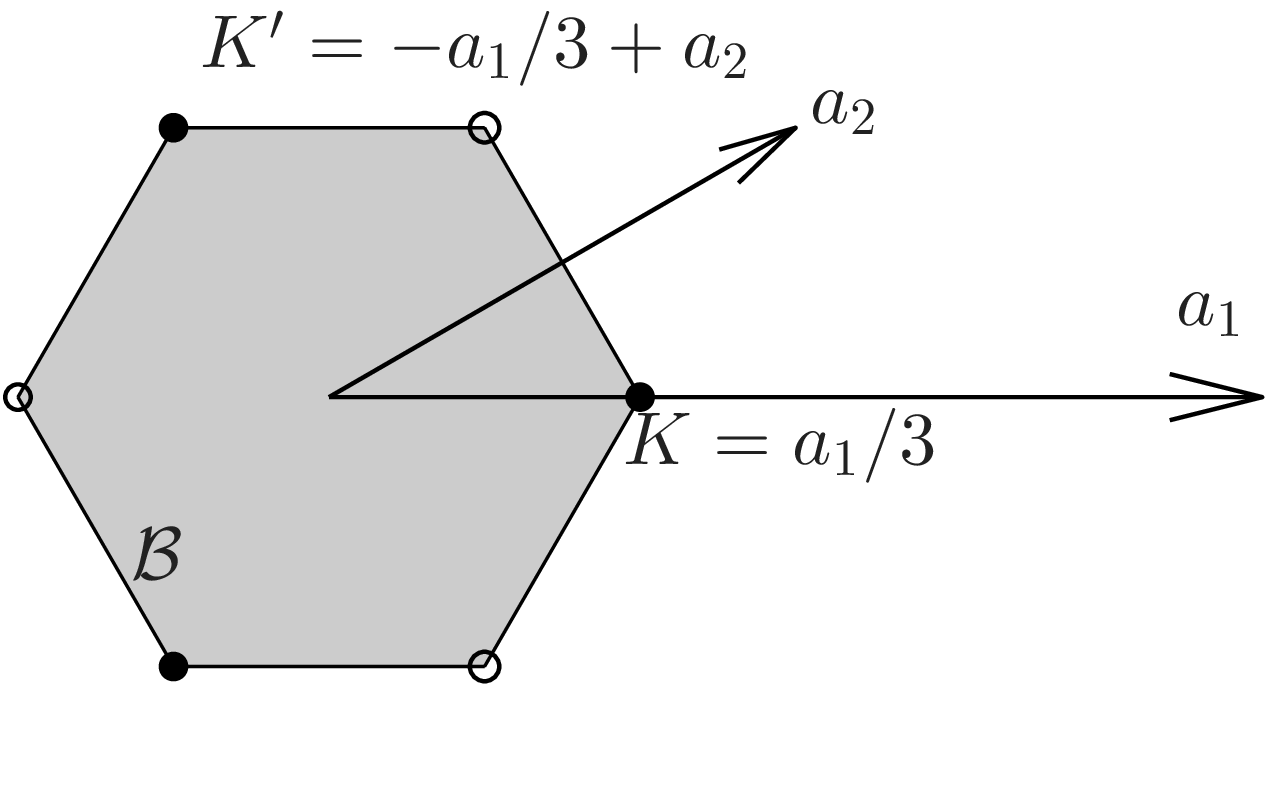}\label{fig:honeycomb-dual}}
\end{minipage}

\caption{(a) The honeycomb lattice, $\mathbb Z\bm{v}_1+\mathbb Z\bm{v}_2$, containing two nodes, $A$ in red and $B$ in blue, per unit cell. $\mathbb R^2$ is the union of cells (parallelograms), each containing two nodes (circled pair of dots), labeled with $\mathbb Z^2$ indices.   Dark-shaded parallelogram is the $(m,n)$-cell; adjacent lighter-shaded regions are translations by $\pm \bm{v}_1$ or $\pm \bm{v}_2$.  (b) nearest neighbors of $A_{m,n}$ and $B_{m,n}$, bond direction vectors $\bm{e}_i$ ($i=1,2,3$); (c) Hexagonal Brillouin zone $\mathcal{B}$ with  high symmetry quasimomenta $\bm{K},\bm{K}'$, at the vertices of $\mathcal{B}$. Dual lattice vectors $\bm{a}_1, \bm{a}_2$, such that $\bm a_l\cdot\bm v_j=2\pi\delta_{lj}$.}
\label{fig:honeycomb}
\end{figure}


\subsection{Tight-binding model on $\mathbb H$} 
    


A ``wave function'' is an assignment of a complex-valued amplitude to each node of $\mathbb H$. It is convenient to organize these
 as complex amplitude pairs, corresponding to each cell:
 \begin{align} \bm\psi = (\bm\psi_{m,n})_{m,n\in\mathbb Z},\quad 
 \bm \psi_{m,n} =
\begin{pmatrix} \psi_{m,n}^A\\ \psi_{m,n}^B
\end{pmatrix}\in \mathbb C^2.
\end{align}
We take as our Hilbert space $l^2(\mathbb{H}):=l^2(\mathbb Z^2; \mathbb C^2)$ with norm defined by
\begin{align}
 \|\bm\psi\|^2 = \sum_{(m,n)\in\mathbb Z^2} \|\bm \psi_{m,n}\|_{\mathbb C^2}^2 <\infty.\label{eqn:wave-fcn-def}
\end{align}
The nearest neighbor tight-binding Hamiltonian $H^0$ on the honeycomb lattice (the superscript 0 refers to the undeformed honeycomb), is the nearest neighbor operator:
\begin{align}\label{eqn:tb-ref-hamiltonian}
    (H^0\psi)_{m,n} :=  \begin{pmatrix}
     (H^0\psi)_{m,n}^A \\  (H^0\psi)_{m,n}^B
\end{pmatrix} = \begin{pmatrix}
      \sum_{\nu=1}^3 \psi_{m+m_\nu,n+n_\nu}^B \\  \sum_{\nu=1}^3 \psi_{m-m_\nu,n-n_\nu}^A
\end{pmatrix}\ .
\end{align}
$H^0$ is a self-adjoint and bounded linear operator on  $l^2(\mathbb Z^2;\mathbb C^2)$.




\subsection{Band structure of $H^0$} We study the spectrum of $H^0$ via the 2D discrete Fourier transform (DFT); see Appendix \ref{app:prelim-dft}. For  $\psi = (\bm \psi_{m,n})_{m,n\in \mathbb{Z}} \in l^2(\mathbb{H})$, we define 
\begin{equation}\label{eqn:psi-dft-honeycomb}
    \bm{\widetilde{\psi}}(\bm k) =
   \sum_{m,n \in \mathbb{Z}} e^{-i\bm{k}\cdot (m\bm{v}_1 + n \bm{v}_2)}\ \bm\psi_{m,n}.
   \end{equation}
The mapping $\bm{k}\mapsto \bm{\widetilde{\psi}}(\bm k)$ is in $L^2(\mathbb R^2/\Lambda^*)$ with  period cell $\mathcal{B}$; see Figure \ref{fig:honeycomb-dual}.  The band structure of $H^0$ is given by the two eigenpairs  (band dispersion curves and Floquet-Bloch modes) of 
of the $2\times2$ matrix:
\begin{equation}\label{eqn:tb-bloch-hamiltonian}
	\widetilde{H}^0(\bm{k}) = \sum_{\nu = 1}^3 \begin{pmatrix}
	0 & e^{i\bm{k}\cdot \bm w_\nu}\\
	e^{-i\bm{k}\cdot \bm w_\nu} & 0
	\end{pmatrix},\qquad \bm k\in\mathcal{B}.
\end{equation}
Here  $\bm w_\nu = m_\nu \bm{v}_1+ n_\nu \bm{v}_2 = \bm e_1 - \bm e_\nu$:
\begin{equation}\label{eqn:c-vec}
    \bm w_1 = (0,0)^T, \quad \bm w_2 = \bm e_1 - \bm e_2 = (\sqrt{3},0)^T, \quad \bm w_3 = \bm e_1 - \bm e_3 = \left(\frac{\sqrt{3}}{2}, \frac{3}{2}\right)^T.
\end{equation}
The band dispersion functions are given by:
\begin{equation}\label{eqn:tb-bloch-eigval}
    E_\pm(\bm{k}) = \pm \Bigg|\sum_{\nu = 1}^3 e^{i\bm{k}\cdot \bm w_\nu}\Bigg|.
\end{equation}
The two dispersion surfaces touch the over the vertices of $\mathcal B$; for $\bm k\in\mathcal B$, $E_+(\bm k)=0=E_-(\bm k)$ if and only if 
 $\bm k = \bm K$, $\bm K'$ or one of their rotations by $2\pi/3$; see \eqref{eqn:dirac-pt-honeycomb} and  Figure \ref{fig:honeycomb-2d-band}. The local character of the touching dispersion surfaces at these high symmetry quasi-momenta is conical and the corresponding energy / quasi-momentum pairs are called {\it Dirac points}. The spectrum 
  of $H^0$, acting in $l^2(\mathbb H)$, is continuous and equal to the interval $[-3,3]$.

\section{Slow strain induces an effective 2D Dirac magnetic Hamiltonian}\label{sec:derivation}
We now introduce a tight-binding Hamiltonian for a slowly strained honeycomb lattice and propose a two-scale wave-packet ansatz for the associated eigenvalue problem. We then present a formal discrete multiscale expansion of the resulting two-scale eigenstates, yielding an effective two-dimensional magnetic Dirac equation.

\subsection{Tight binding Hamiltonian for a  slowly-strained discrete honeycomb}\label{subsec:model-slow-strain}
We consider a slowly-strained discrete honeycomb, where the undeformed nodes,  $A_{m,n},B_{m,n}$ are mapped, under the smooth and slowly-varying displacement $\bm X\mapsto \bm u(\bm X)= \bm u(X_1,X_2)$ to: 
\begin{equation}\label{eqn:def-honeycomb}
	\widetilde{A}_{m,n} = A_{m,n} + \bm{u}(\delta A_{m,n}), \qquad \widetilde{B}_{m,n} = B_{m,n} + \bm{u}(\delta B_{m,n}),\quad 0<\delta\ll1.
\end{equation}
%

We now introduce the nearest neighbor tight-binding Hamiltonian $H^{\delta}$ for the {\it deformed} honeycomb lattice. For a wave function $\bm \psi = (\bm \psi_{m,n})_{m,n\in \mathbb{Z}}\in \ell^2(\mathbb Z^2)$, we have:
\begin{align}\label{eqn:def-hon-hamiltonian}
    (H^\delta \bm \psi)_{m,n} = \begin{pmatrix}
        (H^{\delta}\bm \psi)_{m,n}^A\\
        (H^{\delta}\bm \psi)_{m,n}^B
    \end{pmatrix} = \begin{pmatrix}
        \sum_{\nu=1}^3  t(\widetilde{A}_{m,n}, \widetilde{B}_{m+m_\nu,n+n_\nu}) \psi_{m+m_\nu,n+n_\nu}^B\\
        \sum_{\nu=1}^3  t(\widetilde{B}_{m,n}, \widetilde{A}_{m-m_\nu,n-n_\nu}) \psi_{m-m_\nu,n-n_\nu}^A
    \end{pmatrix}\ .
\end{align}
Here,  $t(\widetilde{X}, \widetilde{Y})$ is the hopping coefficient between the deformed nodes $\widetilde{X}$ and $\widetilde{Y}$, here taken to be of the form: 
\begin{align}
    t(\widetilde{A}_{m,n}, \widetilde{B}_{m,n}) := h(|\widetilde{A}_{m,n}- \widetilde{B}_{m,n}|),\label{eqn:main-h-AB}
\end{align}
where $h: \mathbb{R}_+ \mapsto \mathbb{R}$ is a smooth function such that:
\[ \textrm{$h(1) = 1$ and $h'(1) = t_1$}.\]
The parameter $t_1$ controls how the hopping coefficient depends on the strain: when $t_1 > 0$, the hopping strength increases with strain, while for $t_1 < 0$, it decreases.

For simplicity, rather than working with $H^\delta$ in  \eqref{eqn:def-hon-hamiltonian}, we seek a tight binding model which incorporates only order $\delta$ corrections to $H^0$.  
Since $|{A}_{m,n}- {B}_{m,n}|=1$, we find $
    |\widetilde{A}_{m,n}- \widetilde{B}_{m,n}| = |A_{m,n} - B_{m,n} + \bm{u}(\delta A_{m,n}) - \bm{u}(\delta B_{m,n})|= 1 + O(\delta)$.
Expanding $t(\widetilde{A}_{m,n}, \widetilde{B}_{m,n})$ in \eqref{eqn:main-h-AB} for small $\delta$, we  obtain  (details in Appendix \ref{app:hopping})
\begin{align}\label{eq:t-tA-tB}
    t(\widetilde{A}_{m,n}, \widetilde{B}_{m,n}) &= 1 + \delta t_1 \bm{e}_1^T \nabla_{\bm X} \bm{u}(\delta A_{m,n}) \bm{e}_1 + O(\delta^2),
\end{align}
where $\bm{e}_1 = A_{m,n} - B_{m,n}$ is defined in \eqref{eqn:ref-e-vec} and $\nabla_{\bm X} \bm{u} (\bm X)$ denotes the $2\times2$ Jacobian matrix. T
 Note that since $\delta B_{m,n}$ lies $O(\delta)$-away from $\delta A_{m,n}$, an  expression for the hopping coefficient $t(\widetilde{A}_{m,n}, \widetilde{B}_{m,n})$, which is asymptotically equivalent to \eqref{eq:t-tA-tB} is the expression: $ t(\widetilde{A}_{m,n}, \widetilde{B}_{m,n}) = 1 + \delta t_1 \bm{e}_1^T \nabla_{\bm X} \bm{u}(\delta B_{m,n}) \bm{e}_1 + O(\delta^2)$.
%
In expanding the hopping coefficients through order $\delta$, we'll make the consistent choice of the $A-$point in each cell, as the  reference point about which we expand. 
Thus we set 
\begin{equation}\label{eqn:ref-ab-honeycomb}
	\Ucell := A_{m,n} = m \bm{v}_1 + n\bm{v}_2 = \left(\frac{\sqrt{3}}{2}m, -\frac{3}{2}m + 3n\right)^T,
\end{equation}
This choice leads to a self-adjoint Hamiltonian is shown in Appendix \ref{app:some-bds}. 
Expansion of the hopping coefficient through order $\delta$ and dropping all terms of order $\delta^2$ and higher, we obtain the following hopping coefficients: 
\begin{subequations}\label{eqn:hopping-def-honeycomb}
\begin{align}
    t(\widetilde{A}_{m,n} , \widetilde{B}_{m\pm m_1,n\pm n_1}) &= t(\widetilde{A}_{m,n} , \widetilde{B}_{m,n}) = 1 + \delta t_1 \bm{e}_1^T\nabla_{\bm X} \bm{u}(\delta \Ucell) \bm{e}_1,\label{eqn:hopping-def-honeycomb-1}\\
    t(\widetilde{A}_{m,n} , \widetilde{B}_{m+m_2,n+n_2}) &= t(\widetilde{A}_{m,n} , \widetilde{B}_{m+2,n+1}) = 1 + \delta t_1 \bm{e}_2^T\nabla_{\bm X} \bm{u}(\delta \Ucell) \bm{e}_2, \label{eqn:hopping-def-honeycomb-2}\\
    t(\widetilde{A}_{m,n} , \widetilde{B}_{m+m_3,n+n_3}) &= t(\widetilde{A}_{m,n} , \widetilde{B}_{m+1,n+1}) = 1 + \delta t_1 \bm{e}_3^T\nabla_{\bm X} \bm{u}(\delta \Ucell) \bm{e}_3,\label{eqn:hopping-def-honeycomb-3}\\
    t(\widetilde{B}_{m,n} , \widetilde{A}_{m-m_2,n-n_2}) &= t(\widetilde{B}_{m,n} , \widetilde{A}_{m-2,n-1}) = 1 + \delta t_1 \bm{e}_2^T\nabla_{\bm X} \bm{u}(\delta \Cell_{m-m_2,n-n_2}) \bm{e}_2,\label{eqn:hopping-def-honeycomb-4}\\
    t(\widetilde{B}_{m,n} , \widetilde{A}_{m-m_3,n-n_3}) &= t(\widetilde{B}_{m,n} , \widetilde{A}_{m-1,n-1}) =  1 + \delta t_1 \bm{e}_3^T\nabla_{\bm X} \bm{u}(\delta \Cell_{m-m_3,n-n_3}) \bm{e}_3.\label{eqn:hopping-def-honeycomb-5}
\end{align}
\end{subequations}

Here, the scalar quantity $\bm{e}_i^T \big(\nabla_{\bm X} \bm{u}(\delta \Cell_{m,n}) \big)\bm{e}_i$ with $i=1,2,3$ represents the linear strain induced by $\bm{u}$ at the edge in the $\bm{e}_i$ direction in the $(m,n)$-th cell.   Our selection of hopping coefficients in \eqref{eqn:hopping-def-honeycomb} is consistent with those commonly used in the physics literature (see Equation (172) in \cite{castro2009electronic}). 




 Let 
\begin{equation}\label{eqn:f-fcn}
	f_\nu(\bm X) := \bm{e}_\nu^T \big(\nabla_{\bm X} \bm{u}(\bm X)\big) \bm{e}_\nu,\qquad \nu=1,2,3\ .
\end{equation}
Explicitly, we have:
\begin{subequations}\label{eqn:f-u-relation}
\begin{align}
    f_1(\bm{X})&= \bm{e}_1^T (\nabla_{\bm X} \bm{u}) \bm{e}_1 = \frac{3}{4} \partial_{X_1} u _1 + \frac{\sqrt{3}}{4} (\partial_{X_1} u_2 + \partial_{X_2} u_1) + \frac{1}{4} \partial_{X_2} u_2, \label{eqn:f-u-relation-1}\\
	f_2(\bm{X})&= \bm{e}_2^T (\nabla_{\bm X} \bm{u}) \bm{e}_2 = \frac{3}{4} \partial_{X_1} u_1 - \frac{\sqrt{3}}{4} (\partial_{X_1} u_2 + \partial_{X_2} u_1) + \frac{1}{4} \partial_{X_2} u_2, \label{eqn:f-u-relation-2}\\
    f_3(\bm{X}) &= \bm{e}_3^T (\nabla_{\bm X} \bm{u}) \bm{e}_3 =\partial_{X_2} u_2.\label{eqn:f-u-relation-3}
\end{align}
\end{subequations}
The deformed Hamiltonian $H^{\delta}$, an $O(\delta)$ perturbation of $H^0$, is given by:
\begin{align}\label{eqn:ham-hon-delta-approx}
   (H^\delta \bm\psi)_{m,n}\ =\  \begin{pmatrix}
        (H^{\delta} \bm \psi)_{m,n}^A\\
        (H^{\delta} \bm \psi)_{m,n}^B
    \end{pmatrix} = \begin{pmatrix}
        \sum_{\nu=1}^3 \Big(1 + \delta t_1 f_\nu(\delta \Cell_{m,n})\Big)\psi_{m+m_\nu,n+n_\nu}^B\\
        \sum_{\nu=1}^3 \Big(1 + \delta t_1 f_\nu(\delta \Cell_{m-m_\nu,n-n_\nu}) \Big)\psi_{m-m_\nu,n-n_\nu}^A
    \end{pmatrix},
\end{align}
The operator $H^\delta $ is formally self-adjoint in $l^2(\mathbb Z^2;\mathbb C^2)$; see the Supplementary Material \ref{app:some-bds}.


Our goal is to study eigenvalue problems  for $H^\delta$:
\begin{equation}\label{eqn:eig-def-hon}	
    H^\delta \bm\psi = E \bm\psi
    \end{equation}
subject to a suitable choice of self-adjoint  boundary conditions; for  example:\\
(a) $\bm \psi = (\bm \psi_{m,n})_{m,n\in\mathbb Z} \in l^2(\mathbb Z^2;\mathbb{C}^2)$, or\\
(b) in the case where the deformation is invariant under with respect to $\bm v_2$  (and thus $f_\nu(\delta \Cell_{m,n})$ depends only on $m$), we impose:
\[ \textrm{$m\mapsto \bm \psi_{m,n}\in l^2(\mathbb Z;\mathbb C^2)$\quad  and \quad $\bm \psi_{m,n+1}=e^{iq_\parallel}\ \bm \psi_{m,n}$},\]
where $q_\parallel\in[-\pi,\pi]$ is the parallel quasi-momentum related associated with this translation invariance of $(H^\delta \bm \psi)_{m,n}$ with respect to $n\in\mathbb Z$.



The eigenvalue problem depends on two spatial scales: the lattice period of order $1$ and the length scale of the slowly varying deformation, which is of order $\delta^{-1}$. Hence, we seek solutions $\bm \psi=(\bm \psi_{m,n})\in l^2(\mathbb H)$  of the eigenvalue problem \eqref{eqn:eig-def-hon}, which depend on these two-scale length scales:
\begin{equation} 
\bm\psi_{m,n} = e^{i \bm k \cdot \bm x} \bm \Phi\big(
\bm X;\delta \big)\Big|_{\substack{\bm x = \Cell_{m,n}\\\bm X = \delta \Cell_{m,n}}} = e^{i \bm{k} \cdot \Cell_{m,n}} \bm \Phi\big(
\delta \Cell_{m,n};\delta \big),
\label{eq:ansatz}
\end{equation}
where $\bm{\Phi}(\bm X;\delta)=\big(\Phi^A(\bm X;\delta),\Phi^B(\bm X;\delta)\big)^T$ is a smooth and decaying function of the slow variable $\bm X=(X_1,X_2)^T$ that needs to be determined. 

\subsection{A formal discrete multiple scale expansion}\label{subsec:multi-scale-exp}

We begin by evaluating the expressions $\bm \psi_{m\pm m_\nu,n\pm n_\nu}$, which appear in $(H^\delta\psi)_{m,n}$,  for our 
Ansatz \eqref{eq:ansatz}:
\begin{align}
    \bm \psi_{m\pm m_\nu,n\pm n_\nu} = e^{\pm i \bm{k} \cdot \Cell_{m_\nu, n_\nu}} e^{i \bm{k} \cdot \Cell_{m,n}} \bm \Phi\big(\bm X \pm \delta \Cell_{m_\nu,n_\nu};\delta \big) \Big|_{\bm X = \delta \Cell_{m,n}}.  \label{eqn:main-exp-ansatz-1}
\end{align}
Here, we have used that \[
\Cell_{m\pm m_\nu, n\pm n_\nu} - \Cell_{m,n} = \pm \Cell_{m_\nu, n_\nu}\quad {\rm and}\quad  \Cell_{m_\nu, n_\nu} = m_\nu \bm v_1 + n_\nu \bm v_2 = \bm w_\nu ,\]
$\bm w_\nu$ is defined in  \eqref{eqn:c-vec}. 
%
%
The following is a proposition which provides provides a formal relation between self-adjoint eigenvalue problems  for the 
the discrete operator $H^\delta$ and corresponding continuum eigenvalue problems:
\begin{proposition}
\begin{enumerate}
\item If the expression \eqref{eq:ansatz} is a solution of the eigenvalue problem \eqref{eqn:eig-def-hon}, $H^\delta \bm \psi=E \bm \psi$  subject to the relevant self-adjoint  boundary conditions, then
the following system holds for all $\bm X = \{ \delta \Cell_{m,n}\}_{(m,n)\in\mathbb Z^2}$:
\begin{subequations}\label{eqn:multi-calc}
    \begin{align}
    \sum_{\nu = 1}^3 e^{i \bm{k} \cdot \bm w_\nu} \Big(1 + \delta t_1 f_\nu(\bm X) \Big) \Phi^B\big(\bm X + \delta \bm w_\nu;\delta \big)  &=    E\ \Phi^A\big(\bm X;\delta \big) \label{eqn:multi-calc-1}\\
     \sum_{\nu = 1}^3 e^{-i \bm{k} \cdot \bm w_\nu} \Big(1 + \delta t_1 f_\nu(\bm X-\delta\bm w_\nu) \Big) \Phi^A\big(\bm X-\delta\bm w_\nu;\delta \big) &=  E\  \Phi^B\big(\bm X;\delta \big)\label{eqn:multi-calc-2}
    \end{align}
\end{subequations}
\item Conversely, let $\Phi(\bm X;\delta)$ denote a solution of \eqref{eqn:multi-calc} subject to the relevant self-adjoint  boundary conditions. Let  $\bm\psi = (\bm \psi_{m,n})_{m,n\in\mathbb Z}$ where we define $\bm \psi_{m,n}:= e^{i\bm k\cdot \Cell_{m,n}}\ \bm\Phi(\Cell_{m,n})$. 
Then, $\bm \psi$ solves the discrete eigenvalue problem: $H^\delta\bm \psi = E\bm \psi$, with its relevant boundary conditions.
\end{enumerate}
\end{proposition}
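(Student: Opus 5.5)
The argument is a direct substitution of the ansatz \eqref{eq:ansatz} into \eqref{eqn:ham-hon-delta-approx}, followed by cancellation of the common phase $e^{i\bm k\cdot\Cell_{m,n}}$. The only structural facts needed are the linearity of $(m,n)\mapsto\Cell_{m,n}=m\bm v_1+n\bm v_2$, which gives $\Cell_{m\pm m_\nu,n\pm n_\nu}=\Cell_{m,n}\pm\bm w_\nu$, and the identity \eqref{eqn:main-exp-ansatz-1}.

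\emph{Part 1.} Fix $(m,n)$ and set $\bm X=\delta\Cell_{m,n}$.

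For the $A$-component, use \eqref{eqn:main-exp-ansatz-1} with the $+$ sign:
\[
\psi^B_{m+m_\nu,n+n_\nu}=e^{i\bm k\cdot\bm w_\nu}\,e^{i\bm k\cdot\Cell_{m,n}}\,\Phi^B(\bm X+\delta\bm w_\nu;\delta).
\]
The hopping coefficient is $1+\delta t_1 f_\nu(\bm X)$. Hence
\[
(H^\delta\bm\psi)^A_{m,n}=e^{i\bm k\cdot\Cell_{m,n}}\sum_{\nu=1}^3 e^{i\bm k\cdot\bm w_\nu}\bigl(1+\delta t_1 f_\nu(\bm X)\bigr)\Phi^B(\bm X+\delta\bm w_\nu;\delta).
\]
Setting this equal to $E\psi^A_{m,n}=E\,e^{i\bm k\cdot\Cell_{m,n}}\Phi^A(\bm X;\delta)$ and dividing by the nonvanishing phase gives \eqref{eqn:multi-calc-1}.

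For the $B$-component, use the $-$ sign in \eqref{eqn:main-exp-ansatz-1}. The hopping coefficient $1+\delta t_1 f_\nu(\delta\Cell_{m-m_\nu,n-n_\nu})$ equals $1+\delta t_1 f_\nu(\bm X-\delta\bm w_\nu)$. This yields \eqref{eqn:multi-calc-2} at every lattice point $\bm X\in\delta\Lambda$ in the same way.

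\emph{Part 2.} The converse reverses the computation. Define $\bm\psi_{m,n}:=e^{i\bm k\cdot\Cell_{m,n}}\bm\Phi(\delta\Cell_{m,n};\delta)$; the statement writes $\bm\Phi(\Cell_{m,n})$, and we read this as evaluation on the slow lattice $\delta\Lambda$. Evaluating \eqref{eqn:multi-calc} at $\bm X=\delta\Cell_{m,n}$ and multiplying by $e^{i\bm k\cdot\Cell_{m,n}}$ recovers $(H^\delta\bm\psi)_{m,n}=E\bm\psi_{m,n}$ for every $(m,n)$. Only the values of $\bm\Phi$ on $\delta\Lambda$ enter, so \eqref{eqn:multi-calc} is only required on that set.

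\emph{Boundary conditions.} This is the only step that needs care, and the hypothesis must be read as meaning that the relevant boundary conditions are expressed in terms of $\bm\Phi$ on $\delta\Lambda$.

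In case (a), square summability of $(\bm\psi_{m,n})$ is equivalent to $\sum_{(m,n)}|\bm\Phi(\delta\Cell_{m,n})|^2<\infty$, because the phase has modulus one.

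In case (b), the condition $\bm\psi_{m,n+1}=e^{iq_\parallel}\bm\psi_{m,n}$ becomes
\[
e^{i\bm k\cdot\bm v_2}\,\bm\Phi(\bm X+\delta\bm v_2)=e^{iq_\parallel}\,\bm\Phi(\bm X).
\]
This is a pseudo-periodicity condition on $\bm\Phi$ with the shifted quasi-momentum $q_\parallel-\bm k\cdot\bm v_2$. The transverse condition that $m\mapsto\bm\psi_{m,n}$ lies in $l^2$ is equivalent to the same $l^2$ condition on $m\mapsto\bm\Phi(\delta\Cell_{m,n})$.

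So in both cases the boundary conditions transfer through the modulus-one phase factor in both directions, and the two parts of the proposition follow.
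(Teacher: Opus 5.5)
Your proposal is correct and follows the paper's route: the paper justifies the proposition only through the substitution identity \eqref{eqn:main-exp-ansatz-1} ($\Cell_{m\pm m_\nu,n\pm n_\nu}=\Cell_{m,n}\pm\bm w_\nu$), cancellation of the common phase $e^{i\bm k\cdot\Cell_{m,n}}$, and the matching of the hopping arguments $\delta\Cell_{m,n}=\bm X$ and $\delta\Cell_{m-m_\nu,n-n_\nu}=\bm X-\delta\bm w_\nu$. You also read $\bm\Phi(\Cell_{m,n})$ as $\bm\Phi(\delta\Cell_{m,n};\delta)$ and transfer the boundary conditions through the unimodular phase; the shift $\bm k\cdot\bm v_2$ vanishes at $\bm k=\bm K$, consistent with \eqref{eqn:bloch-bdry}. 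These steps fill in details the paper leaves implicit.
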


We next solve  \eqref{eqn:multi-calc}  via a power series in $\delta$ using the approximations:
\begin{subequations}\label{eqn:main-exp-ansatz-2}
    \begin{align}
    \Phi^A\big(\bm X - \delta \bm w_\nu;\delta \big) &= \Phi^A \big(\bm X;\delta \big) - \delta \nabla_{\bm{X}} \Phi^A\big(\bm X;\delta \big) \cdot \bm w_\nu \nonumber\\
    & \quad+ \frac{\delta^2}{2} \left \langle\bm w_\nu, \ \Big(D^2_{\bm X} \Phi^A(\bm X; \delta)\Big) \ \bm w_\nu \right \rangle  + O(\delta^3), \label{eqn:main-exp-ansatz-2a}\\
    \Phi^B\big(\bm X + \delta \bm w_\nu;\delta \big) &=  \Phi^B \big(\bm X;\delta \big) + \delta \nabla_{\bm{X}} \Phi^B\big(\bm X;\delta \big) \cdot \bm w_\nu \nonumber\\
    & \quad + \frac{\delta^2}{2} \left \langle\bm w_\nu, \ \Big(D^2_{\bm X} \Phi^B(\bm X;\delta)\Big) \bm w_\nu \right \rangle  + O(\delta^3), \label{eqn:main-exp-ansatz-2b}
\end{align}
\end{subequations}
where $D^2_{\bm X} F(\bm X)$ denotes the Hessian matrix of $F$. The term  $f_\nu(\delta \bm X - \delta \bm w_\nu)$ is similarly expanded.  

Equations  \eqref{eqn:multi-calc} and \eqref{eqn:main-exp-ansatz-2} yield and formal differential equation which can be expanded to arbitrary in $\delta$. We seek a formal solution of the form: 
\begin{subequations}
\label{eqn:EXPAND}
\begin{align}
\bm \Phi(\bm X;\delta) &= \bm \Phi_0(\bm X) + \delta \bm \Phi_1(\bm X) + \delta^2 \bm \Phi_2(\bm X) + \dots \label{eqn:Phi-eqn}\\
E^\delta &= E_0 + \delta E_1 + \delta^2 E_2 + \dots, \label{eqn:main-eig-val}
\end{align}
\end{subequations}
at first without attention to boundary conditions. When discussing particular deformations, e.g. unidirectional, we'll further constrain the formal solution constructed in this section to satisfy the appropriate boundary conditions as well.

Substituting \eqref{eqn:EXPAND} into \eqref{eqn:multi-calc}, \eqref{eqn:main-exp-ansatz-2},   and then equating terms of like order in  $\delta$, yields a hierarchy of equations for $\bm \Phi_j(\bm X)$ and $E_j$, for $j\ge0$.
%
The first equations of this hierarchy are displayed  and solved in Appendix \ref{app:dev-2d-eff-eqn}, with a leading term, $\Phi_0$,
chosen so that \eqref{eq:ansatz} has the structure of a wave packet, which is spectrally localized at a Dirac point.  We now summarize the results of these calculations.

\paragraph{At order $\delta^0$:} We collect order $\delta^0$ terms in \eqref{eqn:multi-calc} and obtain
\begin{align}\label{eqn:main-order-0}
    \widetilde{H}^0(\bm k) \begin{pmatrix}
        \Phi_0^A\big(\bm X \big)\\
        \Phi_0^B\big(\bm X \big)
    \end{pmatrix} = E_0 \begin{pmatrix}
        \Phi_0^A\big(\bm X \big)\\
        \Phi_0^B\big(\bm X \big)
    \end{pmatrix}
\end{align}
where the $2\times2$ Hermitian matrix $\widetilde{H}^0(\bm k)$ is displayed in \eqref{eqn:tb-bloch-hamiltonian}. 
 The matrix $\widetilde{H}^0(\bm k)$ vanishes at $\bm k$ for which $\sum_{\nu=1}^3e^{i\bm k\cdot\bm w_\nu}=0$, a condition which holds precisely 
at $\bm K$ or at $\bm K'$ or their dual lattice translates. Henceforth, we fix $\bm k=\bm K$; the $\bm K'$ point is treated similarly.  Since $H(\bm K)=0$,  $E_0=0$ is an eigenavlue of multiplicity two and we therefore can take $\bm \Phi_0(\bm X)\in\mathbb C^2$ to be arbitrary. The $\bm X-$ variation of $\Phi_0$ is determined at the next order.

\paragraph{At order $\delta^1$:} We obtain an eigenvalue problem
\begin{align}\label{eqn:main-order-1}
    \mathcal{H}_\text{eff}\begin{pmatrix}
        \Phi^A_0\big(\bm X \big)\\
        \Phi^B_0\big(\bm X \big)
    \end{pmatrix} = E_1\ \begin{pmatrix}
        \Phi^A_0\big(\bm X \big)\\
        \Phi^B_0\big(\bm X\big)
    \end{pmatrix}\ , 
\end{align}
along with accompanying boundary conditions on $\bm\Phi_0$.
Here,  $\mathcal{H}_\text{eff}$ is the off diagonal matrix operator, given by
\begin{align}\label{eqn:eff-ham}
\mathcal{H}_\text{eff}\bm = \begin{pmatrix}
0 & \sum_{\nu = 1}^3 e^{i \bm K \cdot \bm w_\nu} \Big(\bm w_\nu \cdot \nabla_{\bm{X}} + t_1 f_\nu(\bm X)\Big)\\
    \sum_{\nu = 1}^3 e^{-i \bm K \cdot \bm w_\nu} \Big(-\bm w_\nu \cdot \nabla_{\bm{X}} + t_1 f_\nu(\bm X)\Big) & 0 
\end{pmatrix}\ .
\end{align}
In Proposition \ref{prop:uni-equivalence}, we will show that $\mathcal{H}$ is unitarily equivalent to a two dimensional magnetic Dirac operator.


Given a  nontrivial solution $(\bm \Phi_0,E_1)$ of \eqref{eqn:main-order-1}, we proceed to order $\delta^2$. 

\paragraph{At order $\delta^2$:} In  Appendix \ref{app:dev-2d-eff-eqn} we find that $\bm \Phi_1$, $E_2$ satisfy the following non-homogeneous equation:
{\begin{equation}\label{eqn:next-order}
	 \Big(\mathcal{H}_{\text{eff}} - E_1\Big) \bm{\Phi}_1 = E_2 \bm{\Phi}_0 + \bm{R}_2[\bm{\Phi}_0], \qquad \bm{R}_2[\bm{\Phi}_0] = (R_2^A, R_2^B)^T,
\end{equation}
}where $\bm{R}_2[\bm{\Phi}_0]$ depends on $\bm{\Phi}_0$ and is given by:
\begin{subequations}\label{eqn:2d-rem-terms}
    \begin{align}
        R_2^A[\bm \Phi_0] &= - \sum_{\nu = 1}^3 e^{i \bm K \cdot \bm w_\nu} \left(\frac{1}{2}  \left \langle\bm w_\nu, \ \Big(D^2_{\bm X} \Phi^B_0 (\bm X)\Big) \bm w_\nu \right \rangle + t_1 f_\nu(\bm X) \big(\nabla_{\bm X}\Phi_0^B(\bm X) \cdot \bm w_\nu \big) \right), \label{eqn:2d-rem-terms-a}\\
        R_2^B[\bm \Phi_0] &= \sum_{\nu = 1}^3 e^{-i \bm K \cdot \bm w_\nu} \left(-\frac{1}{2} \left \langle\bm w_\nu, \ \Big(D^2_{\bm X} \Phi^A_0 (\bm X)\Big) \bm w_\nu \right \rangle + t_1 \nabla_{\bm X} \big(f_\nu(\bm X)\Phi_1^A(\bm X)\big) \cdot \bm w_\nu\right). \label{eqn:2d-rem-terms-b}
    \end{align}
\end{subequations}

The system \eqref{eqn:next-order} has a solution in an appropriate space if and only if the right hand side is orthogonal to the nullspace of $\mathcal{H}_{\text{eff}} - E_1$. This solvability condition determines  $E_2$.

\paragraph{At order $\delta^{k+1}$:} Similarly, $\bm \Phi_k(\bm X) = (\Phi_k^A(\bm X), \Phi_k^B(\bm X))^T$ solves an equation of the form
\begin{equation}
    \Big(\mathcal{H}_\text{eff} - E_1\Big) \bm \Phi_k = E_{k+1} \bm \Phi_0 + \bm{R}_{k+1}[\bm \Phi_0, \dots, \bm \Phi_{k-1}]\label{eqn:main-order-k}
\end{equation}
whose solvability is ensured by a choice of $E_{k+1}$.

\subsection{$\mathcal{H}_\text{eff}$, the effective Hamiltonian, is a magnetic Dirac operator} \label{subsec:eff-2d-mag-opt}


The effective Hamiltonian $\mathcal{H}_\text{eff}$ is unitarily equivalent to a magnetic Dirac operator $\mathcal{D}_{\bm{A}}$ with effective magnetic vector potential $\bm A_{\text{eff}}$ determined by the strain matrix $\nabla_{\bm X} \bm u$. We state this argument as the following Proposition:
\begin{proposition}\label{prop:uni-equivalence}
    The effective Hamiltonian $\mathcal{H}_\text{eff}$ is unitarily equivalent to a magnetic Dirac operator $\mathcal{D}_{\bm{A}}$
    \begin{align}
        \mathcal{H}_\text{eff} = U \mathcal{D}_{\bm{A}} U^*, \qquad U = \text{diag}(e^{-i\frac{\pi}{6}}, e^{i\frac{\pi}{6}}), \label{eqn:H-mag}
    \end{align}
    and the magnetic Dirac operator $\mathcal{D}_{\bm{A}}$ with magnetic vector potential $\bm A = (A_1, A_2)$ is given by
    \begin{equation}\label{eqn:dirac-mag}
    	\mathcal{D}_{\bm{A}} := \frac{3}{2}\Big[(-i \partial_{X_1} - A_1)\sigma_1 + (-i\partial_{X_2} -A_2) \sigma_2\Big],
    \end{equation}
    where the two components $A_1$ and $A_2$ are
    \begin{equation}\label{eqn:eff-mag}
    	A_1(\bm{X}) = -\frac{t_1}{2}(\partial_{X_1} u_1 - \partial_{X_2} u_2), \qquad A_2(\bm{X}) = \frac{t_1}{2}(\partial_{X_1} u_2 + \partial_{X_2} u_1).
    \end{equation}
\end{proposition}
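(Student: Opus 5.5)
The plan is a direct matrix computation: evaluate the constant phases $e^{i\bm K\cdot\bm w_\nu}$, collect the first-order differential part and the zeroth-order (strain) part of the upper-right entry of $\mathcal{H}_\text{eff}$ in \eqref{eqn:eff-ham}, and compare with the upper-right entry of $U\mathcal{D}_{\bm A}U^*$. The lower-left entry should then follow from formal self-adjointness, without a separate calculation.

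\emph{Step 1 (phases).} From \eqref{eqn:dirac-pt-honeycomb} and \eqref{eqn:c-vec}, $\bm K=\frac{4\pi}{3\sqrt3}(1,0)^T$, so $\bm K\cdot\bm w_1=0$, $\bm K\cdot\bm w_2=\frac{4\pi}{3}$ and $\bm K\cdot\bm w_3=\frac{2\pi}{3}$. The phases are therefore $1,\ e^{i4\pi/3},\ e^{i2\pi/3}$, and they sum to zero.

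\emph{Step 2 (derivative part).} Since $\bm w_1=0$, only $\nu=2,3$ contribute to $\sum_\nu e^{i\bm K\cdot\bm w_\nu}\,\bm w_\nu\cdot\nabla_{\bm X}$.
\begin{itemize}
\item The coefficient of $\partial_{X_1}$ is $\sqrt3\, e^{i4\pi/3}+\frac{\sqrt3}{2}e^{i2\pi/3}=-\frac32 e^{i\pi/6}$.
\item The coefficient of $\partial_{X_2}$ is $\frac32 e^{i2\pi/3}=\frac32\, i\,e^{i\pi/6}$.
\end{itemize}
Hence the derivative part equals $\frac32 e^{i\pi/6}(-\partial_{X_1}+i\partial_{X_2})=\frac32 e^{-i\pi/3}\big[(-i\partial_{X_1})-i(-i\partial_{X_2})\big]$. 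This is $e^{-i\pi/3}$ times the free part of the $(1,2)$ entry of $\mathcal{D}_{\bm A}$, which is $\frac32[(p_1-A_1)-i(p_2-A_2)]$ with $p_j=-i\partial_{X_j}$. On the other hand, $(U\mathcal{D}_{\bm A}U^*)_{12}=e^{-i\pi/6}(\mathcal{D}_{\bm A})_{12}\,e^{-i\pi/6}=e^{-i\pi/3}(\mathcal{D}_{\bm A})_{12}$, so the derivative terms match exactly.

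\emph{Step 3 (potential part).} It remains to verify
\[
t_1\sum_{\nu=1}^3 e^{i\bm K\cdot\bm w_\nu} f_\nu(\bm X) = -\tfrac32 e^{-i\pi/3}\big(A_1-iA_2\big).
\]
To do this, substitute \eqref{eqn:f-u-relation} and group by the strain components $\partial_{X_1}u_1$, $\partial_{X_2}u_2$ and $\partial_{X_1}u_2+\partial_{X_2}u_1$.
\begin{itemize}
\item $\partial_{X_1}u_1$: coefficient $\frac34(1+e^{i4\pi/3})=\frac34 e^{-i\pi/3}$.
\item $\partial_{X_2}u_2$: coefficient $\frac14(1+e^{i4\pi/3})+e^{i2\pi/3}=-\frac34 e^{-i\pi/3}$ (using $e^{i2\pi/3}=-e^{-i\pi/3}$).
\item $\partial_{X_1}u_2+\partial_{X_2}u_1$: coefficient $\frac{\sqrt3}{4}(1-e^{i4\pi/3})=-\frac34 i\,e^{-i\pi/3}$, after simplifying to a multiple of $e^{-i\pi/3}$.
\end{itemize}
Collecting, the left-hand side is $\frac34 t_1 e^{-i\pi/3}\big[(\partial_{X_1}u_1-\partial_{X_2}u_2)-i(\partial_{X_1}u_2+\partial_{X_2}u_1)\big]$. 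With $A_1,A_2$ as in \eqref{eqn:eff-mag}, this equals $-\frac32 e^{-i\pi/3}(A_1-iA_2)$, as required. This is the step where sign and phase bookkeeping is most error-prone, and it is what fixes the particular choice of $U$ and of the signs in $\bm A$. I would check each coefficient numerically once.

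\emph{Step 4 (lower entry).} The $(2,1)$ entry $\sum_\nu e^{-i\bm K\cdot\bm w_\nu}\big(-\bm w_\nu\cdot\nabla_{\bm X}+t_1f_\nu\big)$ is the formal $L^2$-adjoint of the $(1,2)$ entry, since $\partial^*=-\partial$ and $f_\nu$ is real. Likewise, $U\mathcal{D}_{\bm A}U^*$ is formally self-adjoint with $(2,1)$ entry equal to the adjoint of its $(1,2)$ entry. So equality of the $(1,2)$ entries implies $\mathcal{H}_\text{eff}=U\mathcal{D}_{\bm A}U^*$, and $U$ is a constant diagonal unitary, which proves the proposition.
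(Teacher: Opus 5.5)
Your route is the paper's: evaluate the phases $e^{i\bm K\cdot\bm w_\nu}\in\{1,e^{i4\pi/3},e^{i2\pi/3}\}$, split the off-diagonal entries of $\mathcal{H}_\text{eff}$ into derivative and strain parts, and match them against $U\mathcal{D}_{\bm A}U^*$. Steps~1 and~2 are correct. Step~4 is a legitimate shortcut over the paper, which computes the $(2,1)$ entry separately. Since $t_1$ and $\bm u$ are real, each lower-left entry is the formal adjoint of the corresponding upper-right entry, so checking one entry suffices.

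Step~3, however, is wrong as written, and it is the step that carries the content of the proposition. The coefficient of $\partial_{X_1}u_2+\partial_{X_2}u_1$ is
\[
\tfrac{\sqrt3}{4}\bigl(1-e^{i4\pi/3}\bigr)=\tfrac{\sqrt3}{4}\bigl(\tfrac32+\tfrac{\sqrt3}{2}i\bigr)=\tfrac34\,e^{i\pi/6}=+\tfrac34\,i\,e^{-i\pi/3},
\]
not $-\frac34 i e^{-i\pi/3}$. The value you wrote is the coefficient of $-(\partial_{X_1}u_2+\partial_{X_2}u_1)$. The correct collected expression is therefore
\[
t_1\sum_\nu e^{i\bm K\cdot\bm w_\nu}f_\nu=\tfrac34\, t_1e^{-i\pi/3}\bigl[(\partial_{X_1}u_1-\partial_{X_2}u_2)+i(\partial_{X_1}u_2+\partial_{X_2}u_1)\bigr].
\]

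Your closing sentence then asserts an identity that is false for your own expression. With $A_1,A_2$ as in \eqref{eqn:eff-mag},
\[
A_1-iA_2=-\tfrac{t_1}{2}\bigl[(\partial_{X_1}u_1-\partial_{X_2}u_2)+i(\partial_{X_1}u_2+\partial_{X_2}u_1)\bigr],
\]
so $-\frac32e^{-i\pi/3}(A_1-iA_2)$ contains $+i(\partial_{X_1}u_2+\partial_{X_2}u_1)$. Your expression, with $-i(\cdots)$, would instead force $A_2=-\frac{t_1}{2}(\partial_{X_1}u_2+\partial_{X_2}u_1)$, which is the pseudo-magnetic potential with the wrong sign. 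Two sign slips cancel, so the ``as required'' is asserted rather than verified. Correcting that one coefficient fixes Step~3, and the argument then goes through. This is exactly the coefficient you said you would check numerically.
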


The proof of Proposition \ref{prop:uni-equivalence} is given in Supplementary Material \ref{app:dirac-op}. By Proposition \ref{prop:uni-equivalence}, we know that $\mathcal{H}_\text{eff}$ is self-adjoint since $\mathcal{D}_{\bm{A}}$ is self-adjoint. 


The induced vector potential $\bm A$ gives rise to a pseudo-magnetic field $\bm B_\text{eff}$ via
\begin{equation}\label{eqn:mag-field}
    \bm B_\text{eff}=\nabla\times\bm A
    =(\partial_{X_1}A_2-\partial_{X_2}A_1)\,\widehat{\bm z},
\end{equation}
where $\widehat{\bm{z}}$ points perpendicular to the $(X_1,X_2)$-plane. 



\begin{remark}[Comparing the discrete and continuous case]
	We compare the effective operator obtained in our discrete setting and the continuous setting in \cite{guglielmon2021landau}, which is (see Equation D6 in \cite{guglielmon2021landau})
	\begin{equation*}
		\mathcal{H}_\text{eff}^\text{cont} = v_D \Big[(-i \partial_{X_1} - A_1)\sigma_1 + (-i\partial_{X_2} -A_2) \sigma_2\Big] + W_\text{eff}\:\sigma_0,
	\end{equation*}
	where $v_D$ is a constant given by the continuous material and $W_\text{eff} = \nabla \cdot \bm{u}$ is the effective electric potential. 
    
    We notice that our effective Dirac operator $\mathcal{D}_{\bm{A}}$ in \eqref{eqn:dirac-mag} does not have such an effective electric potential. A brief explanation is that our discrete setting has chiral symmetry -- $A$ nodes only commutes with $B$ nodes (vice versa). Therefore, the effective Dirac operator preserves such a chiral symmetry, forcing the diagonal entries to vanish. One may use the Haldane model (see e.g. \cite{haldane1988model}), which considers the next-nearest neighbors, to obtain the effective electric potential. Nevertheless, all deformations considered in this paper has the property $W_\text{eff} = \nabla \cdot \bm{u} = 0$.
\end{remark}




\section{Unidirectional deformations and the $l^2_{k_\parallel}$-eigenvalue problem}\label{sec:uni-displacement}




In this section, we study an important class of unidirectional deformations, defined by displacements of the form:
\begin{equation}\label{eqn:uni-disp}
	\bm{u}(X_1, X_2) = \Big(0, d(X_1)\Big)^T.
\end{equation}
Such deformations of the honeycomb are independent of $X_2$ and respect translation invariance in the $\bm a_2$ direction; see Figure \ref{fig:ac-edge-def}. We take motivation from the special  case $\bm{u} = (0, X_1^2)^T$, which induces an effective magnetic potential agreeing with the Landau gauge for a constant perpendicular magnetic field:
\begin{align}
    {\bm A}_{\rm eff} = \frac{3}{2}\Big[-i \partial_{X_1} \sigma_1 + (-i\partial_{X_2} -t_1 X_1) \sigma_2\Big],\quad B_{\rm eff}=\nabla \times {\bm A}_{\rm eff} = t_1 \widehat{\bm z}. \label{eqn:main-landau-gauge}
\end{align}
This quadratic unidirectional deformation has been explored in many physical settings (see e.g. \cite{barczyk2024observation,barsukova2024direct}). \edit{A review of the spectral properties of the magnetic Dirac operator with Landau gauge in \eqref{eqn:main-landau-gauge}, including the emergence of Landau levels, is given in the Appendix is presented in Appendix \ref{sec:ac-zz-edge}; see Lemma \ref{lemma:spectrum-landau-dirac-mag-op}.}



Given a general and smooth unidirectional displacement of the form \eqref{eqn:uni-disp}, we notice that the deformed honeycomb under the deformation $\bm x \mapsto \bm x + \bm u(\delta \bm x)$ remains periodic in the vertical direction. \edit{We focus primarily on the unidirectional displacement in \eqref{eqn:uni-disp} with bounded deformation gradients, i.e. $d'(X_1)$ (or equivalently, $\nabla_{\bm X} \bm u$) is uniformly bounded, i.e. $d'(X_1) \in C_b^\infty(\mathbb{R})$.} 


\edit{Therefore, given $\delta$ and $k_\parallel$, we seek solutions to the eigenvalue problem \eqref{eqn:eig-def-hon} in the $l_{k_\parallel}^2$-space, i.e. 
\begin{subequations}\label{eqn:lk-evp}
    \begin{align}
    (H^\delta \bm \psi)_{m,n} &= E \bm \psi_{m,n}, \qquad \bm \psi \in l^2_{k_\parallel}, \label{eqn:lk-evp-a}\\
    \text{where} \quad l^2_{k_\parallel} &:= \{\bm (\psi_{m,n})_{m,n \in \mathbb{Z}} \ | \ \bm \psi_{m,n+1} = e^{i \delta k_\parallel} \bm \psi_{m,n} \text{ and } m \in \mathbb{Z} \mapsto \bm \psi_{m,n} \in l^2_m(\mathbb{Z})\}.\label{eqn:lk-evp-b}
    \end{align}
\end{subequations}
In what follows, we solve \eqref{eqn:lk-evp} by a formal expansion; a rigorous justification of the existence of eigenstates approximated by this expansion is given in Section \ref{sec:main-thm}.
}

\paragraph{The wave-packet ansatz for unidirectional deformations} Due to the translation invariance in the vertical direction, we seek a special form of the wave-packet ansatz \eqref{eq:ansatz} that is (1) quasi-periodic vertically and (2) rapidly decaying transversely.

Recall that our ansatz \eqref{eq:ansatz} is obtained by evaluating $e^{i \bm{K}\cdot \bm{x}}\,\bm{\Phi}(\bm X;\delta)$
at $\bm{x}= \Cell_{m,n}$ and $\bm X = \delta \Cell_{m,n}$. For unidirectional deformations \eqref{eqn:uni-disp}, it is natural to impose the quasi-periodicity directly at the level of the envelope $\bm \Phi(\bm X)$ by taking
\begin{subequations}\label{eqn:uni-phi}
    \begin{align}
     &\bm \Phi(\bm X;\delta) = e^{i \frac{k_\parallel}{2\pi} \bm X \cdot \bm a_2} \bm \Psi^{\#}\left(\frac{\bm X \cdot \bm a_1}{|\bm a_1|}; k_\parallel, \delta\right), \label{eqn:uni-phi-1}\\
     &\bm \Psi^{\#}\left(\frac{\bm X \cdot \bm a_1}{|\bm a_1|}; k_\parallel,\delta\right) \rightarrow 0 \quad \text{as} \quad |\bm X \cdot \bm a_1| \rightarrow \infty,\label{eqn:uni-phi-2}
\end{align}
\end{subequations}
where $\bm a_1, \bm a_2$ are the dual lattice vectors given in \eqref{eqn:dual-vec} and $k_\parallel \in [-\pi, \pi]$ is the parallel quasi-momentum associated the translation-invariance in the vertical direction. Since $\bm a_1$ is horizontal, we have $\frac{\bm X \cdot \bm a_1}{|\bm a_1|} = X_1$, hence $\bm \Psi^{\#}(\cdot;k_\parallel,\delta)$ depends only on the scalar variable $X_1$ with
\begin{align}\label{eqn:uni-psi-simp}
     \bm \Psi^{\#}\left(\frac{\bm X \cdot \bm a_1}{|\bm a_1|}; k_\parallel, \delta\right) &= \bm \Psi^{\#}\left(X_1; k_\parallel, \delta\right),\qquad \bm \Psi^{\#}\left(X_1; k_\parallel,\delta\right) \rightarrow 0 \quad \text{as} \quad |X_1| \rightarrow \infty.
\end{align}

Our special ansatz \eqref{eqn:uni-phi} is designed to be (1) quasi-periodic in the vertical direction and (2) rapidly localized in the transverse direction. Since $\bm \Psi^{\#}(X_1;k_\parallel,\delta)$ decays rapidly, it encodes the transverse localization of $\bm \Phi(\bm X;\delta)$ in \eqref{eqn:uni-phi}. We also observe that $\bm \Phi(\bm X;\delta)$ in \eqref{eqn:uni-phi} is quasi-periodic along the $\bm v_2$ (vertical) direction with
\begin{align*}
    \bm \Phi(\bm X+\bm v_2;\delta) = e^{i \frac{k_\parallel}{2\pi}\bm v_2 \cdot \bm a_2} \bm \Phi(\bm X;\delta) = e^{i k_\parallel} \bm \Phi(\bm X;\delta).
\end{align*}
Moreover, given $\bm \Phi(\bm X;\delta)$ in \eqref{eqn:uni-phi}, our ansatz \eqref{eq:ansatz} satisfies the Bloch boundary condition with
\begin{align}
    \bm \psi_{m,n+1} &= e^{i \bm K \cdot (\bm x + \bm v_2)} \bm \Phi(\bm X +\delta \bm v_2;\delta) \Big|_{\substack{\bm x = \Cell_{m,n}\\ \bm X = \delta \Cell_{m,n}}} = e^{i\delta k_\parallel} \bm \psi_{m,n}, \label{eqn:bloch-bdry}
\end{align}
and decays in the $m$-direction. The following proposition provides a formal relation between the self-adjoint eigenvalue problem for the discrete operator $H^\delta$ in \eqref{eqn:ham-hon-delta-approx} and the corresponding continuum eigenvalue problems in the unidirectional deformation setting:

\begin{proposition} When the displacement $\bm u$ is of the form \eqref{eqn:uni-disp}: $\bm{u}(X_1, X_2) = \Big(0, d(X_1)\Big)^T$. 
\begin{enumerate}
\item If the expression 
\begin{align}\label{eqn:uni-ansatz}
    \bm\psi_{m,n} = e^{i \bm k \cdot \bm x} \bm \Phi(\bm X;\delta)\Bigg|_{\substack{\bm x = \Cell_{m,n}\\\bm X = \delta \Cell_{m,n}}}, \quad \text{with} \quad &\bm \Phi(\bm X;\delta) = e^{i \frac{k_\parallel}{2\pi} \bm X \cdot \bm a_2} \bm \Psi^{\#}\left(\frac{\bm X \cdot \bm a_1}{|\bm a_1|}; k_\parallel, \delta\right),
\end{align}
is a solution of the $l_{k_\parallel}^2$-eigenvalue problem \eqref{eqn:lk-evp}, then $\bm \Phi(\bm X;\delta)$ in \eqref{eqn:uni-ansatz} satisfies the system \eqref{eqn:multi-calc} for all $\bm X = \{ \delta \Cell_{m,n}\}_{(m,n)\in\mathbb Z^2}$ and $\bm \Psi^{\#}(X_1;k_\parallel,\delta)$ decays with
\begin{align}
    &\bm \Psi^{\#}(X_1;k_\parallel,\delta) \rightarrow 0, \quad \text{as} \quad |X_1| \rightarrow \infty, \quad \text{and} \quad \norm{\bm \Psi^{\#}(X_1;k_\parallel,\delta)}_{L^2_{X_1}(\mathbb{R})} < \infty.\label{eqn:uni-ansatz-2}
\end{align}

\item Conversely, let $\Phi(\bm X;\delta)$ in the form of \eqref{eqn:uni-ansatz} be a solution of \eqref{eqn:multi-calc} with $\bm \Psi^{\#}(X_1;k_\parallel,\delta)$ satisfying \eqref{eqn:uni-ansatz-2}. Then the expression \eqref{eqn:uni-ansatz} solves the $l^2_{k_\parallel}$-eigenvalue problem.
\end{enumerate}
\end{proposition}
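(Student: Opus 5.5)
This is essentially a bookkeeping statement: we combine the earlier (unnumbered) proposition relating \eqref{eqn:eig-def-hon} to the system \eqref{eqn:multi-calc} with the explicit structure of the ansatz \eqref{eqn:uni-ansatz}. The eigenvalue equation itself is handled exactly as before. Substituting $\bm\psi_{m\pm m_\nu,n\pm n_\nu}$ via \eqref{eqn:main-exp-ansatz-1} with $\bm k=\bm K$ into \eqref{eqn:ham-hon-delta-approx} and factoring out $e^{i\bm K\cdot \Cell_{m,n}}$ shows that $H^\delta\bm\psi=E\bm\psi$ holds at the site $(m,n)$ if and only if \eqref{eqn:multi-calc} holds at $\bm X=\delta\Cell_{m,n}$. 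This uses only the identity $\Cell_{m\pm m_\nu,n\pm n_\nu}-\Cell_{m,n}=\pm\bm w_\nu$ and the fact that the hopping coefficients are evaluated at $\delta\Cell_{m,n}$ or $\delta\Cell_{m-m_\nu,n-n_\nu}=\delta\Cell_{m,n}-\delta\bm w_\nu$. Nothing here is specific to unidirectional $\bm u$. The new content is therefore only the equivalence between membership of $\bm\psi$ in $l^2_{k_\parallel}$ and the stated properties of $\bm\Psi^\#$.

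\textbf{Pseudo-periodicity.} This holds automatically for any $\bm\Psi^\#$, by the computation \eqref{eqn:bloch-bdry}. Since $\bm v_2\cdot\bm a_1=0$ and $\bm v_2\cdot\bm a_2=2\pi$ by \eqref{eqn:dual-vec}, we have $\bm\Psi^\#$ unchanged under $\bm X\mapsto\bm X+\delta\bm v_2$. The phase changes by $e^{i\delta k_\parallel}$, and $e^{i\bm K\cdot\bm v_2}=1$ because $\bm K=\bm a_1/3$.

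\textbf{Transverse $l^2$ condition.} By \eqref{eqn:ref-ab-honeycomb}, the $X_1$-coordinate of $\delta\Cell_{m,n}$ is $\delta\frac{\sqrt3}{2}m$. Hence
\[
|\bm\psi_{m,n}|=\Big|\bm\Psi^\#\big(\tfrac{\sqrt3}{2}\delta m;k_\parallel,\delta\big)\Big|,
\]
independent of $n$, and $\sum_m|\bm\psi_{m,n}|^2$ is a Riemann sum (mesh $\frac{\sqrt3}{2}\delta$) for $\frac{2}{\sqrt3\delta}\|\bm\Psi^\#\|_{L^2}^2$.

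For direction (2), I would assume $\bm\Psi^\#$ is smooth, as throughout Section \ref{subsec:multi-scale-exp}, and decays together with its derivative; for instance $\bm\Psi^\#\in H^1(\mathbb R)$. Then the sampled sequence lies in $l^2$, by the standard sampling bound
\[
\delta\sum_m|g(\delta c m)|^2\lesssim \|g\|_{L^2}^2+\delta^2\|g'\|_{L^2}^2,
\]
obtained by applying the one-dimensional Sobolev inequality on each mesh interval. Combined with the site-by-site equivalence and pseudo-periodicity, this gives $\bm\psi\in l^2_{k_\parallel}$ solving \eqref{eqn:lk-evp}.

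For direction (1), the eigen-equation gives \eqref{eqn:multi-calc} on the lattice points. Then $\bm\psi\in l^2_{k_\parallel}$ forces the samples $\bm\Psi^\#(\frac{\sqrt3}{2}\delta m)$ to be square-summable, and hence to tend to $0$ as $|m|\to\infty$. This yields \eqref{eqn:uni-ansatz-2} in the sense relevant to the lattice, with $\|\bm\Psi^\#\|_{L^2}$ understood via the same Riemann-sum identification.

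\textbf{Main obstacle.} The delicate point is exactly this passage between the discrete $l^2_m$ condition and continuum decay of $\bm\Psi^\#$. A function's values on the grid $\frac{\sqrt3}{2}\delta\mathbb Z$ do not control it off the grid, and vice versa, without regularity. I would handle this by making explicit the standing smoothness and decay assumptions on the envelope, namely $\bm\Psi^\#\in H^1$ (or Schwartz class for the formal expansion), and stating the converse direction as an identification of $\bm\Psi^\#$ only up to its lattice samples. With those assumptions both implications reduce to the sampling estimate above.
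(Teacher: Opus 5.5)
Your argument is correct and follows the route the paper implicitly takes. The paper writes no separate proof; it relies on the site-by-site substitution \eqref{eqn:main-exp-ansatz-1}, the Bloch computation \eqref{eqn:bloch-bdry}, and the tacit standing assumption that $\bm\Psi^\#$ is smooth and rapidly decaying. Your explicit $H^1$ hypothesis and sampling inequality for item 2 make rigorous what the paper leaves unsaid.

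One correction: your claim that both implications reduce to the sampling inequality is wrong for item 1. That inequality bounds the samples by the function, not the reverse. No regularity assumption can give decay of $\bm\Psi^\#$ off the grid $\frac{\sqrt3}{2}\delta\mathbb Z$: adding $\bm c\sin(4\pi X_1/(\sqrt3\delta))$ to $\bm\Psi^\#$ changes neither $\bm\psi$ nor \eqref{eqn:multi-calc} on the lattice. Item 1 should therefore be read existentially. Replace $\bm\Psi^\#$ by, e.g., the piecewise-linear interpolant of its samples. This leaves everything on the lattice unchanged and yields \eqref{eqn:uni-ansatz-2} with $\|\bm\Psi^\#\|_{L^2}^2\lesssim\delta\sum_m|\bm\psi_{m,n}|^2$.
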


To solve for the system \eqref{eqn:multi-calc} with $\bm \Phi(\bm X;\delta)$ of the form \eqref{eqn:uni-phi}, we expand $\bm \Psi^{\#}(X_1;k_\parallel,\delta)$ in \eqref{eqn:uni-phi} and the eigenvalue $E$ (now depending on $k_\parallel$) in powers of $\delta$:
\begin{equation}
    \bm \Psi^{\#}(X_1;k_\parallel,\delta) = \sum_{j\geq 0} \delta^j \bm \Psi^{\#}_j(X_1;k_\parallel), \qquad E^\delta(k_\parallel) = \sum_{j\geq 1} \delta^j E_j(k_\parallel), \label{eqn:main-psi-exp}
\end{equation}
where $\bm \Psi^{\#}_j(X_1;k_\parallel)$ are smooth and decaying functions of $X_1$. 

In Section \ref{sec:derivation}  we have derived, for general displacements $\bm u(\bm X)$,  a hierarchy of equations for terms, $\bm \Phi_j(\bm X)$, in the formal expansion of $\bm \Phi(\bm X;\delta)$. We now adapt these calculations to obtain  the expansion coefficients $\bm \Psi^{\#}_j(X_1;k_\parallel)$ of $\bm \Psi^{\#}(X_1;k_\parallel,\delta)$ in \eqref{eqn:main-psi-exp} by substituting
\begin{align}
    \bm \Phi_j(\bm X;k_\parallel) = e^{i \frac{k_\parallel}{2\pi} \bm X \cdot \bm a_2} \bm \Psi^{\#}_j\left(\frac{\bm X \cdot \bm a_1}{|\bm a_1|}; k_\parallel\right), \qquad j\geq 0 \label{eqn:phi-psi-relation-0}
\end{align}
into the equations for $\bm \Phi_j(\bm X;k_\parallel)$. 

For future convenience, we choose $\bm \Psi_j^{\#}(X_1;k_\parallel)$ in \eqref{eqn:phi-psi-relation-0} in the following form
\begin{subequations}\label{eqn:uni-psi-relation}
    \begin{align} 
     &\bm \Psi_j^{\#}\left(X_1; k_\parallel\right) = e^{-i\frac{\sqrt{3}}{3}k_\parallel X_1} U \bm \Psi_j\left(X_1; k_\parallel\right), \label{eqn:uni-psi-relation-1}\\
     & \bm \Psi_j\left(X_1; k_\parallel\right) \rightarrow 0, \quad \text{as} \quad |X_1| \rightarrow 0, \label{eqn:uni-psi-relation-2}
\end{align}
\end{subequations}
where $U=\text{diag}(e^{-i\frac{\pi}{6}},e^{i\frac{\pi}{6}})$ is defined in \eqref{eqn:H-mag}\footnote{The matrix $U$ and the extra phase $e^{-i\frac{\sqrt{3}}{3}k_\parallel X_1}$ are introduced here so that the resulting equations for $\bm \Psi_j(X_1;k_\parallel)$ take the form of one-dimensional Dirac equations; see \eqref{eqn:1st-expansion} below. }. Consequently, using \eqref{eqn:phi-psi-relation-0} and \eqref{eqn:uni-psi-relation}, $\bm \Phi_j(\bm X)$ and $\bm \Psi_j(X_1;k_\parallel)$ are related by
\begin{equation}
    \bm \Phi_j(\bm X;k_\parallel) = e^{i \frac{k_\parallel}{2\pi} \bm X \cdot \bm a_2} \Big(e^{-i\frac{\sqrt{3}}{3}k_\parallel X_1} U \bm \Psi_j\left(X_1; k_\parallel\right)\Big)= e^{i\frac{k_\parallel}{3} X_2} U \bm \Psi_j(X_1;k_\parallel),\label{eqn:phi-psi-relation}
\end{equation}
where the last equality holds by applying the coordinate of $\bm a_2$ in \eqref{eqn:dual-vec}.


We now substitute \eqref{eqn:phi-psi-relation} into equations \eqref{eqn:main-order-1} and \eqref{eqn:next-order} of $\bm \Phi_j(\bm X)$ to obtain the equations for $\bm \Psi_j(X_1;k_\parallel)$ with $j\geq 0$. Here we present the effective envelope equations for $j=0,1$; the cases $j\geq 2$ are analogous and omitted. The full derivation is given in Appendix \ref{app:dev-1d-eff-eqn}.
\begin{enumerate}[(1)]
    \item $\bm \Psi_0(X_1;k_\parallel)$ is the eigenfunction of a one-dimensional effective Dirac operator $\mathcal{D}(k_\parallel)$ with associated eigenvalue $E_1(k_\parallel)$, i.e.
\begin{equation}\label{eqn:1st-expansion}
	\mathcal{D}(k_\parallel) \bm{\Psi}_0 = E_1(k_\parallel) \bm{\Psi}_0,
\end{equation}
and $\mathcal{D}(k_\parallel)$ is given by
\begin{equation}\label{eqn:dirac-1d}
	\mathcal{D}(k_\parallel) := \frac{3}{2} \Big[-i\partial_{X_1} \sigma_1 + \kappa(X_1;k_\parallel) \sigma_2\Big], \qquad \kappa(X_1;k_\parallel) = \frac{k_\parallel}{3} - \frac{t_1}{2} d'(X_1);
\end{equation}
For $k_\parallel$ in a suitable interval, the spectrum of  $\mathcal{D}(k_\parallel)$ consists of two semi-infinite intervals tend to $\pm\infty$, separated by a spectral gap,  and a finite number of eigenvalues in this spectral gap. The eigenvalues of $\mathcal{D}(k_\parallel)$ are simple; see Lemma \ref{lemma:1d} in Section \ref{sec:uni-def-bdd}.

\item $\bm{\Psi}_1(X_1;k_\parallel)$ and $E_2(k_\parallel)$ satisfy
\begin{equation}\label{eqn:2nd-expansion}
	\begin{aligned}
		\Big(\mathcal{D}(k_\parallel)  - E_1 (k_\parallel) \Big) \bm{\Psi}_1 = E_2(k_\parallel) \bm{\Psi}_0 + \bm{\mathcal{R}}_2[\bm \Psi_0;k_\parallel],
    \end{aligned}
\end{equation}
where $\bm{\mathcal{R}}_2[\bm \Psi_0;k_\parallel] = (\mathcal{R}_2^A, \mathcal{R}_2^B)^T$ are the remaining terms related to $k_\parallel$ and $\bm \Psi_0$, given by
\begin{subequations}\label{eqn:app-rem-delta-2}
\begin{align}
    \mathcal{R}_2^A &= -\frac{1}{8} k_{\parallel}^2 \Psi_0^B - \left(\sqrt{3} e^{i\frac{5}{3}\pi} t_1 f_2  - \frac{\sqrt{3}}{4} k_\parallel i \right) \partial_{X_1} \Psi_0^B - \left( \frac{3}{2} e^{i\frac{5}{3}\pi} - \frac{3}{8} \right)\partial_{X_1}^2 \Psi_0^B, \label{eqn:app-rem-delta-2-1}\\
    \mathcal{R}_2^B &= -\frac{1}{8} k_{\parallel}^2 \Psi_0^A + \left(\frac{\sqrt{3}}{4} k_\parallel i \right) \partial_{X_1} \Psi_0^A + \sqrt{3} e^{-i\frac{5}{3}\pi} t_1 \partial_{X_1}\left(f_2 \Psi_0^A\right)- \left( \frac{3}{2} e^{-i\frac{5}{3}\pi} - \frac{3}{8} \right)\partial_{X_1}^2 \Psi_0^A. \label{eqn:app-rem-delta-2-2}
\end{align}
\end{subequations}
Since $E_1(k_\parallel)$ is a simple eigenvalue, $E_2(k_\parallel)$ is determined by a solvability condition
\begin{equation}\label{eqn:E2}
	-E_2(k_\parallel) \norm{\bm \Psi_0}^2_{L^2(\mathbb{R})} = \Big\langle \Psi_0^A, \mathcal{R}_2^A \Big\rangle_{L^2(\mathbb{R})} + \Big \langle\Psi_0^B, \mathcal{R}_2^B \Big\rangle_{L^2(\mathbb{R})},
\end{equation}
and $\bm \Psi_1(X_1;k_\parallel)$ is determined in the subspace orthogonal to $\bm \Psi_0(X_1;k_\parallel)$.

\end{enumerate}

We also construct approximate solutions to the eigenvalue problem \eqref{eqn:eig-def-hon} subject to the Bloch boundary condition \eqref{eqn:bloch-bdry}, which is stated as the following Proposition:


\begin{proposition}\label{prop:uni-approx}
Consider a unidirectional displacement $\bm u = (0,d(X_1))^T$ with bounded deformation gradient, i.e. $d'(X_1) \in C^\infty_b(\mathbb{R})$. Fix $l\geq 0$. Let $\bm \Psi_j(X_1;k_\parallel)$ satisfy the effective envelope equation for $0\leq j\leq l$. Then the pair $\bm \psi^{(l)} = (\bm \psi^{(l)}_{m,n})_{m,n \in \mathbb{Z}}$ and $E^{(l+1)}$, defined by 
\begin{subequations}\label{eqn:uni-ansatz-approx}
    \begin{align}
        &\bm \psi_{m,n}^{(l)} = e^{i \bm K \cdot \bm x} e^{i \frac{k_\parallel}{2\pi} \bm X \cdot \bm a_2} \Bigg(\sum_{j=0}^l \delta^j \bm \Psi^{\#}_j\left(\frac{\bm X \cdot \bm a_1}{|\bm a_1|}; k_\parallel\right) \Bigg) \bigg|_{\substack{\bm x = \Cell_{m,n}\\\bm X = \delta \Cell_{m,n}}} \label{eqn:uni-ansatz-approx-1}\\
        &\qquad \ =e^{i \bm K \cdot \bm x} e^{i \frac{k_\parallel}{3}X_2} U \Bigg(\sum_{j=0}^l \delta^j \bm \Psi_j\left(X_1; k_\parallel\right) \Bigg)\bigg|_{\substack{\bm x = \Cell_{m,n}\\\bm X = \delta \Cell_{m,n}}}, \label{eqn:uni-ansatz-approx-2}\\
         &E^{(l+1)}(k_\parallel) = \sum_{j=1}^{l+1} \delta^j E_j(k_\parallel), \label{eqn:uni-E-approx}
    \end{align}
\end{subequations} 
form approximate \edit{$l^2_{k_\parallel}$-eigenpairs} of $H^\delta \bm \psi_{m,n}=E\bm \psi_{m,n}$ at order $O(\delta^{l})$. Moreover, the remainders of these approximate are of order $O(\delta^{l+1})$, i.e.
\begin{align}
     \sup_{n \in \mathbb{Z}}\norm{\Big[\ \Big(H^\delta  - E^{(l+1)}\Big) \bm \psi^{(l)}\ \Big]_{m,n}}_{l^2_m(\mathbb{Z};\mathbb{C}^2)}  \lesssim O(\delta^{l+1}).\label{eqn:eig-approx-rem} 
\end{align}

\end{proposition}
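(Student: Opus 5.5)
The key observation is that the pointwise residual is actually of order $\delta^{l+2}$, not $\delta^{l+1}$. Passing to the $l^2_m$ norm then costs only a factor $\delta^{-1/2}$, which still leaves $O(\delta^{l+3/2})\lesssim O(\delta^{l+1})$, as stated. The proof is a direct substitution of the ansatz \eqref{eqn:uni-ansatz-approx} into $H^\delta-E^{(l+1)}$, followed by a Taylor expansion with exact integral remainders and a Riemann-sum bound in $m$.

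\textbf{Step 1: reduction to a residual of the slow variables.} By the formal correspondence in the proposition preceding \eqref{eqn:multi-calc}, together with \eqref{eqn:main-exp-ansatz-1}, we have
\[
\big[(H^\delta-E^{(l+1)})\bm\psi^{(l)}\big]_{m,n}=e^{i\bm K\cdot\Cell_{m,n}}\,\bm{\mathcal E}^\delta(\bm X)\big|_{\bm X=\delta\Cell_{m,n}}.
\]
Here $\bm{\mathcal E}^\delta$ is obtained by inserting $\bm\Phi^{(l)}=e^{i\frac{k_\parallel}{3}X_2}U\sum_{j\le l}\delta^j\bm\Psi_j(X_1)$ into the left side minus the right side of \eqref{eqn:multi-calc}, with $E=E^{(l+1)}$. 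The factor $e^{i\frac{k_\parallel}{3}X_2}$ factors out of every shifted term, leaving only the constant phases $e^{\pm i\frac{k_\parallel}{3}\delta(\bm w_\nu)_2}$. Moreover, $f_\nu$ depends only on $X_1$ because $d'$ does. Hence
\[
|\bm{\mathcal E}^\delta(\bm X)|=|\bm r^\delta(X_1)|
\]
for a function $\bm r^\delta$ of one variable, and $X_1=\delta\frac{\sqrt3}{2}m$ does not depend on $n$. The supremum over $n$ is therefore trivial, and it suffices to bound $\big(\sum_m|\bm r^\delta(\delta\tfrac{\sqrt3}{2}m)|^2\big)^{1/2}$.

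\textbf{Step 2: Taylor expansion to order $\delta^{l+1}$ with exact remainder.} Expand the following quantities to order $\delta^{l+1}$ in $\delta$, keeping integral-form remainders:
\begin{itemize}
\item $\Psi^{A,B}_j(X_1\pm\delta(\bm w_\nu)_1)$;
\item $f_\nu(X_1-\delta(\bm w_\nu)_1)$;
\item the phases $e^{\pm i\frac{k_\parallel}{3}\delta(\bm w_\nu)_2}$.
\end{itemize}
Collecting powers $\delta^p$ for $0\le p\le l+1$ reproduces exactly the hierarchy derived in Sections \ref{sec:derivation}--\ref{sec:uni-displacement}. Order $\delta^0$ vanishes because $\sum_\nu e^{i\bm K\cdot\bm w_\nu}=0$. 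Order $\delta^1$ is \eqref{eqn:1st-expansion}. Order $\delta^{p+1}$ is the equation $(\mathcal D-E_1)\bm\Psi_p=E_{p+1}\bm\Psi_0+\bm{\mathcal R}_{p+1}[\bm\Psi_0,\dots,\bm\Psi_{p-1}]$, for $p\le l$. All of these hold by hypothesis, and $E^{(l+1)}$ contains $E_{l+1}$.

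What remains is of the form $\delta^{l+2}\bm G^\delta(X_1)$. The function $\bm G^\delta$ is a finite sum of terms, each of which is a bounded coefficient times one of:
\begin{itemize}
\item products of $f_\nu^{(a)}$ and $\partial^b\bm\Psi_j$ evaluated at $X_1$;
\item such products evaluated at points $X_1+s\delta(\bm w_\nu)_1$ with $|s|\le1$, coming from the remainders.
\end{itemize}
The bounded coefficients come from the $E_j$, from $k_\parallel$, and from powers of $\delta\le1$. The cross terms $\delta^j\cdot\delta^{p}$ with $j+p\ge l+2$ are absorbed in the same way.

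\textbf{Step 3: the main obstacle, uniform decay of $\bm G^\delta$ and the Riemann sum.} Since $d'\in C_b^\infty$, every $f_\nu^{(a)}$ is bounded. The remaining task is to control translated derivatives of the $\bm\Psi_j$ uniformly in $\delta$. For this I would use that the solutions $\bm\Psi_j$ of the envelope equations are smooth with all derivatives exponentially decaying: $\bm\Psi_0$ is an eigenfunction of $\mathcal D(k_\parallel)$ for an eigenvalue in the gap (Lemma \ref{lemma:1d}), and the $\bm\Psi_j$ are obtained inductively by inverting $\mathcal D-E_1$ on the orthogonal complement of $\bm\Psi_0$. This gives, for the finitely many derivatives needed,
\[
|\partial^b\bm\Psi_j(y)|\le C\,e^{-\gamma|y|}.
\]
For $|y-X_1|\le C\delta\le C$, the weight satisfies $e^{-\gamma|y|}\le e^{\gamma C}e^{-\gamma|X_1|}$. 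Consequently
\[
|\bm G^\delta(X_1)|\le C'e^{-\gamma|X_1|}
\]
uniformly in $0<\delta\le1$.

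Finally, evaluate on the grid $X_1=\delta\frac{\sqrt3}{2}m$. Comparing the sum with an integral gives
\[
\sum_m e^{-2\gamma\delta\frac{\sqrt3}{2}|m|}\lesssim \delta^{-1}.
\]
Therefore the $l^2_m$ norm of the residual is $\lesssim \delta^{l+2}\cdot\delta^{-1/2}=\delta^{l+3/2}$, uniformly in $n$. This is $\lesssim\delta^{l+1}$, which proves \eqref{eqn:eig-approx-rem}.
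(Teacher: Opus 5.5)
Your proposal is correct and follows the same route as the paper's (only sketched) proof: substitute the ansatz, let the envelope hierarchy cancel every order through $\delta^{l+1}$, and control the Taylor remainders by the exponential decay of the $\bm\Psi_j$ (Lemma \ref{lemma:1d}); your write-up supplies details the paper omits and, via the pointwise $O(\delta^{l+2})$ residual and the $\delta^{-1/2}$ Riemann-sum loss, gives the sharper bound $O(\delta^{l+3/2})$ in \eqref{eqn:eig-approx-rem}. One point remains asserted rather than proved, in your argument as in the paper's: the exponential decay of $\bm\Psi_j$ for $j\ge1$. Proposition \ref{prop:inv-Dk} only gives $H^1$ control, so you would need the exponential-dichotomy Green's function of Lemma \ref{lemma:exp-dich-bd} applied to exponentially decaying sources, and Lemma \ref{lemma:1d} itself requires the domain-wall and integrability conditions that the proposition does not list.
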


The proof of Proposition \ref{prop:uni-approx} is a straightforward calculation when the effective envelope equations for $\bm \Psi_j(X_1;k_\parallel)$ are satisfied. Here we briefly outline the proof and omit the detailed calculation. When $d'(X_1)$ is bounded, the difference $H^\delta - H^0$ is a bounded operator for any fixed $\delta$. Moreover, any eigenfunction $\bm \Psi_0(X_1;k_\parallel)$ associated with an eigenvalue $E_1(k_\parallel)$ of $\mathcal{D}(k_\parallel)$ decays exponentially; see Lemma \ref{lemma:1d}(c-d). Consequently, the remainder terms involving $\bm \Psi_0(X_1;k_\parallel)$ are bounded in the given norm in \eqref{eqn:eig-approx-rem}, and the terms for $j\geq 1$ are controlled similarly.

While Proposition \ref{prop:uni-approx} provides bounds on the remainder terms, our main result, Theorem \ref{thm:main}, goes further by proving the existence of an eigenstate approximated by \eqref{eqn:uni-ansatz}, together with estimates for the corresponding correctors.

\subsection{Unidirectional deformations with bounded gradients}\label{sec:uni-def-bdd}


In this section, we present the spectral properties for $\mathcal{D}(k_\parallel)$ in \eqref{eqn:dirac-1d} when the unidirectional displacement $\bm u = (0,d(X_1))^T$ has bounded deformation gradient, i.e. $d'(X_1)$ is bounded. 


The spectrum of $\mathcal{D}(k_\parallel)$ when $d'(X_1)$ is bounded has been well-studied (see e.g. \cite{fefferman2017topologically,fefferman2014topologically,lu2020dirac}). We now state the spectral properties of $\mathcal{D}(k_\parallel)$. For simplicty, we recall $\kappa(X_1;k_\parallel) = k_\parallel/3 - t_1 d'(X_1)/2$ defined in \eqref{eqn:dirac-1d}. Notice that $\kappa(X_1;k_\parallel)$ being bounded is equivalent to $d'(X_1)$ being bounded.



\begin{lemma}[Spectral properties of $\mathcal{D}(k_\parallel)$ with bounded $\kappa(X_1;k_\parallel)$]\label{lemma:1d}
	  Fix $k_\parallel$. Assume that $\kappa(X_1;k_\parallel)$ is continuous in $X_1$ and behaves like a domain-wall type with limits of opposite sign at $\pm \infty$, i.e.
      \begin{subequations}\label{eqn:kappa-limit}
          \begin{align}
              &\kappa_\pm(k_\parallel) := \lim_{X_1 \rightarrow \pm \infty} \kappa(X_1; k_\parallel), \label{eqn:kappa-limit-1}\\
              &\kappa_+(k_\parallel) \: \kappa_-(k_\parallel) < 0.\label{eqn:kappa-limit-2}
          \end{align}
      \end{subequations}
      We further assume an integrability condition on $\kappa(X_1;k_\parallel)$, i.e.
      \begin{equation}\label{eqn:kappa-condition-2}
		\kappa(X_1;k_\parallel) - \kappa_+(k_\parallel) \in L^1([0,\infty)), \qquad \kappa(X_1;k_\parallel) - \kappa_-(k_\parallel) \in L^1((-\infty,0]).
	\end{equation}
	Then, the following statements hold for $\mathcal{D}(k_\parallel)$ in \eqref{eqn:dirac-1d}:
	\begin{enumerate}[(a)]
		\item $\mathcal{D}(k_\parallel)$ has essential spectrum equal to $(-\infty, -a(k_\parallel)] \cup [a(k_\parallel), \infty)$, where $a(k_\parallel)$ is given by
		\begin{equation}\label{eqn:kappa-pm}
			a(k_\parallel) := \frac{3}{2} \min \Big\{\big|\kappa_+(k_\parallel)\big|, \big|\kappa_-(k_\parallel)\big|\Big\},
		\end{equation}
		
		\item The eigenvalue problem \eqref{eqn:1st-expansion} has a simple zero eigenvalue, i.e. $E_1 = 0$ (independent of $k_\parallel$) and the corresponding zero eigenfunction $\bm \Psi_0(X_1;k_\parallel)$ has exponential decay, i.e. there exists $\lambda_-(k_\parallel) < 0 < \lambda_+(k_\parallel)$ and $\bm p_\pm(k_\parallel) \in \mathbb{R}^2$ such that
         \begin{align}
            &\lim_{X_1 \rightarrow +\infty} \bm \Psi_0(X_1;k_\parallel) \ e^{\lambda_+(k_\parallel) X_1} = \bm p_+(k_\parallel),  \qquad \lim_{X_1 \rightarrow -\infty} \bm \Psi_0(X_1;k_\parallel) \ e^{\lambda_-(k_\parallel) X_1} = \bm p_-(k_\parallel).\label{eqn:main-psi0-decay}
        \end{align}
		  Furthermore, when all higher-order derivatives are bounded, i.e. $\kappa(X_1;k_\parallel) \in C_b^\infty(\mathbb{R})$, all derivatives of $\bm \Psi_0(X_1;k_\parallel)$ have exponential decay and 
          \begin{equation}\label{eqn:main-psi0-Hs-bd}
			\| \bm \Psi_0(\cdot;k_\parallel)\|_{H^s(\mathbb{R})} < \infty, \qquad \forall s \in \mathbb{N}.
		\end{equation}
		
		\item $\mathcal{D}(k_\parallel)$ may have non-zero eigenvalues that live in the spectral gap, i.e. $E_1(k_\parallel) \in \big(-a(k_\parallel), a(k_\parallel)\big)$. When such an eigenvalue exists, it is simple and its corresponding eigenstates also have exponential decay and satisfy \eqref{eqn:main-psi0-decay}. Furthermore, when $\kappa(X_1;k_\parallel) \in C^\infty_b(\mathbb{R})$, all derivatives of $\bm \Psi_0(X_1;k_\parallel)$ have exponential decay and \eqref{eqn:main-psi0-Hs-bd} holds.
	\end{enumerate}
\end{lemma}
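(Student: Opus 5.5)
The proof rests on the explicit structure of the one-dimensional Dirac operator $\mathcal{D}(k_\parallel)=\frac32\big[-i\partial_{X_1}\sigma_1+\kappa\sigma_2\big]$. Writing it out in components,
\[
\mathcal{D}(k_\parallel)=\frac32\begin{pmatrix}0 & -i\partial_{X_1}-i\kappa\\ -i\partial_{X_1}+i\kappa & 0\end{pmatrix},
\]
so it is off-diagonal, with entries $-i(\partial_{X_1}+\kappa)$ and its adjoint. Throughout, $k_\parallel$ is fixed and suppressed.

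\textbf{Part (a).} Let $\kappa_\infty(X_1)$ be a smooth function equal to $\kappa_\pm$ for $\pm X_1\ge 1$. The difference $\kappa-\kappa_\infty$ is bounded and tends to zero at $\pm\infty$, because of (\ref{eqn:kappa-limit-1}). Hence $\frac32(\kappa-\kappa_\infty)\sigma_2$ is a relatively compact perturbation: multiplication by a decaying function composed with $(\mathcal D_\infty-i)^{-1}$, which gains one derivative, is compact by Rellich on bounded sets plus smallness at infinity. By Weyl's theorem it then suffices to compute the essential spectrum of $\mathcal{D}_\infty$. For that I use Weyl sequences supported far out on either side. Plane waves give $\mathcal D^2=\frac94(\xi^2+\kappa_\pm^2)$, so the essential spectrum contains $\pm\frac32\sqrt{\xi^2+\kappa_\pm^2}$ for all real $\xi$. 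This yields $\{|E|\ge \frac32\min|\kappa_\pm|\}$. For the reverse inclusion I use a partition of unity (IMS type), which localizes the essential spectrum to the two asymptotic constant-mass operators.

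\textbf{Part (b).} The equation $\mathcal D\bm\Psi=0$ decouples into
\[
\partial_{X_1}\Psi^B+\kappa\Psi^B=0,\qquad \partial_{X_1}\Psi^A-\kappa\Psi^A=0,
\]
with solutions
\[
\Psi^B=c\,e^{-\int_0^{X_1}\kappa},\qquad \Psi^A=c'\,e^{\int_0^{X_1}\kappa}.
\]
Because $\kappa_+\kappa_-<0$, exactly one of these lies in $L^2$. If $\kappa_+>0>\kappa_-$, it is $\Psi^B$; otherwise it is $\Psi^A$. So the kernel is one-dimensional. The $L^1$ condition (\ref{eqn:kappa-condition-2}) gives $\int_0^{X_1}\kappa=\kappa_+X_1+C_++o(1)$ as $X_1\to+\infty$, and similarly at $-\infty$. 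This yields exactly the limits (\ref{eqn:main-psi0-decay}), with $\lambda_\pm=\pm|\kappa_\pm|$ and $\bm p_\pm$ proportional to $e^{-C_\pm}$ times a unit vector. For derivatives, differentiating the ODE repeatedly expresses $\partial^j\Psi$ as a polynomial in $\kappa,\kappa',\dots$ times $\Psi$. When $\kappa\in C_b^\infty$ these coefficients are bounded, so every derivative decays exponentially and the $H^s$ bound (\ref{eqn:main-psi0-Hs-bd}) follows.

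\textbf{Part (c).} Suppose $0\ne E\in(-a,a)$ is an eigenvalue. First I show it is simple. The eigenvalue equation is a first-order $2\times2$ ODE system. Near $\pm\infty$ its coefficient matrix tends to a constant matrix with eigenvalues $\pm\mu_\pm$, where
\[
\mu_\pm=\sqrt{\kappa_\pm^2-(2E/3)^2}>0 .
\]
By Levinson's asymptotic theorem, which applies because of the $L^1$ perturbation condition, the decaying solutions at $+\infty$ form a one-dimensional space. Therefore any two $L^2$ eigenfunctions are proportional, and Levinson also supplies the asymptotics (\ref{eqn:main-psi0-decay}) with $\lambda_\pm=\pm\mu_\pm$. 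Alternatively, simplicity follows from the Wronskian: it is constant and vanishes at infinity. Regularity and the $H^s$ bounds follow as in part (b), by bootstrapping the ODE.

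\textbf{Main obstacle.} I expect the hardest step to be the careful justification of the Levinson asymptotics, together with the reverse inclusion in part (a). For both I would cite the standard treatments in \cite{fefferman2017topologically,lu2020dirac} rather than redo them in full.
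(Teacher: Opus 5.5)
Your proposal is correct and follows essentially the paper's route (the supplementary lemma for the real operator $\mathcal{D}^r$). Both use a compact-perturbation/Weyl argument for the essential spectrum, the explicit zero-mode formula, Levinson's asymptotic theorem for simplicity and exponential decay of gap eigenvalues, and bootstrapping of the ODE for the $H^s$ bounds; in the paper Levinson appears as the Coddington--Levinson exercise \cite{coddington1956theory}, applied to an $L^1$ perturbation of a constant matrix with negative determinant. You are sharper than the paper in two places: comparing with a smooth step-mass $\kappa_\infty$ and using IMS localization genuinely justifies part (a), whereas the paper's claim that $\mathcal{D}^r$ is a compact perturbation of $\mathcal{D}_\pm$ fails for each one alone; and reading the zero-mode asymptotics directly off $\int_0^{X_1}\kappa=\kappa_\pm X_1+C_\pm+o(1)$, or getting simplicity from the Wronskian (constant for a traceless system, and vanishing at infinity since eigenfunctions lie in $H^1$), bypasses Levinson for those steps.
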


Lemma \ref{lemma:1d} is obtained by adapting classical ODE results from \cite{coddington1956theory} and spectral properties of $\mathcal{D}(k_\parallel)$ (see e.g. Theorem 4.2 in \cite{fefferman2017topologically}). We provide a detailed proof of Lemma \ref{lemma:1d} in the Supplementary Material (see the discussion of Lemma \ref{lemma:sm-lemma-ode-asy}). 




Here we briefly explain why the domain-wall type behavior in \eqref{eqn:kappa-limit} leads to a zero eigenstate with exponential decay. To see why $\mathcal{D}(k_\parallel)$ has a simple zero eigenvalue, we observe that the zero eigenvalue problem of $\mathcal{D}(k_\parallel)$ is equivalent to the following decoupled linear system
\begin{equation}\label{eqn:zero-state}
	0 = \begin{pmatrix}
		0 & -i\partial_{X_1} -i\kappa(X_1;k_\parallel)\\
		-i\partial_{X_1} + i\kappa(X_1;k_\parallel) & 0
	\end{pmatrix}\begin{pmatrix}
		\Psi_0^A\\
		\Psi_0^B
	\end{pmatrix}.
\end{equation}
When $\kappa_+(k_\parallel) > 0$ and $\kappa_-(k_\parallel) < 0$, the zero eigenfunction only lives on $B$-nodes with
\begin{equation}\label{eqn:zero-b}
	\Psi_0^A(X_1;k_\parallel) = 0, \qquad \Psi_0^B(X_1;k_\parallel) = e^{-\int_0^{X_1} \kappa(s;k_\parallel) ds}.
\end{equation}
Similarly, when $\kappa_+(k_\parallel) < 0$ and $\kappa_-(k_\parallel) > 0$, the zero eigenfunction only lives on the $A$-nodes with
\begin{equation}\label{eqn:zero-a}
	\Psi_0^B(X_1;k_\parallel) = 0, \qquad \Psi_0^A(X_1;k_\parallel) = e^{\int_0^{X_1} \kappa(s;k_\parallel) ds}.
\end{equation}
From the explicit solutions in \eqref{eqn:zero-b} and \eqref{eqn:zero-a}, we observe that when $X_1 \rightarrow \pm \infty$, the zero eigenfunction $\bm{\Psi_0}(X_1;k_\parallel)$ has exponential decay at infinity since $\kappa(X_1;k_\parallel)$ approximates two constants with opposite signs. Eigenfunctions of $\mathcal{D}(k_\parallel)$ for nonzero eigenvalues exhibit analogous behavior, though the argument is less direct.

It is worth mentioning that we have $E_2(k_\parallel) = 0$ when we choose $E_1(k_\parallel)=0$. We observe that $\mathcal{R}_2^1$ in \eqref{eqn:app-rem-delta-2} depends only on $\Psi_0^B$ and $\mathcal{R}_2^2$ depends only on $\Psi_0^A$. Moreover, as mentioned in \eqref{eqn:zero-b} and \eqref{eqn:zero-a}, any zero eigenfunction is supported entirely on either the $A$-nodes or the $B$-nodes. Therefore, if $\Psi_0^A \neq 0$, then $\Psi_0^B = \mathcal{R}_2^1 = 0$ (the same applies when $\Psi_0^B \neq 0$). Consequently, we have $E_2(k_\parallel) = 0$.

\begin{remark}[The topologically proteced zero eigenstates]\label{rmk:zero-eigenstates}
    The zero eigenvalue and its eigenfunction are topologically protected -- their existence persists under any perturbation of $\kappa(X_1;k_\parallel)$ that preserve its sign change (a similar phenomenon involving such a topologically protected zero eigenstate is investigated in \cite{fefferman2017topologically}). However, the nonzero eigenvalues may not be topologically protected (see also in \cite{lu2020dirac}).
\end{remark}

\begin{remark}[The number of nonzero eigenvalues in the spectral gap]\label{rmk:number-non-zero}
    For a general domain-wall-type $\kappa(X_1;k_\parallel)$, we have a qualitative understanding of the number of nonzero eigenvalues in the spectral gap $\big(-a(k_\parallel),a(k_\parallel)\big)$: when the difference between the two limits
    \begin{equation*}
        d(k_\parallel):= \big|\kappa_+(k_\parallel) - \kappa_-(k_\parallel)\big|
    \end{equation*}
    becomes larger, the number of nonzero eigenvalues increases. A similar phenomenon has been observed in the Schr\"odinger equation with a deep well potential (see e.g. Section 5 in \cite{hall2013quantum}).

    Although we lack a quantitative description of how many nonzero eigenvalues there are, we know they occur in pairs: if $E_1(k_\parallel)$ is an eigenvalue of $\mathcal{D}(k_\parallel)$ with eigenfunction $(\Psi_0^A,\Psi_0^B)^T$, then $-E_1(k_\parallel)$ is also an eigenvalue of $\mathcal{D}(k_\parallel)$ with eigenfunction $(\Psi_0^A,-\Psi_0^B)^T$. This can also be explained by the chiral symmetry of $\mathcal{D}(k_\parallel)$, i.e. there exists a unitary operator $\Gamma$ (here $\Gamma=\sigma_3=\text{diag}(1,-1)$) such that $\Gamma \mathcal{D}(k_\parallel) \Gamma^{-1} = -\mathcal{D}(k_\parallel)$. The spectrum of a given operator with chiral symmetry is always symmetric about zero.
\end{remark}


\section{Main theorem: Eigenpairs of the effective magnetic Dirac operator seed localized 
 states}\label{sec:main-thm}



In this section, we present the main theorem, which guarantees the existence of an eigenpair which is approximated by \eqref{eqn:uni-ansatz-approx} for the eigenvalue problem \eqref{eqn:eig-def-hon} subject to the Bloch boundary condition \eqref{eqn:bloch-bdry} when the unidirectional displacement $\bm u = (0,d(X_1))^T$ has bounded gradient.

To justify \eqref{eqn:uni-ansatz-approx} as an accurate approximation of an eigenstate of \eqref{eqn:eig-def-hon}, we estimate its corresponding remainder. For simplicity, we restrict to the leading-order case in \eqref{eqn:uni-ansatz-approx-2} with $l=0$, i.e.
\begin{align*}
    &\bm \psi_{m,n}^{(0)} = e^{i \bm K \cdot \bm x}e^{i\frac{k_\parallel}{3} X_2} U \bm \Psi_0(X_1; k_\parallel) \Bigg|_{\substack{\bm x = \Cell_{m,n}\\ \bm X = \delta \Cell_{m,n}}}, \qquad E^{(1)}(k_\parallel) = \delta E_1(k_\parallel).
\end{align*}
To represent the reminder for this leading order approximation, we observe that the slowly-varying envelope $\bm \Psi_0(X_1;k_\parallel)$ only depends on $m$, i.e.
\begin{align}
    \bm \Psi_0(X_1;k_\parallel) \Big|_{\bm X = \delta \Cell_{m,n}} = \bm \Psi_0\left(\frac{\sqrt{3}}{2}\delta m; k_\parallel\right),\label{eqn:scale-psi0}
\end{align}
where $\frac{\sqrt{3}}{2}\delta m$ is the $X_1$-coordinate of $\delta \Cell_{m,n}$ (see \eqref{eqn:ref-ab-honeycomb} for the coordinate of $\Cell_{m,n})$. Similarly, in higher-order approximations with $j\geq 1$, the slowly-varying envelopes $\bm \Psi_j(X_1;k_\parallel)$ depend only on $m$. Therefore, it is natural to consider an eigenstate decomposed into a leading-order approximation and a remainder of the form
\begin{subequations}\label{eqn:ansatz-1d}
    \begin{align}
        &\bm \psi_{m,n} = \delta^{\frac{1}{2}} e^{i \bm K \cdot \bm x} e^{i\frac{k_\parallel}{3} X_2} U \Bigg(\bm \Psi_0(X_1;k_\parallel) + \delta \bm \eta_m\Bigg)\Bigg|_{\substack{\bm x = \Cell_{m,n}\\ \bm X = \delta \Cell_{m,n}}}, \label{eqn:ansatz-1d-1}\\
        &E^\delta(k_\parallel) = \delta E_1(k_\parallel) + \delta^2 \mu(k_\parallel), \label{eqn:ansatz-1d-2}
    \end{align}
\end{subequations}
where $\bm \eta_m = (\eta_m^A, \eta_m^B)^T$ is the correctors of the eigenstate and $\mu(k_\parallel)$ is the correctors of the eigenvalue. The factor $\delta^{\frac{1}{2}}$ in \eqref{eqn:ansatz-1d-1} is added to guarantee that the leading order term in \eqref{eqn:scale-psi0} is $O(1)$ in $l^2(\mathbb{Z}; \mathbb{C}^2)$. Accordingly, we normalize $\bm \Psi_0(X_1;k_\parallel)$ with
\begin{equation}\label{eqn:norm-1}
    \norm{\bm\Psi_0(\cdot;k_\parallel)}^2_{L^2(\mathbb{R)}} = \norm{\Psi_0^A(\cdot;k_\parallel)}^2_{L^2(\mathbb{R)}} + \norm{\Psi_0^B(\cdot;k_\parallel)}^2_{L^2(\mathbb{R)}} = \sqrt{3}/2,
\end{equation}
so that the leading order term approximates norm 1 as $\delta \rightarrow 0$ (see the calculation of the leading order term in \eqref{eqn:app-l2-L2-norm}).



\subsection{Our main theorem}\label{subsec:main-thm}

We now state our main result, Theorem \ref{thm:main}, which provides an estimate for the correctors in \eqref{eqn:ansatz-1d} for unidirectional displacements with bounded deformation gradients. The conditions on the displacement $d(X_1)$ we consider in this section are the following:
\begin{enumerate}[(i)]
    \item $d(X_1) \in C^\infty(\mathbb{R})$ and $d'(X_1) \in C_b^\infty(\mathbb{R})$;
    
    \item $d'(X_1)$ satisfies a sign-changing condition, i.e.
    \begin{equation}\label{eqn:d-condition}
		\lim_{X_1 \rightarrow \pm \infty} d'(X_1) = \pm d_\infty;
	\end{equation}

    \item $d'(X_1)$ satisfies the integrability condition
	\begin{equation}\label{eqn:main-d-condition-2}
		d'(X_1) - d_\infty \in L^1([0,\infty)), \qquad d'(X_1) + d_\infty \in L^1((-\infty,0]).
	\end{equation}
 
\end{enumerate}

\begin{theorem}\label{thm:main}
	\edit{Consider the $l^2_{k_\parallel}$-eigenvalue problem: $H^\delta \bm \psi=E \bm \psi$ with $\bm \psi\in l^2_{k_\parallel}$} for the deformed honeycomb lattice with unidirectional displacement $\bm u = (0,d(X_1))^T$. Assume $d(X_1)$ satisfies the above conditions (i)-(iii) and $k_\parallel$ satisfies 
    \begin{equation}\label{eqn:k-para-cond}
        |t_1 d_\infty| > \frac{2}{3}|k_\parallel|.
    \end{equation}

    If $E_1(k_\parallel)$ is an eigenvalue of $\mathcal{D}(k_\parallel)$ in \eqref{eqn:dirac-1d} and $\bm{\Psi}_0(X_1;k_\parallel)$ is the corresponding eigenfunction, then there exists $M >0$ and a threshold $\delta_0 > 0$ (depending on $k_\parallel$) such that for any $0< \delta < \delta_0$,
    \begin{itemize}
        \item a solution to the eigenvalue problem \eqref{eqn:eig-def-hon} subject to the Bloch boundary condition \eqref{eqn:bloch-bdry} of the form \eqref{eqn:ansatz-1d} exists, i.e. the correctors $\mu(\delta)$ and $\bm \eta_m^\delta$ exist as a map $\delta\mapsto \left(\mu(\delta), \bm\eta_m^{\delta}\right) \in \{|\mu| < M\} \times l^2(\mathbb{Z}; \mathbb{C}^2)$;

        \item moreover, the correctors $\mu(\delta)$ and $\bm \eta_m^\delta$ satisfy the following estimates
	\begin{align}
	    &\|\delta^{\frac{3}{2}} \bm \eta_m^{\delta}\|_{l^2(\mathbb{Z};\mathbb{C}^2)} \lesssim \delta,\qquad \lim_{\delta \rightarrow 0} \mu(\delta) = E_2(k_\parallel),\label{eqn:err-bd}
	\end{align}
	where $E_2(k_\parallel)$ is given by \eqref{eqn:E2}.
    \end{itemize}
\end{theorem}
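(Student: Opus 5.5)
I would follow the Lyapunov--Schmidt strategy used for edge states in \cite{fefferman2017topologically,drouot2020edge}, carried out on the Fourier side in the transverse index $m$. The first step is to reduce to one dimension. Inserting the ansatz \eqref{eqn:ansatz-1d} into $H^\delta\bm\psi=E\bm\psi$, the factor $e^{i\bm K\cdot\bm x}e^{i\frac{k_\parallel}{3}X_2}$ factors out, because the hoppings depend only on $m$ and the $n$-dependence is pure phase. This leaves a system for the sequence $m\mapsto \bm\eta_m\in l^2(\mathbb Z;\mathbb C^2)$ of the form
\[
\big(H^\delta(k_\parallel)-\delta E_1-\delta^2\mu\big)\,\delta\bm\eta=-\delta^{-1/2}\big(H^\delta(k_\parallel)-\delta E_1-\delta^2\mu\big)\,\delta^{1/2}\bm\Psi_0(\tfrac{\sqrt3}{2}\delta m).
\]
Here $H^\delta(k_\parallel)$ is the $1$D Bloch-reduced operator, and it is conjugated by $U$. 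By Proposition \ref{prop:uni-approx} with $l=1$ (using $\bm\Psi_1$ from \eqref{eqn:2nd-expansion}), the right-hand side equals $\delta^2\big(\mu-E_2\big)\bm\Psi_0+O(\delta^3)$ in a weighted $l^2$ sense. So I would write $\bm\eta=\bm\Psi_1|_{\rm disc}+\bm\zeta$ and solve for the corrector $\bm\zeta$.

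Next I would pass to the discrete Fourier transform in $m$, giving a variable $q\in[-\pi,\pi)$ (rescaled as in $\LtwoBrill$). I would split $\widehat{\bm\zeta}$ into a near part, supported in $|q|\le \delta^{\tau}$ with $0<\tau<1$ (the $\delta$-neighborhood of the Dirac point in rescaled variables $\xi=q/\delta$), and a far part supported away from it. On the far part, $\widetilde H^0$ has eigenvalues bounded below by $c\,\delta^{\tau}$. Since $E=O(\delta)$ and $H^\delta-H^0=O(\delta)$, the far equation is solved by a Neumann series. This expresses $\widehat{\bm\zeta}_{\rm far}$ as a bounded, $O(\delta^{1-\tau})$-small map of $\widehat{\bm\zeta}_{\rm near}$ and $\mu$. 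Substituting into the near equation and rescaling $\xi=q/\delta$ turns it into $\big(\widehat{\mathcal D}(k_\parallel)-E_1\big)\widehat{\bm\zeta}_{\rm near}=\mathcal{F}(\mu,\widehat{\bm\zeta}_{\rm near};\delta)$. Here $\widehat{\mathcal D}$ is the Fourier-side version of \eqref{eqn:dirac-1d}, and $\mathcal F$ consists of $O(\delta^{\beta})$ perturbation terms in $L^{2,1}(\mathbb R)$ plus $(\mu-E_2)\widehat{\bm\Psi}_0$. The perturbation terms come from Taylor remainders of the symbol $\sum_\nu e^{i(\bm K+\delta\cdot)\cdot\bm w_\nu}$, from the discreteness of the sampling (Poisson summation aliasing, controlled by the decay and $H^s$ bounds \eqref{eqn:main-psi0-Hs-bd}), and from the cutoff.

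The third step uses Lemma \ref{lemma:1d}. Condition \eqref{eqn:k-para-cond} together with (ii)--(iii) makes $\kappa$ a domain wall with $\kappa_+\kappa_-<0$ and integrable tails, so $E_1$ is a simple isolated eigenvalue. Hence $\mathcal D(k_\parallel)-E_1$ is invertible from $H^1$ onto the orthogonal complement of $\bm\Psi_0$, with a bounded inverse. I would project the near equation onto $\bm\Psi_0^\perp$ and solve for $\widehat{\bm\zeta}_{\rm near}$ by the contraction mapping theorem for $|\mu|<M$ and $\delta<\delta_0$. The remaining scalar equation, the projection onto $\bm\Psi_0$, has the form $\mu-E_2=g(\mu,\delta)$ with $g\to0$ uniformly. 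The implicit function theorem, or simply continuity plus a fixed point in $\mu$, then gives $\mu(\delta)\to E_2$. Transforming back and using Plancherel gives $\|\delta\bm\eta\|_{l^2}\lesssim \delta^{1/2}$, which is $\|\delta^{3/2}\bm\eta\|\lesssim\delta$ after the $\delta^{1/2}$ normalization.

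I expect the main obstacle to be the near-frequency estimate. Specifically, I must show that the discretized, cutoff, Fourier-side operator on $|\xi|\lesssim\delta^{\tau-1}$ differs from $\widehat{\mathcal D}(k_\parallel)$ by an operator that is small relative to the resolvent on $\bm\Psi_0^\perp$. The multiplication by $d'(\delta\cdot)$ becomes a convolution that couples near and far frequencies. Because $d'$ is only bounded and tends to nonzero limits $\pm d_\infty$, it is not decaying, so I cannot treat it as a compactly supported perturbation. I would first split $\kappa=\kappa_\infty(X_1)+(\kappa-\kappa_\infty)$. Here $\kappa_\infty$ is a fixed smooth step, whose Fourier transform is a principal value plus a delta function and which can be handled explicitly. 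The remainder $\kappa-\kappa_\infty$ is in $L^1$, hence has bounded Fourier transform. I would then choose $\tau$ so that all error terms are $O(\delta^{\min(\tau,1-\tau)})$.
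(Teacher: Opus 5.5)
Your outline is essentially the paper's argument: near/far split of the transverse quasi-momentum, a Neumann series for the far part, rescaling $\xi=k/\delta$, Lyapunov--Schmidt about $\bm\Psi_0$, and a scalar equation for $\mu$. It also inherits a genuine gap in the far-momentum step. You assert that off $|q|\le\delta^\tau$ the symbol of $\widetilde H^0$ has eigenvalues $\gtrsim\delta^\tau$; this is false here. The reduction is along $\bm v_2$, the armchair direction, and both valleys have zero $\bm v_2$-quasimomentum: $\bm K\cdot\bm v_2=0$ and $\bm K'\cdot\bm v_2=2\pi$. At $q_\parallel=0$ the transverse symbol $1+e^{i\theta}+e^{2i\theta}$ vanishes at $\theta=\pm 2\pi/3$. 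So after factoring out $e^{i\bm K\cdot\bm x}$ there is a second Dirac point at $\tfrac{\sqrt3}{2}k=\tfrac{2\pi}{3}$, i.e.\ at $k_*=\tfrac{4\pi}{3\sqrt3}\in\Icell$; indeed $\Gamma_1(k_*,k_\parallel,0)=e^{i\pi/3}(1+e^{2\pi i/3}+e^{4\pi i/3})=0$. Near $k_*$ one has $|\Gamma_i|=O(\delta)$, so $\delta E_1/\Gamma_i$ and $\delta\widetilde F_i/\Gamma_i$ are $O(1)$, and the far system is not a small perturbation of the identity. The paper's Lemma~\ref{lemma:bds-gamma-F-I}(i) rests on the same false premise (that $\Gamma_i$ vanishes only at $k=0$), so the gap is in its proof as well.

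To repair it you need a second near window of width $\delta^\tau$ about $k_*$. Redoing the expansion behind \eqref{eqn:app-psi-cancel} at $k_*$, or invoking time reversal, shows the rescaled equation there is governed by the $\bm K'$-valley operator. Up to a constant diagonal unitary this is $\mathcal D(-k_\parallel)$, i.e.\ $\kappa(\cdot;k_\parallel)$ replaced by $\kappa(\cdot;-k_\parallel)$. If $E_1\notin\sigma_{\rm pt}(\mathcal D(-k_\parallel))$, you can invert on that window; the $\bm K'$ component is then negligible, since its sources and its coupling to the $\bm K$ window are high-frequency tails, and your argument goes through. But this hypothesis fails in the central cases:
\begin{itemize}
\item For $E_1=0$: \eqref{eqn:k-para-cond} is symmetric in $k_\parallel$, so $\mathcal D(-k_\parallel)$ also has a zero mode, on the same sublattice.
\item For odd $d'$, as in the regularized quadratic deformation: $X_1\mapsto -X_1$ conjugates $\mathcal D(k_\parallel)$ to $-\mathcal D(-k_\parallel)$, so every gap eigenvalue is shared by both valleys.
\end{itemize}
In either case the Lyapunov--Schmidt kernel is two-dimensional and a $2\times2$ reduction is needed. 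Moreover, for odd $d'$ the reflection $A_{m,n}\mapsto A_{-m,n-m}$ is an exact symmetry of $H^\delta$ in \eqref{eqn:ham-hon-delta-approx}; it commutes with $\bm v_2$-translations and exchanges the valleys. So unless the doublet near $\delta E_1$ is exactly degenerate, its eigenvectors are equal-weight $\bm K$/$\bm K'$ mixtures, and no eigenvector of the form \eqref{eqn:ansatz-1d} with $\|\delta^{3/2}\bm\eta\|\lesssim\delta$ exists. For $E_1=0$ a chiral-index argument should give an exactly two-dimensional kernel and rescue the statement, but not via this proof. The two bulk zero modes and the near-doubling of eigenvalues reported in Appendix~\ref{app:numerical-zero} are exactly what this valley degeneracy predicts.

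Your other departures from the paper are sound.
\begin{itemize}
\item Subtracting $\bm\Psi_1$ first is a harmless refinement. Note that your displayed right-hand side equals $\delta^2(\mu-E_2)\bm\Psi_0+O(\delta^3)$ only after that subtraction.
\item Inverting $\mathcal D-E_1$ on $\bm\Psi_0^\perp$ via the reduced resolvent at an isolated simple eigenvalue, plus $\|u\|_{H^1}\lesssim\|\mathcal D u\|_{L^2}+\|u\|_{L^2}$, is simpler than the exponential-dichotomy proof of Proposition~\ref{prop:inv-Dk}.
\item The step-plus-$L^1$ splitting of $\kappa$ is unnecessary. You only need that multiplication by $\ff_\nu(\tfrac{\sqrt3}{2}\delta m)$ is bounded on $l^2$, together with Poisson summation applied to $\ff_\nu\bm\beta_{\rm near}\in H^1$.
\end{itemize}
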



We briefly explain the conditions on $k_\parallel$ and $d(X_1)$, which ensures the existence of a zero eigenvalue of $\mathcal{D}(k_\parallel)$ by Lemma \ref{lemma:1d}. We observe that the conditions \eqref{eqn:d-condition} and \eqref{eqn:main-d-condition-2} on $d(X_1)$ are equivalent to the conditions \eqref{eqn:kappa-limit-1} and \eqref{eqn:kappa-condition-2}. The condition \eqref{eqn:k-para-cond} ensures that $\kappa_\pm(k_\parallel)$ defined in \eqref{eqn:kappa-limit-1} are of opposite signs when $3|t_1 d_\infty| > 2|k_\parallel|$ since
\begin{equation*}
    \kappa_\pm(k_\parallel) = \frac{k_\parallel}{3} \mp \frac{t_1 d_\infty}{2} \quad \Rightarrow \quad \kappa_+(k_\parallel) \kappa_-(k_\parallel) = \frac{k_\parallel^2}{9} - \frac{t_1^2 d_\infty^2}{4} < 0.
\end{equation*}
Therefore, condition \eqref{eqn:kappa-limit-2} is satisfied and Lemma \ref{lemma:1d}(b) ensures at least a zero eigenvalue for $\mathcal{D}(k_\parallel)$.

\paragraph{Sketch of the proof} Here we briefly sketch the proof of Theorem \ref{thm:main}, while the detailed proof is provided in Section \ref{sec:proof}. To show the existence of the correctors $\bm \eta_m$ and $\mu(\delta)$, we apply the discrete Fourier transforms (DFT) to $\bm \eta_m$, and then follow the strategy of \cite{fefferman2017topologically} by using a near- and far-momentum separation in the momentum space. The proof consists of the following steps:

\paragraph{Step 1:} We apply the discrete Fourier transform to $\bm\eta_m $ and denote the DFT of $\bm\eta_m $ as $\widetilde{\bm\eta}(k)$. We then derive the corresponding equation for $\widetilde{\bm\eta}(k)$ in Section \ref{subsec:eta-M}.

\paragraph{Step 2:} We separate $\widetilde{\bm{\eta}}(k)$ into the near-and far-momentum parts, i.e.
\begin{equation*}
    \widetilde{\bm{\eta}}(k) = \widetilde{\bm{\eta}}_\text{near}(k) + \widetilde{\bm{\eta}}_\text{far}(k), \qquad \widetilde{\bm{\eta}}_\text{near}(k) = \begin{pmatrix}
        \widetilde{\eta}^A_\text{near}(k)\\
        \widetilde{\eta}^B_\text{near}(k)
    \end{pmatrix}, \quad \widetilde{\bm{\eta}}_\text{far}(k) = \begin{pmatrix}
        \widetilde{\eta}^A_\text{far}(k)\\
        \widetilde{\eta}^B_\text{far}(k)
    \end{pmatrix},
\end{equation*}
where $\widetilde{\bm{\eta}}_\text{near}(k)$ and $\widetilde{\bm{\eta}}_\text{far}(k)$ only live in $|k|\leq \delta^\tau$ and $|k|\geq \delta^\tau$ respectively for some $\tau \in (0,1)$. We also derive the equations for $\widetilde{\bm \eta}_\text{near}(k)$ and $\widetilde{\bm \eta}_\text{far}(k)$ in Section \ref{subsec:momentum-sepration}

\paragraph{Step 3:} We solve for the far-momentum part $\widetilde{\bm \eta}_\text{far}(k)$ for fixed $\widetilde{\bm{\eta}}_\text{near}(k), \mu$, and $\delta$. More precisely, we show that the eigenvalue problem \eqref{eqn:eig-def-hon} uniquely determines $\bm{\widetilde{\eta}}_\text{far}(k) = \bm{\widetilde{\eta}}_\text{far}\Big[k;\bm{\widetilde{\eta}}_\text{near}, \mu, \delta\Big]$ (see Proposition \ref{prop:far-energy}). In fact, the eigenvalue problem \eqref{eqn:eig-def-hon} can be written as a coupled linear system for $\widetilde{\bm{\eta}}_\text{near}(k)$ and $\widetilde{\bm{\eta}}_\text{far}(k)$. Conceptually, the far-momentum part $\widetilde{\bm{\eta}}_\text{far}(k)$ is small while $\widetilde{\bm{\eta}}_\text{near}(k)$ contains the primary contribution. Therefore, we can use \textit{Schur complement} to solve for the far-momentum part (see Section \ref{subsec:far-sol}).

\paragraph{Step 4:} For fixed $\mu, \delta$, we then substitute the far-momentum solution $\bm{\widetilde{\eta}}_\text{far}(k) = \bm{\widetilde{\eta}}_\text{far}\Big[k;\bm{\widetilde{\eta}}_\text{near}, \mu, \delta\Big]$ into the eigenvalue problem \eqref{eqn:eig-def-hon} to obtain the equation for $\bm{\widetilde{\eta}}_\text{near}(k)$. By rescaling $\delta\,\widetilde{\bm \eta}_\text{near}(k)=\widehat{\bm \eta}_\text{near}(k/\delta)$ (see \eqref{eqn:near-hat}), we arrive at an equation for $\widehat{\bm \eta}_\text{near}(\xi)$, which admits a solution $\widehat{\bm \eta}_\text{near}[\xi;\mu,\delta]$ (see Proposition \ref{prop:beta}). Then we solve for $\mu$ as a function of $\delta$ by a continuity argument (see Proposition \ref{prop:solve-Jplus}).\\

The main difficulty in the proof of Theorem \ref{thm:main} is to solve for the near-momentum part $\widehat{\bm \eta}_\text{near}(\xi)$ in step 4. In fact, the equation satisfied by the inverse Fourier transform of $\widehat{\bm \eta}_\text{near}(\xi)$ is a perturbed Dirac system, obtained by perturbing the effective equation \eqref{eqn:2nd-expansion} for $\bm \Psi_1(X_1;k_\parallel)$ (see \eqref{eqn:beta-system}). 

Since the linear operator in \eqref{eqn:2nd-expansion} acting on $\bm \Psi_1(X_1;k_\parallel)$ is $\mathcal{D}(k_\parallel) - E_1(k_\parallel)$ and has a zero eigenfunction $\bm \Psi_0(X_1;k_\parallel)$, solving the associated perturbed system requires a \textit{Lyapunov Schmidt reduction}. In particular, we need the invertibility of $\mathcal{D}(k_\parallel) - E_1(k_\parallel)$ on the orthogonal space of $\bm \Psi_0(X_1;k_\parallel)$, which is provided in Proposition \ref{prop:inv-Dk} for every $E_1(k_\parallel)$ in the spectral gap.

\begin{remark}[Inverting $\mathcal{D}(k_\parallel) - E_1(k_\parallel)$]
    For the zero eigenvalue $E_1(k_\parallel) = 0$, one can square the operator $\mathcal{D}(k_\parallel)$ and obtain an elliptic operator to show the invertibility of $\mathcal{D}(k_\parallel) - E_1(k_\parallel)$ by following the same argument in Theorem 6.15 in \cite{fefferman2017topologically}. However, since we are also interested in $E_1(k_\parallel) \neq 0$, we need a different argument to show the invertibility of $\mathcal{D}(k_\parallel) - E_1(k_\parallel)$ on the orthogonal space of $\bm \Psi_0(X_1;k_\parallel)$. In Section \ref{subsec:inv-Dk}, we properly define $\left(\mathcal{D}(k_\parallel) - E_1(k_\parallel)\right)^{-1}$ and provide a bound on the inverse operator by using classical results from exponential dichotomy theory in \cite{palmer1984exponential}.
\end{remark}




\section{Corroboration of Theorem \ref{thm:main} via  numerical simulations}\label{sec:num-comparison}

The quadratic deformation\footnote{\edit{The quadratic deformation can be performed along different orientations and induce different pseudo-magnetic effect. Two important orientations are the armchair (AC) orientation and the zigzag orientation (ZZ); see Figure \ref{fig:honeycomb-edges}. We provide a short review of these orientations in Appendix \ref{sec:ac-zz-edge}.}} $\bm{u}^\text{AC} = (0,X_1^2)$ is anticipated to induce localization in $x_1$; see Lemma \ref{lemma:spectrum-landau-dirac-mag-op}. However, we cannot apply Theorem \ref{thm:main} directly, since the deformation $(X_1,X_2)\mapsto \bm{u}^\text{AC}(X_1,X_2)$, does not have a bounded gradient; see condition \eqref{eqn:d-condition}.  We observe however that Theorem \ref{thm:main} applies to a {\it linear regularization} of $\bm{u}^\text{AC}$, which keeps the quadratic behavior on a compact set about zero but transitions to linear growth. 

Specifically, let 
$\bm{u}^\text{AC}_\text{reg}(\bm{X};L) = (0, d_\text{reg}(X_1;L))^T$ where
\begin{equation}\label{eqn:lin-trun-def}
	d_\text{reg}(X_1;L) = \varphi \ast \widetilde{d}(X_1;L) ,\quad {\rm where} \qquad \widetilde{d}(X_1;L) = \begin{cases}
		2L(X_1-L) + L^2, & X_1 \geq L,\\
		X_1^2, & |X_1| \leq L,\\
		-2L(X_1+L) + L^2,& X_1 \leq -L\ .
	\end{cases}
\end{equation}
Here, $\widetilde{d}(X_1;L)$ is a continuation by a linear function for $|X_1|\ge L$. The function $\varphi*\tilde{d}$ is the convolution of $\widetilde{d}(X_1;L)$ with a smooth approximation to a delta function. Hence, $d_\text{reg}(X_1;L)$ is smooth and thus $\bm{u}^\text{AC}_\text{reg}(\bm{X};L)$ is an admissible deformation in Theorem \ref{thm:main}.

Applying Theorem \ref{thm:main} to $d_\text{reg}(X_1;L)$, we obtain that for any $k_\parallel$ satisfying $|k_\parallel| < 3L|t_1|$ (this comes from the condition \eqref{eqn:k-para-cond} by replacing $d_\infty = 2L$), the corresponding $\mathcal{D}(k_\parallel)$ has simple eigenvalues (including zero) in the spectral gap and the discrete Hamiltonian $H^\delta$ in \eqref{eqn:ham-hon-delta-approx} has eigenstates that can be approximated by 
the eigenfunction $\bm \Psi_0$ of $\mathcal{D}(k_\parallel)$ in the form of our 1-dimensional ansatz in \eqref{eqn:ansatz-1d}. 

In this section, we compare the ``numerical eigenvalue curves'' (numerical approximation of band structures, which shall be defined shortly) and eigenstates associated with the quadratic deformation $\bm{u}^\text{AC}$ and its linear regularization $\bm{u}^\text{AC}_\text{reg}$ along the armchair (AC) orientation. We also numerically compare these eigenvalues and eigenstates for the quadratic deformation $\bm{u}^\text{ZZ} = (X_2^2,0)^T$ along the zigzag (ZZ) orientation. Throughout this section, we set the hopping coefficient to be $t_1 = -2$, so that the hopping strength decreases as the deformed bond length increases. It is worth mentioning that for a negative $t_1$, the zero eigenstates live only on the $B$-node for both $\bm{u}^\text{AC}$ and $\bm{u}^\text{AC}_\text{reg}$ (see \eqref{eqn:zero-b}-\eqref{eqn:zero-a} and \eqref{eqn:zero-quantum-osc}).

\paragraph{{Spatial localization of modes for deformations with AC orientation}} We now compare the band structures and eigenstates for the deformed honeycomb by $\bm{u}^\text{AC}$ and its linear regularized version $\bm{u}^\text{AC}_\text{reg}$. We first illustrate our numerical method using the undeformed honeycomb lattice along the AC orientation; the deformed case is modeled analogously. We observe that the undeformed and deformed lattices are periodic in the vertical direction, with a periodic cell consisting of two rows of $A$- and $B$-type nodes (marked between the dotted lines in Figure \ref{fig:ac-ref-nodes}). To construct a row unit cell, we partition it into cells each containing four nodes, $A_s,B_s,C_s,D_s$, where $A_s,C_s$ are of type $A$ nodes and $B_s,D_s$ are of type $B$ nodes (see the shaded region in Figure \ref{fig:ac-ref-nodes} for the $s$th cell). We associate these nodes in the row unit cell with wave functions $\bm{\psi}_s=(\psi_s^A,\psi_s^B,\psi_s^C,\psi_s^D)^T$.

\begin{figure}[!htb]
	\centering
	\subfloat[]{
	\includegraphics[height=1.5in]{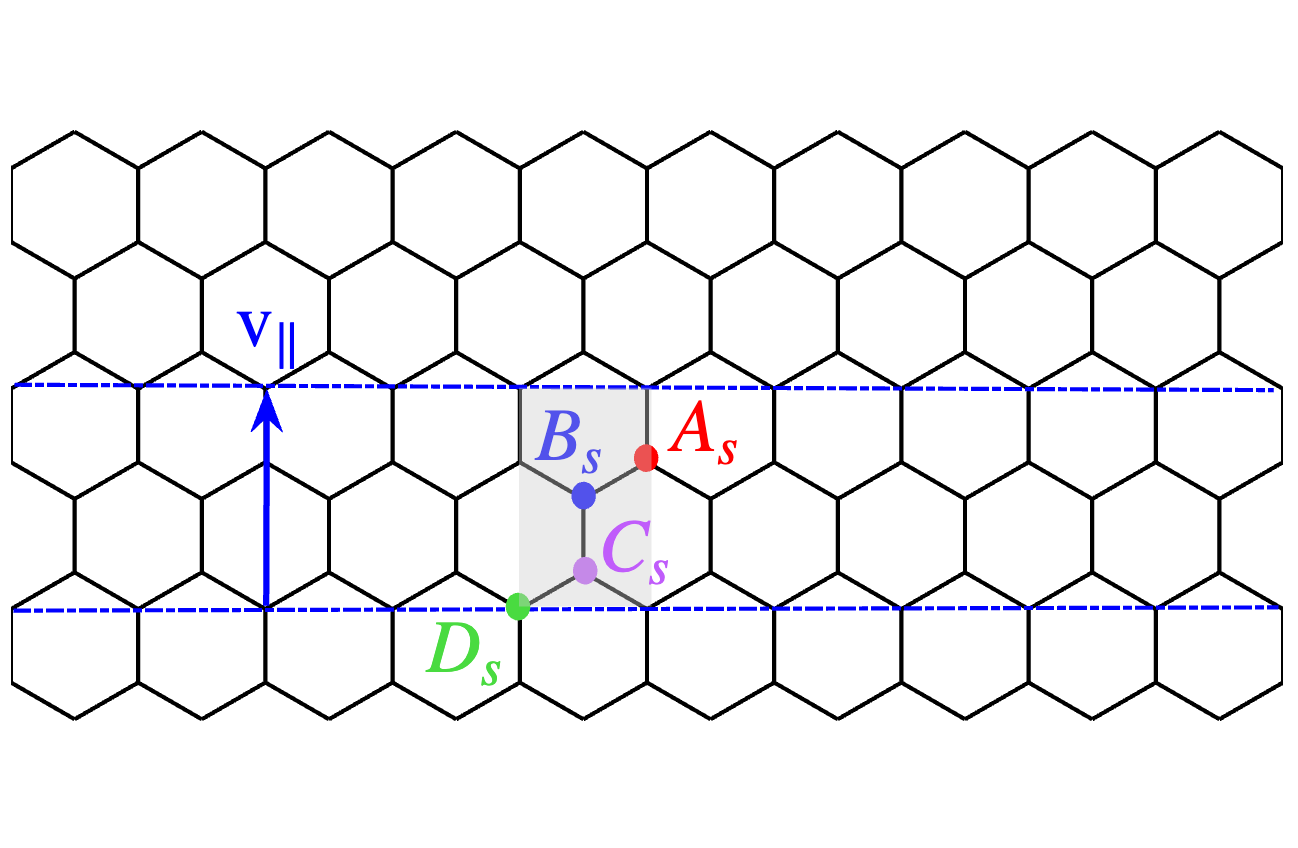}\label{fig:ac-ref-nodes}
	}\hfil
	\subfloat[]{
	\includegraphics[height=1.3in]{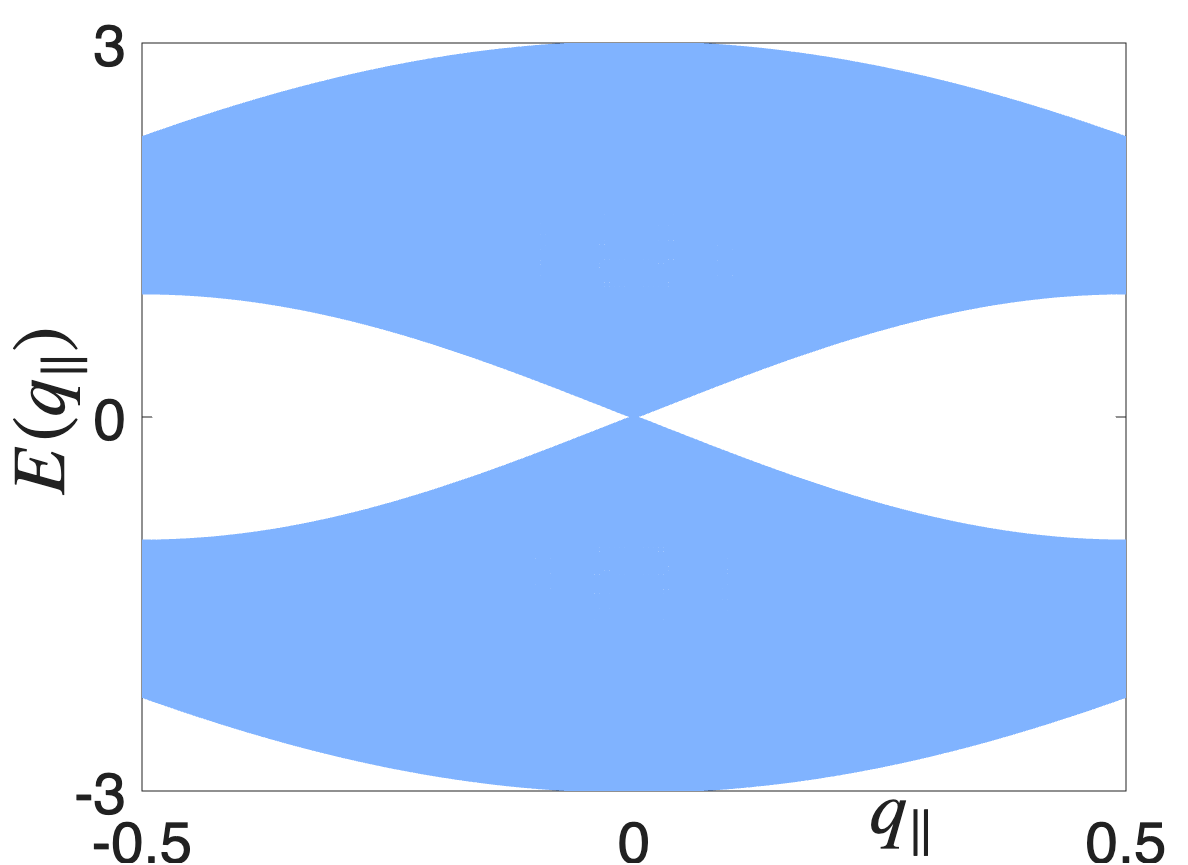}\label{fig:ac-band-ref}
	}\hfil
    \subfloat[]{
	\includegraphics[height=1.25in]{fig/ac-ref-middle-band}\label{fig:ac-middle-band-ref}
	}
	\caption{The undeformed honeycomb along the AC edge: (a) our numerical model of the honeycomb along the AC edge as a periodic structure in the vertical direction; (b) the numerical eigenvalue curves with truncation size $N_T = 200$ and the Dirac point occurs at $q_\parallel = 0$; (c) a zoomed-in view at $q_\parallel = 0$ with the lowest 40 eigenvalues.}
	\label{fig:ac-ref}
\end{figure}

Since the deformed structure is periodic along the $X_2$ direction with translation vector $\bm{v}_\parallel = (0,3)^T$, wave functions on nodes outside the row unit cell differ by the associated Bloch phase, i.e. the oscillation at node $A_s \pm \bm v_\parallel$ is related to $\psi_s^A$ by 
\begin{align}\label{eqn:num-bloch-ac}
    \psi_s^{A \pm \bm v_\parallel} = e^{\pm i3\pi q_\parallel}\psi_s^A,
\end{align}
where $q_\parallel$ is the quasi-momentum related to the translation vector $\bm{v}_\parallel = (0,3)^T$ and lives in $q_\parallel \in [-1/3,1/3)$. We notice that $q_\parallel$ plays the role of $\delta k_\parallel$ in our ansatz \eqref{eqn:ansatz-1d} with $3\pi q_\parallel = \delta k_\parallel$, since incrementing $n$ by 1 produces the Bloch phase factor $e^{i\delta k_\parallel}$ (see \eqref{eqn:bloch-bdry}). The Hamiltonian in \eqref{eqn:def-hon-hamiltonian} thereby reduces to a Bloch Hamiltonian $H^\delta(q_\parallel)$ (see Appendix \ref{app:numerical-zero} for the detailed expression of $H^\delta(q_\parallel)$). 

The band structure of $H^\delta$ corresponds to the spectrum of $H^\delta(q_\parallel)$ on wave functions $(\bm \psi_s)_{s \in \mathbb{Z}}$ as $q_\parallel \in [-1/3,1/3)$. For a given $q_\parallel$, to numerically approximate the spectrum of $H^\delta(q_\parallel)$, we consider a sharp truncation scheme with zero boundary condition by imposing $\bm \psi_s=0$ for $|s|>N_T$ with sufficiently large $N_T$. Then the spectrum of $H^\delta(q_\parallel)$ are approximated by the numerical eigenvalues of the resulting finite Hamiltonian matrix. As $q_\parallel$ varies, the numerical eigenvalues trace out curves, which we refer to as the ``numerical eigenvalue curves''. As the truncation size $N_T$ increases, these curves approximate the band structure of $H^\delta$.

For the undeformed honeycomb lattice along the AC edge, the numerical eigenvalue curves is presented in Figure \ref{fig:ac-band-ref}, which exhibits a Dirac point at $q_\parallel=0$. This value corresponds to the Dirac point $\bm{K}$, since the translation vector $\bm{v}_\parallel = \bm{v}_2$, which is associated to $\bm{a}_\parallel=\bm{a}_2$ in the dual lattice. Consequently, the value of $q_\parallel$, which is the projection of $\bm{K}$ in \eqref{eqn:dirac-pt-honeycomb} onto $\bm{a}_\parallel$, vanishes.

We now move to the deformed honeycomb under the displacements $\bm{u}^{\mathrm{AC}}$ and $\bm{u}^{\mathrm{AC}}_{\mathrm{reg}}$. To model the linearly regularized deformation, we set a length $L$ so that the linear regularization starts in the final fraction $\alpha_{\mathrm{reg}}$ of the interval. For instance, with $N_T=200$ cells in the row unit cell and $\alpha_{\mathrm{reg}}=0.5$, the regularization starts at $N_L=100$; for $|s|>N_L$, the deformation grows linearly. 

Then, we compare the numerical eigenvalue curves of the deformed honeycomb under the displacements $\bm{u}^{\mathrm{AC}}$ and $\bm{u}^{\mathrm{AC}}_{\mathrm{reg}}$ with different $\delta$ in Figure \ref{fig:num-band-ac}. For all deformations and values of $\delta$, the zero eigenvalue curve near the Dirac point $q_\parallel = 0$ in Figure \ref{fig:num-band-ac} are quite flat. At a fixed $\delta$, on each numerical eigenvalue curve, the flat region near the Dirac point $q_\parallel =0$ is wider for $\bm{u}^{\mathrm{AC}}$ than for $\bm{u}^{\mathrm{AC}}_{\mathrm{reg}}$. By comparing zoomed-in numerical eigenvalue curves near the Dirac points $q_\parallel =0$, we observe that as plotted in Figure \ref{fig:ac-lindef-middle-band}, for $\bm{u}^\text{AC}_\text{reg}$ with a small $\delta$, only the lowest few numerical eigenvalue curves are flat near the Dirac point, while the numerical eigenvalue curves near the Dirac point for $\bm{u}^\text{AC}$ are almost flat on every band in Figure \ref{fig:ac-def-middle-band}. Moreover, for a fixed deformation, the flat region expands as $\delta$ increases by comparing Figure \ref{fig:ac-bands-def-1} with $\delta = 0.04$ and Figure \ref{fig:ac-bands-def-2} with $\delta = 0.08$\footnote{The numerical eigenvalue curves for $\bm{u}^{\mathrm{AC}}_{\mathrm{reg}}$ with $\delta=0.08$ is not shown, as it is nearly identical to that in Figure \ref{fig:ac-bands-def-2}.}. 

\begin{remark}[Numerical artifacts]
    One may observe that the zero eigenvalue curve splits into two branches (see e.g. Figure \ref{fig:ac-def-middle-band}), and raise the question of the multiplicity of the zero eigenvalue for small $q_\parallel$. Numerically, the zero eigenvalue appears with multiplicity \textbf{six} for small $q_\parallel$, while nonzero eigenvalues appear with multiplicity \textbf{two}. This doubling is an artifact of our four-node cell ($A_s,B_s,C_s,D_s$ in Figure \ref{fig:ac-ref-nodes}), which is a supercell -- a larger repeated unit obtained by grouping two primitive honeycomb cells. Therefore, the spectrum is folded and each eigenvalue is duplicated (see Appendix \ref{app:numerical-zero} for a detailed explanation). In fact, the two numerical eigenstates associated with a given eigenvalue correspond to the same envelope function. 
    
    After modulating out this supercell effect, the non-zero eigenvalues are simple, which is consistent with the spectral theory in Lemma \ref{lemma:spectrum-landau-dirac-mag-op}. Furthermore, only three distinct zero modes remain: one desired eigenstate localized in the middle and two numerical artifacts localized near the left and right boundaries. We clarify these numerical artifacts at zero energy in more detail in Appendix \ref{app:numerical-zero}.
\end{remark}

\begin{figure}[!htb]
	\centering
	\subfloat[]{
		\includegraphics[height=1.4in]{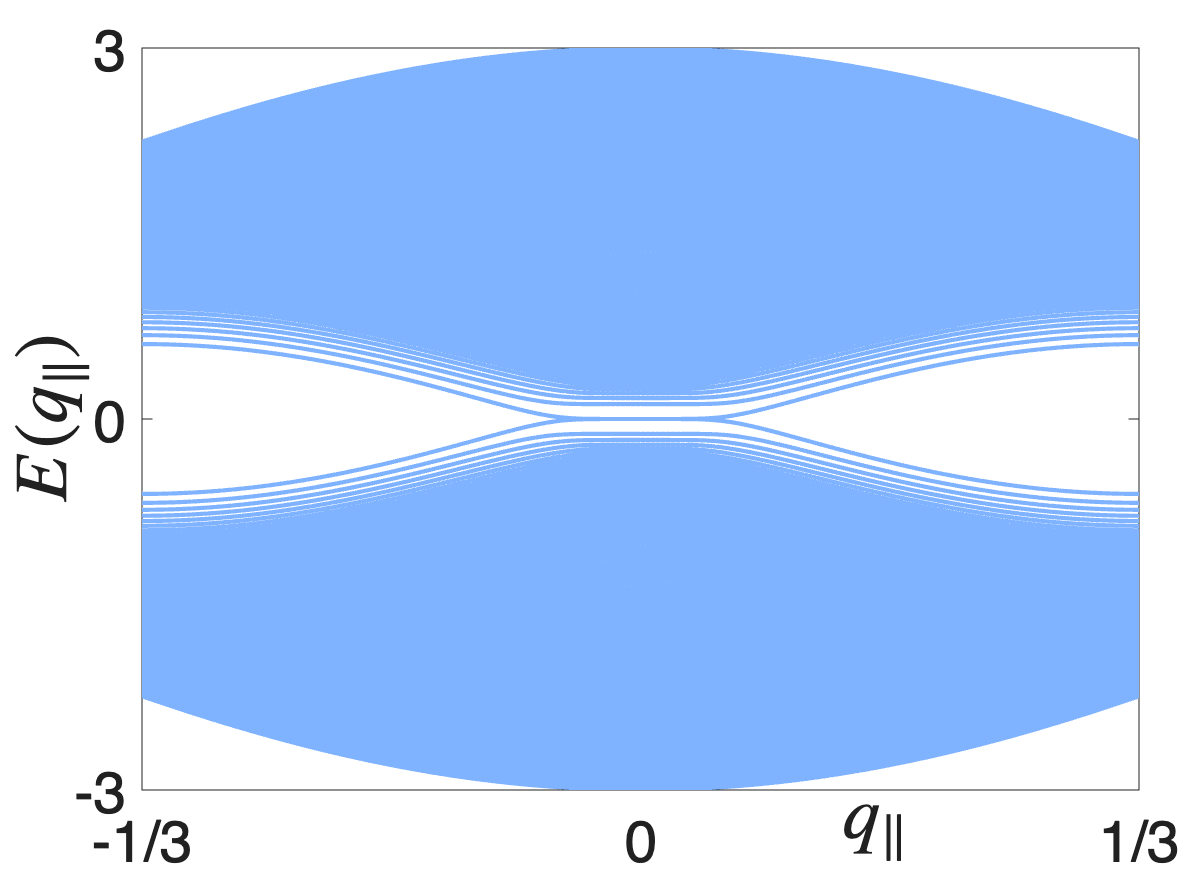}\label{fig:ac-bands-def-1}
	}\hfil
	\subfloat[]{
		\includegraphics[height=1.4in]{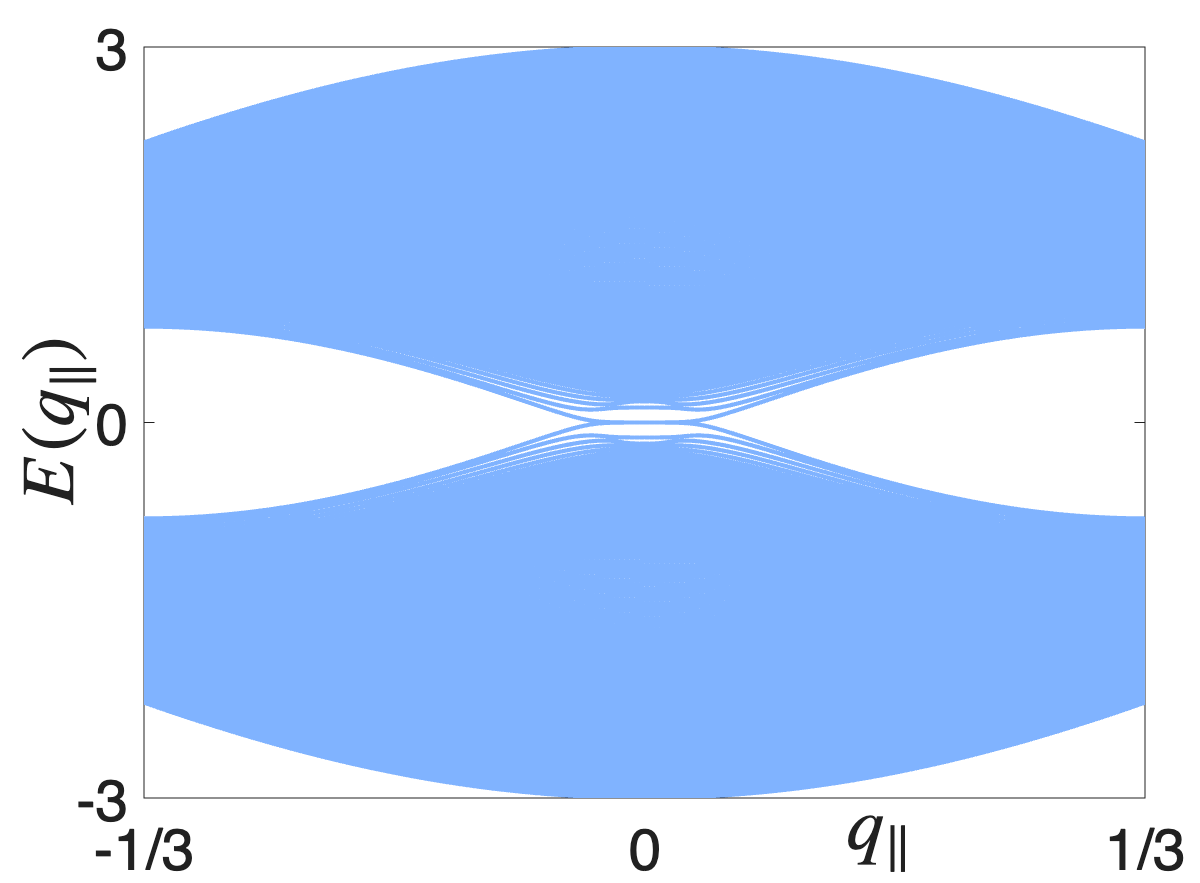}\label{fig:ac-bands-lindef-1}
	}\hfil
    \subfloat[]{
		\includegraphics[height=1.4in]{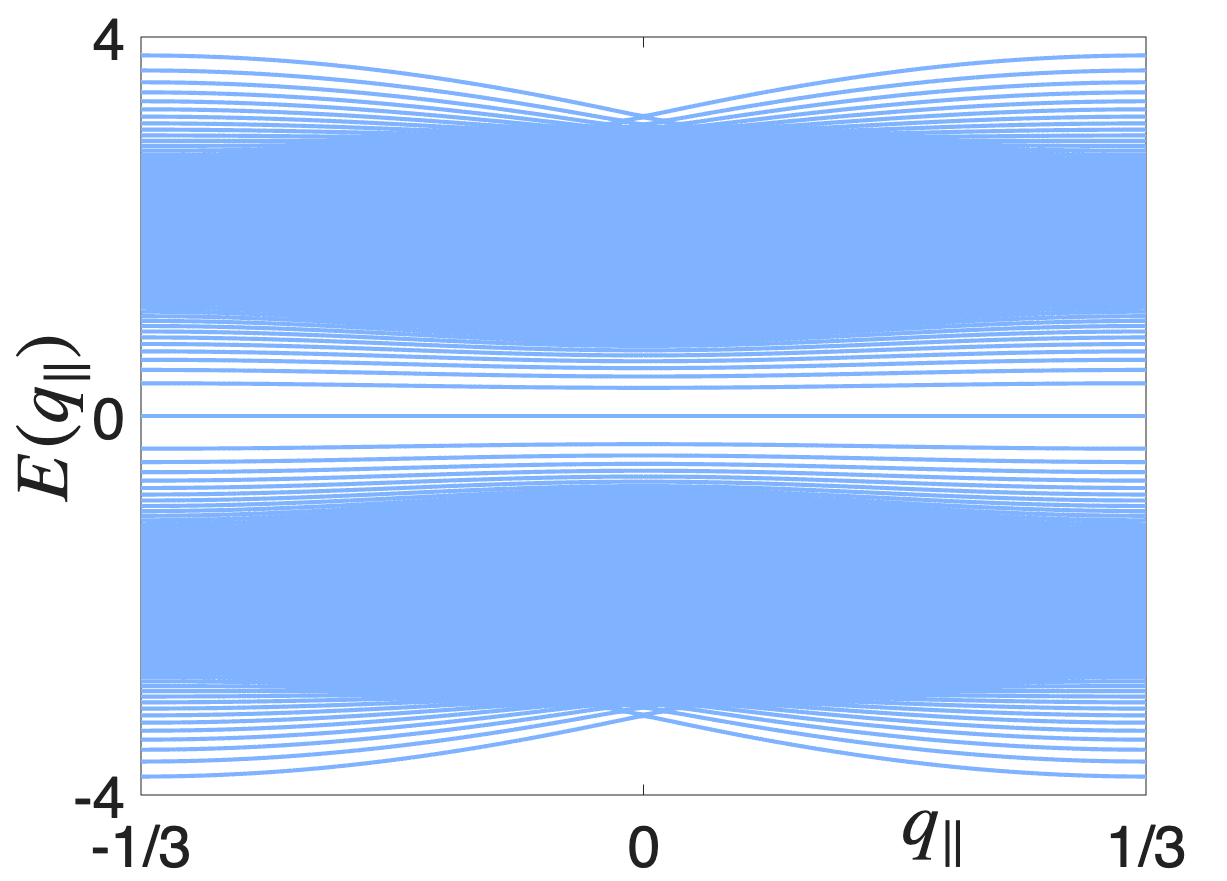}\label{fig:ac-bands-def-2}
	}\\
    \subfloat[]{
		\includegraphics[height=1.4in]{fig/ac-def-middle-band}\label{fig:ac-def-middle-band}
	}\hfil
    \subfloat[]{
		\includegraphics[height=1.4in]{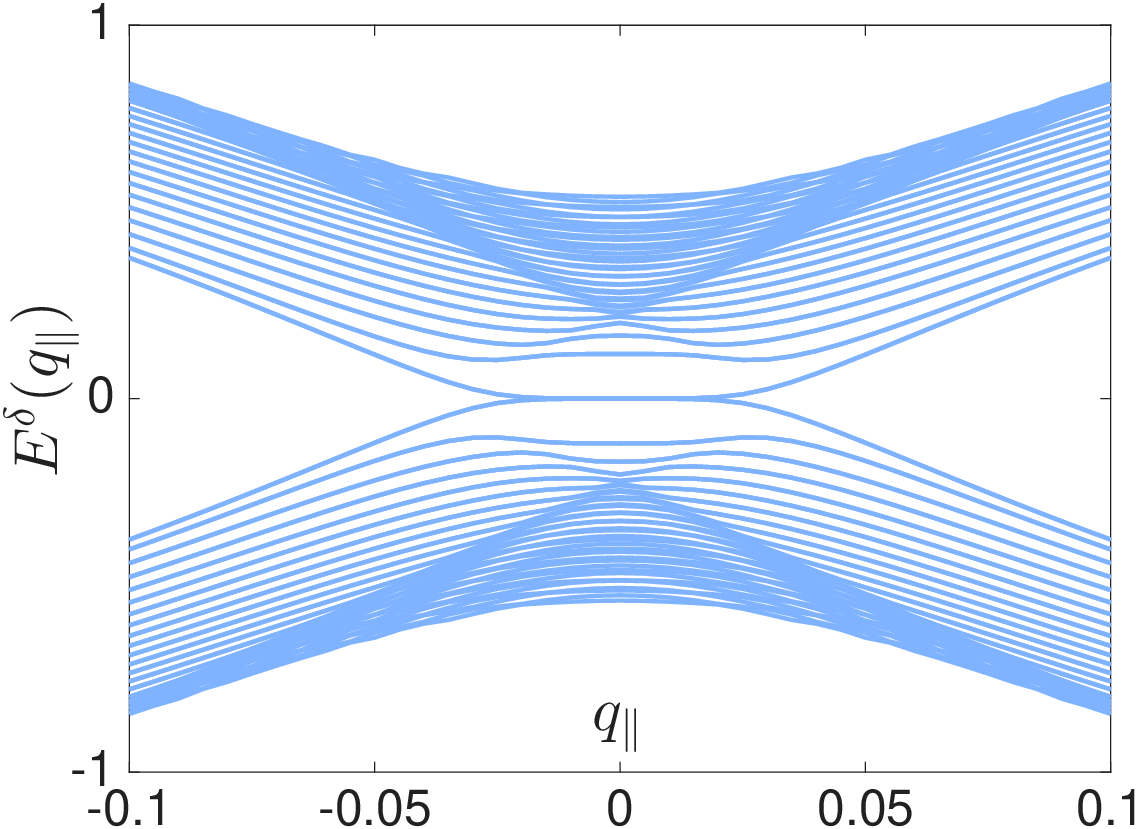}\label{fig:ac-lindef-middle-band}
	}\hfil
    \subfloat[]{
		\includegraphics[height=1.4in]{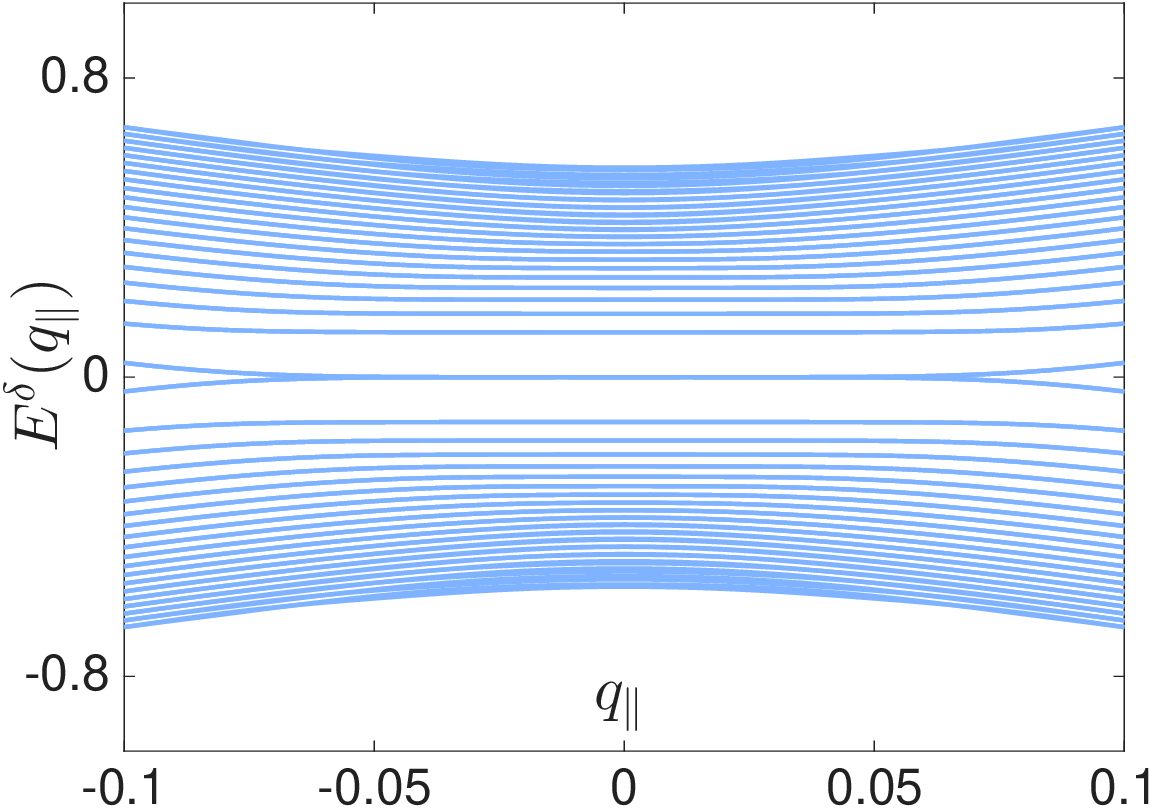}\label{fig:ac-def-middle-band-008}
	}
	\caption{Numerical eigenvalue curves of the honeycomb lattice under $\bm{u}^{\mathrm{AC}}$ and $\bm{u}^{\mathrm{AC}}_{\mathrm{reg}}$ with different $\delta$ and truncation size $N_T=200$: (a) $\bm{u}^{\mathrm{AC}}$ with $\delta=0.04$; (b) $\bm{u}^{\mathrm{AC}}_{\mathrm{reg}}$ with $\delta=0.04$ and $\alpha_\text{reg} = 0.5$; (c) $\bm{u}^{\mathrm{AC}}$ with $\delta=0.08$; (d)-(f) are zoomed-in numerical eigenvalue curves near $q_\parallel=0$ with the 40 numerical eigenvalue curves of smallest magnitude for (a)-(c), and markers in (d) correspond to eigenstates plotted in Figure \ref{fig:ac-defmodes}. The value of $q_\parallel$ is 0.02 for the dot and square in (d).}
	\label{fig:num-band-ac}
\end{figure}

We also plot the eigenstates of the truncated deformed structure in Figure \ref{fig:ac-defmodes}. For the quadratic deformation $\bm{u}^\text{AC}$, zero eigenstates shown in Figure \ref{fig:ac-defmodes} have Gaussian shapes; as $q_\parallel$ varies, the Gaussian shifts its center accordingly, consistent with the behavior of $\Psi_0^A$ and $\Psi_0^B$ in \eqref{eqn:zero-b}. Eigenstates with non-zero eigenvalues (see Figure \ref{fig:ac-def-nonzero-osc-a} and Figure \ref{fig:ac-def-nonzero-osc-b}) also exhibit rapid decay, though their profiles are not Gaussian.

The eigenstates for the linearly regularized deformation $\bm{u}^\text{AC}_\text{reg}$ look almost identical to the ones plotted in Figure \ref{fig:ac-defmodes}. To compare them, we plot them in semilogy plots in Figure \ref{fig:ac-comparison}\footnote{We set $N_b=800$ and $\alpha_{\mathrm{reg}}=0.9$, so that the quadratic deformation acts on only 10 percent of the unit cell. This choice ensures that the difference between the eigenstates under the two deformations remains above machine precision.}. We observe that eigenstates under $\bm{u}^{\mathrm{AC}}$ (with nearly Gaussian decay rate) decay faster than those under $\bm{u}^{\mathrm{AC}}_{\mathrm{reg}}$, which is Gaussian in the middle but transitions to exponential decay.

\begin{figure}[!htb]
	\centering
	\subfloat[]{
		\includegraphics[height=1.44in]{fig/ac-def-zero-osc}\label{fig:ac-def-zero-osc}
	}\hfil
	\subfloat[]{
		\includegraphics[height=1.4in]{fig/ac-def-nonzero-osc-a}\label{fig:ac-def-nonzero-osc-a}
	}\hfil
	\subfloat[]{
		\includegraphics[height=1.48in]{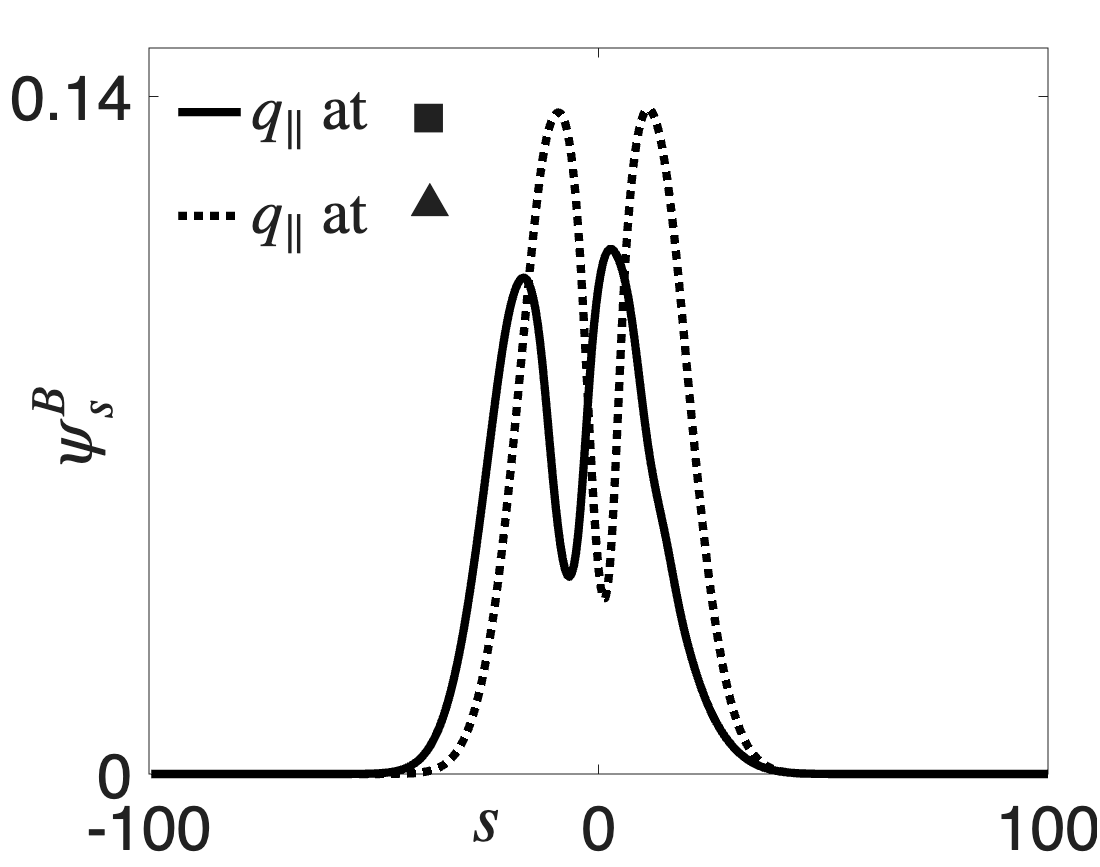}\label{fig:ac-def-nonzero-osc-b}
	}
	\caption{Eigenstates of the deformed honeycomb lattice under $\bm{u}^{\mathrm{AC}}$ with $\delta=0.04$: (a) $|\psi_s^B|$ as a function of $s$ of zero eigenstates for $q_\parallel=0$ at the diamond in Figure \ref{fig:ac-def-middle-band} and $q_\parallel$ at the circle in Figure \ref{fig:ac-def-middle-band}; note that the zero eigenstates vanish on the $A$ nodes; (b)-(c) eigenstates corresponding to the triangle and square markers in Figure \ref{fig:ac-def-middle-band}, representing the smallest nonzero eigenvalues in magnitude; (b) shows $|\psi_s^A|$, oscillations on the $A$ nodes, while (c) shows $|\psi_s^B|$, oscillations on the nodes.}
	\label{fig:ac-defmodes}
\end{figure}

\begin{figure}[!htb]
	\centering
	\subfloat[]{
		\includegraphics[height=1.4in]{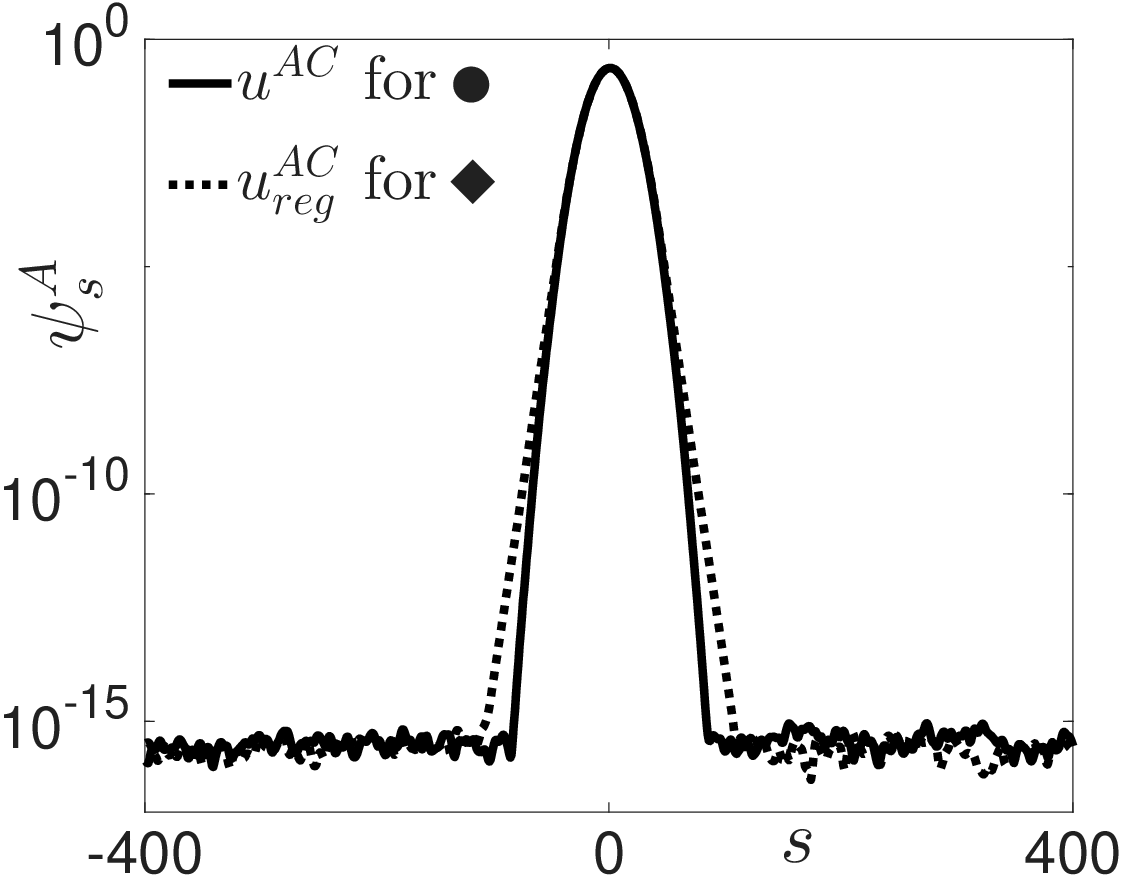}\label{fig:ac-comp-zero-osc}
	}\hfil
	\subfloat[]{
		\includegraphics[height=1.4in]{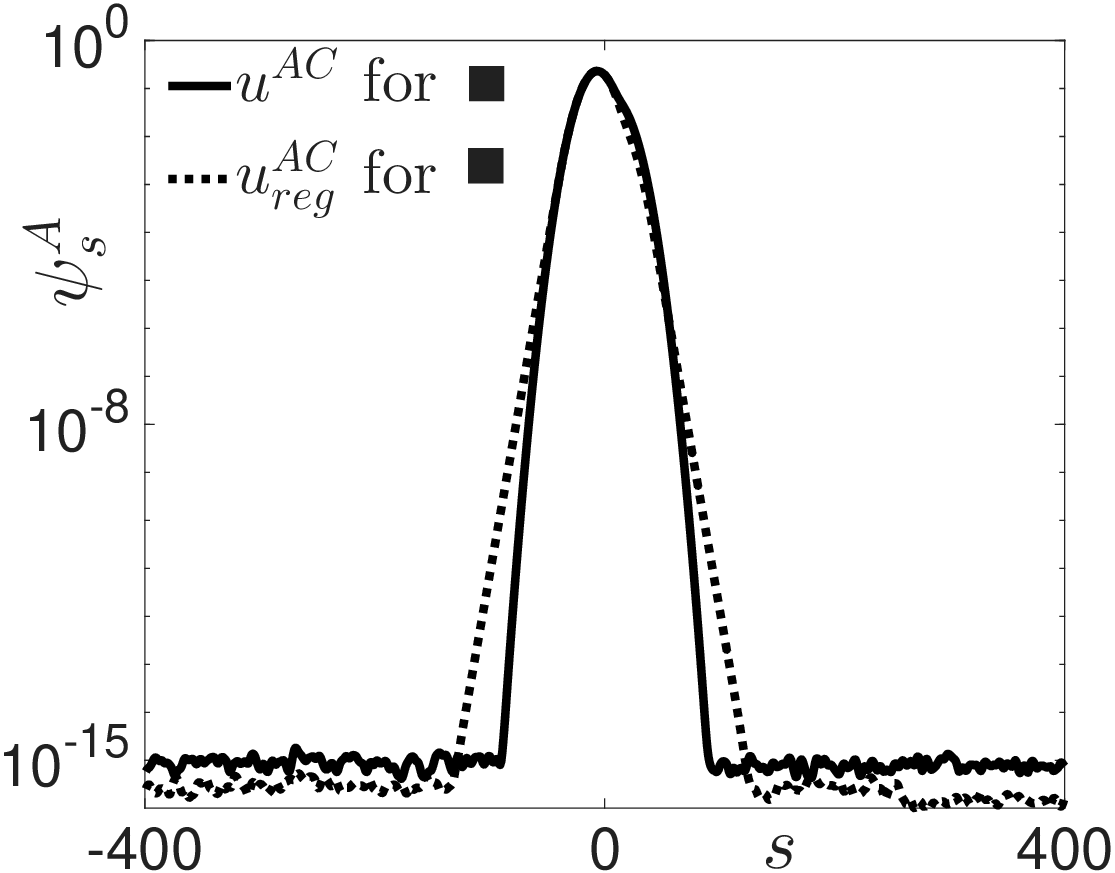}\label{fig:ac-comp-nonzero-osc-a}
	}\hfil
	\subfloat[]{
		\includegraphics[height=1.4in]{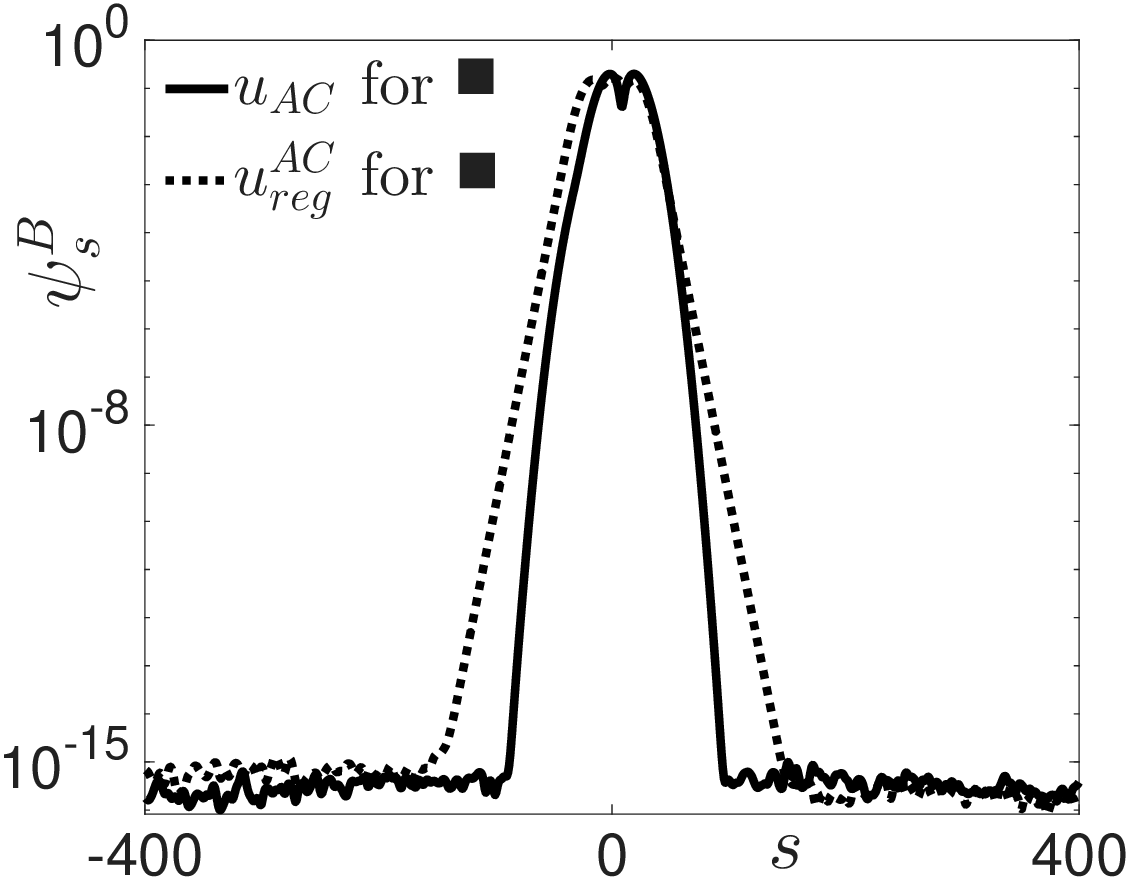}\label{fig:ac-comp-nonzero-osc-b}
	}
	\caption{Comparison between eigenstates for $\bm{u}^{\mathrm{AC}}$ and $\bm{u}^{\mathrm{AC}}_\text{reg}$ with $\alpha_\text{reg} = 0.9$, $\delta = 0.04$ and $N_T = 400$ using semilogy plots: (a) the zero eigenstates at $q_\parallel=0$ plotted at $B$ nodes; (b)-(c) eigenstates for the smallest nonzero eigenvalues, with (b) on $A$ nodes and (c) on $B$ nodes.}
	\label{fig:ac-comparison}
\end{figure}

    

\paragraph{{No spatial localization for deformations with ZZ orientation}} \edit{We also perform the quadratic deformation along the zigzag orientation, which corresponds to the displacement $\bm u^{\rm ZZ} = (X_2^2, 0)^T$. The induced effective pseudo-magnetic potential and magnetic field are
\begin{align*}
    \bm A_\text{eff}^{\rm ZZ} = (0, -t_1 X_2), \qquad \bm B_{\rm eff} = 0 \widehat{\bm z}.
\end{align*}
The magnetic Dirac operator has purely continuous spectrum with $\sigma (\mathcal{D}^{\rm ZZ}) = \mathbb{R}$; see the discussion below \eqref{eqn:zz-dirac}. 
}

To compute the numerical eigenvalue curves for the ZZ edge, we use the same method described for the AC edge with a row of unit cell $A_s, B_s, C_s, D_s$ shown in Figure \ref{fig:zz-ref-nodes}. These nodes are associated with wave functions $\bm\psi_s = (\psi_s^A, \psi_s^B, \psi_s^C, \psi_s^D)^T$. For simplicity, here we choose the lattice spacing to be $a=2/\sqrt{3}$ so that the translation vector is $\bm{v}_\parallel = (0,2)^T$ as shown in Figure \ref{fig:zz-ref-nodes}. Consequently, wave functions on nodes outside the row unit cell differ by the associated Bloch phase, i.e. the oscillation at node $A_s \pm \bm v_\parallel$ is related to $\psi_s^A$ by 
\begin{align}\label{eqn:num-bloch-zz}
    \psi_s^{A \pm \bm v_\parallel} = e^{\pm i2\pi q_\parallel}\psi_s^A,
\end{align}
where $q_\parallel$ is the quasi-momentum related to $\bm{v}_\parallel = (0,2)^T$ and now lives in $q_\parallel \in [-1/2,1/2)$. 

For the undeformed honeycomb along the ZZ edge, the Dirac points occur at $q_\parallel = -1/3$ and $q_\parallel = 1/3$. Indeed, we recall that the honeycomb along the ZZ edge is obtained by a $90^\circ$ rotation of the honeycomb along the AC edge (see Figure \ref{fig:honeycomb-edges}), i.e. the vertical direction plotted in Figure \ref{fig:zz-ref-nodes} is actually the horizontal direction in Figure \ref{fig:honeycomb}, which corresponds to the vector $2\bm v_1 + \bm v_2 = (\sqrt{3},0)^T$. Therefore, by projecting the Dirac points to $2 \bm{a}_1 + \bm a_2$, we obtain that $\bm{K}$ and $\bm{K}'$ in \eqref{eqn:dirac-pt-honeycomb} correspond to $1/3$ and $-1/3$ respectively. 

Unlike the AC edge, the discrete honeycomb along the ZZ edge admits zero eigenstates for $q_\parallel \in [-1/2, -1/3] \cup [1/3, 1/2] $ (see the two zero intervals in Figure \ref{fig:zz-band-ref}), which corresponds to edge states along the ZZ direction in truncated discrete honeycomb in \cite{fefferman2024discrete}.

\begin{figure}[!htb]
	\centering
	\subfloat[]{
		\includegraphics[height=1.5in]{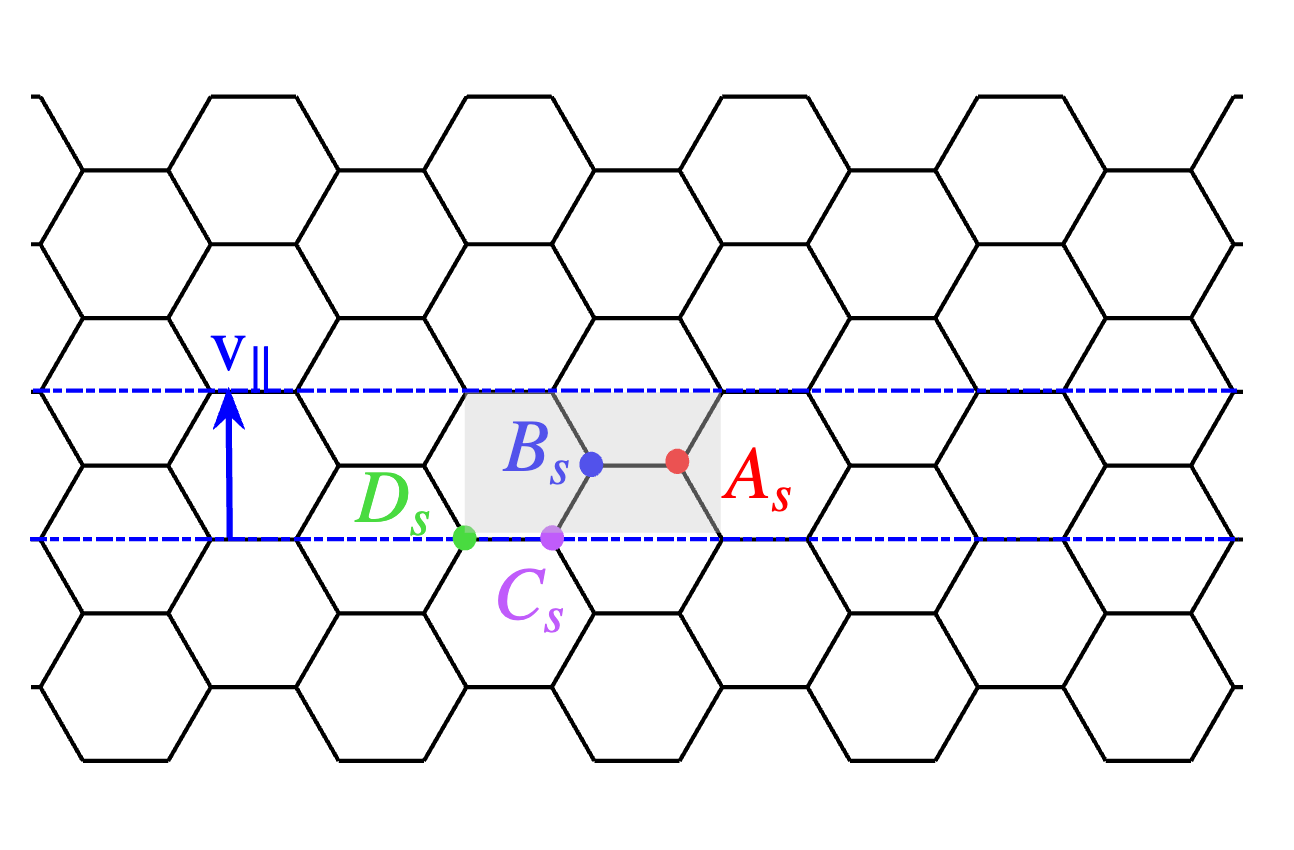}\label{fig:zz-ref-nodes}
	}\hfil
	\subfloat[]{
		\includegraphics[height=1.3in]{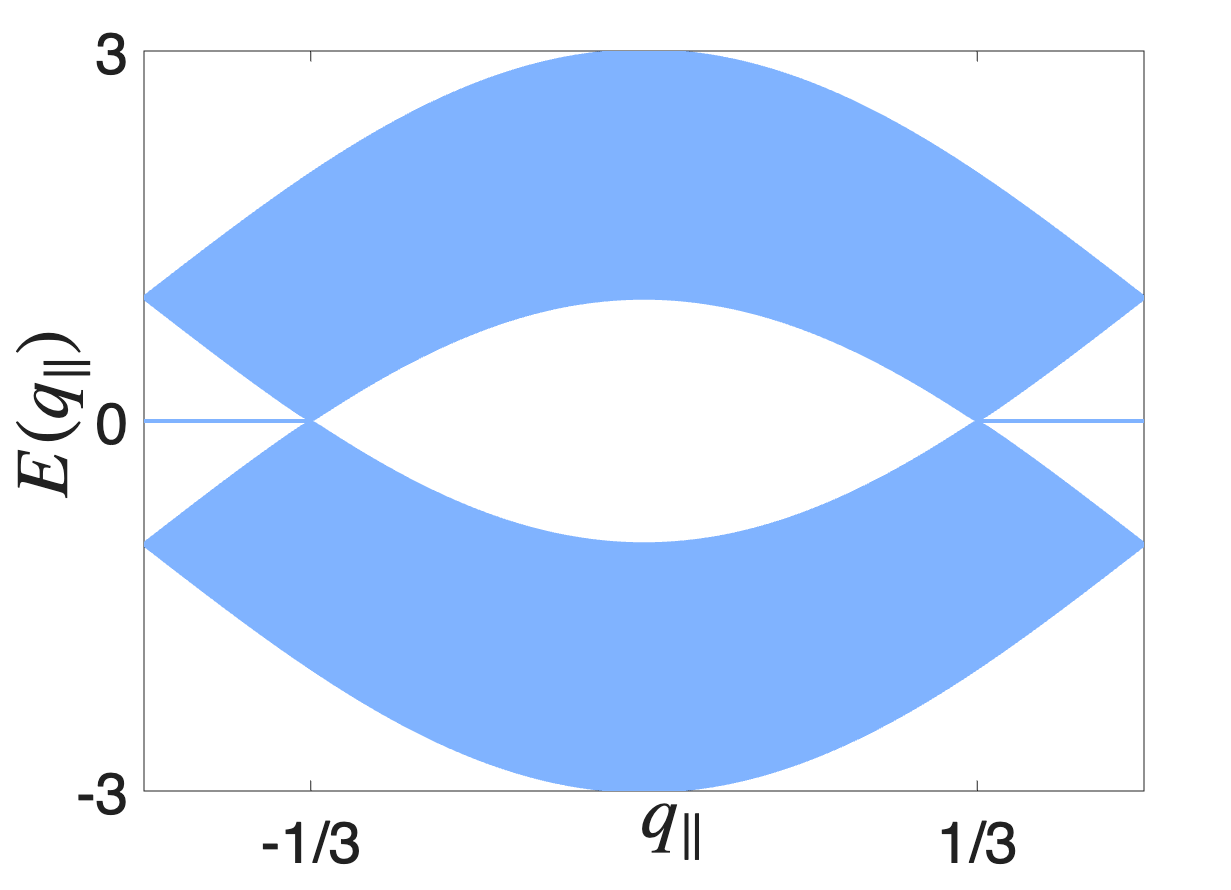}\label{fig:zz-band-ref}
	}\hfil
	\subfloat[]{
		\includegraphics[height=1.25in]{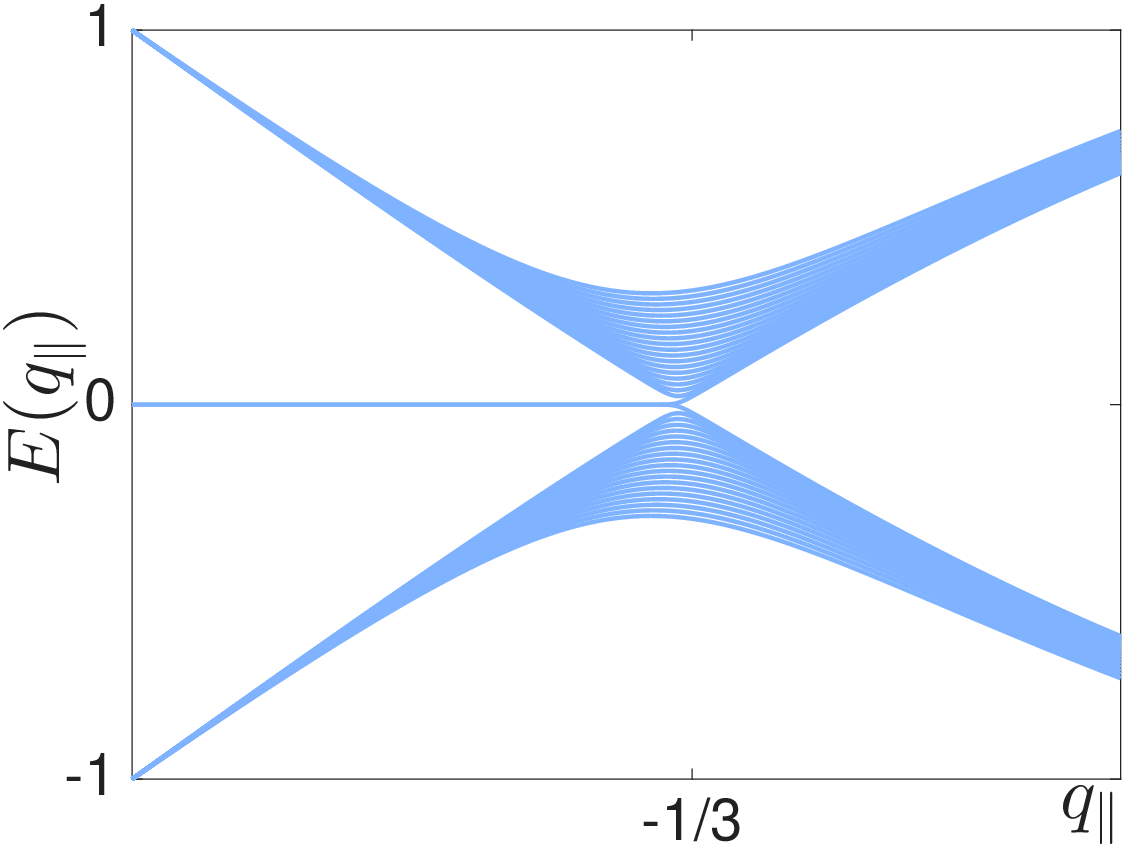}\label{fig:zz-middle-band-ref}
	}
	\caption{The numerical eigenvalue curves along the ZZ edge for the undeformed honeycomb: (a) our numerical model of the honeycomb along the ZZ edge; (b) the numerical eigenvalue curves for the undeformed honeycomb with truncation size $N_T = 200$. Dirac points are observed at $q_\parallel = -1/3$ ($\bm{K}$) and $q_\parallel = 1/3$ ($\bm{K}'$); (c) a zoomed-in version of (b) near the Dirac point $q_\parallel = -1/3$ with the smallest 40 in magnitude.}
	\label{fig:zz-ref}
\end{figure}

When we deform the honeycomb along the ZZ edge with $\bm{u}^\text{ZZ} = (X_2^2, 0)^T$, the numerical eigenvalue curve does not change the double degeneracy near the two Dirac points (see the numerical eigenvalue curves in Figure \ref{fig:zz-band-def-100} with $\delta = 0.04$ and $N_T = 100$). We also observe that in Figure \ref{fig:zz-band-def-100}, the numerical eigenvalue curves near $q_\parallel = \pm \frac{1}{2}$ become flat (the flatness can also be viewed in the zoomed-in numerical eigenvalue curves in Figure \ref{fig:zz-middle-band-def-100}). Although our theory does not capture these flattened eigenvalue curves (our ansatz only captures the behavior of numerical eigenvalue curves near the Dirac point $q_\parallel=\pm1/3$), we do not expect spectral gaps since the operator $\mathcal{D}^{\mathbf{ZZ}}$ in \eqref{eqn:zz-dirac} has purely continuous spectrum with $\sigma(\mathcal{D}^{\mathbf{ZZ}})=\mathbb{R}$. In fact, as the numerical sample size $N_b$ increases, the numerical eigenvalue curves in the range $q_\parallel\in[-1/2,-1/3]\cup[1/3,1/2]$ become denser and gaps become smaller, as shown in Figure \ref{fig:zz-band-def-100} with $N_T=200$ and Figure \ref{fig:zz-band-def} with $N_T=400$ for the same $\delta=0.04$ (the gap is visibly smaller for larger $N_T$ -- the gap is approximately $0.04$ for $N_T=200$ in Figure \ref{fig:zz-middle-band-def-100}, and approximately $0.02$ for $N_T = 400$ in Figure \ref{fig:zz-middle-band-def-200}). We expect that, as $N_T \rightarrow \infty$, the numerical eigenvalue curves eventually fill in the apparent gaps and converge to the continuous spectrum.

\begin{figure}[!htb]
	\centering
	\subfloat[]{
		\includegraphics[height=1.5in]{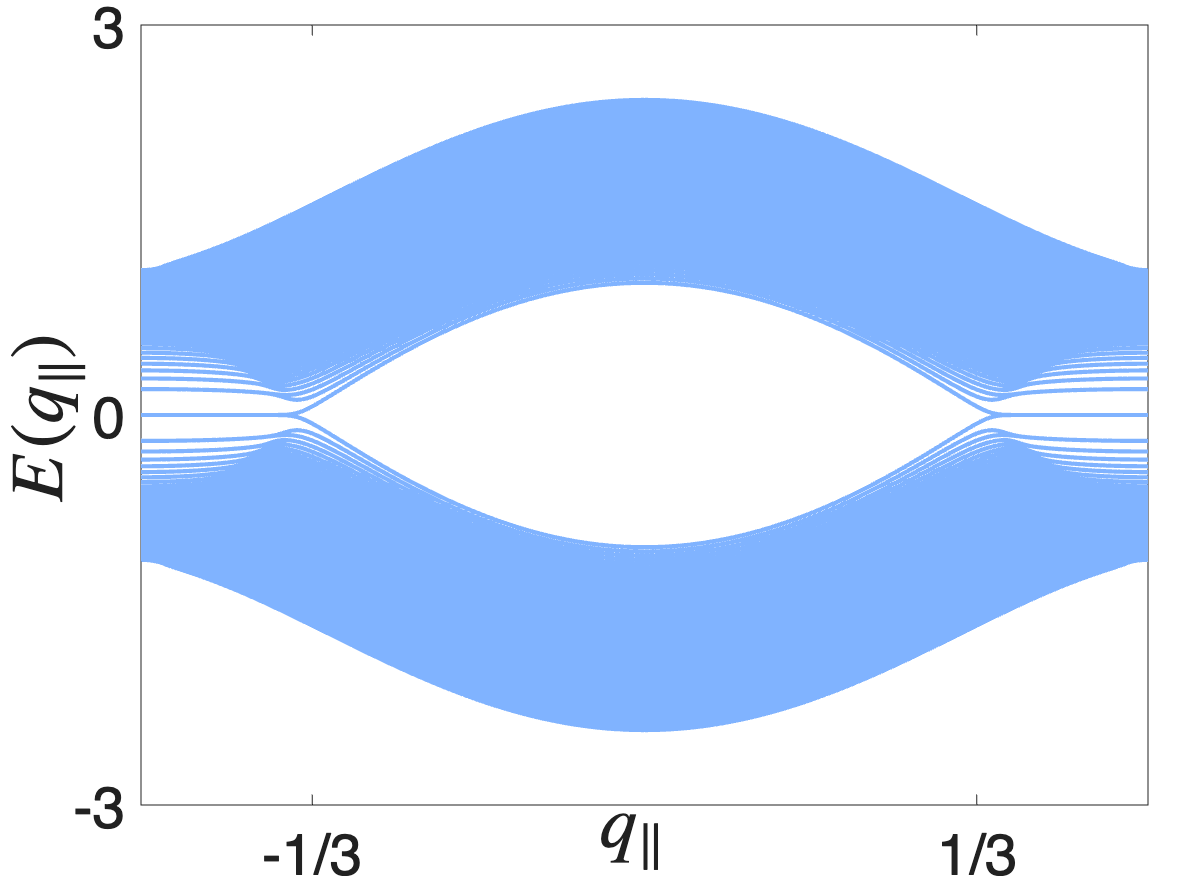}\label{fig:zz-band-def-100}
	}
	\subfloat[]{
		\includegraphics[height=1.45in]{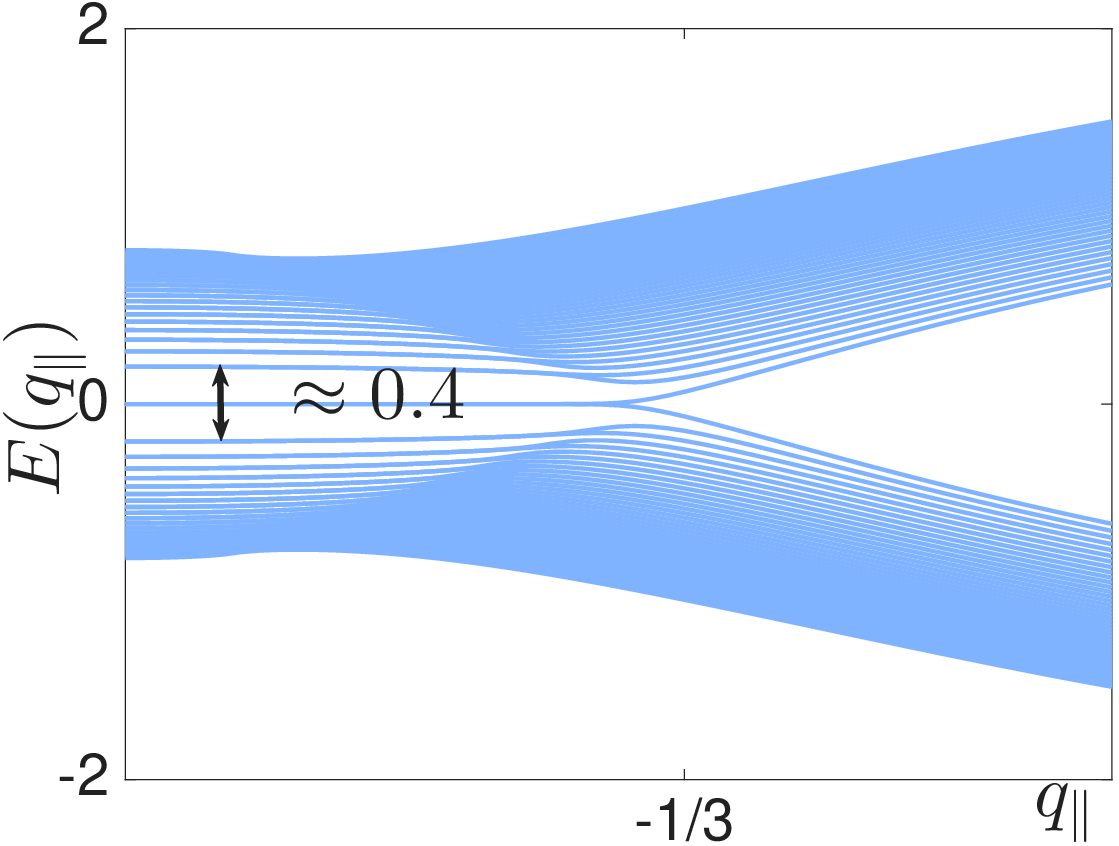}\label{fig:zz-middle-band-def-100}
	}\\
    \subfloat[]{
		\includegraphics[height=1.5in]{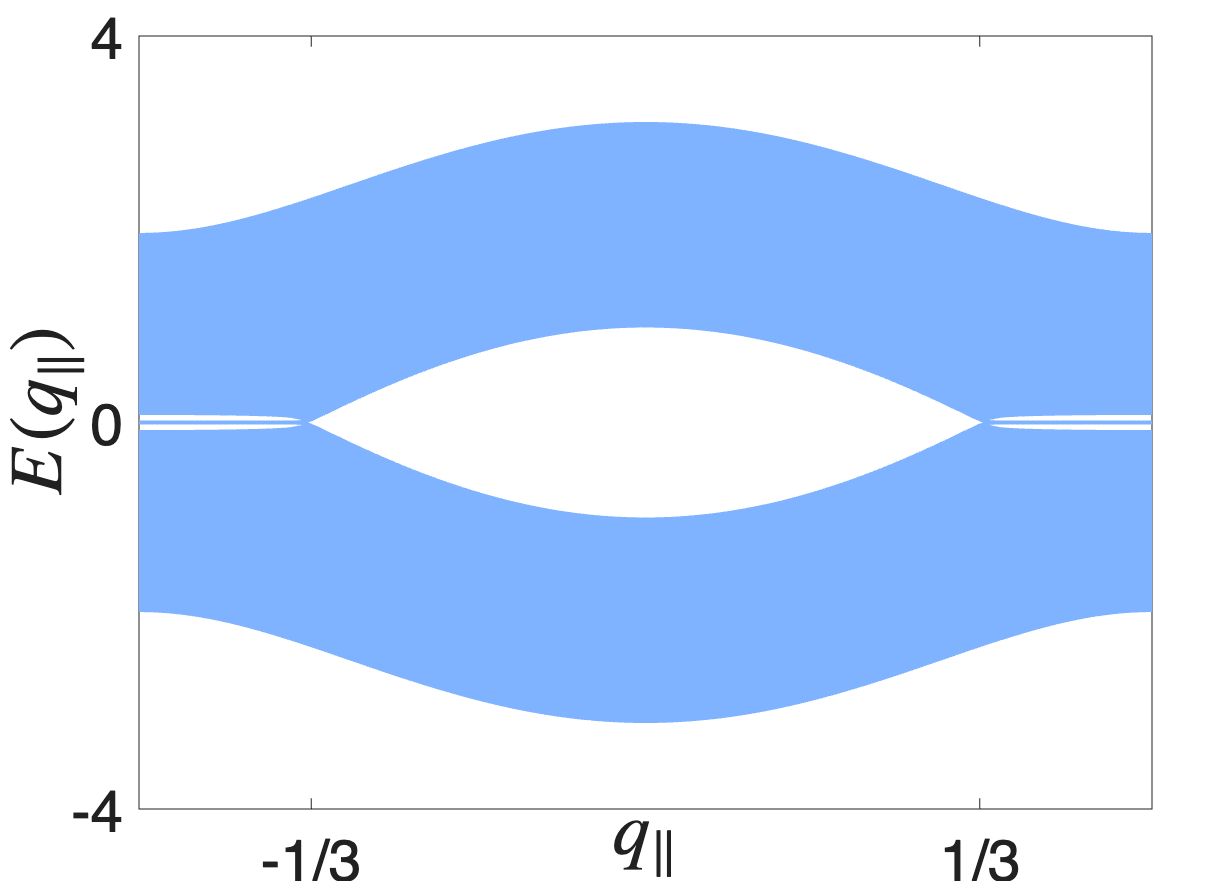}\label{fig:zz-band-def}
	}
	\subfloat[]{
		\includegraphics[height=1.45in]{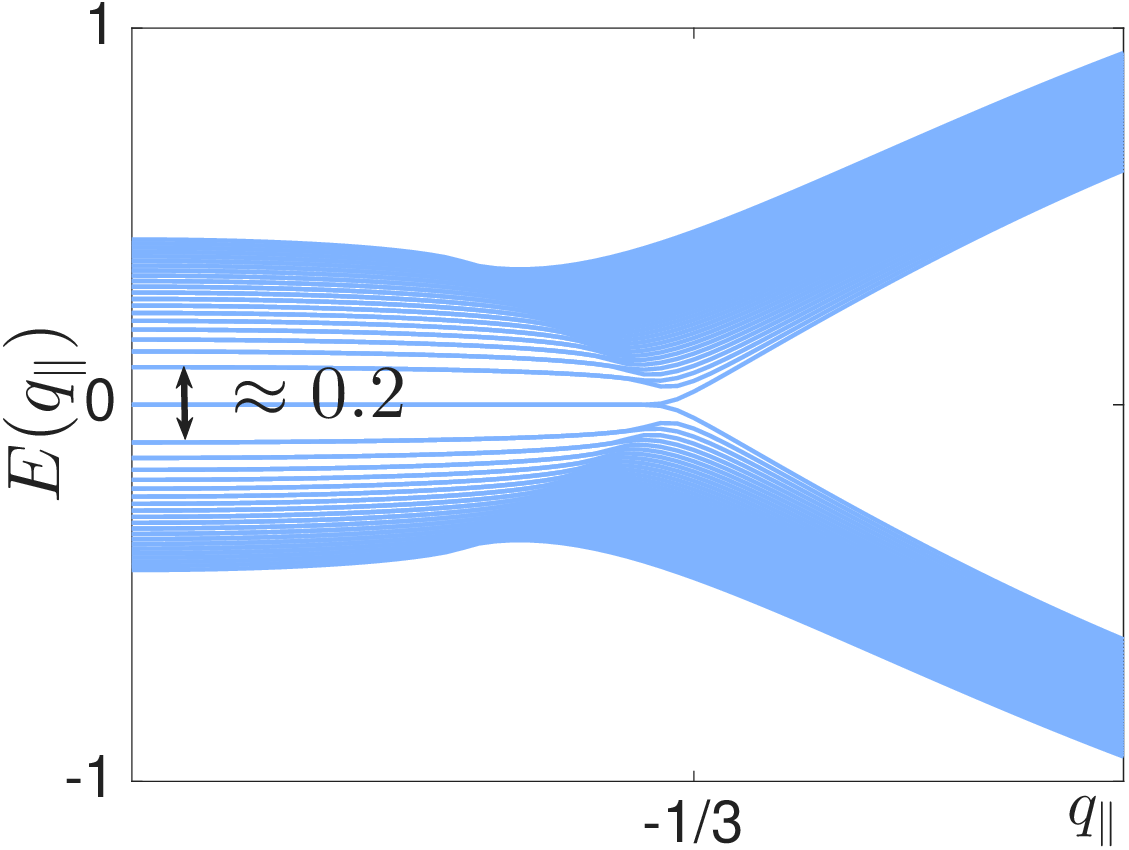}\label{fig:zz-middle-band-def-200}
	}
	\caption{The numerical eigenvalue curves for the deformed honeycomb along the ZZ edge with $\delta = 0.04$: (a) and (c) are the numerical eigenvalue curves with truncation size $N_T = 200$ and $N_T = 400$; (b) and (d) are zoomed-in versions of (a) and (c) near the Dirac point $q_\parallel = -1/3$ with the smallest 40 in magnitude.}
	\label{fig:zz-def}
\end{figure}


\section{Proof of Theorem \ref{thm:main}}\label{sec:proof}

We now prove Theorem \ref{thm:main} by following the steps mentioned in Section \ref{subsec:main-thm}. The remainder of this section develops these steps in detail and derives bounds for the correctors $\widetilde{\bm{\eta}}(k)$. All Lemmas and Propositions in this section are proved under the assumptions of Theorem \ref{thm:main}, and no additional conditions are imposed. For simplicity, we suppress the dependence on $k_\parallel$ for the eigenvalues and Dirac operator; for example, we write $E_1(k_\parallel), E_2(k_\parallel)$ as $E_1, E_2$, $\kappa(X_1;k_\parallel)$ as $\kappa(X_1)$, and $\mathcal{D}(k_\parallel)$ as $\mathcal{D}$.

\subsection{Step 1: Equation for the corrector, $\eta$, and its DFT}\label{subsec:eta-M}

Substituting our eigenstate \eqref{eqn:ansatz-1d-1} into the eigenvalue problem \eqref{eqn:eig-def-hon}, we obtain the equation for $\bm \eta_m$ (the derivation is presented in Appendix \ref{app:eqn-eta-AB}; see \eqref{eqn:app-sim-eig}). To solve for $\bm \eta_m$, we first apply DFT to $\bm \eta_m$ to derive the equations for these ccorrectors in the Fourier space. We use a scaled DFT, which is defined as follows 
\begin{equation}\label{eqn:scale-dft}
	\widetilde{\bm \eta}(k) = \big(\widetilde{\eta}^A(k), \widetilde{\eta}^B(k)\big)^T :=\sum_{m \in \mathbb{Z}} \bm \eta_m e^{-i k\frac{\sqrt{3}}{2} m}, 
\end{equation}
with quasi-momentum $k \in [-\frac{2\pi}{\sqrt{3}}, \frac{2\pi}{\sqrt{3}})$. We choose the scaling with factor $\frac{\sqrt{3}}{2}$ so that the leading order terms related to $\bm \Psi_0$ directly satisfy the Dirac equation \eqref{eqn:1st-expansion} (see the discussion near \eqref{eqn:app-psi-cancel}). Correspondingly, the inverse discrete Fourier transform (IDFT) is
\begin{equation}\label{eqn:scale-idft}
    \bm \eta_m = \frac{\sqrt{3}}{4\pi}\int_{-\frac{2\pi}{\sqrt{3}}}^{\frac{2\pi}{\sqrt{3}}} \widetilde{\bm \eta}(k) e^{ik\frac{\sqrt{3}}{2}m}\: dk.
\end{equation}
Using a scaled version of the Parseval's inequality, we obtain the following norm relation
\begin{equation}\label{eqn:parseval-norm}
    \|\bm \eta_m\|_{l^2(\mathbb{Z};\mathbb{C}^2)}^2 = \;\sum_{m\in\mathbb Z} |\bm \eta_m|^2
=\frac{\sqrt3}{4\pi}\int_{-\frac{2\pi}{\sqrt{3}}}^{-\frac{2\pi}{\sqrt{3}}}|\widetilde{\bm \eta}(k)|^2\,dk\; = \frac{\sqrt3}{4\pi} \|\widetilde{\bm \eta}\|_{L^2_k\left(\left[-\frac{2\pi}{\sqrt{3}}, \frac{2\pi}{\sqrt{3}}\right]\right)}^2.
\end{equation}

To derive the equation for $\widetilde{\bm \eta}(k)$, we multiply both sides of the eigenvalue problem \eqref{eqn:eig-def-hon} by $e^{-i k \frac{\sqrt{3}}{2} m}$. Since \eqref{eqn:eig-def-hon} is linear in the correctors $\bm \eta_m$, the resulting equation for $\widetilde{\bm{\eta}}(k)$ takes the form of a linear system. The detailed derivation is deferred to Appendix \ref{app:eqn-eta-AB}; here we record the resulting linear system in its simplest form:
\begin{subequations}\label{eqn:eta-full}
\begin{align}
    & -\Gamma_1(k,k_\parallel,\delta) \: \widetilde{\eta}^B(k) + \delta E_1 \widetilde{\eta}^A(k) + \delta^2 \mu \widetilde{\eta}^A(k) + \delta \widetilde{F}_1[k;\widetilde{\bm{\eta}}] = \widetilde{I}_1[k;\bm \Psi_0, \mu, \delta], \label{eqn:eta-full-1} \\
    & -\: \Gamma_2(k,k_\parallel,\delta) \widetilde{\eta}^A(k) + \delta E_1 \widetilde{\eta}^B(k) + \delta^2 \mu \widetilde{\eta}^B(k) + \delta\widetilde{F}_2[k;\widetilde{\bm{\eta}}] = \widetilde{I}_2[k;\bm \Psi_0, \mu, \delta] ,\label{eqn:eta-full-2}
\end{align}
\end{subequations}
where $\Gamma_1$ and $\Gamma_2$ are complex Fourier multipliers given by
\begin{subequations}\label{eqn:gamma-cst} 
\begin{align}
    \Gamma_1(k,k_\parallel,\delta) &=e^{i \frac{\pi}{3}}
        \Big(1 + e^{i \frac{4\pi}{3}} e^{i \sqrt{3} k} + e^{i \frac{2\pi}{3}} e^{i \frac{3}{2}  \delta k_\parallel} e^{i \frac{\sqrt{3}}{2} k}\Big), \label{eqn:gamma-cst-1} \\
    \Gamma_2(k,k_\parallel,\delta) &=e^{-i \frac{\pi}{3}}
        \Big(1 + e^{-i \frac{4\pi}{3}} e^{-i \sqrt{3} k} + e^{-i \frac{2\pi}{3}} e^{-i \frac{3}{2}  \delta k_\parallel} e^{-i \frac{\sqrt{3}}{2} k}\Big).\label{eqn:gamma-cst-2} 
\end{align}
\end{subequations}
The terms $\widetilde{F}_1[k;\widetilde{\bm{\eta}}]$ and $\widetilde{F}_2[k;\widetilde{\bm{\eta}}]$ collect the contributions from the slowly varying component of the hopping coefficients; they depend linearly on $\widetilde{\bm{\eta}}$ and are given by
\begin{subequations}\label{eqn:widetilde-F}
\begin{align}
    \widetilde{F}_1[k;\widetilde{\bm{\eta}}] &= - e^{i \frac{\pi}{3}} t_1 \sum_{m \in \mathbb{Z}} \ff_1\left(\frac{\sqrt{3}}{2}\delta m\right) \eta_m^B e^{-ik\frac{\sqrt{3}}{2}m} - e^{i \frac{5\pi}{3}} t_1 \sum_{m \in \mathbb{Z}} \ff_2\left(\frac{\sqrt{3}}{2}\delta m\right) \eta_{m+2}^B e^{-ik\frac{\sqrt{3}}{2}m}, \label{eqn:widetilde-F1}\\
    \widetilde{F}_2[k;\widetilde{\bm{\eta}}] &= - e^{-i \frac{\pi}{3}} t_1 \sum_{m \in \mathbb{Z}} \ff_1\left(\frac{\sqrt{3}}{2}\delta m\right) \eta_m^A e^{-ik\frac{\sqrt{3}}{2}m} - e^{-i \frac{5\pi}{3}} t_1  \sum_{m \in \mathbb{Z}}\ff_2\left(\frac{\sqrt{3}}{2}\delta (m-2)\right) \eta_{m-2}^A e^{-ik\frac{\sqrt{3}}{2}m},\label{eqn:widetilde-F2}
\end{align}
\end{subequations}
where $\ff_1(X_1)$ and $\ff_2(X_1)$ are defined as
\begin{align}
    \ff_1(X_1) := \frac{\sqrt{3}}{4} d'(X_1), \qquad \ff_2(X_1) := -\frac{\sqrt{3}}{4} d'(X_1), \label{eqn:mathcalf}
\end{align}
such that $\ff_\nu\left(\frac{\sqrt{3}}{2}\delta m\right) = f_\nu(\delta \Cell_{m_n})$ for $\nu = 1,2$ as shown in \eqref{eqn:ham-uni-dir} (we omit $\ff_3$ since $f_3 = 0$). The full expressions for $\widetilde{I}_1[k; \bm \Psi_0, \mu, \delta]$ and $\widetilde{I}_2[k; \bm \Psi_0, \mu, \delta]$ are postponed in \eqref{eqn:I1-full} and \eqref{eqn:I2-full}. However, since the $\mu$-dependent terms in both will be central to later analysis, we extract and present them explicitly below:
\begin{subequations}\label{eq:I-simp}
\begin{align}
    \widetilde{I}_1[k;\bm \Psi_0, \mu, \delta] &= -\frac{2}{\sqrt{3}} \mu \sum_{m \in \mathbb{Z}} \widehat{\Psi_0^A}\left(\frac{k + \frac{4\pi}{\sqrt{3}}m}{\delta}\right) + \widetilde{I}_{1,\text{ind}}[k;\bm \Psi_0, \delta]\label{eqn:I1-simp},\\
    \widetilde{I}_2[k;\bm \Psi_0, \mu, \delta] &= -\frac{2}{\sqrt{3}} \mu \sum_{m \in \mathbb{Z}} \widehat{\Psi_0^B}\left(\frac{k + \frac{4\pi}{\sqrt{3}}m}{\delta}\right) + \widetilde{I}_{2,\text{ind}}[k; \bm \Psi_0, \delta]\label{eqn:I2-simp},
\end{align}
\end{subequations}
where $\widetilde{I}_{1,\text{ind}}[k;\bm \Psi_0, \delta]$ and $\widetilde{I}_{2,\text{ind}}[k;\bm \Psi_0, \delta]$ depend linearly on $\bm \Psi_0$ when $\delta$ is given. 

\subsection{Step 2: \edit{Decomposition of the corrector, $\eta$ into  ``near''- and ``far'' quasi-momentum components}}\label{subsec:momentum-sepration}

We now introduce the near- and far  \edit{quasi-momentum decomposition of} functions in $L^2\left(\left[-\frac{2\pi}{\sqrt{3}}, \frac{2\pi}{\sqrt{3}}\right]\right)$. For a given $\delta$ and $\tau \in (0,1)$, we separate $L^2\left(\left[-\frac{2\pi}{\sqrt{3}}, \frac{2\pi}{\sqrt{3}}\right]\right)$ into two complete subspaces as follows
\begin{subequations}\label{eqn:near-far-def}
\begin{align}
    L^2_{\text{near},\delta^\tau}\left(\left[-\frac{2\pi}{\sqrt{3}}, \frac{2\pi}{\sqrt{3}}\right]\right) &:= \Big\{\widetilde{f}(k)\in L^2\left(\left[-\frac{2\pi}{\sqrt{3}}, \frac{2\pi}{\sqrt{3}}\right]\right) \: \Big| \: \widetilde{f}(k) = \widetilde{f}(k) \chi(|k| \leq \delta^\tau)\Big\}, \label{eqn:near-far-def-1}\\
	L^2_{\text{far},\delta^\tau}\left(\left[-\frac{2\pi}{\sqrt{3}}, \frac{2\pi}{\sqrt{3}}\right]\right) &:= \Big\{\widetilde{f}(k)\in L^2\left(\left[-\frac{2\pi}{\sqrt{3}}, \frac{2\pi}{\sqrt{3}}\right]\right)  \: \Big| \: \widetilde{f}(k) = \widetilde{f}(k) \chi(|k| \geq \delta^\tau)\Big\},\label{eqn:near-far-def-2}
\end{align}
\end{subequations}
where $\chi(k \in \Omega)$ is a characteristic function of the set $\Omega$. 

Then we decompose $\widetilde{\bm \eta}(k)$ into the near- and far-momentum parts by multiplying $\chi(|k| \leq \delta^\tau)$ and $\chi(|k| \geq \delta^\tau)$, i.e. $\widetilde{\bm{\eta}}(k) = \widetilde{\bm{\eta}}_\text{near}(k) + \widetilde{\bm{\eta}}_\text{far}(k)$ with
\begin{align}
    \widetilde{\bm{\eta}}_\text{near}(k) := \widetilde{\bm{\eta}}(k) \chi(|k| \leq \delta^\tau), \qquad \widetilde{\bm{\eta}}_\text{far}(k) := \widetilde{\bm{\eta}}(k) \chi(|k| \geq \delta^\tau), \label{eqn:eta-near-far-k}
\end{align}
where $\widetilde{\bm{\eta}}_\text{near}(k) = \big(\widetilde{{\eta}}^A_\text{near}(k), \widetilde{{\eta}}^B_\text{near}(k)\big)^T$ and $\widetilde{\bm{\eta}}_\text{far}(k) = \big(\widetilde{{\eta}}^A_\text{far}(k), \widetilde{{\eta}}^B_\text{far}(k)\big)^T$. We can also write the near- and far-momentum components in the real space 
\begin{align}
	\bm \eta_m &= \bm \eta_m^{\text{near}} + \bm \eta_m^{\text{far}},\label{eqn:eta-near-full-M}
\end{align}
where $\bm \eta_m^{\text{near}}$ and $\bm \eta_m^{\text{far}}$ are the IDFT (see \eqref{eqn:scale-idft}) of $\widetilde{\bm \eta}_\text{near}(k)$ and  $\widetilde{\bm \eta}_\text{far}(k)$.

Before writing the near- and far-momentum equations for $\widetilde{\bm{\eta}}_\text{near}(k)$ and $\widetilde{\bm{\eta}}_\text{far}(k)$, we note a simple consequence of linearity for the terms $\widetilde{F}_1[k;\widetilde{\bm{\eta}}]$ and $\widetilde{F}_2[k;\widetilde{\bm{\eta}}]$ in \eqref{eqn:widetilde-F}: they decompose additively into near and far contributions, i.e.
\begin{equation}\label{eqn:tilde-F-separation}
    \widetilde{F}_i[k;\widetilde{\bm{\eta}}] = \widetilde{F}_i[k;\widetilde{\bm{\eta}}_\text{near}] + \widetilde{F}_i[k;\widetilde{\bm{\eta}}_\text{far}], \qquad i=1,2.
\end{equation}
It is worth noting that $\widetilde{F}_i[k;\widetilde{\bm{\eta}}_\text{near}]$ with $i=1,2$ need not lie in $L^2_{\text{near},\delta^\tau}\left(\left[-\frac{2\pi}{\sqrt{3}}, \frac{2\pi}{\sqrt{3}}\right]\right)$ since multiplication by the slowly-varying coefficients $\ff_\nu\left(\frac{\sqrt{3}}{2}\delta m\right)$ in \eqref{eqn:widetilde-F} can generate far-momentum components. Similarly, the term $\widetilde{F}_i[k;\widetilde{\bm{\eta}}_\text{far}]$ may not belong to $L^2_{\text{far},\delta^\tau}\left(\left[-\frac{2\pi}{\sqrt{3}}, \frac{2\pi}{\sqrt{3}}\right]\right)$.

\paragraph{The far-momentum equation} To obtain the equation for the far-momentum part $\widetilde{\bm{\eta}}_\text{far}$, we multiply \eqref{eqn:eta-full} by $\chi(|k| \geq \delta^\tau)$ respectively. We shall solve the far-momentum component with prescribed near-momentum component. Therefore, we treat the near-momentum parts as source terms and collect them on one side of the equations by using \eqref{eqn:tilde-F-separation}, i.e.
\begin{subequations}\label{eqn:eta-far}
\begin{align}
    & -\Gamma_1(k,k_\parallel,\delta) \: \widetilde{\eta}^B_\text{far}(k) + \delta E_1 \widetilde{\eta}^A_\text{far}(k) + \delta^2 \mu \widetilde{\eta}^A_\text{far}(k) + \delta \widetilde{F}_1[k;\widetilde{\bm{\eta}}_\text{far}] \: \chi(|k|\geq \delta^\tau) \label{eqn:eta-far-1}\\
    & = \bigg(\widetilde{I}_1[k;\bm \Psi_0, \mu, \delta] - \delta\widetilde{F}_1[k;\widetilde{\bm{\eta}}_\text{near}]\bigg) \: \chi(|k|\geq \delta^\tau),  \nonumber \\
    & -\Gamma_2(k,k_\parallel,\delta) \: \widetilde{\eta}^A_\text{far}(k) + \delta E_1 \widetilde{\eta}^B_\text{far}(k) + \delta^2 \mu \widetilde{\eta}^B_\text{far}(k) + \delta \widetilde{F}_2[k;\widetilde{\bm{\eta}}_\text{far}] \: \chi(|k|\geq \delta^\tau) \label{eqn:eta-far-2}\\
    & = \bigg(\widetilde{I}_2[k;\bm \Psi_0, \mu, \delta] -\delta \widetilde{F}_2[k;\widetilde{\bm{\eta}}_\text{near}]\bigg) \: \chi(|k|\geq \delta^\tau) . \nonumber
\end{align}
\end{subequations}

\paragraph{The near-momentum equation} To obtain the equation for the near-momentum part $\widetilde{\bm{\eta}}_\text{near}$, we multiply \eqref{eqn:eta-full} by $\chi(|k| \leq \delta^\tau)$ respectively. Here we treat the far-momentum parts as functions of the near-momentum parts and group them with the near-momentum terms on the same side, i.e.
\begin{subequations}\label{eqn:eta-near}
\begin{align}
    &-\Gamma_1(k,k_\parallel,\delta) \: \widetilde{\eta}^B_\text{near}(k) + \delta E_1 \widetilde{\eta}^A_\text{near}(k) + \delta^2 \mu \widetilde{\eta}^A_\text{near}(k)  + \delta \widetilde{F}_1[k;\bm{\widetilde{\eta}}_\text{near}] \: \chi(|k|\leq \delta^\tau) \label{eqn:eta-near-1} \\
    &= \bigg(\widetilde{I}_1[k;\bm \Psi_0, \mu, \delta] - \widetilde{F}_1[k;\bm{\widetilde{\eta}}_\text{far}] \bigg)\:\chi(|k|\leq \delta^\tau) ,  \nonumber\\
    &- \Gamma_2(k,k_\parallel,\delta) \widetilde{\eta}^A(k) + \delta E_1 \widetilde{\eta}^B(k) + \delta^2 \mu \widetilde{\eta}^B(k) + \delta \widetilde{F}_2[k;\bm{\widetilde{\eta}}_\text{near}] \: \chi(|k|\leq \delta^\tau),\label{eqn:eta-near-2}\\
    &= \bigg(\widetilde{I}_2[k; \bm \Psi_0, \mu, \delta] -\widetilde{F}_2[k;\bm{\widetilde{\eta}}_\text{far}]\bigg)\:\chi(|k|\leq \delta^\tau).\nonumber
\end{align}
\end{subequations}

Before we solve the near- and far-momentum equations, we provide some useful bounds on terms $\Gamma_i$, $\widetilde{F}_i, \widetilde{I}_i$ with $i=1,2$ and summarize them as the following Lemma:
\begin{lemma}\label{lemma:bds-gamma-F-I}
    Fix $M > 0$ and $\delta_0 > 0$. Consider $|\mu| \leq M$ and $\delta \in (0,\delta_0)$. We have:\\
    (i) for any bounded $k_\parallel$, the constants $\Gamma_i(k_,k_\parallel,\delta)$ with $i=1,2$ in \eqref{eqn:gamma-cst} satisfy:
    \begin{subequations}\label{eqn:bd-gamma}
        \begin{align}
       &|\Gamma_i(k,k_\parallel,\delta)| \gtrsim |k| \geq \delta^\tau, \qquad & &\text{when}\quad |k| \geq \delta^\tau \text{ and } k\in\Icell, \label{eqn:bd-gamma-far}\\
       &|\Gamma_i(k,k_\parallel,\delta)| \lesssim |k|, \qquad & &\text{when}\quad |k| \leq \delta^\tau;\label{eqn:bd-gamma-near}
    \end{align}
    \end{subequations}
    (ii) for any $\bm{\eta}_m \in l^2(\mathbb{Z},\mathbb{C}^2)$, its DFT $\bm{\widetilde{\eta}}(k)$ in \eqref{eqn:scale-dft} satisfies
    \begin{align}\label{eqn:bd-F-main}
        \norm{\widetilde{F}_i[k;\widetilde{\bm{\eta}}]}_{\LtwoBrill} \lesssim \norm{\widetilde{\bm{\eta}}}_{\LtwoBrill},\qquad i=1,2;
    \end{align}
    (iii) for any $\bm \Psi_0\in \mathcal{S}(\mathbb{R})$, the far-momentum contributions of $\widetilde{I}_i[k;\bm \Psi_0, \mu, \delta]$ with $i=1,2$ are bounded by
    \begin{align}
        \norm{\widetilde{I}_i[k; \bm \Psi_0, \mu, \delta] \: \chi(|k|\geq \delta^\tau)}_{\LtwoBrill} \leq \delta^{\frac{1}{2}} \norm{\bm \Psi_0}_{H^2(\mathbb{R})} .\label{eqn:bd-I-far-main}
    \end{align}
\end{lemma}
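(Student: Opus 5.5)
We prove the three parts of Lemma \ref{lemma:bds-gamma-F-I} separately, since each uses a different structure: (i) is an explicit trigonometric computation, (ii) is Parseval plus boundedness of the coefficients, and (iii) is a Poisson-summation/decay estimate for the Fourier transform of $\bm\Psi_0$.

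\emph{Part (i).} Factor out the unimodular prefactor and write $\Gamma_1 = e^{i\pi/3} g(k,\epsilon)$, where
\[
g(k,\epsilon)=1+e^{i4\pi/3}e^{i\sqrt3 k}+e^{i2\pi/3}e^{i\epsilon}e^{i\frac{\sqrt3}{2}k},\qquad \epsilon=\tfrac32\delta k_\parallel .
\]
At $k=0$, $\epsilon=0$ we have $g=1+e^{i4\pi/3}+e^{i2\pi/3}=0$; this is the Dirac point. Since $g$ is smooth, Taylor expansion gives $|g(k,\epsilon)|\lesssim |k|+|\epsilon|$. The factor $e^{i\epsilon}$ only contributes $O(\delta)$, but $\delta\le\delta^\tau$, so it is not cleanly bounded by $|k|$. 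The plan is therefore to read the near bound \eqref{eqn:bd-gamma-near} as $|\Gamma_i|\lesssim|k|+\delta\lesssim \delta^\tau$, which is the form actually used later.

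For the far bound \eqref{eqn:bd-gamma-far}, first set $\epsilon=0$ and compute $|g(k,0)|^2=|1+e^{i4\pi/3}e^{i\sqrt3k}+e^{i2\pi/3}e^{i\frac{\sqrt3}2k}|^2$ explicitly. On the cell $\Icell$ its only zero is $k=0$, where the linear term $\partial_k g(0,0)\neq0$ makes the zero simple. By compactness, $|g(k,0)|\ge c|k|$ on the cell. The perturbation satisfies $|g(k,\epsilon)-g(k,0)|\le|\epsilon|\lesssim\delta$. Because $\tau<1$, for $|k|\ge\delta^\tau$ and $\delta<\delta_0$ small we have $\delta\ll c\,\delta^\tau$, which gives $|\Gamma_i|\ge (c/2)|k|$. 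The case $\Gamma_2$ is the complex conjugate and is handled identically.

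\emph{Part (ii).} Each sum in $\widetilde F_i$ is the DFT of a sequence of the form $\ff_\nu(\tfrac{\sqrt3}{2}\delta m)\,\eta^{B}_{m+j}$ with a fixed shift $j$. By \eqref{eqn:parseval-norm}, its $L^2_k$ norm equals, up to a constant, the $l^2$ norm of that sequence. That norm is at most $\|\ff_\nu\|_{L^\infty}\|\bm\eta\|_{l^2}$, because $d'\in C_b^\infty$ and a shift is an isometry. Applying Parseval again converts $\|\bm\eta\|_{l^2}$ back to $\|\widetilde{\bm\eta}\|$, which finishes (ii).

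\emph{Part (iii).} This is the main obstacle, because it requires the full expressions \eqref{eqn:I1-full}–\eqref{eqn:I2-full}. Each term of $\widetilde I_i$ is the DFT of a sampled function $G(\tfrac{\sqrt3}{2}\delta m)$ (times $\delta$-powers), with $G$ built from $\bm\Psi_0$ and its derivatives multiplied by bounded coefficients and Taylor remainders. By Poisson summation this DFT equals
\[
\tfrac{2}{\sqrt3}\,\delta^{-1}\sum_{m}\widehat G\big((k+\tfrac{4\pi}{\sqrt3}m)/\delta\big),
\]
as in \eqref{eqn:I1-simp}. Because $\bm\Psi_0$ solves $\mathcal D\bm\Psi_0=E_1\bm\Psi_0$, the $O(1)$ terms cancel (the cancellation noted near \eqref{eqn:app-psi-cancel}). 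The remaining terms are $O(\delta)$ relative to this normalization and carry up to two derivatives of $\bm\Psi_0$ from the second-order Taylor remainder.

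On $|k|\ge\delta^\tau$ we use $|\widehat G(\xi)|\le\langle\xi\rangle^{-2}\,\langle\xi\rangle^{2}|\widehat G(\xi)|$ with $\xi=k/\delta$. For the $m=0$ term, substituting $k=\delta\xi$ gives
\[
\int_{|k|\ge\delta^\tau}|\widehat G(k/\delta)|^2\,dk=\delta\int_{|\xi|\ge\delta^{\tau-1}}|\widehat G|^2\,d\xi\le\delta\,\delta^{4(1-\tau)}\|G\|_{H^2}^2 .
\]
Combined with the prefactors this is well within $\delta^{1/2}\|\bm\Psi_0\|_{H^2}$. The $m\neq0$ aliases have $|\xi|\gtrsim\delta^{-1}$, so their sum is controlled similarly, using $\sum_m\langle\cdot\rangle^{-2}<\infty$ and Cauchy–Schwarz. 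The $\mu$-terms are treated in the same way, and $|\mu|\le M$ is absorbed into the implicit constant.

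The delicate bookkeeping is verifying that after the cancellation every remaining term in $\widetilde I_i$ has at most two derivatives of $\bm\Psi_0$ and an explicit $\delta$ factor. If the normalization leaves a net factor like $\delta^{-1/2}$, the high-frequency gain $\delta^{2(1-\tau)}$ from $H^2$ regularity should compensate for $\tau$ not too close to 1. Alternatively, since $\bm\Psi_0\in H^s$ for all $s$ by Lemma \ref{lemma:1d}, one can use an $H^s$ bound and then weaken it.
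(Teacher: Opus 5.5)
Your argument for the far bound in (i) fails at the claim that $k=0$ is the only zero of $g(\cdot,0)$ on $\Icell$. Set $z=e^{i(\frac{\sqrt3}{2}k+\frac{2\pi}{3})}$. Then $g(k,0)=1+z+z^2$, so $|g(k,0)|=\bigl|1+2\cos\bigl(\tfrac{\sqrt3}{2}k+\tfrac{2\pi}{3}\bigr)\bigr|$. This has a second simple zero at $k_*=\frac{4\pi}{3\sqrt3}\in\Icell$. That point is the projection of the other Dirac point $\bm K'$: under translations along $\bm v_2$, both valleys sit at parallel quasimomentum $0$.

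At $k_*$ you get $|\Gamma_i(k_*,k_\parallel,\delta)|=|e^{i\epsilon}-1|\le|\epsilon|=O(\delta)$, and $\Gamma_i(k_*,0,\delta)=0$. Since $|k_*|$ is of order one, \eqref{eqn:bd-gamma-far} fails for every small $\delta$. Your compactness step only yields $|g(k,0)|\gtrsim\min\{|k|,|k-k_*|\}$.

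The paper's one-line proof asserts the same uniqueness of the zero, so this is a defect of the statement; no better argument can prove it. What is true is $|\Gamma_i|\gtrsim\min\{|k|,|k-k_*|\}$ wherever $\min\{|k|,|k-k_*|\}\ge\delta^\tau$. To use this, a $\delta^\tau$-neighbourhood of $k_*$ (the $\bm K'$ valley, with its own effective Dirac operator) must be treated as a second near-momentum region. Otherwise $\delta/\Gamma_i$ is not small there, and the far-momentum contraction of Proposition \ref{prop:far-energy} breaks down.

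The rest of your proposal is sound and matches the paper.
\begin{itemize}
\item Your replacement of the near bound by $|\Gamma_i|\lesssim|k|+\delta$ is correct and necessary. Indeed $\Gamma_1(0,k_\parallel,\delta)=1-e^{i\epsilon}\neq0$ when $k_\parallel\neq0$, so the stated near bound is also false as written.
\item Part (ii) is exactly the paper's argument.
\item In (iii), the bookkeeping you left hedged is settled as in the paper. The $m=0$ Taylor remainder is $\frac{2}{\sqrt3\,\delta^2}\Gamma_3(k)\widehat{\Psi_0^B}(k/\delta)$ (and its $\Gamma_4$ analogue), with $|\Gamma_3|\lesssim k^2+\delta^2$. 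The substitution $k=\delta\xi$ then gives $\delta^{1/2}\|\bm\Psi_0\|_{H^2}$ directly. No far-region gain is needed; the extra factor you wrote, $\delta^{4(1-\tau)}\|G\|_{H^2}$ with $G\sim\partial^2\bm\Psi_0$, would cost $H^4$ rather than $H^2$. The aliased $|m|\ge1$ terms are handled by \eqref{eqn:app-sum-M-bd}, and the remaining terms are at most of the same order.
\end{itemize}
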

\begin{proof}
    The proof of \eqref{eqn:bd-gamma} is straightforward since for $i=1,2$, the constants satisfy that (1) $\Gamma_i(k,k_\parallel,\delta)$ vanishes only at $k=0$ when $\delta$ is small; and (2) $\partial_k \Gamma_i(0,k_\parallel,\delta) \neq 0$. The proof of \eqref{eqn:bd-F-main} follows directly from the fact that the slowly-varying coefficients $\ff_1\left(\frac{\sqrt{3}}{2} \delta m\right)$ and $\ff_2\left(\frac{\sqrt{3}}{2} \delta m\right)$ in \eqref{eqn:widetilde-F} are uniformly bounded when $d'(X_1) \in L^\infty(\mathbb{R})$. The proof of \eqref{eqn:bd-I-far-main} is shown in Appendix \ref{app:some-bds} (see the discussion near \eqref{eqn:bd-I-far}) after we present the explicit formulas for $\widetilde{I}_i[k;\bm \Psi_0,\mu, \delta]$. 
    

\end{proof} 


\subsection{Step 3: solving the far-momentum equation}\label{subsec:far-sol}

Since the far-momentum equations \eqref{eqn:eta-far} form a linear system in $\widetilde{\bm \eta}(k)$ once the near-momentum parts and $\mu, \delta$ are given, it suffices to show that the associated linear operator in the far-momentum equation \eqref{eqn:eta-far} is invertible in order to solve for the far-momentum parts of the form
\begin{equation}
    \widetilde{\bm{\eta}}_\text{far}(k) = \widetilde{\bm{\eta}}_\text{far}[k; \widetilde{\bm{\eta}}_\text{near}, \mu, \delta].
\end{equation}

We now explain why the system \eqref{eqn:eta-far} is invertible. We divide both sides of \eqref{eqn:eta-far-1} and \eqref{eqn:eta-far-2} by $\Gamma_1$ and $\Gamma_2$ respectively. Then we reorganize the resulting equations and obtain
\begin{subequations}\label{eqn:eta-far-gamma}
\begin{align}
    & \widetilde{\eta}^A_\text{far}(k) - \frac{\delta E_1}{\Gamma_2(k,k_\parallel,\delta)} \widetilde{\eta}^B_\text{far}(k) - \frac{\delta^2 \mu}{\Gamma_2(k,k_\parallel,\delta)} \widetilde{\eta}^B_\text{far}(k) - \delta \frac{\chi(|k|\geq \delta^\tau)}{\Gamma_2(k,k_\parallel,\delta)} \widetilde{F}_2[k;\bm{\widetilde{\eta}}_\text{far}] \label{eqn:eta-far-gamma-2}\\
    =\:& -\frac{\chi(|k|\geq \delta^\tau)}{\Gamma_2(k,k_\parallel,\delta)} \bigg(\widetilde{I}_2[k;\bm \Psi_0, \mu, \delta] -\delta \widetilde{F}_2[k;\bm{\widetilde{\eta}}_\text{near}]\bigg), \nonumber\\
    & \widetilde{\eta}^B_\text{far}(k) - \frac{\delta E_1}{\Gamma_1(k,k_\parallel,\delta)} \widetilde{\eta}^A_\text{far}(k) - \frac{\delta^2 \mu}{\Gamma_1(k,k_\parallel,\delta)} \widetilde{\eta}^A_\text{far}(k) - \delta \frac{\chi(|k|\geq \delta^\tau)}{\Gamma_1(k,k_\parallel,\delta)} \widetilde{F}_1[k;\bm{\widetilde{\eta}}_\text{far}] \label{eqn:eta-far-gamma-1}\\
    =\:& -\frac{\chi(|k|\geq \delta^\tau)}{\Gamma_1(k,k_\parallel,\delta)} \bigg(\widetilde{I}_1[k;\bm \Psi_0, \mu, \delta] -\delta \widetilde{F}_1[k;\bm{\widetilde{\eta}}_\text{near}]\bigg). \nonumber
\end{align}
\end{subequations}
Using \eqref{eqn:bd-gamma-far}, we have $\delta/\Gamma_i \lesssim \delta^{1-\tau}$ and $\delta^2/\Gamma_i \lesssim \delta^{2-\tau}$ are small for $i=1,2$. Therefore, to invert the system \eqref{eqn:eta-far-gamma}, it suffices to show that the remaining terms related to $\widetilde{F}_i$ on the right-hand side are also small. In fact, we have the following bound for $i=1,2$
\begin{equation}\label{eqn:bd-F12-far}
    \left\| \delta \frac{\chi(|k|\geq \delta^\tau)}{\Gamma_i(k,k_\parallel,\delta)} \widetilde{F}_i[k;\bm{\widetilde{\eta}}_\text{far}] \right\|_{\LtwoBrill} \lesssim \delta^{1-\tau} \|\widetilde{\bm{\eta}}_\text{far}\|_{\LtwoBrill}.
\end{equation}
The proof of \eqref{eqn:bd-F12-far} comes directly from \eqref{eqn:bd-gamma-far} and \eqref{eqn:bd-F-main}. Thus, the linear operator in \eqref{eqn:eta-far-gamma} is a small perturbation of the identity, and hence the system is invertible. Consequently, we can solve for the far-momentum components as $\widetilde{\bm{\eta}}_\text{far}[k; \widetilde{\bm{\eta}}_\text{near}, \mu, \delta]$.

We also provide the estimate for $\widetilde{\bm{\eta}}_\text{far}[k; \widetilde{\bm{\eta}}_\text{near}, \mu, \delta]$: when $\mu$ is bounded and $\delta$ is sufficient small, we have
\begin{align}
    \norm{\widetilde{\bm{\eta}}_\text{far}[k; \widetilde{\bm{\eta}}_\text{near}, \mu, \delta]}_{L^2_k\left(\left[-\frac{2\pi}{\sqrt{3}}, \frac{2\pi}{\sqrt{3}}\right]\right)} & \lesssim \delta^{1 - \tau} \norm{\widetilde{\bm{\eta}}_\text{near}}_{L^2_k\left(\left[-\frac{2\pi}{\sqrt{3}}, \frac{2\pi}{\sqrt{3}}\right]\right)} + \delta^{\frac{1}{2} - \tau}. \label{eqn:far-bound}
\end{align}
The proof of \eqref{eqn:far-bound} comes directly from the following bounds on the source terms in the far-momentum equation \eqref{eqn:eta-far-gamma}:
\begin{subequations}\label{eqn:bd-far}
\begin{align}
    \left\| \frac{\chi(|k|\geq \delta^\tau)}{\Gamma_2(k,k_\parallel,\delta)} \widetilde{I}_i[k;\bm \Psi_0, \mu, \delta] \right\|_{\LtwoBrill} &\lesssim \delta^{\frac{1}{2}-\tau},\label{eqn:bd-I12-far}\\
    \left\| \delta \frac{\chi(|k|\geq \delta^\tau)}{\Gamma_2(k,k_\parallel,\delta)} \widetilde{F}_i[k;\bm{\widetilde{\eta}}_\text{near}] \right\|_{\LtwoBrill} &\lesssim \delta^{1-\tau} \:\|\widetilde{\bm{\eta}}_\text{near}\|_{\LtwoBrill}.\label{eqn:bd-F12-far-near}
\end{align}
\end{subequations}
The proof of \eqref{eqn:bd-I12-far} comes directly from \eqref{eqn:bd-I-far-main}, and the proof of \eqref{eqn:bd-F12-far-near} is analogous to that of \eqref{eqn:bd-F12-far}. 

Since the far-momentum equation \eqref{eqn:eta-far-gamma} is linear in $\widetilde{\bm \eta}_\text{far}$ and $ \widetilde{\bm \eta}_\text{near}$, and Lipschitz in $\mu$, we expect the solution $\widetilde{\bm{\eta}}_\text{far}[k; \widetilde{\bm{\eta}}_\text{near}, \mu, \delta]$ of \eqref{eqn:eta-far-gamma} satisfies that (1) $\widetilde{\bm{\eta}}_\text{far}[k; \widetilde{\bm{\eta}}_\text{near}, \mu, \delta]$ is \textit{affine} in $\widetilde{\bm{\eta}}_\text{near}$; and (2) $\widetilde{\bm{\eta}}_\text{far}[k; \widetilde{\bm{\eta}}_\text{near}, \mu, \delta]$ is \textit{Lipschitz} in $\mu$. The above arguments are stated in detail in Proposition \ref{prop:far-energy}.

Before we state Proposition \ref{prop:far-energy}, we define the following closed balls with radius $R$ in the near- and far-momentum space $L^2_{\text{near},\delta^\tau}\left(\Icell\right)$ and $L^2_{\text{far},\delta^\tau}\left(\Icell\right)$, i.e.
\begin{align*}
    B_{\text{near}, \delta^\tau}(R) &:= \left\{\widetilde{f} \in L^2_{\text{near},\delta^\tau}\left(\Icell\right) \: \bigg| \: \|\widetilde{f}\|_{\LtwoBrill} \leq R\right\},\\
    B_{\text{far}, \delta^\tau}(R) &:= \left\{\widetilde{f} \in L^2_{\text{far},\delta^\tau}\left(\Icell\right) \: \bigg| \: \|\widetilde{f}\|_{\LtwoBrill} \leq R\right\}.
\end{align*}
We summarize the above arguments for the solution $\widetilde{\bm{\eta}}_\text{far}[k; \widetilde{\bm{\eta}}_\text{near}, \mu, \delta]$ in the following Proposition:
\begin{proposition}\label{prop:far-energy}
	\begin{enumerate}[(a)]
		\item For any fixed $M > 0, R > 0$, there exists a $0 < \delta_0 < 1$ such that for all $0 < \delta < \delta_0$, the system \eqref{eqn:eta-far} has a \textbf{unique} solution $\widetilde{\bm \eta}_\text{far}[k; \widetilde{\bm \eta}_\text{near}, \mu, \delta]$ as a map
		\begin{equation}
			(\widetilde{\bm{\eta}}_\text{near}, \mu, \delta) \in B_{\text{near}, \delta^\tau}(R) \times \{|\mu| < M\} \times (0,\delta_0) \longmapsto \widetilde{\bm{\eta}}_\text{far}[\cdot; \widetilde{\bm{\eta}}_\text{near}, \mu, \delta] \in B_{\text{far}, \delta^\tau} (\rho_\delta),
		\end{equation}
		with the radius $\rho_\delta = O(\delta^{\frac{1}{2}-\tau})$. 
		
		\item The mapping $(\widetilde{\bm{\eta}}_\text{near}, \mu, \delta) \longmapsto \widetilde{\bm{\eta}}_\text{far}[\cdot; \widetilde{\bm{\eta}}_\text{near}, \mu, \delta]$ is affine in $\widetilde{\bm{\eta}}_\text{near}$, Lipschitz in $\mu$, and satisfies \eqref{eqn:far-bound}.
        Moreover, the mapping $(\widetilde{\bm{\eta}}_\text{near}, \mu, \delta) \longmapsto \widetilde{\bm{\eta}}_\text{far}[\cdot; \widetilde{\bm{\eta}}_\text{near}, \mu, \delta]$ can be expressed as
		\begin{align}
			\widetilde{\bm{\eta}}_\text{far}[k; \widetilde{\bm{\eta}}_\text{near}, \mu, \delta] &= [\mathcal{A}\widetilde{\bm{\eta}}_\text{near}](k;\mu, \delta) + \mathcal{B}(k;\mu,\delta), \label{eqn:far-affine}
		\end{align}
        where $[\mathcal{A}\widetilde{\bm{\eta}}_\text{near}](k;\mu, \delta)$ acts linearly on $\widetilde{\bm{\eta}}_\text{near}$ for fixed $\mu, \delta$. For $\widetilde{\bm{\eta}}_\text{near} \in B_{\text{near},\delta^\tau}(R)$, $|\mu|\leq M$, $0 < \delta < \delta_0$, we have the following bounds on $[\mathcal{A}\widetilde{\bm{\eta}}_\text{near}](k;\mu, \delta)$ and $\mathcal{B}(k;\mu,\delta)$:
        \begin{subequations}\label{eqn:AB-map-bd}
        \begin{align}
        &\norm{[\mathcal{A}\widetilde{\bm{\eta}}_\text{near}](k;\mu, \delta)}_{\LtwoBrill} \lesssim \delta^{1-\tau} \norm{\widetilde{\bm{\eta}}_\text{near}}_{\LtwoBrill},\label{eqn:A-map-bd}\\
        & \norm{\mathcal{B}(k; \mu,\delta)}_{\LtwoBrill} \lesssim \delta^{\frac{1}{2} - \tau},\label{eqn:B-map-bd}\\
        & \norm{[\mathcal{A}\widetilde{\bm{\eta}}_\text{near}](k;\mu_1, \delta) - [\mathcal{A}\widetilde{\bm{\eta}}_\text{near}](k;\mu_2, \delta)}_{\LtwoBrill} \lesssim \delta^{1-\tau} |\mu_1 - \mu_2|,\label{eqn:A-map-lip-mu}\\
		& \norm{\mathcal{B}(k; \mu_1,\delta) - \mathcal{B}(k; \mu_2,\delta)}_{\LtwoBrill} \lesssim \delta^{\frac{1}{2} - \tau} |\mu_1 - \mu_2|.\label{eqn:B-map-lip-mu}
        \end{align}
        \end{subequations}

            \item Define the extension of $\widetilde{\bm \eta}_\text{far}[k; \widetilde{\bm \eta}_\text{far}, \mu, \delta]$ to the half-open interval $\delta \in [0,\delta_0)$ by setting $\widetilde{\bm \eta}_\text{far}[k; \widetilde{\bm \eta}_\text{far}, \mu, 0] = 0$. Then, by \eqref{eqn:far-bound}, $\widetilde{\bm \eta}_\text{far}[k; \widetilde{\bm \eta}_\text{far}, \mu, \delta]$ is continuous at $\delta = 0$.
	\end{enumerate}
\end{proposition}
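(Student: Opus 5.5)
We will recast the far-momentum system \eqref{eqn:eta-far-gamma} as a fixed-point problem in the closed subspace $L^2_{\text{far},\delta^\tau}(\Icell)$. Write $\widetilde{\bm\eta}_\text{far}=(\widetilde\eta^A_\text{far},\widetilde\eta^B_\text{far})^T$. Let $P_\text{far}$ denote multiplication by $\chi(|k|\ge\delta^\tau)$, and let $\mathcal{T}_{\mu,\delta}$ be the linear operator
\[
\mathcal{T}_{\mu,\delta}\widetilde{\bm g}=\begin{pmatrix}\frac{\chi(|k|\ge\delta^\tau)}{\Gamma_2}\big[(\delta E_1+\delta^2\mu)\widetilde g^B+\delta\widetilde F_2[k;\widetilde{\bm g}]\big]\\[2pt] \frac{\chi(|k|\ge\delta^\tau)}{\Gamma_1}\big[(\delta E_1+\delta^2\mu)\widetilde g^A+\delta\widetilde F_1[k;\widetilde{\bm g}]\big]\end{pmatrix}.
\]
With this notation \eqref{eqn:eta-far-gamma} reads $(I-\mathcal{T}_{\mu,\delta})\widetilde{\bm\eta}_\text{far}=\mathcal{S}_{\mu,\delta}[\widetilde{\bm\eta}_\text{near}]$. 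The source $\mathcal{S}$ consists of the cutoff right-hand sides, and we split it as $\mathcal{S}=\mathcal{S}^{(0)}_{\mu,\delta}+\mathcal{S}^{(1)}_\delta\widetilde{\bm\eta}_\text{near}$. Here $\mathcal{S}^{(0)}$ collects the $\widetilde I_i$ terms, and $\mathcal{S}^{(1)}$ collects the $-\delta\widetilde F_i[\cdot;\widetilde{\bm\eta}_\text{near}]$ terms, which are linear in $\widetilde{\bm\eta}_\text{near}$. Because each component carries the cutoff, $\mathcal{T}$ and $\mathcal{S}$ map into $L^2_{\text{far},\delta^\tau}$. Solving \eqref{eqn:eta-far-gamma} within this subspace is therefore equivalent to solving \eqref{eqn:eta-far}, since dividing by $\Gamma_i$ is reversible on $|k|\ge\delta^\tau$ by \eqref{eqn:bd-gamma-far}.

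First, the operator bound. By \eqref{eqn:bd-gamma-far} we have $|\chi/\Gamma_i|\lesssim\delta^{-\tau}$, and by \eqref{eqn:bd-F-main} the $\widetilde F_i$ are bounded. Together these give $\|\mathcal{T}_{\mu,\delta}\|\le C(1+|E_1|+M)\delta^{1-\tau}$, with $C$ independent of $\mu$ and $\delta$. Choosing $\delta_0$ so that this is at most $1/2$, a Neumann series shows that $I-\mathcal{T}$ is invertible on $L^2_{\text{far},\delta^\tau}$ with inverse of norm at most $2$. This gives uniqueness and existence. We then set
\[
\mathcal{A}=(I-\mathcal{T})^{-1}\mathcal{S}^{(1)},\qquad \mathcal{B}=(I-\mathcal{T})^{-1}\mathcal{S}^{(0)},
\]
which yields the affine structure \eqref{eqn:far-affine}. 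The bound \eqref{eqn:A-map-bd} follows from \eqref{eqn:bd-F12-far-near}. The bound \eqref{eqn:B-map-bd} follows from \eqref{eqn:bd-I12-far}, which rests on \eqref{eqn:bd-I-far-main}; we need $\bm\Psi_0\in H^2$, and Lemma \ref{lemma:1d} provides this, with $\mu$ entering only through the explicit bounded term in \eqref{eq:I-simp}. Adding the two bounds gives \eqref{eqn:far-bound}. For $\|\widetilde{\bm\eta}_\text{near}\|\le R$, the solution then lies in $B_{\text{far},\delta^\tau}(\rho_\delta)$ with $\rho_\delta=O(\delta^{1/2-\tau})$, since $\delta^{1-\tau}R\le\delta^{1/2-\tau}$ for small $\delta$.

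Second, the Lipschitz dependence on $\mu$. The map $\mu\mapsto\mathcal{T}_{\mu,\delta}$ is affine, with $\|\mathcal{T}_{\mu_1}-\mathcal{T}_{\mu_2}\|\lesssim\delta^{2-\tau}|\mu_1-\mu_2|$. The resolvent identity
\[
(I-\mathcal T_1)^{-1}-(I-\mathcal T_2)^{-1}=(I-\mathcal T_1)^{-1}(\mathcal T_1-\mathcal T_2)(I-\mathcal T_2)^{-1}
\]
then gives Lipschitz dependence of the inverse with constant $O(\delta^{2-\tau})$. The source $\mathcal{S}^{(1)}$ does not depend on $\mu$. The source $\mathcal{S}^{(0)}$ depends affinely on $\mu$ through the explicit term $-\frac{2}{\sqrt3}\mu\sum_m\widehat{\Psi_0^{A,B}}((k+\frac{4\pi}{\sqrt3}m)/\delta)$ of \eqref{eq:I-simp}. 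Its far-cutoff norm, divided by $\Gamma_i$, is $\lesssim\delta^{1/2-\tau}$ per unit $\mu$, by the same estimate used for \eqref{eqn:bd-I-far-main}; this is where rapid decay of $\widehat{\Psi_0}$ at frequencies $\gtrsim\delta^{\tau-1}$ enters. Combining the two contributions yields \eqref{eqn:A-map-lip-mu} and \eqref{eqn:B-map-lip-mu}. For $\mathcal A$ we obtain $\delta^{2-\tau}\cdot\delta^{1-\tau}R$, which is even better than the stated $\delta^{1-\tau}$ bound.

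Part (c) is immediate from \eqref{eqn:far-bound}: since $\tau<1/2$, the right-hand side tends to $0$ as $\delta\to0$ uniformly over the ball and over $|\mu|\le M$. This requires the standing choice $\tau\in(0,1/2)$, which is implicit in $\rho_\delta\to0$.

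The only genuinely delicate input is \eqref{eqn:bd-I-far-main}, including its $\mu$-dependent piece. That estimate needs the explicit form of $\widetilde I_i$ together with the cancellation of the order-$\delta^0$ and order-$\delta^1$ terms through the Dirac equation \eqref{eqn:1st-expansion}. We take it from the appendix as stated. The rest is a routine Neumann-series argument.
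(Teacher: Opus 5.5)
Up to packaging, your argument is the paper's. The paper rewrites \eqref{eqn:eta-far-gamma} as $\widetilde{\bm\eta}_\text{far}=\mathcal E[\widetilde{\bm\eta}_\text{far};\widetilde{\bm\eta}_\text{near},\mu,\delta]$, applies the contraction mapping theorem, and derives \eqref{eqn:A-map-lip-mu}--\eqref{eqn:B-map-lip-mu} from the Lipschitz bounds \eqref{eqn:bd-map-E} on $\mathcal E$. For an affine map this is exactly your Neumann series. Your formulas $\mathcal A=(I-\mathcal T)^{-1}\mathcal S^{(1)}$, $\mathcal B=(I-\mathcal T)^{-1}\mathcal S^{(0)}$ plus the resolvent identity are a cleaner route to the affine structure and the $\mu$-dependence, and you are right that (c) needs $\tau<\frac12$.

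The trouble is an input both arguments rely on: \eqref{eqn:bd-gamma-far} is false. At $\delta=0$, put $w=e^{i\frac{2\pi}{3}}e^{i\frac{\sqrt3}{2}k}$. Then $e^{-i\frac{\pi}{3}}\Gamma_1=1+w+w^2$, which vanishes for $w=e^{\pm i\frac{2\pi}{3}}$, i.e.\ at $k=0$ \emph{and} at $k_*=\frac{4\pi}{3\sqrt3}\in\Icell$. For $\delta>0$ a direct computation gives $\Gamma_1(k_*,k_\parallel,\delta)=e^{i\frac{5\pi}{3}}\big(e^{i\frac32\delta k_\parallel}-1\big)=O(\delta|k_\parallel|)$, and $\Gamma_2=\overline{\Gamma_1}$. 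This second zero is the $\bm K'$ valley: $\bm K'\cdot\bm v_2=2\pi$, so along the armchair direction $\bm K'$ carries the same parallel quasi-momentum as $\bm K$.

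Since $k_*$ lies in the far region, $\chi(|k|\ge\delta^\tau)/\Gamma_i$ is of size $(\delta|k_\parallel|)^{-1}$ there. If $k_\parallel=0$ it is unbounded, and even dividing by $\Gamma_i$ is illegitimate. Consequently $\mathcal T_{\mu,\delta}$ has norm of order one, not $O(\delta^{1-\tau})$, on far functions concentrated within $O(\delta)$ of $k_*$. Your Neumann series, like the paper's contraction, does not close.

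A sharper estimate will not rescue this. Set $k=k_*+\delta\xi$ and expand as in Appendix \ref{app:dev-near}. To leading order the far system near $k_*$ is $-\delta$ times a constant diagonal unitary conjugate of $\mathcal D(-k_\parallel)-E_1$ in Fourier variables. Here $\mathcal D(-k_\parallel)$ is \eqref{eqn:dirac-1d} with $\kappa(X_1;k_\parallel)$ replaced by $\kappa(X_1;-k_\parallel)=-\frac{k_\parallel}{3}-\frac{t_1}{2}d'(X_1)$. Because \eqref{eqn:k-para-cond} is invariant under $k_\parallel\mapsto-k_\parallel$, Lemma \ref{lemma:1d} gives this $\bm K'$ block a simple zero eigenvalue. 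For odd $d'$ (e.g.\ the regularized quadratic deformation) it is even unitarily equivalent to $\mathcal D(k_\parallel)$, so it shares every $E_1$.

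So, at least for $E_1=0$, the far operator has an approximate kernel for $\mu$ near the $\bm K'$ analogue of $E_2$. Hence $(I-\mathcal T_{\mu,\delta})^{-1}$ cannot be bounded uniformly in $|\mu|<M$, and the uniqueness and $\rho_\delta=O(\delta^{\frac12-\tau})$ claims of part (a) fail as stated. The twofold multiplicities reported in Section \ref{sec:num-comparison} are consistent with this valley degeneracy.

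To repair the argument, move a $\delta^\tau$-neighbourhood of $k_*$ into the near region. On the remaining set $|\Gamma_i(k)|\gtrsim\mathrm{dist}(k,\{0,k_*\})\ge\delta^\tau$, and your Neumann series goes through verbatim. The two valley blocks must then be handled jointly by a Lyapunov--Schmidt reduction, which is two-dimensional whenever $E_1$ is also an eigenvalue of the $\bm K'$ block. You would also need to show that the intervalley coupling induced by the slowly varying coefficients $\ff_\nu$ is negligible.
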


\begin{proof}
The proof adapts the arguments of Proposition 6.3 and Corollary 6.4 in \cite{fefferman2017topologically} and uses the fixed point theorem. To use a fixed point argument, we reorganize the system \eqref{eqn:eta-far-gamma} in the following form (by moving the last three terms in \eqref{eqn:eta-far-gamma} to the left hand side)
\begin{equation}\label{eqn:far-contraction}
    \widetilde{\bm{\eta}}_\text{far} = \mathcal{E}[\widetilde{\bm{\eta}}_\text{far}; \widetilde{\bm{\eta}}_\text{near}, \mu, \delta].
\end{equation}
We observe that when $|\mu| \leq M$, the mapping $\mathcal{E}[\widetilde{\bm{\eta}}_\text{far}; \widetilde{\bm{\eta}}_\text{near}, \mu, \delta]$ is a contraction by the following estimate
\begin{equation}\label{eqn:map-E-bd}
    \Big\|\mathcal{E}[\widetilde{\bm{\eta}}_\text{far}; \widetilde{\bm{\eta}}_\text{near}, \mu, \delta] \Big\|_{\LtwoBrill} \lesssim \delta^{\frac{1}{2}-\tau} + \delta^{1-\tau} \:\Big(\|\widetilde{\bm{\eta}}_\text{near}\|_{\LtwoBrill} + \|\widetilde{\bm{\eta}}_\text{far}\|_{\LtwoBrill}\Big).
\end{equation}
The estimate \eqref{eqn:map-E-bd} follows from \eqref{eqn:bd-F12-far} and \eqref{eqn:far-bound}. Thus, the fixed point theorem guarantees a unique solution of \eqref{eqn:eta-far-gamma} when $\|\widetilde{\bm{\eta}}_\text{far}\| \lesssim \delta^{\frac{1}{2}-\tau}$ and $\|\widetilde{\bm{\eta}}_\text{near}\| \leq R$.

Lastly, we briefly explain the bounds on $[\mathcal{A}\widetilde{\bm{\eta}}_\text{near}](k;\mu, \delta)$ and $\mathcal{B}(k;\mu,\delta)$ in \eqref{eqn:far-affine}. The first two bounds \eqref{eqn:A-map-bd}-\eqref{eqn:B-map-bd} follow directly from \eqref{eqn:bd-F12-far-near}. The Lipschitz dependence of $\widetilde{\bm{\eta}}_\text{far}[k; \widetilde{\bm{\eta}}_\text{near}, \mu, \delta]$ on $\mu$ follows directly from the $\mu$-Lipschitz continuity of the contraction map $\mathcal{E}[\widetilde{\bm{\eta}}_\text{far}$; $\widetilde{\bm{\eta}}_\text{near}, \mu, \delta]$. However, proving the bounds \eqref{eqn:A-map-lip-mu}-\eqref{eqn:B-map-lip-mu} requires additional detail since the Lipschitz constants in them differ; we therefore postpone these proofs to Appendix \ref{app:some-bds} (see near \eqref{eqn:bd-map-E}).
\end{proof}

\subsection{The inverse operator of $\mathcal{D} - E_1$}\label{subsec:inv-Dk}


One major step in solving the near-momentum equation \eqref{eqn:eta-near} involves inverting $\mathcal{D} - E_1$ on the orthogonal space of $\bm \Psi_0$. In this section, we properly define the inverse operator $\left(\mathcal{D} - E_1\right)^{-1}$ and provide its estimate in Proposition \ref{prop:inv-Dk}. 


We start by defining the inverse operator b considering the following system with $\bm{u} = (u^A,u^B)^T$ and $\bm{f} = (f^A,f^B)^T$
\begin{equation}\label{eqn:inv-f-def}
    \left(\mathcal{D} - E_1\right) \bm{u} = \bm{f}, \qquad \bm u \perp \bm{\Psi_0}.
\end{equation}
If the system \eqref{eqn:inv-f-def} has a unique solution $\bm u$, then we define $\left(\mathcal{D} - E_1\right)^{-1} \bm{f}:= \bm{u}$. 

We also provide the solvability condition on $\bm f$. Given that the displacement $d(X_1)$ has a bounded gradient, using Lemma \ref{lemma:1d}, we know that each eigenvalue $E_1$ of $\mathcal{D}$ is simple with an associated eigenfunction $\bm{\Psi}_0$. Therefore, the null space of $\mathcal{D} - E_1$ is spanned by $\bm{\Psi_0}$ and the Fredholm solvability condition requires $\langle \bm{\Psi_0}, \bm{f}\rangle_{L^2(\mathbb{R})} = 0$. 


With the inverse operator defined, we present the conditions on the displacement $d(X_1)$ such that $\mathcal{D} - E_1$ admits a bounded inverse on the orthogonal complement of $\bm{\Psi_0}$ in the following Proposition:
\begin{proposition}\label{prop:inv-Dk}
    Consider a displacement $d(X_1) \in C^1(\mathbb{R})$ that satisfies the asymptotic condition \eqref{eqn:d-condition} and $k_\parallel$ satisfies \eqref{eqn:k-para-cond}. For any $\bm{f} \in L^2(\mathbb{R})$ such that $\langle \bm{\Psi_0}, \bm{f}\rangle_{L^2(\mathbb{R})} = 0$, there is a unique solution $\bm{u} \perp \bm{\Psi_0}$ of \eqref{eqn:inv-f-def}. Moreover, the solution $\bm u$ satisfies the following estimate
    \begin{equation}\label{eqn:est-inv-Dk-u}
        \|\bm{u}\|_{H^{1}(\mathbb{R})} \lesssim \|\bm{f}\|_{L^2(\mathbb{R})}.
    \end{equation}
\end{proposition}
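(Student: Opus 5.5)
Our plan is to prove Proposition \ref{prop:inv-Dk} by recasting $(\mathcal{D}-E_1)\bm u=\bm f$ as a first-order linear ODE system on $\mathbb{R}$ and using exponential dichotomies, as in \cite{palmer1984exponential}. Since $\mathcal{D}=\frac32[-i\partial_{X_1}\sigma_1+\kappa\sigma_2]$ and $\sigma_1^2=I$, the equation is equivalent to
\begin{equation*}
\bm u' = A(X_1)\bm u + \bm g,\qquad A(X_1)= \kappa(X_1)\,\sigma_3 + \tfrac{2i}{3}E_1\sigma_1,\qquad \bm g=\tfrac{2i}{3}\sigma_1\bm f,
\end{equation*}
using $-i\sigma_1(\kappa\sigma_2)=\kappa\sigma_3$ up to a sign that we fix by direct computation. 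By \eqref{eqn:d-condition} and \eqref{eqn:k-para-cond}, $A(X_1)\to A_\pm=\kappa_\pm\sigma_3+\frac{2i}{3}E_1\sigma_1$ as $X_1\to\pm\infty$, with $\kappa_+\kappa_-<0$. The eigenvalues of $A_\pm$ are $\pm\sqrt{\kappa_\pm^2-\frac49E_1^2}$, which are real and nonzero because $|E_1|<a(k_\parallel)=\frac32\min|\kappa_\pm|$ by Lemma \ref{lemma:1d}. Hence $A_\pm$ are hyperbolic with one-dimensional stable and unstable subspaces.

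The first step is to invoke the roughness theorem for exponential dichotomies (Palmer). Since $A(X_1)-A_\pm\to0$ (here only continuity and the limits are needed, not the $L^1$ condition), the homogeneous system has exponential dichotomies on $[0,\infty)$ and on $(-\infty,0]$, with projections $P_+$ and $P_-$ of rank one. Let $V_+=\operatorname{Ran}P_+$ be the space of initial data decaying at $+\infty$, and $V_-=\ker P_-$ the space decaying at $-\infty$. Both are one-dimensional. Their intersection is the bounded-solution space, which is the kernel of $\mathcal{D}-E_1$; by Lemma \ref{lemma:1d} this kernel is exactly $\operatorname{span}\{\bm\Psi_0(0)\}$, so $V_+=V_-$.

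By Palmer's theorem, the operator $L=\partial_{X_1}-A$ from $H^1$ to $L^2$ (or between bounded continuous spaces) is then Fredholm with index $\dim V_+ + \dim V_- - 2 = 0$. Its range is the annihilator of the bounded solutions of the adjoint equation $\bm w'=-A^*\bm w$. We identify this annihilator condition with $\langle\bm\Psi_0,\bm f\rangle=0$, either through the self-adjointness of $\mathcal{D}$ or by checking directly that $\bm w=\sigma_1$-conjugate of $\bm\Psi_0$ solves the adjoint equation. Under this condition we write the solution explicitly by variation of constants with the dichotomy Green's function:
\begin{equation*}
\bm u_p(X)=\int_{-\infty}^{X}\Phi(X)(I-P)\Phi(s)^{-1}\bm g(s)\,ds-\int_X^{\infty}\Phi(X)P\Phi(s)^{-1}\bm g(s)\,ds,
\end{equation*}
with the projections matched on each half-line. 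The solvability condition is what allows the half-line solutions to be glued continuously at $X_1=0$.

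The kernel $G(X,s)$ obeys $|G(X,s)|\lesssim e^{-\lambda|X-s|}$, so Young's inequality gives $\|\bm u_p\|_{L^2}\lesssim\|\bm f\|_{L^2}$. The equation itself then gives $\|\bm u_p'\|_{L^2}\le\|A\|_\infty\|\bm u_p\|_{L^2}+\|\bm g\|_{L^2}$, which yields the $H^1$ bound \eqref{eqn:est-inv-Dk-u}. Finally we set $\bm u=\bm u_p-\frac{\langle\bm\Psi_0,\bm u_p\rangle}{\|\bm\Psi_0\|^2}\bm\Psi_0$, so that $\bm u\perp\bm\Psi_0$; the subtracted term is bounded by $\|\bm u_p\|_{L^2}$ times $\|\bm\Psi_0\|_{H^1}/\|\bm\Psi_0\|$. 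Uniqueness follows because any two solutions differ by an element of the kernel $\operatorname{span}\{\bm\Psi_0\}$, which the orthogonality constraint removes.

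We expect the main obstacle to be the gluing/solvability step. One must show that the jump at $0$ between the two half-line particular solutions lies in $V_+ + V_-$ precisely when $\langle\bm\Psi_0,\bm f\rangle=0$, and since $V_+=V_-$ is one-dimensional this requires a careful adjoint computation (a Wronskian-type identity). We would do this concretely in the $2\times2$ setting rather than relying on the abstract Fredholm alternative.
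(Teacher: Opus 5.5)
Your proposal is correct and follows essentially the same route as the paper: you rewrite $(\mathcal{D}-E_1)\bm u=\bm f$ as a first-order system whose limits $A_\pm$ satisfy $\det A_\pm<0$, and you get exponential dichotomies on both half-lines from Palmer's roughness result (Lemma \ref{lemma:main-exp-dich-bdd}). You then apply the Fredholm alternative of Lemma \ref{lemma:palmer-lemma42}, with the bounded adjoint solution identified as a swapped copy of $\bm\Psi_0$, bound the exponentially decaying dichotomy kernel in $L^2$ via Young's inequality, and get $\bm u'$ from the equation; the paper does the same. The only differences are cosmetic: you keep the complex matrix $\kappa\sigma_3+\tfrac{2i}{3}E_1\sigma_1$ instead of conjugating by $V$ to the real operator $\mathcal{D}^r$, and you explicitly subtract the $\bm\Psi_0$-component to enforce $\bm u\perp\bm\Psi_0$, which pins down the estimated solution more carefully than the paper's claim that every bounded solution obeys the bound.
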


Notice that Proposition \ref{prop:inv-Dk} does not require the integrability condition \eqref{eqn:main-d-condition-2}. In fact, the existence of a bounded solution to \eqref{eqn:inv-f-def} follows by applying classical results in the exponential dichotomy theory in \cite{palmer1984exponential}, which only requires the asymptotic condition \eqref{eqn:d-condition} and \eqref{eqn:k-para-cond} to make $\kappa(X_1;k_\parallel)$ approach limits at $\pm \infty$ with opposite signs. A self-contained review, including the definition of an exponential dichotomy for a given ODE system, is provided in Appendix \ref{app:proof-inv-dk}. Here we state only the particular results from the exponential dichotomy theory that will be used in the proof of Proposition \ref{prop:inv-Dk}.

\begin{lemma}\label{lemma:main-exp-dich-bdd}
    For a bounded and continuous matrix $A(X_1) \in \mathbb{R}^{2\times 2}$ on $\mathbb{R}$, the linear first-order ODE system
    \begin{equation}\label{eqn:main-exp-dich-0}
        \partial_{X_1} \bm{u} = A(X_1) {\bm{u}}
    \end{equation}
    admits \textit{exponential dichotomy} on $(-\infty,0]$ and $[0,\infty)$, provided that $A(X_1)$ satisfies $A(X_1) \rightarrow A_\pm$ as $X_1 \rightarrow \pm \infty$ and $\det A_\pm < 0$. 
\end{lemma}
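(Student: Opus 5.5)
The plan is to reduce the lemma to two classical facts from exponential dichotomy theory, as in Coppel and \cite{palmer1984exponential}: constant-coefficient hyperbolic systems have exponential dichotomies, and this property survives small bounded perturbations (roughness). I treat $[0,\infty)$ with limit $A_+$; the half-line $(-\infty,0]$ follows by the reflection $X_1\mapsto -X_1$, which replaces $A$ by $-A(-X_1)$ and leaves $\det(-A_-)=\det A_-<0$ unchanged, since the matrices are $2\times 2$.

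\emph{Step 1 (hyperbolicity of the limits).} The eigenvalues $\lambda_1,\lambda_2$ of the real $2\times2$ matrix $A_+$ satisfy $\lambda_1\lambda_2=\det A_+<0$. A non-real conjugate pair would give $\lambda_1\lambda_2=|\lambda_1|^2>0$, so both eigenvalues are real, nonzero and of opposite sign, $\lambda_-<0<\lambda_+$. Hence $A_+$ is diagonalizable, $\mathbb R^2=E^s\oplus E^u$, and with $P$ the spectral projection onto $E^s$ the constant system $\bm u'=A_+\bm u$ satisfies
\[
|e^{A_+ x}Pe^{-A_+ y}|\le Ke^{-\alpha(x-y)}\ (x\ge y),\qquad |e^{A_+ x}(I-P)e^{-A_+ y}|\le Ke^{-\alpha(y-x)}\ (y\ge x),
\]
with $\alpha=\min(|\lambda_-|,\lambda_+)$. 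This is an exponential dichotomy on all of $\mathbb R$, in particular on $[T,\infty)$ for every $T$.

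\emph{Step 2 (roughness near infinity).} Write $A(X_1)=A_++B(X_1)$. Because $B(X_1)\to 0$, for any $\epsilon>0$ there is a $T$ with $\sup_{X_1\ge T}|B|<\epsilon$. The roughness theorem (Coppel, Ch.~4; Palmer, Prop.~2.1) states that if $\epsilon<\alpha/(4K^2)$, then $\bm u'=(A_++B)\bm u$ has an exponential dichotomy on $[T,\infty)$ with slightly worse constants $(K',\alpha')$ and a projection of the same rank, here one. I expect this to be the main technical point. If a self-contained argument is preferred, I would construct the stable and unstable solutions by the standard Lyapunov--Perron integral equations using the dichotomy kernel of $A_+$. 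Smallness of $B$ makes these equations contractions in weighted sup-norms, and the resulting Green's function bounds yield the perturbed dichotomy.

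\emph{Step 3 (extension to $[0,\infty)$).} The remaining interval $[0,T]$ is compact and $A$ is bounded and continuous, so the evolution operator $\Phi(x,y)$ and its inverse are bounded for $x,y\in[0,T]$ by $e^{\|A\|_\infty T}$. I define the projection at $0$ by transporting the projection $P_T$ from Step~2: $P_0=\Phi(0,T)P_T\Phi(T,0)$. The dichotomy estimates on $[T,\infty)$ then extend to $[0,\infty)$ after enlarging $K'$ by a factor $e^{2\|A\|_\infty T}e^{\alpha' T}$; this is the standard fact that a dichotomy on $[T,\infty)$ implies one on $[0,\infty)$.

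The only nonroutine ingredient is the perturbation theorem in Step~2. I would cite it from \cite{palmer1984exponential} rather than reprove it, noting that its hypotheses (continuity, boundedness, small perturbation on a tail) are exactly those assumed in the lemma.
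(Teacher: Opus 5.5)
Your proposal is correct and follows essentially the paper's route. Hyperbolicity of the constant limits $A_\pm$ (real eigenvalues of opposite sign, since $\det A_\pm<0$) gives a dichotomy, and the perturbation result of \cite{palmer1984exponential} (Lemma 3.4, for perturbations tending to zero) transfers it to each half-line; your Steps 2 and 3 simply unpack that lemma into roughness on a tail $[T,\infty)$ plus extension across the compact interval $[0,T]$, and your reflection $X_1\mapsto -X_1$ is a valid way to obtain the $(-\infty,0]$ case.
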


\begin{lemma}\label{lemma:palmer-lemma42}
    For a bounded and continuous matrix $A(X_1)$ on $\mathbb{R}$, if the system \eqref{eqn:main-exp-dich-0} has exponential dichotomy on both half lines, then the system 
    \begin{equation}\label{eqn:main-exp-dich-1}
        \partial_{X_1} \bm{u} = A(X_1) {\bm{u}} + \bm{g}(X_1)
    \end{equation}
    has a bounded solution if and only if 
    \begin{equation}
        \int_{\mathbb{R}} \bm{\psi}^*(s) \bm{g}(s) ds = 0,\label{eqn:palmer-fredhom-alt}
    \end{equation}
    for all bounded solutions $\bm{\psi}(X_1)$ of the adjoint system
    \begin{equation}
        \partial_{X_1}\bm{v} = -A^*(X_1) \bm{v}.\label{eqn:palmer-adj-system}
    \end{equation}
\end{lemma}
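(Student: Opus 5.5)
The statement to prove is Lemma \ref{lemma:palmer-lemma42}, the Fredholm alternative for bounded solutions of $\partial_{X_1}\bm u = A\bm u+\bm g$. The plan is to use the dichotomy projections on each half line to write down the general bounded solution on $[0,\infty)$ and on $(-\infty,0]$. Then the question of a bounded solution on all of $\mathbb R$ becomes a matching condition at $X_1=0$, and that condition is a finite-dimensional linear-algebra statement.

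\textbf{Half-line solutions.} Let $\Phi(X_1)$ be the fundamental matrix with $\Phi(0)=I$. Exponential dichotomy on $[0,\infty)$ gives a projection $P_+$ and constants $K,\alpha>0$ with
\[
|\Phi(t)P_+\Phi(s)^{-1}|\le Ke^{-\alpha(t-s)}\ (t\ge s\ge 0),\qquad |\Phi(t)(I-P_+)\Phi(s)^{-1}|\le Ke^{-\alpha(s-t)}\ (s\ge t\ge 0).
\]
Similarly there is a projection $P_-$ on $(-\infty,0]$, where the range of $I-P_-$ consists of initial data whose solutions decay at $-\infty$. I would take $\bm g$ bounded and continuous, since that is the class in which the lemma is used. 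For $\xi\in\operatorname{Range}P_+$, the variation-of-constants formula
\[
\bm u_+(t)=\Phi(t)\xi+\int_0^t\Phi(t)P_+\Phi(s)^{-1}\bm g(s)\,ds-\int_t^\infty\Phi(t)(I-P_+)\Phi(s)^{-1}\bm g(s)\,ds
\]
is bounded on $[0,\infty)$, and every bounded solution on $[0,\infty)$ has this form. The analogous formula, with parameter $\zeta\in\operatorname{Range}(I-P_-)$, gives all bounded solutions $\bm u_-$ on $(-\infty,0]$.

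\textbf{Matching at $0$.} A bounded solution on $\mathbb R$ exists if and only if $\bm u_+(0)=\bm u_-(0)$ for some choice of $\xi,\zeta$. This condition reads
\[
\xi-\zeta=\int_0^\infty(I-P_+)\Phi(s)^{-1}\bm g\,ds+\int_{-\infty}^0 P_-\Phi(s)^{-1}\bm g\,ds=:\bm b,
\]
so it is solvable if and only if $\bm b\in\operatorname{Range}P_++\operatorname{Range}(I-P_-)$. Equivalently, $\bm b$ must be annihilated by every $\bm w\in(\operatorname{Range}P_++\operatorname{Range}(I-P_-))^\perp$.

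\textbf{Identification with the adjoint system.} The adjoint system $\partial\bm v=-A^*\bm v$ has fundamental matrix $(\Phi^{-1})^*$. From the dichotomy estimates, $\bm\psi(t)=(\Phi(t)^{-1})^*\bm w$ is bounded on $[0,\infty)$ if and only if $\bm w\perp\operatorname{Range}P_+$, that is, $(I-P_+)^*\bm w=\bm w$. Likewise it is bounded on $(-\infty,0]$ if and only if $\bm w\perp\operatorname{Range}(I-P_-)$. So the annihilating vectors $\bm w$ are exactly the initial data of bounded adjoint solutions $\bm\psi$. For such $\bm w$ we have $\bm w^*(I-P_+)=\bm w^*$ and $\bm w^*P_-=\bm w^*$, and therefore
\[
\bm w^*\bm b=\int_{\mathbb R}\bm w^*\Phi(s)^{-1}\bm g(s)\,ds=\int_{\mathbb R}\bm\psi^*(s)\bm g(s)\,ds,
\]
which is exactly the condition \eqref{eqn:palmer-fredhom-alt}.

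\textbf{Main obstacle.} The delicate step is the adjoint-boundedness characterization. The claim is that $(\Phi^{-1})^*\bm w$ is bounded on $[0,\infty)$ exactly when $\bm w\in(\operatorname{Range}P_+)^\perp$. This needs the adjoint dichotomy with projections $I-P_\pm^*$, obtained by taking adjoints in the estimates above. It also needs care with the possibly non-orthogonal projections, since the range of $I-P_+$ is not uniquely determined by the dichotomy and only the range of $P_+$ is. I would prove it by writing $\bm w=P_+^*\bm w+(I-P_+)^*\bm w$ and showing that a nonzero $P_+^*$-component forces exponential growth. Convergence of all the integrals follows from the exponential bounds together with boundedness of $\bm g$.
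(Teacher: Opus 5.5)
Your proposal is correct and is essentially the argument the paper relies on. The paper does not prove this lemma itself but cites Lemma 4.2 of \cite{palmer1984exponential}, and that proof, partly reproduced in the appendix through the half-line representation formulas and the matching equation for $\bm\xi$ in Lemma~\ref{lemma:exp-dich-bd}, is exactly your half-line variation-of-constants representation, matching at $X_1=0$, and identification of $(\operatorname{Range}P_+ + \operatorname{Range}(I-P_-))^\perp$ with the initial data of bounded adjoint solutions.

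One small correction: you say bounded continuous $\bm g$ is the class in which the lemma is used, but Proposition~\ref{prop:inv-Dk} applies it to $\bm g^{\rm swap}$ built from $\bm f\in L^2(\mathbb{R})$. Your argument still goes through verbatim in that setting, because the exponentially decaying dichotomy kernels make every integral converge by Cauchy--Schwarz, including $\int_{\mathbb R}\bm\psi^*\bm g$.
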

Lemma \ref{lemma:main-exp-dich-bdd} follows from Lemma 3.4 of \cite{palmer1984exponential} (a proof is given in Appendix \ref{app:proof-inv-dk}; see Corollary \ref{lemma:app-exp-dich-asy}), and Lemma \ref{lemma:palmer-lemma42} is Lemma 4.2 of \cite{palmer1984exponential} specialized to our setting. With Lemmas \ref{lemma:main-exp-dich-bdd} and \ref{lemma:palmer-lemma42}, we are ready to prove Proposition \ref{prop:inv-Dk}.

\begin{proof}[Proof of Proposition \ref{prop:inv-Dk}]
    We prove the existence and uniqueness of a bounded solution to \eqref{eqn:inv-f-def}. The uniqueness of a bounded solution to \eqref{eqn:inv-f-def} comes directly from the fact that $E_1$ is a simple eigenvalue of $\mathcal{D}$ and the compatibility condition $\langle \bm{\Psi_0}, \bm{f}\rangle_{L^2(\mathbb{R})} = 0$. The key step is to prove the existence of a bounded solution via exponential dichotomy theory, whereas the estimate \eqref{eqn:est-inv-Dk-u} requires an explicit representation of the bounded solution together with an estimate on its norm and is therefore deferred to Appendix \ref{app:proof-inv-dk} (see Lemma \ref{lemma:exp-dich-bd}).

    To show the existence of a bounded solution, we write the system \eqref{eqn:inv-f-def} as a two-dimensional linear ODE system, where the exponential dichotomy theory applies. We observe that the complex Dirac operator $\mathcal{D}$ in \eqref{eqn:dirac-1d} is, in fact, unitarily equivalent to a real Dirac operator $\mathcal{D}^r$, i.e.
    \begin{equation}\label{eqn:real-Dk-relation}
        \frac{3}{2} V \mathcal{D}^r V^* = \mathcal{D}, \qquad \text{where} \quad \mathcal{D}^r := \begin{pmatrix}
            0 & -\partial_{X_1} - \kappa(X_1)\\
            \partial_{X_1} - \kappa(X_1) &0
        \end{pmatrix}.
    \end{equation}
    where $V = \text{diag}(e^{i\frac{\pi}{4}}, e^{-i\frac{\pi}{4}})$ is a constant unitary matrix. The equivalence of the two systems is a straightforward calculation (see \eqref{eqn:sm-dirac-real} in the Supplementary Material). Therefore, the system \eqref{eqn:inv-f-def} is equivalent to 
    \begin{equation}
        \left(\mathcal{D}^r(k_\parallel) - \frac{2}{3}E_1(k_\parallel)\right) \bm v = \bm g, \qquad \text{where}\quad \bm v = \frac{3}{2} V^* \bm{u} \quad \text{and} \quad \bm g= V^*\bm{f}.\label{eqn:tilde-Dk-sys}
    \end{equation}
    Denote $\bm v = (v^A, v^B)^T$ and $\bm{g} = (g^A, g^B)^T$. Then \eqref{eqn:tilde-Dk-sys} can be written in matrix-vector form as
    \begin{equation}\label{eqn:app-ode}
         \begin{pmatrix}
            - \frac{2}{3} E_1 & -\partial_{X_1} - \kappa(X_1) \\
            \partial_{X_1} - \kappa(X_1) & -\frac{2}{3} E_1
        \end{pmatrix} \begin{pmatrix}
            v^A\\
            v^B
        \end{pmatrix} = \begin{pmatrix}
            g^A\\
            g^B
        \end{pmatrix},
    \end{equation}
    which is equivalent to the following linear ODE system
    \begin{equation}\label{eqn:app-A-def}
        \partial_{X_1} \begin{pmatrix}
            v^A\\
            v^B
        \end{pmatrix} = \begin{pmatrix}
            \kappa(X_1) & \frac{2}{3}E_1\\
            -\frac{2}{3} E_1 & -\kappa(X_1)
        \end{pmatrix} \begin{pmatrix}
            v^A\\
            v^B
        \end{pmatrix} + \begin{pmatrix}
            g^B\\
            -g^A
        \end{pmatrix}.
    \end{equation}
    For simplicity, we denote the matrix in \eqref{eqn:app-A-def} as $A(X_1)$\footnote{Here we slightly abuse the notation $\widetilde{A}(X_1)$, which now denotes a matrix-valued function rather than the effective magnetic potential in \eqref{eqn:eff-mag}.} and the system \eqref{eqn:app-A-def} becomes
    \begin{equation}
        \partial_{X_1} \bm v = A(X_1) \bm v + \bm{g}^\text{swap}, \qquad \text{where} \quad \bm{g}^\text{swap} = (g^B, -g^A)^T.\label{eqn:exp-dich-ode-g}
    \end{equation}

    We now use the exponential dichotomy theory to show the existence of a bounded solution to \eqref{eqn:exp-dich-ode-g}. First, we observe that the system $\partial_{X_1} \bm v = A(X_1) \bm v$ has an exponential dichotomy on $[0, \infty)$ and $(-\infty, 0]$. To see why, using \eqref{eqn:d-condition} and \eqref{eqn:k-para-cond}, the function $\kappa(X_1)$ in \eqref{eqn:dirac-1d} approximate two constants with opposite sign at $\pm \infty$, i.e. we have $\kappa_+ \kappa_- < 0$ with $\kappa_\pm$ defined in \eqref{eqn:kappa-pm}. Consequently, the matrix ${A}(X_1)$ approximate the following two constant matrices at $\pm \infty$
    \begin{equation}\label{eqn:mat-a-pm}
        A_\pm := \lim_{X_1 \rightarrow \pm\infty} A(X_1) = \begin{pmatrix}
            \kappa_\pm & \frac{2}{3} E_1\\
            -\frac{2}{3} E_1 & -\kappa_\pm
        \end{pmatrix}.
    \end{equation}
    Since $E_1$ is an eigenvalue in the spectral gap of $\mathcal{D}(k_\parallel)$, i.e. 
    \begin{align*}
        \frac{2}{3}|E_1| \leq \frac{2}{3}a = \min\left\{|\kappa_+|, |\kappa_-|\right\}.
    \end{align*}
    where the gap $a$ is given in \eqref{eqn:kappa-pm}. Therefore $A_\pm$ both have negative determinant, i.e. 
    \begin{equation}\label{eqn:det-A-pm}
        \det\Big({A}_\pm\Big) < 0, 
    \end{equation}
    Using Lemma \ref{lemma:main-exp-dich-bdd}, we have proved the exponential dichotomy for the ODE system $\partial_{X_1} \bm v = A(X_1) \bm v$.

    Then we apply Lemma \ref{lemma:palmer-lemma42} to show the existence of a bounded solution to \eqref{eqn:exp-dich-ode-g}. It is sufficient to show that the condition \eqref{eqn:palmer-fredhom-alt} is equivalent to the compatibility condition $\langle \bm{\Psi_0}, \bm{f}\rangle_{L^2(\mathbb{R})} = 0$. We observe that the adjoint system $\partial_{X_1} \bm{\widetilde{v}} = -{A}(X_1)^* \bm{\widetilde{v}}$ in matrix-vector form is
    \begin{equation}\label{eqn:ode-adj-v}
        \partial_{X_1} \bm{\widetilde{v}} = \begin{pmatrix}
            -\kappa(X_1) & \frac{2}{3} E_1\\
            -\frac{2}{3} E_1 & \kappa(X_1)
        \end{pmatrix} \bm{\widetilde{v}}.
    \end{equation}
    One can easily check that $\bm{\phi} = (\phi^A, \phi^B)^T$ is a solution to \eqref{eqn:ode-adj-v} if and only if $\bm{\phi}^{\text{swap}}=(\phi^B, -\phi^A)^T$ is a solution to $\partial_{X_1} \bm v = A(X_1) \bm v$. We also observe that $\widetilde{\bm{\Psi}}_0 :=V^* {\bm{\Psi}}_0$ spans $\ker(\mathcal{D}^r - E_1)$, since ${\bm{\Psi}}_0$ spans $\ker(\mathcal D-E_1)$. Consequently, up to scalar multiplication, $\widetilde{\bm{\Psi}}_0$ is the unique solution of $\partial_{X_1} \bm v = A(X_1) \bm v$ and $\widetilde{\bm \Psi}_0^\text{swap} := \left((\widetilde{\bm{\Psi}}_0)^B, -(\widetilde{\bm{\Psi}}_0)^A\right)^T$ is the unique solution to the adjoint system $\partial_{X_1} \bm{\widetilde{v}} = -A(X_1)^* \bm{\widetilde{v}}$. Thus, the compatibility condition in \eqref{eqn:palmer-fredhom-alt} becomes
    \begin{align*}
        & \left\langle \widetilde{\bm \Psi}_0^\text{swap}, \bm g^{\text{swap}}\right\rangle_{L^2(\mathbb{R})} = \left\langle -(\widetilde{\bm{\Psi}}_0)^B, g^B \right\rangle_{L^2(\mathbb{R})}  + \left\langle (\widetilde{\bm{\Psi}}_0)^A, -g^A \right\rangle_{L^2(\mathbb{R})} \\
        =& -\left\langle (V^* {\bm{\Psi}}_0)^B, (V^*\bm{f})^B \right\rangle_{L^2(\mathbb{R})}  - \left\langle (V^* {\bm{\Psi}}_0)^A, (V^*\bm{f})^A \right\rangle_{L^2(\mathbb{R})}  = -\langle \bm{\Psi_0}, \bm{f}\rangle_{L^2(\mathbb{R})} = 0,
    \end{align*}
    which completes the proof of existence of a bounded solution to \eqref{eqn:inv-f-def}.
\end{proof}

\subsection{Step 4: solving $\widetilde{\bm \eta}_\text{near}$ and $\mu(\delta)$ by Lyapunov-Schmidt reduction}\label{subsec:near-sol}

We now substitute $\widetilde{\bm{\eta}}_\text{far}[k; \widetilde{\bm{\eta}}_\text{near}, \mu, \delta]$ into 
the near-momentum equation \eqref{eqn:eta-near} and solve for $\widetilde{\bm{\eta}}_\text{near}$. We first show that $\widetilde{\bm{\eta}}_\text{near}$ corresponds to a solution of a perturbed Dirac system. 

Recall that $\widetilde{\bm{\eta}}_\text{near}$ is supported in the small region $|k|\leq \delta^\tau$. To exploit this localization, we introduce the following rescaling and define
\begin{align}
	  \widehat{\bm{\beta}}_\text{near}(\xi) &= \big(\widehat{\beta}^A_\text{near}(\xi),\widehat{\beta}^B_\text{near}(\xi)\big)^T :=\delta \widetilde{\bm{\eta}}_\text{near}(k), \qquad \text{where} \quad \xi = \frac{k}{\delta} \quad \text{and} \quad |\xi|\leq \delta^{\tau-1},\label{eqn:near-hat}\\
      &= \delta \chi(|\xi|\leq \delta^{\tau - 1}) \widetilde{\bm{\eta}}_\text{near}(k).\nonumber
\end{align}

We provide a few important observations related to the scaling \eqref{eqn:near-hat}. We first observe that the near-momentum components $\bm{\eta}_m^\text{near}$ of the correctors can now be viewed as point evaluations of the function $\bm{\beta}_\text{near}(X_1) = \big({\beta}^A_\text{near}(X_1),{\beta}_\text{near}^B(X_1)\big)^T$ at $X_1 = \frac{\sqrt{3}}{2} \delta m$, where $\bm{\beta}_\text{near}(X_1)$ is the inverse Fourier transform of $\widehat{\bm{\beta}}_\text{near}$; that is,
\begin{align}
\bm{\eta}_m^\text{near} &= \frac{\sqrt{3}}{2} \bm{\beta}_\text{near}\left(\frac{\sqrt{3}}{2} \delta m\right).\label{eqn:eta-idft}
\end{align}
The proof of \eqref{eqn:eta-idft} comes from a direct calculation by using the scaled IDFT in \eqref{eqn:scale-idft}, i.e.
\begin{align*}
\bm{\eta}_m^\text{near} &= \frac{\sqrt{3}}{4\pi} \int_{-\frac{2\pi}{\sqrt{3}}}^{\frac{2\pi}{\sqrt{3}}} \widetilde{\bm{\eta}}_\text{near}(k) e^{ik \frac{\sqrt{3}}{2}m} \: dk = \frac{\sqrt{3}}{4\pi\delta} \int_{-\frac{2\pi}{\sqrt{3}}}^{\frac{2\pi}{\sqrt{3}}} \widehat{\bm{\beta}}_\text{near}\left(\frac{k}{\delta}\right) e^{ik \frac{\sqrt{3}}{2}m} \: dk \\
		&=  \frac{\sqrt{3}}{4\pi} \int_\mathbb{R} \widehat{\bm{\beta}}_\text{near}\left(\xi\right) e^{i \xi \frac{\sqrt{3}}{2}m} \: d\xi = \frac{\sqrt{3}}{2} \bm{\beta}_\text{near}\left(\frac{\sqrt{3}}{2} \delta m\right),
\end{align*}
where the second equality holds due to the scaling \eqref{eqn:near-hat} and the third equality holds since $\widehat{\bm{\beta}}_\text{near}(\xi)$ is band limited with $\widehat{\bm{\beta}}_\text{near}(\xi) = \widehat{\bm{\beta}}_\text{near}(\xi) \chi(|\xi|\leq \delta^{\tau-1})$. We also observe that the $l^2$-norm of $\bm{\eta}_m^\text{near}$ and the $L^2$-norm of $\widehat{\bm{\beta}}_\text{near}$ are related by
\begin{align}
    \norm{\bm{\eta}_m^\text{near}}_{l^2(\mathbb{Z};\mathbb{C}^2)} \lesssim \delta^{-\frac{1}{2}} \norm{\widehat{\bm{\beta}}_\text{near}}_{L^2_\xi(\mathbb{R})},\label{eqn:eta-idft-bd}
\end{align}
and the $L^2$-norm of $\widetilde{\bm{\eta}}_\text{near}$ and the $L^2$-norm of $\widehat{\bm{\eta}}_\text{near}$ are related by
\begin{align}
     \norm{\widehat{\bm{\beta}}_\text{near}}_{L^2_\xi(\mathbb{R})} = \delta^{\frac{1}{2}}\norm{\widetilde{\bm{\eta}}_\text{near}}_{\LtwoBrill}.\label{eqn:eta-idft-bd-2}
\end{align}
The bound \eqref{eqn:eta-idft-bd} comes from \eqref{eqn:eta-idft} and \eqref{eqn:l2-norm}, and the bound \eqref{eqn:eta-idft-bd-2} comes from a direct calculation.

Using \eqref{eqn:eta-idft}, we notice that the near-momentum component $\delta \bm{\eta}_m^\text{near}$ plays a role analogous to the next-order term $\delta \Psi_1$ in \eqref{eqn:ansatz-1d}. More precisely, we have
\begin{align*}
    \delta \bm{\eta}_m^\text{near} = \frac{\sqrt{3}}{2} \bm{\beta}_\text{near}\left(\frac{\sqrt{3}}{2} \delta m\right) \approx \delta \bm \Psi_1\left(\frac{\sqrt{3}}{2}\delta m\right).
\end{align*}
Therefore, we expect that the equation for $\bm{\beta}_\text{near}$ is a perturbed equation of \eqref{eqn:2nd-expansion} for $\bm \Psi_1$. In fact, by a detailed calculation of substituting \eqref{eqn:near-hat} and \eqref{eqn:eta-idft} into \eqref{eqn:eta-near}, we obtain the equation for $\widehat{\bm{\eta}}_\text{near}$ (the detailed derivation is presented in Appendix \ref{app:dev-near})
\begin{equation}\label{eqn:beta-system}
	\left[\widehat{\mathcal{D}^\delta} - E_1 + \widehat{\mathcal{L}^\delta}(\mu) - \delta \mu\right] \widehat{\bm{\beta}}_\text{near}(\xi) = \mu \widehat{\bm{\mathcal{M}}}(\xi;\delta) + \widehat{\bm{\mathcal{N}}}(\xi;\mu,\delta),
\end{equation}
where the operator $\widehat{\mathcal{D}^\delta}$ is a band-limited version of $\mathcal{D}$ in \eqref{eqn:dirac-1d} in Fourier representation, i.e.
\begin{equation}
    \widehat{\mathcal{D}^\delta} = \chi(|\xi|\leq \delta^{\tau-1}) \widehat{\mathcal{D}}, \qquad \text{where} \quad \widehat{\mathcal{D}} \widehat{\bm{\beta}}_\text{near}(\xi)= \frac{3}{2} \Big[\xi \sigma_1 \widehat{\bm{\beta}}_\text{near}(\xi)+ \sigma_2 \widehat{\kappa \bm{\beta}_\text{near}}(\xi)\Big].\label{eqn:dirac-1d-fourier}
\end{equation}
The operator $\widehat{\mathcal{L}^\delta}(\mu)$ in \eqref{eqn:beta-system} captures the far-momentum contributions in terms of $\bm{\widehat{\beta}}_\text{near}$, and $\widehat{\bm{\mathcal{M}}}(\xi;\delta)$ and $\widehat{\bm{\mathcal{N}}}(\xi;\mu, \delta)$ represent the source terms related to $\bm \Psi_0$ (see their detailed expressions in \eqref{eqn:Ldelta} and \eqref{eqn:MN-12}). Furthermore, the linear operator $\widehat{\mathcal{D}^\delta} - E_1 + \widehat{\mathcal{L}^\delta}(\mu) - \delta \mu$ in \eqref{eqn:beta-system} is a small perturbation of the Dirac operator $\widehat{\mathcal{D}}$, which is formalized through the following bounds:
\begin{lemma}\label{lemma:eta-near-bds}
    Fix $M > 0$, $\delta \in (0,1)$ and $\tau \in (0,1)$. For $|\mu|< M$, the operators in \eqref{eqn:beta-system} satisfy:
    \begin{align}
        \left\|\widehat{\mathcal{D}^\delta} - \widehat{\mathcal{D}}\right\|_{L^{2,1}_\xi(\mathbb{R}) \rightarrow L^2(\mathbb{R})} &\lesssim \delta^{1-\tau}, \qquad \left\|\widehat{\mathcal{L}^\delta}(\mu)\right\|_{L^{2,1}_\xi(\mathbb{R}) \rightarrow L^2(\mathbb{R})} \lesssim \delta^{\tau} + \delta^{1-\tau},\label{eqn:near-op-bd}
    \end{align}
    where $L^{2,1}(\mathbb{R})$ is the equivalent space of $H^1(\mathbb{R})$ defined in the Fourier space (see \eqref{eqn:L21-space} for its definition). The operator $\widehat{\mathcal{L}^\delta}(\mu)$ is Lipschitz in $\mu$ with
    \begin{align}
        \norm{\widehat{\mathcal{L}^\delta}(\mu_1) - \widehat{\mathcal{L}^\delta}(\mu_2)}_{L^{2,1}_\xi(\mathbb{R}) \rightarrow L^2(\mathbb{R})} \lesssim \delta^{1-\tau} \: |\mu_1 - \mu_2|.\label{eqn:lip-L-delta}
    \end{align}
    Moreover, the source terms in \eqref{eqn:beta-system} satisfy $\left\| \widehat{\bm{\mathcal{M}}}(\xi;\delta) \right\|_{L^2_\xi(\mathbb{R})} \lesssim 1$ and $\left\| \widehat{\bm{\mathcal{N}}}(\xi;\mu,\delta) \right\|_{L^2_\xi(\mathbb{R})} \lesssim 1$. More specifically, as $\delta \rightarrow 0$, we have the following limits
    \begin{subequations}\label{eqn:MN-bds-main}
        \begin{align}
        & \lim_{\delta \rightarrow 0} \left\langle \bm{\widehat{\Psi_0}}(\xi), \widehat{\bm{\mathcal{M}}}(\xi; \delta) \right\rangle_{L^2_\xi(\mathbb{R})} = 1, \label{eqn:M-limit}\\
        &\lim_{\delta \rightarrow 0} \left\langle \bm{\widehat{\Psi_0}}(\xi), \widehat{\bm{\mathcal{N}}}(\xi;\mu,\delta) \right\rangle_{L^2_\xi(\mathbb{R})} = -E_2. \label{eqn:N-limit}
    \end{align}
    \end{subequations}
\end{lemma}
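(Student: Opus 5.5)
The lemma collects three kinds of estimates, and I would prove them in the order of the statement: the band-limitation error, the bounds on the far-momentum operator $\widehat{\mathcal{L}^\delta}(\mu)$, and the source-term limits. Everything is carried out in the rescaled variable $\xi=k/\delta$ from \eqref{eqn:near-hat}.

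\textbf{Band-limitation error.} The difference $\widehat{\mathcal{D}^\delta}-\widehat{\mathcal{D}}$ equals $-\chi(|\xi|>\delta^{\tau-1})\widehat{\mathcal{D}}$. Its first piece is $\tfrac32\xi\sigma_1\widehat{\bm\beta}$. On $|\xi|>\delta^{\tau-1}$ we have $|\xi|\le \delta^{1-\tau}|\xi|^2\le\delta^{1-\tau}(1+|\xi|^2)^{1/2}|\xi|$. This is not quite what is needed, so I would instead use the bound $\|\chi(|\xi|>R)\xi\widehat f\|_{L^2}\le \|\chi(|\xi|>R)\xi \widehat f\|$. That only gives $O(1)$ from $L^{2,1}$. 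The stated $\delta^{1-\tau}$ rate therefore needs either that $\widehat{\bm\beta}$ already lies in the range where $\chi$ acts trivially, or a gain coming from the domain. I would note that in \eqref{eqn:beta-system} the operator only ever acts on band-limited $\widehat{\bm\beta}_{\rm near}$, so $\xi\widehat{\bm\beta}$ is unchanged by $\chi$. The only genuine error then comes from $\chi(|\xi|>\delta^{\tau-1})\widehat{\kappa\bm\beta}$. Writing $\kappa=\kappa_{\rm smooth}$ with $\kappa'\in C_b^\infty$, I would bound $\|\chi(|\xi|>R)\widehat{\kappa\beta}\|\le R^{-1}\|\partial_{X_1}(\kappa\beta)\|\lesssim R^{-1}\|\beta\|_{H^1}$. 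This gives $\delta^{1-\tau}$.

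\textbf{Far-momentum operator.} Here $\widehat{\mathcal{L}^\delta}(\mu)$ collects three contributions.
\begin{itemize}
\item The $\delta$-expansion errors of the Fourier multipliers $\Gamma_i$ around $k=0$. Taylor expanding \eqref{eqn:gamma-cst} to second order on $|k|\le\delta^\tau$ and dividing by $\delta$, the quadratic remainder is $O(\delta|\xi|^2)\le O(\delta^\tau|\xi|)$. This yields the $\delta^\tau$ term in $L^{2,1}\to L^2$.
\item The discretization error between the lattice sums in $\widetilde F_i$ and the continuum multiplication by $\kappa$. This is handled by Poisson summation: aliased copies sit at $|\xi|\gtrsim\delta^{-1}$, and the shifts $\eta_{m\pm2}$ introduce phases $e^{\pm i\sqrt3\delta\xi}-1=O(\delta|\xi|)$.
\item The term $\delta^{-1}\widetilde F_i[\mathcal A\widetilde{\bm\eta}_{\rm near}]$ coming from Proposition \ref{prop:far-energy}. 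After rescaling, \eqref{eqn:A-map-bd} gives an operator bound $\delta^{1-\tau}$, and \eqref{eqn:A-map-lip-mu} gives the Lipschitz estimate \eqref{eqn:lip-L-delta}.
\end{itemize}

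\textbf{Source terms.} I would split them as follows. $\widehat{\bm{\mathcal M}}$ is the $m=0$ term of the periodized $\widehat{\bm\Psi_0}$ in \eqref{eq:I-simp}, suitably normalized, plus aliases and the $\mathcal B$ contribution. The aliases are $O(\delta^\infty)$ since $\bm\Psi_0\in H^s$ for all $s$ (Lemma \ref{lemma:1d}), and the $\mathcal B$ contribution has size $\delta^{1/2-\tau}\cdot\delta^{1/2}$ by \eqref{eqn:B-map-bd} and \eqref{eqn:eta-idft-bd-2}. With the normalization \eqref{eqn:norm-1} and Plancherel, the pairing therefore tends to $1$.

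For $\widehat{\bm{\mathcal N}}$, I would Taylor expand $\widetilde I_{i,\rm ind}$ in $\delta$. The $O(1)$ part cancels exactly because $(\mathcal D-E_1)\bm\Psi_0=0$, which is how the $\sqrt3/2$ scaling was chosen. The next order reproduces $\widehat{\bm{\mathcal R}_2}[\bm\Psi_0]$ from \eqref{eqn:app-rem-delta-2}, with errors $O(\delta)\|\bm\Psi_0\|_{H^3}$. By \eqref{eqn:E2}, the pairing with $\widehat{\bm\Psi_0}$ then converges to $-E_2$.

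\textbf{Main obstacle.} I expect the hardest step to be the careful bookkeeping in $\widehat{\mathcal L^\delta}$. One has to track where $\delta^\tau$ versus $\delta^{1-\tau}$ arises while keeping only an $H^1$ (not $H^2$) norm on $\widehat{\bm\beta}$. The crucial point is to use $|\xi|\le\delta^{\tau-1}$ to trade one power of $\xi$ for $\delta^{\tau-1}$ in the second-order Taylor remainders.
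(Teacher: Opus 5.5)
Your proposal follows essentially the same route as the paper. The band-limitation error is controlled, as in the argument the paper cites from \cite{fefferman2017topologically}, by $\|\chi(|\xi|>\delta^{\tau-1})\widehat{\kappa\beta}\|_{L^2}\lesssim\delta^{1-\tau}\|\kappa\beta\|_{H^1}$ on band-limited inputs; your remark that the estimate only holds there is also implicit in the paper, which uses $\delta|\xi|\le\delta^\tau$. $\widehat{\mathcal L^\delta}(\mu)$ is bounded term by term via $|\Gamma_3|\lesssim|k|^2$, the $O(\delta|\xi|)$ phases, the aliasing estimate and Proposition~\ref{prop:far-energy}. The source limits come from discarding aliases and matching the surviving terms of $\widetilde I_{i,\mathrm{ind}}$ with $\bm{\mathcal R}_2$.

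There are two harmless slips. First, the far-momentum term in $\widehat{\mathcal L^\delta}(\mu)$ is $\delta\,\chi(|\xi|\le\delta^{\tau-1})\widetilde F_i[\delta\xi;\mathcal A\widetilde{\bm\eta}_{\rm near}]$, with prefactor $\delta$ rather than $\delta^{-1}$; only with $\delta$ does rescaling give your $\delta^{1-\tau}$. Second, the $\mathcal B$ contribution belongs to $\widehat{\bm{\mathcal N}}$, not $\widehat{\bm{\mathcal M}}$, but it is $O(\delta^{1-\tau})$ and drops out of either limit.
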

The proof of Lemma \ref{lemma:eta-near-bds} is presented in Appendix \ref{app:some-bds} (see the paragraph after \eqref{eqn:app-map-E}). 

Using the bounds in \eqref{eqn:near-op-bd}, we know that the linear operator in \eqref{eqn:beta-system} is a small perturbation of the Dirac operator $\widehat{\mathcal{D}} - E_1$ in \eqref{eqn:dirac-1d-fourier}. Since $\widehat{\mathcal{D}}-E_1$ has a one-dimensional kernel spanned by $\bm{\widehat{\Psi}_0}$, the linear part of \eqref{eqn:beta-system} is not invertible, and thus we cannot solve \eqref{eqn:beta-system} for $\widehat{\bm{\beta}}_\text{near}$ by a direct perturbative method. To solve such a perturbed system with a non-trivial kernel space, we apply \textit{Lyapunov-Schmidt reduction} by decomposing $\widehat{\bm{\beta}}_\text{near}$ into its component along $\bm{\Psi}_0$ and its orthogonal complement.

\paragraph{Lyapunov-Schmidt reduction} We decompose $\widehat{\bm{\beta}}_\text{near}$ via the projection operators $\widehat{P_\parallel}$ and $\widehat{P_\perp}$: for any $\bm{\widehat{f}} \in L^2(\mathbb{R})$, we define
\begin{equation}\label{eqn:proj-ops}
	\widehat{P_\parallel} \widehat{\bm{f}} := \langle \bm{\Psi_0}, \bm{f}\rangle_{L^2(\mathbb{R})} \widehat{\bm{\Psi_0}}, \qquad \widehat{P_\perp} \widehat{\bm{f}} := (I - \widehat{P_\parallel}) \widehat{\bm{f}},
\end{equation}
where $\widehat{P_\parallel}$ denotes the projection onto $\widehat{\bm{\Psi}_0}$ and $\widehat{P_\perp}$ denotes the projection onto its orthogonal complement. These projections satisfy
\begin{equation}\label{eqn:projection}
    \widehat{P_\parallel} \Big(\widehat{\mathcal{D}} - E_1\Big) \widehat{\bm{f}} = 0, \qquad \widehat{P_\perp} \Big(\widehat{\mathcal{D}} - E_1\Big) \widehat{\bm{f}} = \Big(\widehat{\mathcal{D}} - E_1\Big) \widehat{\bm{f}}.
\end{equation}
Therefore, applying $\widehat{P_\parallel}$ and $\widehat{P_\perp}$ to \eqref{eqn:beta-system} yields
\begin{subequations}\label{eqn:beta}
\begin{align}
	&\widehat{P_\parallel} \Big[\left(\widehat{\mathcal{D}^\delta} - \widehat{\mathcal{D}} + \widehat{\mathcal{L}^\delta}(\mu) - \delta \mu\right) \widehat{\bm{\beta}}_\text{near}(\xi) - \mu \widehat{\bm{\mathcal{M}}}(\xi;\delta) - \widehat{\bm{\mathcal{N}}}(\xi;\mu,\delta)\Big] = 0, \label{eqn:beta-parallel}\\
	& \Big(\widehat{\mathcal{D} } - E_1\Big) \widehat{\bm{\beta}}_\text{near}(\xi) + \widehat{P_\perp} \Big[\left(\widehat{\mathcal{D} ^\delta} - \widehat{\mathcal{D} }\right)+ \widehat{\mathcal{L}^\delta}(\mu) - \delta \mu \Big] \widehat{\bm{\beta}}_\text{near}(\xi) = \widehat{P_\perp} \Big[\mu \bm{\widehat{\mathcal{M}}}(\xi; \delta) + \widehat{\bm{\mathcal{N}}}(\xi;\mu,\delta) \Big]. \label{eqn:beta-perp}
\end{align}
\end{subequations}
We shall first solve \eqref{eqn:beta-perp} for $\widehat{\bm{\beta}}_\text{near}$, and then substitute the solution into \eqref{eqn:beta-parallel} to determine $\mu(\delta)$.

\paragraph{Solving for $\widehat{\bm{\beta}}_\text{near}$ from \eqref{eqn:beta-perp}} We notice that $\widehat{\mathcal{D} } - E_1$ is now invertible after the $\widehat{P_\perp}$-projection. Recall from Proposition \ref{prop:inv-Dk} that, for any $\widehat{\bm{f}} \in L^2(\mathbb{R})$, since $\widehat{P_\perp} \widehat{\bm{f}} \perp \widehat{\bm{\Psi_0}}$ automatically satisfies the Fredholm condition, the operator $\left(\widehat{\mathcal{D} } - E_1\right)^{-1} \widehat{P_\perp}$ is well-defined and the estimate \eqref{eqn:est-inv-Dk-u} becomes
\begin{equation}\label{eqn:Dk-inv-bd-1}
    \left\|\left(\widehat{\mathcal{D} } - E_1\right)^{-1} \widehat{P_\perp} \widehat{\bm{f}} \right\|_{L^{2,1}(\mathbb{R})} \lesssim \left\|\bm{\widehat{f}} \right\|_{L^2(\mathbb{R})}.
\end{equation}
Equivalently, the operator $\left(\widehat{\mathcal{D} } - E_1\right)^{-1} \widehat{P_\perp}$ satisfies
\begin{equation}\label{eqn:Dk-inv-bd}
    \norm{\left(\widehat{\mathcal{D} } - E_1\right)^{-1} \widehat{P_\perp}}_{L^2(\mathbb{R}) \rightarrow L^{2,1}(\mathbb{R})} \lesssim 1.
\end{equation}Applying $\left(\widehat{\mathcal{D} } - E_1\right)^{-1} \widehat{P_\perp}$ to both sides of \eqref{eqn:beta-perp} yields
\begin{align}
    \Big[I + \widehat{\mathcal{C}^\delta}(\mu)\Big] \widehat{\bm{\beta}}_\text{near}(\xi) &= \Big(\widehat{\mathcal{D} } - E_1\Big)^{-1} \widehat{P_\perp} \Big[\mu \bm{\widehat{\mathcal{M}}}(\xi; \delta) + \widehat{\bm{\mathcal{N}}}(\xi;\mu,\delta) \Big], \label{eqn:perp-inv}\\
    \text{where}\qquad\widehat{\mathcal{C}}^\delta(\mu) &:= \Big(\widehat{\mathcal{D} } - E_1\Big) ^{-1} \widehat{P_\perp} \Big[(\widehat{\mathcal{D} ^\delta} - \widehat{\mathcal{D} )} + \widehat{\mathcal{L}}^\delta(\mu) - \delta \mu\Big].\label{eqn:op-C-delta}
\end{align}
Using the bounds in \eqref{eqn:near-op-bd} and \eqref{eqn:Dk-inv-bd}, we have 
\begin{align}
    \left\|\widehat{\mathcal{C}}^\delta(\mu)\right\|_{L^{2,1}(\mathbb{R})\rightarrow L^{2,1}(\mathbb{R})} \lesssim \delta^\tau + \delta^{1-\tau}. \label{eqn:bd-C-delta}
\end{align}
Therefore, for sufficiently small $\delta$, the linear operator $I + \widehat{\mathcal{C}^\delta}(\mu)$ is invertible in $L^{2,1}(\mathbb{R})$, and $\widehat{\bm{\beta}}_\text{near}$ can be solved in $L^{2,1}(\mathbb{R})$ from \eqref{eqn:perp-inv}. We summarize these arguments in the following proposition:
\begin{proposition}\label{prop:beta}
	For any $M > 0$, there exists $\delta_0 > 0$ and a unique mapping $(\mu, \delta) \in \{|\mu| < M\} \times (0,\delta_0) \longmapsto \widehat{\bm{\beta}}_\text{near}(\xi; \mu, \delta) \in L^{2,1}(\mathbb{R})$, which is \textbf{Lipschitz} in $\mu$ and solves \eqref{eqn:beta-perp}. Moreover, we have a bound for this mapping, i.e.
	\begin{equation}\label{eqn:eta-near-scaled-bd}
		\norm{\widehat{\bm{\beta}}_\text{near}(\xi; \mu, \delta)}_{L^{2,1}(\mathbb{R})} \lesssim 1.
	\end{equation}
\end{proposition}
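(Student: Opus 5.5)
The plan is to treat \eqref{eqn:perp-inv} as a fixed-point problem in $L^{2,1}(\mathbb{R})$, restricted to the closed subspace $\widehat{P_\perp}L^{2,1}$ or to all of $L^{2,1}$ as convenient. Concretely, I would define
\[
\widehat{\bm\beta}\longmapsto \mathcal{T}[\widehat{\bm\beta};\mu,\delta]:=-\widehat{\mathcal{C}^\delta}(\mu)\widehat{\bm\beta}+\Big(\widehat{\mathcal{D}}-E_1\Big)^{-1}\widehat{P_\perp}\Big[\mu\widehat{\bm{\mathcal{M}}}(\xi;\delta)+\widehat{\bm{\mathcal{N}}}(\xi;\mu,\delta)\Big].
\]
By \eqref{eqn:bd-C-delta}, $\|\widehat{\mathcal{C}^\delta}(\mu)\|_{L^{2,1}\to L^{2,1}}\le C(\delta^\tau+\delta^{1-\tau})$, with $C$ uniform in $|\mu|<M$. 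I would then choose $\delta_0$ so that this constant is at most $1/2$. This makes $I+\widehat{\mathcal{C}^\delta}(\mu)$ invertible by a Neumann series, with inverse bounded by $2$. Existence and uniqueness of $\widehat{\bm\beta}_\text{near}(\xi;\mu,\delta)$ follow at once. For the bound \eqref{eqn:eta-near-scaled-bd} I would combine \eqref{eqn:Dk-inv-bd} with $\|\widehat{\bm{\mathcal{M}}}\|_{L^2},\|\widehat{\bm{\mathcal{N}}}\|_{L^2}\lesssim1$ from Lemma \ref{lemma:eta-near-bds}. This shows the right side of \eqref{eqn:perp-inv} is $\lesssim M+1$ in $L^{2,1}$, and hence $\|\widehat{\bm\beta}_\text{near}\|_{L^{2,1}}\lesssim1$.

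Next I would check that the fixed point genuinely solves \eqref{eqn:beta-perp}. Applying $(\widehat{\mathcal{D}}-E_1)$ to \eqref{eqn:perp-inv} gives back \eqref{eqn:beta-perp}, provided the solution's component along $\widehat{\bm\Psi}_0$ is handled consistently. Proposition \ref{prop:inv-Dk} produces outputs orthogonal to $\bm\Psi_0$, and $(\widehat{\mathcal{D}}-E_1)(\widehat{\mathcal{D}}-E_1)^{-1}\widehat{P_\perp}=\widehat{P_\perp}$ on $L^2$. So the solution lies in the range of the inverse, which is orthogonal to $\widehat{\bm\Psi}_0$, and the identities \eqref{eqn:projection} are respected. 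Uniqueness is understood within this class, which is the class in which the map is contractive.

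For Lipschitz dependence on $\mu$, I would subtract the equations for $\mu_1$ and $\mu_2$. Writing $\widehat{\bm\beta}_j=\widehat{\bm\beta}_\text{near}(\cdot;\mu_j,\delta)$, this gives
\[
(I+\widehat{\mathcal{C}^\delta}(\mu_1))(\widehat{\bm\beta}_1-\widehat{\bm\beta}_2)=-\big(\widehat{\mathcal{C}^\delta}(\mu_1)-\widehat{\mathcal{C}^\delta}(\mu_2)\big)\widehat{\bm\beta}_2+(\widehat{\mathcal{D}}-E_1)^{-1}\widehat{P_\perp}\big[(\mu_1-\mu_2)\widehat{\bm{\mathcal{M}}}+\widehat{\bm{\mathcal{N}}}(\mu_1)-\widehat{\bm{\mathcal{N}}}(\mu_2)\big].
\]
The operator difference $\widehat{\mathcal{C}^\delta}(\mu_1)-\widehat{\mathcal{C}^\delta}(\mu_2)$ is controlled by \eqref{eqn:lip-L-delta} plus the explicit $-\delta(\mu_1-\mu_2)$ term, so it is $\lesssim\delta^{1-\tau}|\mu_1-\mu_2|$ as a map on $L^{2,1}$. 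It is applied to $\widehat{\bm\beta}_2$, which is already bounded by \eqref{eqn:eta-near-scaled-bd}.

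The step I expect to be the main obstacle is the Lipschitz continuity of the source $\widehat{\bm{\mathcal{N}}}(\xi;\mu,\delta)$ in $\mu$. Lemma \ref{lemma:eta-near-bds} only records $\|\widehat{\bm{\mathcal{N}}}\|\lesssim1$. However, $\widehat{\bm{\mathcal{N}}}$ depends on $\mu$ only through the far-momentum map, namely the term $\widetilde F_i[k;\widetilde{\bm\eta}_\text{far}]$ restricted to near momenta and rescaled. The relevant piece of the far map is $\mathcal{B}(k;\mu,\delta)$, and its $\mu$-Lipschitz bound \eqref{eqn:B-map-lip-mu} is available from Proposition \ref{prop:far-energy}. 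I would therefore go back to the explicit form \eqref{eqn:MN-12} and use \eqref{eqn:bd-F-main} together with the rescaling \eqref{eqn:eta-idft-bd-2}. The aim is $\|\widehat{\bm{\mathcal{N}}}(\mu_1)-\widehat{\bm{\mathcal{N}}}(\mu_2)\|_{L^2_\xi}\lesssim|\mu_1-\mu_2|$, possibly with a small $\delta$-dependent constant. The delicate bookkeeping is to verify that the $\delta$-powers from rescaling, roughly $\delta^{1/2}\cdot\delta^{-1}\cdot\delta^{1/2-\tau}$, stay bounded for the chosen $\tau$, possibly restricting $\tau$. Once that bound holds, inverting $I+\widehat{\mathcal{C}^\delta}(\mu_1)$ yields $\|\widehat{\bm\beta}_1-\widehat{\bm\beta}_2\|_{L^{2,1}}\lesssim|\mu_1-\mu_2|$.
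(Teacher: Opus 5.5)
Your proposal is correct and follows the paper's route: you invert $I+\widehat{\mathcal{C}^\delta}(\mu)$ by smallness in $L^{2,1}$, take the bound from \eqref{eqn:Dk-inv-bd} and the source bounds, and get Lipschitz dependence by differencing. It is also more careful than the paper's one-line argument on two points: you fix the uniqueness class $\widehat{\bm\beta}\perp\widehat{\bm\Psi}_0$ (without which solutions of \eqref{eqn:beta-perp} are not unique), and you notice that $\widehat{\bm{\mathcal{N}}}$ depends on $\mu$ through $\mathcal{B}(k;\mu,\delta)$, which the paper's appeal to \eqref{eqn:lip-L-delta} alone overlooks.

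The only slip is your exponent count in that last step. The correct bookkeeping is $\delta\cdot\delta^{-1/2}\cdot\delta^{\frac12-\tau}=\delta^{1-\tau}$, coming from three ingredients:
\begin{itemize}
\item the explicit prefactor $\delta$ in \eqref{eqn:N1};
\item the rescaling $\|g(\delta\,\cdot)\|_{L^2_\xi}\le\delta^{-1/2}\|g\|_{L^2_k}$ for $g=\widetilde{F}_i[\,\cdot\,;\mathcal{B}(\mu_1,\delta)-\mathcal{B}(\mu_2,\delta)]$;
\item \eqref{eqn:B-map-lip-mu} combined with \eqref{eqn:bd-F-main}.
\end{itemize}
Hence $\|\widehat{\bm{\mathcal{N}}}(\mu_1)-\widehat{\bm{\mathcal{N}}}(\mu_2)\|_{L^2_\xi}\lesssim\delta^{1-\tau}|\mu_1-\mu_2|$ for every $\tau\in(0,1)$, with no restriction on $\tau$ needed. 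Your tentative count gives $\delta^{-\tau}$, which no restriction on $\tau\in(0,1)$ could have rescued.
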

We have outlined the main proof strategy, and a unique solution $\widehat{\bm{\beta}}_\text{near}(\xi; \mu, \delta)$ exists due to the invertibility of $I + \widehat{\mathcal{C}^\delta}(\mu)$ (alternatively, by a similar contraction mapping argument used in Proposition \ref{prop:far-energy}). The 
bound \eqref{eqn:eta-near-scaled-bd} follows from \eqref{eqn:MN-bds-main}, and the Lipschitz dependence on $\mu$ is a direct consequence of the Lipschitz continuity of $\widehat{\mathcal{L}^\delta}(\mu)$. Thus, we complete the proof of Proposition \ref{prop:beta}.

\paragraph{Solving $\mu$ from \eqref{eqn:beta-parallel}} We have now solved \eqref{eqn:beta-perp} and obtained $\widehat{\bm{\beta}}_\text{near}(\xi; \mu, \delta)$ as a function of $\mu, \delta$. It remains to substitute $\widehat{\bm{\beta}}_\text{near}(\xi; \mu, \delta)$ into \eqref{eqn:beta-parallel} and solve for $\mu$ as a function of $\delta$.

By substituting $\widehat{\bm{\beta}}_\text{near}(\xi; \mu, \delta)$ into \eqref{eqn:beta-parallel}, we obtain $\mathcal{J}_+[\mu,\delta] = 0$, where $\mathcal{J}_+[\mu,\delta]$ is defined as
\begin{align*}
    &\mathcal{J}_+[\mu,\delta] := \mu \left\langle \bm{\widehat{\Psi_0}}(\xi), \widehat{\bm{\mathcal{M}}}(\xi, \delta) \right\rangle_{L^2_\xi(\mathbb{R})} +  \left\langle \bm{\widehat{\Psi_0}}(\xi),  \bm{\widehat{\mathcal{N}}}(\xi,\mu, \delta) \right\rangle_{L^2_\xi(\mathbb{R})} \\
		&- \left\langle \bm{\widehat{\Psi_0}}(\xi), \Big(\widehat{\mathcal{D} ^\delta} - \widehat{\mathcal{D} }\Big) \widehat{\bm{\beta}}_\text{near}(\xi; \mu, \delta) \right\rangle_{L^2_\xi(\mathbb{R})} - \left\langle \bm{\widehat{\Psi_0}}(\xi), \widehat{\mathcal{L}^\delta}(\mu) \widehat{\bm{\beta}}_\text{near}(\xi; \mu, \delta) \right\rangle_{L^2_\xi(\mathbb{R})} \nonumber\\
		&+\delta \left\langle \bm{\widehat{\Psi_0}}(\xi), \widehat{\bm{\beta}}_\text{near}(\xi; \mu, \delta)  \right\rangle_{L^2_\xi(\mathbb{R})}. 
\end{align*}
We notice that $\mathcal{J}_+[\mu,\delta]$ is a well-defined function in the range $|\mu| < M$ and $0<\delta < \delta_0$, and is also Lipschitz in $\mu$ since all terms related to $\mu$ are Lipschitz in $\mu$. Due to the Lipschitz continuity in $\mu$, the function $\mathcal{J}_+[\mu,\delta]$ can be continuously extended from $(0,\delta_0)$ to $[0,\delta_0)$. We state this argument as:
\begin{lemma} \label{lemma:J-cont-ext}
	For any $\delta_0 > 0$ and $M>0$,  we define $\mathcal{J}[\mu,\delta]$ on $\{|\mu|<M\}\times[0,\delta_0)$ by
	\begin{equation}
		\mathcal{J}[\mu, \delta] := \begin{cases}
			\mathcal{J}_+[\mu, \delta], & 0 < \delta < \delta_0,\\
			\mu - E_2, & \delta = 0,
		\end{cases}
	\end{equation}
    where $E_2$ is given by \eqref{eqn:E2}. Then $\mathcal{J}[\mu,\delta]$ is well-defined and continuous on its domain.
\end{lemma}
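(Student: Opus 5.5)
To prove Lemma \ref{lemma:J-cont-ext}, I would split the claim in two. First, $\mathcal{J}_+$ is continuous on $\{|\mu|<M\}\times(0,\delta_0)$. Second, $\mathcal{J}_+[\mu,\delta]\to \mu-E_2$ as $(\mu,\delta)\to(\mu_0,0)$, uniformly in $|\mu|<M$. Since $\mu\mapsto\mu-E_2$ is continuous, the second claim gives continuity of $\mathcal{J}$ at every point $(\mu_0,0)$. For the interior, the Lipschitz dependence in $\mu$ comes from Proposition \ref{prop:beta}, \eqref{eqn:lip-L-delta} and \eqref{eqn:B-map-lip-mu}. Continuity in $\delta$ for fixed $\delta>0$ follows because every ingredient ($\Gamma_i$, the cutoffs $\chi(|\xi|\le\delta^{\tau-1})$, the sampled coefficients $\ff_\nu(\tfrac{\sqrt3}{2}\delta m)$, and the source terms built from $\bm\Psi_0$) depends continuously on $\delta$ in operator norm. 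The cutoffs are the one exception, but they vary continuously in the $L^2$ sense on $L^{2,1}$ functions. The resolvents $(I+\widehat{\mathcal C^\delta}(\mu))^{-1}$ and the far-field contraction fixed point then inherit this joint continuity by the standard argument for fixed points of uniform contractions. Joint continuity in $(\mu,\delta)$ follows because the $\mu$-Lipschitz constants are uniform in $\delta$.

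The substantive part is the limit at $\delta=0$. I would estimate $\mathcal{J}_+$ term by term, using Cauchy--Schwarz together with $\bm\Psi_0\in H^s$ from Lemma \ref{lemma:1d}.

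\begin{itemize}
\item The first two terms converge to $\mu\cdot 1$ and $-E_2$ by \eqref{eqn:M-limit} and \eqref{eqn:N-limit}. These limits need to hold uniformly in $|\mu|<M$. For the $\mathcal{M}$ term this is automatic, since $\mathcal{M}$ does not depend on $\mu$. For the $\mathcal{N}$ term, if uniformity is not already contained in the proof of \eqref{eqn:N-limit}, I would recover it from the Lipschitz-in-$\mu$ structure of $\widehat{\bm{\mathcal N}}$, whose $\mu$-dependence enters only through the far-field map $\mathcal{B}$ with constant $\delta^{1/2-\tau}$ by \eqref{eqn:B-map-lip-mu}.
\item The third term is bounded by $\|\bm\Psi_0\|_{L^2}\,\|\widehat{\mathcal{D}^\delta}-\widehat{\mathcal D}\|_{L^{2,1}\to L^2}\,\|\widehat{\bm\beta}_{\rm near}\|_{L^{2,1}}\lesssim\delta^{1-\tau}$, using \eqref{eqn:near-op-bd} and \eqref{eqn:eta-near-scaled-bd}.
\item The fourth term is bounded in the same way by $\delta^\tau+\delta^{1-\tau}$.
\item The last term is bounded by $\delta\|\bm\Psi_0\|\,\|\widehat{\bm\beta}_{\rm near}\|\lesssim\delta$.
\end{itemize}

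All of these bounds are uniform in $|\mu|<M$ because $\tau\in(0,1)$. Hence $\sup_{|\mu|<M}|\mathcal{J}_+[\mu,\delta]-(\mu-E_2)|\to0$. Combining this with $|\mu-\mu_0|\to0$ gives $\mathcal{J}[\mu,\delta]\to\mathcal{J}[\mu_0,0]$.

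I expect the main obstacle to be the uniformity in $\mu$ of the limits \eqref{eqn:MN-bds-main}, since Lemma \ref{lemma:eta-near-bds} states them only pointwise. My first approach would be to write $\widehat{\bm{\mathcal N}}(\xi;\mu,\delta)=\widehat{\bm{\mathcal N}}_0(\xi;\delta)+\widehat{\bm{\mathcal N}}_1(\xi;\mu,\delta)$, where $\widehat{\bm{\mathcal N}}_1$ collects the far-field feedback terms involving $\mathcal{B}(\cdot;\mu,\delta)$ and $\mathcal A$. Then $\|\widehat{\bm{\mathcal N}}_1\|_{L^2}\lesssim\delta^{1/2-\tau}$ uniformly for $|\mu|<M$ by \eqref{eqn:AB-map-bd}, at least if $\tau<1/2$, which I would impose. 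The $\mu$-independent part $\widehat{\bm{\mathcal N}}_0$ then carries the limit $-E_2$.
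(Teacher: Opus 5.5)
Your proposal is correct and follows the paper's proof: the paper likewise reduces the lemma to uniform convergence $\mathcal{J}_+[\mu,\delta]\to\mu-E_2$ for $|\mu|<M$, obtained from \eqref{eqn:near-op-bd}, \eqref{eqn:eta-near-scaled-bd} and \eqref{eqn:MN-bds-main}. Your uniformity check for the $\widehat{\bm{\mathcal N}}$ limit and your sketch of interior continuity fill in details the paper leaves implicit; one small refinement is that the $\mathcal B$-term in $\widehat{\bm{\mathcal N}}$ is actually $O(\delta^{1-\tau})$ in $L^2_\xi$ (the prefactor $\delta$ times the $\delta^{-1/2}$ from the rescaling $k=\delta\xi$ multiplies $\|\mathcal B\|\lesssim\delta^{1/2-\tau}$), so the restriction $\tau<1/2$ is unnecessary.
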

\begin{proof}
    It is equivalent to show $\lim_{\delta \rightarrow 0^+} \mathcal{J}_+[\mu, \delta] = \mu - E_2$ converges uniformly for $|\mu| < C_\mu$, which comes directly from \eqref{eqn:near-op-bd} and \eqref{eqn:MN-bds-main}.
\end{proof}

We now solve \eqref{eqn:beta-parallel} for $\mu(\delta)$ by a continuity argument. The result is stated as follows:
\begin{proposition}\label{prop:solve-Jplus}
	For any $M > |E_2|$, there exists $\delta_0 > 0$ and a function $\delta \mapsto \mu(\delta)$, defined for all $0\leq \delta < \delta_0$ such that (1) $|\mu(\delta)| < M$ for all $0\leq \delta < \delta_0$; (2) $\lim_{\delta \rightarrow 0} \mu(\delta) = E_2$; and (3) $\mathcal{J}[\mu(\delta), \delta] = 0$ for all  $0\leq \delta < \delta_0$.
\end{proposition}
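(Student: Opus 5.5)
The plan is a one-dimensional continuity (intermediate value) argument applied to the real scalar function $\mu\mapsto\mathcal{J}[\mu,\delta]$, using Lemma \ref{lemma:J-cont-ext}. A first technical point is that $\mathcal{J}_+$ is a priori complex-valued. I would check that it is real for real $\mu$. This should follow from self-adjointness of $H^\delta$ and of the reduced operator in \eqref{eqn:beta-system} together with \eqref{eqn:E2}, which gives $E_2\in\mathbb R$. If realness is not available directly, I would replace $\mathcal{J}$ by $\operatorname{Re}\mathcal{J}$ and argue separately that the imaginary part vanishes at the zero. Self-adjointness of the original eigenvalue problem forces any eigenvalue, hence $\mu$, to be real.

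Fix $M>|E_2|$ and choose $\epsilon>0$ with $E_2\pm 2\epsilon\in(-M,M)$. On the compact interval $I_\epsilon=[E_2-2\epsilon,E_2+2\epsilon]$ the limit in Lemma \ref{lemma:J-cont-ext} is uniform in $\mu$, by the bounds \eqref{eqn:near-op-bd}, \eqref{eqn:MN-bds-main} and \eqref{eqn:eta-near-scaled-bd}. Concretely, $\sup_{\mu\in I_\epsilon}|\mathcal{J}[\mu,\delta]-(\mu-E_2)|\to0$ as $\delta\to0$. So there is $\delta_0(\epsilon)$ such that for $0<\delta<\delta_0$:
\begin{equation*}
\mathcal{J}[E_2-2\epsilon,\delta]<-\epsilon<0<\epsilon<\mathcal{J}[E_2+2\epsilon,\delta].
\end{equation*}
For fixed $\delta$, $\mu\mapsto\mathcal{J}[\mu,\delta]$ is continuous, indeed Lipschitz by Proposition \ref{prop:beta}, Proposition \ref{prop:far-energy} and \eqref{eqn:lip-L-delta}. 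The intermediate value theorem then yields $\mu(\delta)\in I_\epsilon$ with $\mathcal{J}[\mu(\delta),\delta]=0$. At $\delta=0$ we set $\mu(0)=E_2$. This gives (1) and (3).

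For (2), I would choose $\epsilon$ depending on $\delta$. Define $\epsilon(\delta)$ as the smallest dyadic $\epsilon\le\epsilon_0$ with $\delta<\delta_0(\epsilon)$. Equivalently, take $\epsilon(\delta)=2\sup_{\mu\in I_{\epsilon_0}}|\mathcal{J}[\mu,\delta]-(\mu-E_2)|$, which tends to $0$. The zero then lies within $2\epsilon(\delta)$ of $E_2$, so $\mu(\delta)\to E_2$. To make $\mu(\delta)$ a well-defined function, select for instance the zero closest to $E_2$ (the smallest such, since the zero set is closed). Uniqueness is not claimed.

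The main obstacle is verifying uniformity in $\mu$ of the convergence $\mathcal{J}_+\to\mu-E_2$. The correction terms $\langle\widehat{\Psi_0},(\widehat{\mathcal D^\delta}-\widehat{\mathcal D})\widehat{\bm\beta}_{\rm near}\rangle$ and $\langle\widehat{\Psi_0},\widehat{\mathcal L^\delta}(\mu)\widehat{\bm\beta}_{\rm near}\rangle$ are $O(\delta^\tau+\delta^{1-\tau})$ uniformly for $|\mu|<M$, by \eqref{eqn:near-op-bd} and \eqref{eqn:eta-near-scaled-bd}. The $\delta\langle\widehat{\Psi_0},\widehat{\bm\beta}_{\rm near}\rangle$ term is $O(\delta)$. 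For the source terms I need \eqref{eqn:M-limit} and \eqref{eqn:N-limit} to hold uniformly in $\mu$ on compacts. I would obtain this from the explicit affine-in-$\mu$ structure of $\widehat{\bm{\mathcal N}}$ together with the Lipschitz bounds \eqref{eqn:A-map-lip-mu}--\eqref{eqn:B-map-lip-mu}. That structure turns pointwise convergence into uniform convergence on the compact set $I_\epsilon$, since equicontinuous pointwise convergence is uniform on compacts.
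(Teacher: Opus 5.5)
Your strategy is the continuity argument that the paper invokes by citing Proposition 6.17 of \cite{fefferman2017topologically}: bracket $E_2$ on a compact $\mu$-set, use $\mathcal{J}[\mu,\delta]\to\mu-E_2$ uniformly there, and extract a zero. The bracketing and the selection of a zero are fine. Two steps are easier than you make them. First, uniformity on compact $\mu$-sets already follows from the joint continuity in Lemma \ref{lemma:J-cont-ext}, whose proof is exactly that uniform limit, so no equicontinuity argument is needed. Second, (2) needs no $\delta$-dependent $\epsilon$: at any zero $\mu\in I_\epsilon$ you have $|\mu-E_2|=|\mathcal{J}[\mu,\delta]-(\mu-E_2)|\le\sup_{I_\epsilon}|\mathcal{J}[\cdot,\delta]-(\cdot-E_2)|\to0$.

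The genuine gap is the realness of $\mathcal{J}$, on which the intermediate value theorem depends. $\mathcal{J}_+$ is built from the complex multipliers $\Gamma_i$, the phases $e^{\pm i\pi/3}$ and $e^{\pm i5\pi/3}$, and the far-momentum elimination, so it is a priori complex-valued.

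\begin{itemize}
\item Your first justification does not prove realness. Self-adjointness of the operator in \eqref{eqn:beta-system} is not verified for $\widehat{\mathcal{L}^\delta}(\mu)$. Even granting it, that alone does not make $\langle\widehat{\bm\Psi_0},\mu\widehat{\bm{\mathcal M}}+\widehat{\bm{\mathcal N}}\rangle$ or the $\widehat{\bm\beta}_{\rm near}$ terms real. Also, \eqref{eqn:E2} only concerns $\delta=0$.
\item Your fallback is circular. A real $\mu$ with $\operatorname{Re}\mathcal{J}[\mu,\delta]=0$ leaves \eqref{eqn:beta-parallel} unsolved, so it gives no eigenpair of $H^\delta$. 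The reality of eigenvalues of a self-adjoint operator therefore says nothing about $\operatorname{Im}\mathcal{J}[\mu,\delta]$ at that point. That argument can only be used after a genuine zero of $\mathcal{J}$ has been found.
\end{itemize}

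The cleanest repair is to complexify $\mu$.

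\begin{itemize}
\item The estimates behind Proposition \ref{prop:far-energy}, Proposition \ref{prop:beta} and Lemma \ref{lemma:J-cont-ext} depend on $\mu$ only through $|\mu|$. So $\mathcal{J}[\cdot,\delta]$ is continuous on a closed complex disc $\overline{D}=\{|\mu-E_2|\le\epsilon\}\subset\{|\mu|<M\}$ and converges uniformly there to $\mu-E_2$.
\item Once $\sup_{\overline{D}}|\mathcal{J}[\mu,\delta]-(\mu-E_2)|<\epsilon$, the map $\mu\mapsto\mu-\mathcal{J}[\mu,\delta]$ sends $\overline{D}$ into itself. Brouwer's fixed point theorem, used in place of the intermediate value theorem, then gives a zero $\mu(\delta)$.
\item The far and $\perp$ equations are already solved, so this zero yields a nonzero $\bm\psi\in l^2_{k_\parallel}$ with $H^\delta\bm\psi=(\delta E_1+\delta^2\mu(\delta))\bm\psi$. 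Only at this point does self-adjointness of $H^\delta$ force $\mu(\delta)\in\mathbb{R}$.
\end{itemize}

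Alternatively, you could prove realness directly. The far equation together with \eqref{eqn:beta-perp} forms a Feshbach--Schur reduction of the self-adjoint problem with respect to orthogonal projections. Hence $\mathcal{J}_+$ is, up to the factor $\langle\widehat{\bm\Psi_0},\widehat{\bm{\mathcal M}}\rangle$ (which tends to $1$), a function that is real for real $\mu$. That route requires checking the reduction in detail, which your proposal does not do.
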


The proof is the same as that of Proposition 6.17 in \cite{fefferman2017topologically} and is omitted here. 
    

\paragraph{Completion of the Proof of Theorem \ref{thm:main}} We now assemble the preceding estimates and conclude the proof of Theorem \ref{thm:main} with the following steps:
\begin{enumerate}[(i)]
    \item We first observe that it suffices to prove the existence of $M > 0$ and $\delta_0 > 0$ such that (1) the correctors $\delta^{\frac{3}{2}} \bm{\eta}_m = \delta^{\frac{3}{2}} \bm{\eta}_m^\text{near} + \delta^{\frac{3}{2}} \bm{\eta}_m^\text{far}$ in \eqref{eqn:ansatz-1d} is bounded by $\delta$ when $\delta \in (0,\delta_0)$; and (2) $|\mu(\delta)| \leq M$ and $\lim_{\delta\rightarrow 0} \mu(\delta) = E_2$.
    
    \item We start with the near-momentum part. Using Propositions \ref{prop:beta} and \ref{prop:solve-Jplus}, we obtain that for any $M > |E_2|$, there exists $\delta_0 > 0$ such that for any $\delta \in (0,\delta_0)$, we have a pair $\widehat{\bm{\beta}}_\text{near}(\xi)$ and $\mu(\delta)$ as a solution to \eqref{eqn:beta} with a bound $\norm{\widehat{\bm{\beta}}_\text{near}(\xi; \mu, \delta)}_{L^{2,1}(\mathbb{R})} \lesssim 1$ (see \eqref{eqn:eta-near-scaled-bd}) and a limit $\lim_{\delta \rightarrow 0} \mu(\delta) = E_2$. Using $\norm{\widehat{\bm{\beta}}_\text{near}(\xi; \mu, \delta)}_{L^{2,1}(\mathbb{R})} \lesssim 1$ and the scaling between $\widehat{\bm{\beta}}_\text{near}$ and $\bm{\eta}_m^\text{near}$ in \eqref{eqn:eta-idft-bd}, we obtain 
    \begin{align*}
        \norm{\bm{\eta}_m^\text{near}}_{l^2(\mathbb{Z};\mathbb{C}^2)} \lesssim \delta^{-\frac{1}{2}} \qquad \Rightarrow \qquad \norm{\delta^{\frac{3}{2}}\bm{\eta}_m^\text{near}}_{l^2(\mathbb{Z};\mathbb{C}^2)} \lesssim \delta;
    \end{align*}

    \item Lastly, for the far-momentum part, we use the relation between the near- and far-momentum parts in \eqref{eqn:far-bound}: the far-momentum part $\delta \bm{\eta}_m^\text{far}$ is bounded by
    \begin{align*}
        \norm{\delta^{\frac{3}{2}} \bm{\eta}_m^\text{far}}_{l^2(\mathbb{Z};\mathbb{C}^2)} \lesssim \delta^{\frac{3}{2}} \Big(\delta^{1-\tau} \norm{\bm{\eta}_m^\text{near}}_{l^2(\mathbb{Z};\mathbb{C}^2)} + \delta^{\frac{1}{2}-\tau}\Big) \lesssim \delta^{2 - \tau} \lesssim \delta.
    \end{align*}
    Thus, we complete our proof of Theorem \ref{thm:main}.
    
\end{enumerate}


\section{Summary and conclusion}\label{sec:conclusion}

\edit{For tight-binding models of slowly strained honeycomb lattices, we derive a continuum magnetic Dirac Hamiltonian via a formal discrete multiscale expansion; see Section \ref{sec:derivation}). For unidirectional deformations that preserve periodicity in one direction, we reduce the eigenvalue problem to a one-dimensional $l^2_{k_\parallel}$-eigenvalue problem (see \eqref{eqn:lk-evp}) and prove that for bounded deformation gradients that eigenstates of an effective Dirac operator give rise to $l^2_{k_\parallel}$-eigenstates of $H^\delta$; see Theorem \ref{thm:main}. Numerical simulations for quadratic deformations and their linear regularizations in both armchair and zigzag orientations support corroborate our analysis; see Section \ref{sec:num-comparison}.}

\appendix
\section*{Appendix}
\addcontentsline{toc}{section}{Appendix}

\renewcommand{\thesubsection}{\Alph{subsection}}

\renewcommand{\theequation}{\thesubsection.\arabic{equation}}
\renewcommand{\theHequation}{app.\thesubsection.\arabic{equation}}

\renewcommand{\thetheorem}{\thesubsection.\arabic{theorem}}
\renewcommand{\theHtheorem}{app.\thesubsection.\arabic{theorem}}

\makeatletter
\@addtoreset{theorem}{subsection} 
\makeatother

\setcounter{equation}{0}
\subsection{Quadratic deformations with AC and ZZ orientations}\label{sec:ac-zz-edge}

In this section, we consider the spectral properties of $\mathcal{D}_{\bm A}$ in \eqref{eqn:dirac-mag} for two special cases: quadratic deformations with armchair (AC) orientation and quadratic deformations with zigzag (ZZ) orientation. We show that (1) for the quadratic deformation with AC orientation, the spectrum of $\mathcal{D}_{\bm A}$ is purely discrete with each eigenvalue of infinite multiplicity; (2) for the quadratic deformation with ZZ orientation, $\mathcal{D}_{\bm A}$ has purely continuous spectrum, and $\sigma(\mathcal{D}_{\bm A}) = \mathbb{R}$.


\paragraph{AC and ZZ orientation} Notice that the discrete honeycomb lattice is anisotropic, so the spectrum of $\mathcal{D}_{\bm A}$ in \eqref{eqn:dirac-mag} depends on the orientation in which the quadratic deformation is applied -- namely, the AC or ZZ orientation. Accordingly, we consider two quadratic deformations: $\bm u^{\mathrm{AC}}=(0,X_1^2)^T$, associated with the AC orientation, and $\bm u^{\mathrm{ZZ}}=(X_2^2,0)^T$, associated with the ZZ orientation.

Let us first explain why these two deformations correspond to quadratic deformations along AC and ZZ orientations. We start by introducing the AC and ZZ edges in the honeycomb structure: the armchair edge corresponds to cutting the honeycomb along a direction where the boundary alternates like the shape of an armchair (the left and right boundaries of Figure \ref{fig:honeycomb-edges} are AC edges), while the zigzag edge follows a direction where hexagons align in a straight row of nodes, giving a zigzag pattern (the top and bottom boundaries of Figure \ref{fig:honeycomb-edges} are ZZ edges). 

These two orientations are related by a $90^\circ$ rotation: in Figure \ref{fig:honeycomb-edges}, the AC edge is oriented vertically and the ZZ edge horizontally, so rotating Figure \ref{fig:honeycomb-edges} by $90^\circ$ interchanges their roles. Guided by this geometric relation, we take the quadratic deformation along the AC orientation to be $\bm u^{\mathrm{AC}}=(0,X_1^2)^T$ and, correspondingly, the quadratic deformation along the ZZ orientation to be $\bm u^{\mathrm{ZZ}}=(X_2^2,0)^T$. The resulting deformed configurations are shown in Figures \ref{fig:ac-edge-def} and \ref{fig:zz-edge-def}.


\begin{figure}[!htb]
	\centering
	\subfloat[]{
		\includegraphics[height=1.3in]{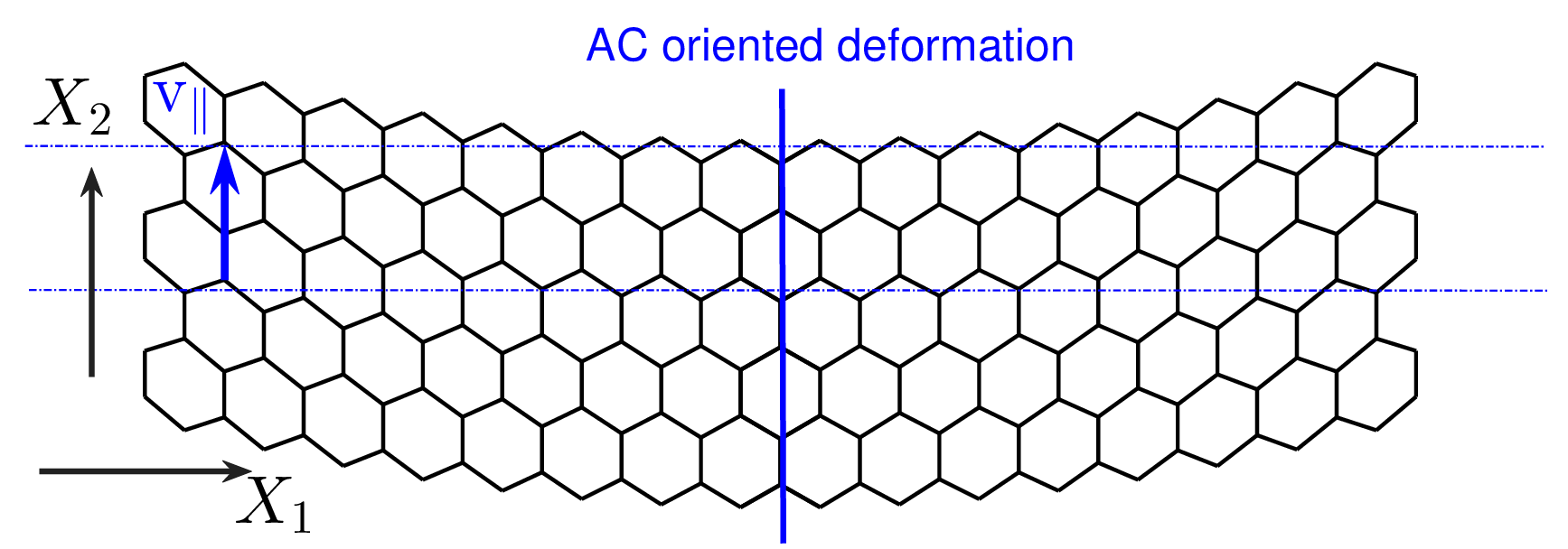}\label{fig:ac-edge-def}
	}\hfil
     \subfloat[]{
		\includegraphics[height=1.3in]{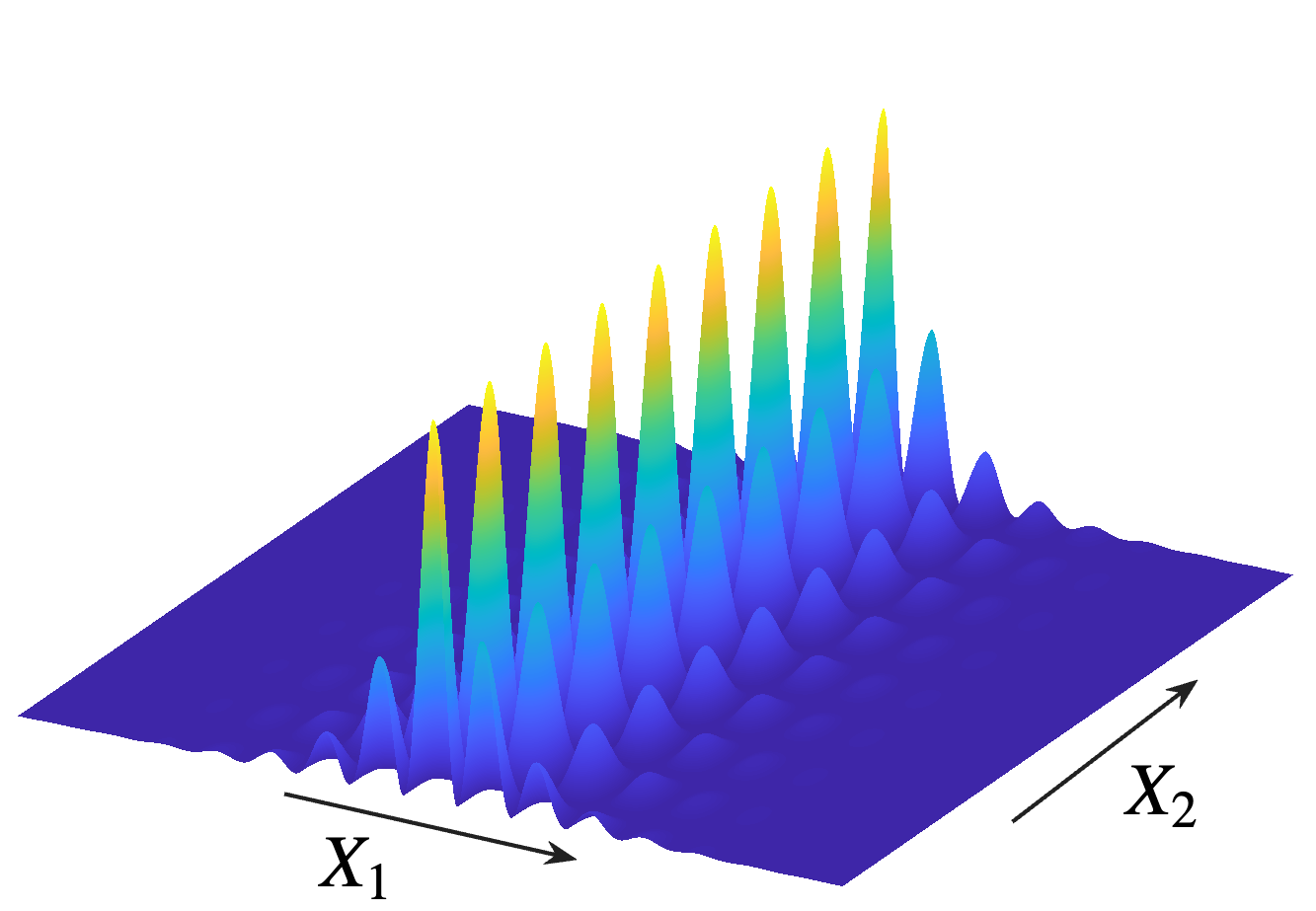}\label{fig:eigmode}
	}\\
	\subfloat[]{
		\includegraphics[height=1.28in]{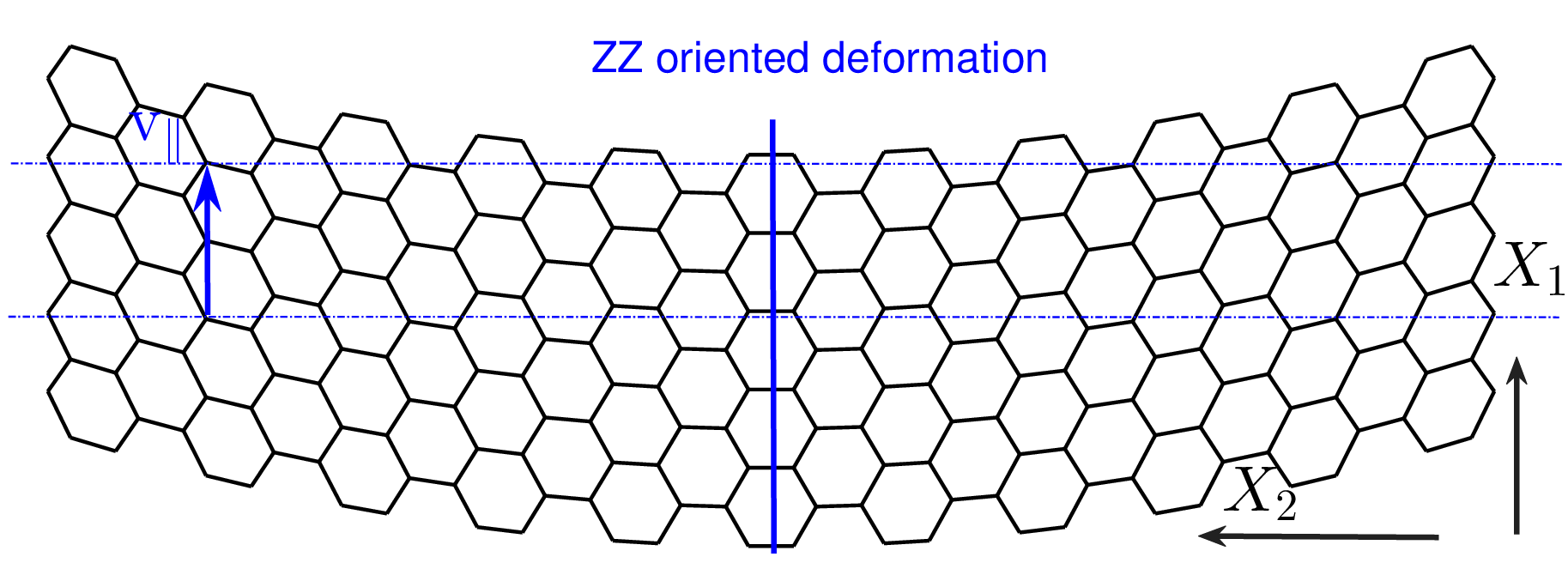}\label{fig:zz-edge-def}
	}\hfil
    \subfloat[]{
		\includegraphics[height=1.22in]{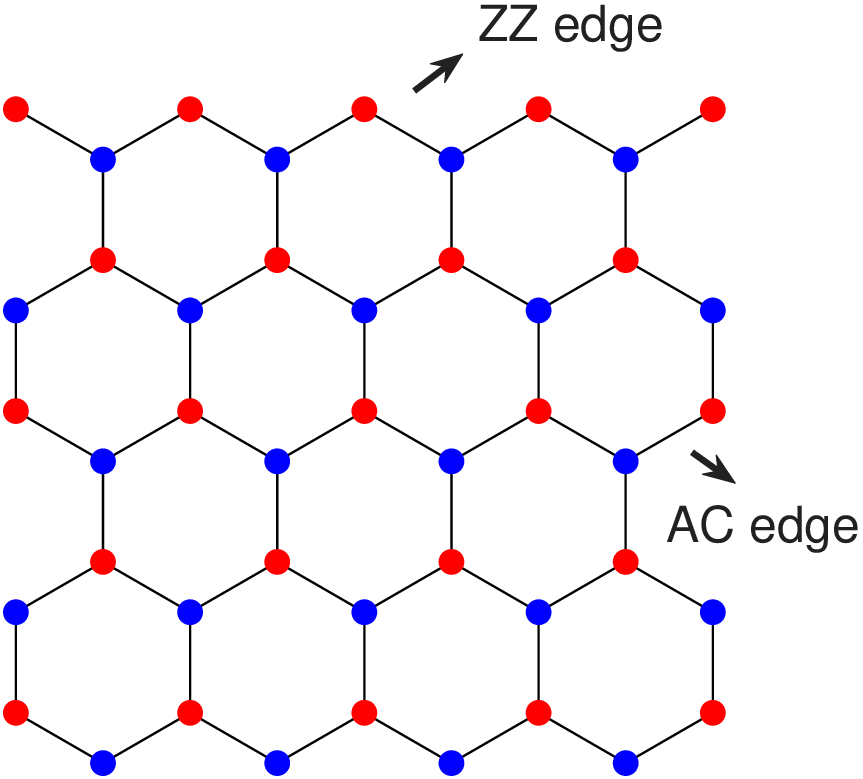}\label{fig:honeycomb-edges}
	}
	\caption{Unidirectional deformations with AC and ZZ orientations: (a) the deformed honeycomb by $\bm{u}_{\text{AC}}$ with $X_1$ horizontal and $X_2$ vertical; (b) eigenmode of the form $e^{i k_\parallel X_2} \Psi(X_1;k_\parallel)$, where $\Psi(X_1;k_\parallel)$ decays in the $X_1$ direction; (c) the deformed honeycomb by $\bm{u}_{\text{ZZ}}$ with $X_1$ vertical and $X_2$ horizontal; (d) the undeformed honeycomb lattice with marked AC and ZZ edges.}
\end{figure}

\paragraph{\edit{Deformations with AC orientation and constant perpendicular effective magnetic field}} For the quadratic deformation $\bm{u}^\text{AC}$ along the AC edge in Figure \ref{fig:ac-edge-def}, the effective magnetic potential in \eqref{eqn:eff-mag}, denoted by $\bm{A}^\text{AC}_\text{eff} = (A_1^\text{AC}, A_2^\text{AC})$, and the magnetic Dirac operator in \eqref{eqn:dirac-mag}, denoted by $\mathcal{D}^\text{AC}$, become
\begin{equation}\label{eqn:ac-dirac}
	A_1^\text{AC}(\bm{X}) = 0, \quad A_2^\text{AC}(\bm{X})  = t_1 X_1,\quad \mathcal{D}^\text{AC} =  \frac{3}{2}\Big[-i \partial_{X_1} \sigma_1 + (-i\partial_{X_2} -t_1 X_1) \sigma_2\Big].
\end{equation}
The corresponding pseudo-magnetic field \eqref{eqn:mag-field} is a constant in the out-of-plane direction with $\bm{B}_\text{eff} = t_1 \widehat{\bm{z}}$. It is worth mentioning that the effective operator $\mathcal{D}^\text{AC}$ is translation invariant in the $X_2$ direction, allowing it to be reduced to the one-dimensional Dirac operator $\mathcal{D}^\text{AC}(k_\parallel)$ in \eqref{eqn:dirac-1d} with
\begin{equation}\label{eqn:ac-1d-dirac}
    \mathcal{D}^\text{AC}(k_\parallel) := \frac{3}{2} \Big[-i\partial_{X_1} \sigma_1 + (k_\parallel - t_1 X_1) \sigma_2\Big], 
\end{equation}
where $k_\parallel - t_1 X_1$ is obtained by substituting $d(X_1) = X_1^2$ into \eqref{eqn:dirac-1d}.

The operator $\mathcal{D}^\text{AC}$ in \eqref{eqn:ac-dirac} is a magnetic Dirac operator with Landau gauge and its spectrum has been well-studied (see e.g. Appendix E of \cite{barsukova2024direct}). Here we briefly review the spectrum of $\mathcal{D}^\text{AC}$ and its 1-dimensional reduction $\mathcal{D}^\text{AC}(k_\parallel)$ as the following Lemma:

\begin{lemma}[The spectrum of $\mathcal{D}^\text{AC}$ and $\mathcal{D}^\text{AC}(k_\parallel)$]\label{lemma:spectrum-landau-dirac-mag-op}
(a) The spectrum of $\mathcal{D}^\text{AC}$ in \eqref{eqn:ac-dirac} is purely discrete of the form
\begin{equation}\label{eqn:flat-spectrum}
	\sigma(\mathcal{D}^\text{AC}) = \left\{ \omega^\pm_s = \pm \frac{3\sqrt{2}}{2} \sqrt{|t_1 s|} \: \Bigg|\: s = 0, 1,2,\dots \right\}.
\end{equation}
The discrete energy levels are the so-called \textbf{Landau levels}.\\

\noindent (b) Each discrete eigenvalue $\omega_{s}^\pm = \pm \frac{3\sqrt{2}}{2} \sqrt{|t_1 s|}$ of $\mathcal{D}^\text{AC}$ has infinite multiplicity and the corresponding eigenfunctions $\bm \phi(\bm{X};\omega_{s}^\pm) = \big(\phi^A(\bm{X};\omega_{s}^\pm), \phi^B(\bm{X};\omega_{s}^\pm)\big)^T$ are of the form 
\begin{equation}\label{eqn:quantum-osc}
	\bm \phi(\bm{X};\omega_{s}^\pm) = e^{ik_\parallel X_2} \bm \psi(X_1;k_\parallel,\omega_{s}^\pm), \quad \bm \psi(X_1;k_\parallel,\omega_{s}^\pm) = \Big(\psi^A(X_1;k_\parallel,\omega_{s}^\pm),\psi^B(X_1;k_\parallel,\omega_{s}^\pm)\Big)^T,
\end{equation}
where $k_\parallel$ denotes the quasi-momentum along the $X_2$-direction and $\bm \psi(X_1;k_\parallel,\omega_{s}^\pm)$ are the eigenfunctions of the reduced 1D Dirac operator $\mathcal{D}^\text{AC}(k_\parallel)$ in \eqref{eqn:ac-1d-dirac} associated with the eigenvalue $\omega_{s}^\pm$ with
\begin{equation}\label{eqn:ac-1d-eigval}
	\mathcal{D}^\text{AC}(k_\parallel) \ \bm \psi(X_1;k_\parallel,\omega_{s}^\pm) = \omega_{s}^\pm \ \bm \psi(X_1;k_\parallel,\omega_{s}^\pm),\quad \text{for any}\quad s =0,1,2,\dots.
\end{equation}
In fact, the 1D Dirac operator $\mathcal{D}^\text{AC}(k_\parallel)$ also has eigenvalues $\omega_{s}^\pm = \pm \frac{3\sqrt{2}}{2} \sqrt{|t_1 s|}$ (independent of $k_\parallel$).\\

\noindent(c) For a given $k_\parallel$ and $s=0,1,2,\dots$, the eigenfunctions $\bm \psi(X_1;k_\parallel,\omega_{s}^\pm)$ of $\mathcal{D}^\text{AC}(k_\parallel)$ in \eqref{eqn:ac-1d-eigval} have \textbf{Gaussian decay rate}. 
\end{lemma}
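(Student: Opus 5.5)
The plan is to reduce everything to the one-dimensional operator $\mathcal{D}^\text{AC}(k_\parallel)$ in \eqref{eqn:ac-1d-dirac}. We square it to obtain a pair of harmonic oscillators, read off the spectrum and eigenfunctions from Hermite function theory, and then recover the two-dimensional statements (a)--(b) by a partial Fourier transform in $X_2$.

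First, assume $t_1\neq 0$ and write $\kappa(X_1)=k_\parallel-t_1X_1$. The translation $X_1\mapsto X_1-k_\parallel/t_1$ is unitary on $L^2(\mathbb{R};\mathbb{C}^2)$ and conjugates $\mathcal{D}^\text{AC}(k_\parallel)$ to $\mathcal{D}^\text{AC}(0)$. Hence the spectrum is independent of $k_\parallel$, and the eigenfunctions for different $k_\parallel$ are translates of one another. Next we use $\sigma_1\sigma_2=i\sigma_3=-\sigma_2\sigma_1$ and $[-i\partial_{X_1},\kappa]=-i\kappa'$, which give
\[
\big(-i\partial_{X_1}\sigma_1+\kappa\sigma_2\big)^2=-\partial_{X_1}^2+\kappa^2+\kappa'\sigma_3=-\partial_{X_1}^2+t_1^2\big(X_1-k_\parallel/t_1\big)^2-t_1\sigma_3 .
\]
Thus $\big(\mathcal{D}^\text{AC}(k_\parallel)\big)^2=\tfrac94\,\mathrm{diag}\big(h-t_1,\;h+t_1\big)$, where $h$ is a shifted harmonic oscillator with frequency $|t_1|$ and spectrum $\{|t_1|(2n+1)\}_{n\ge0}$. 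Each diagonal entry therefore has spectrum $\{2|t_1|s\}_{s\ge0}$. The value $s=0$ appears in exactly one component: the $B$-component if $t_1<0$, and the $A$-component if $t_1>0$. This is consistent with \eqref{eqn:zero-b}--\eqref{eqn:zero-a}. Every $s\ge1$ appears in both components. Because the Hermite functions are complete, $(\mathcal{D}^\text{AC}(k_\parallel))^2$ has compact resolvent and purely discrete spectrum $\{\tfrac92|t_1|s\}_{s\ge0}$.

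Passing from $\mathcal{D}^2$ back to $\mathcal{D}$ uses ladder operators. The off-diagonal entries of $\mathcal{D}^\text{AC}(k_\parallel)$ are, up to the factor $\tfrac32$ and phases, $a=\partial_{X_1}+\kappa$ and $a^*=-\partial_{X_1}+\kappa$. These intertwine the $A$- and $B$-oscillators and map the $s$-th eigenfunction of one component to the $s$-th eigenfunction of the other, annihilating only the ground state. For $s\ge1$, the two-dimensional eigenspace of $\mathcal{D}^2$ at $\tfrac92|t_1|s$ is invariant under $\mathcal{D}$. On that space $\mathcal{D}$ is nonzero, so it splits into eigenvalues $\pm\tfrac32\sqrt{2|t_1|s}=\pm\tfrac{3\sqrt2}{2}\sqrt{|t_1 s|}=\omega_s^\pm$, which are exchanged by the chiral symmetry $\sigma_3\mathcal{D}\sigma_3=-\mathcal{D}$ of Remark~\ref{rmk:number-non-zero}. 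For $s=0$ we get the single eigenvalue $0$, with eigenfunction given by \eqref{eqn:zero-b} or \eqref{eqn:zero-a}, that is, a Gaussian. This yields \eqref{eqn:ac-1d-eigval}, shows each eigenvalue is simple, and shows the values are independent of $k_\parallel$. Part (c) follows because every eigenfunction is a finite combination of Hermite polynomials times $e^{-|t_1|(X_1-k_\parallel/t_1)^2/2}$.

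Finally, for (a)--(b) we take the Fourier transform in $X_2$. This gives a unitary equivalence $\mathcal{D}^\text{AC}\cong\int^\oplus_{\mathbb{R}}\mathcal{D}^\text{AC}(k_\parallel)\,dk_\parallel$. Since every fiber has the same pure point spectrum $\{\omega_s^\pm\}$, with eigenvectors depending smoothly on $k_\parallel$ by translation, the spectrum of $\mathcal{D}^\text{AC}$ is exactly this set. For each level, the spectral projection is the direct integral of rank-one projections, which has infinite rank. Superposing the fiber eigenfunctions against any $g\in L^2(dk_\parallel)$, that is, forming $\int g(k_\parallel)e^{ik_\parallel X_2}\bm\psi(X_1;k_\parallel,\omega_s^\pm)\,dk_\parallel$, gives an infinite-dimensional eigenspace of $L^2(\mathbb{R}^2)$ functions of the form \eqref{eqn:quantum-osc}. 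We then note that "purely discrete" in (a) must be read as pure point spectrum consisting of isolated eigenvalues of infinite multiplicity: each such level lies in the essential spectrum, so the operator does not have discrete spectrum in the strict sense. The main technical point is handling $\mathcal{D}^2$ versus $\mathcal{D}$ correctly, namely invariance of the eigenspaces and the sign split, which the ladder-operator intertwining makes explicit.
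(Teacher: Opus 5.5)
Your argument is correct. It differs from the paper mainly in that it is an actual proof rather than a citation. The paper defers the spectrum and the Hermite--Gaussian form of the eigenfunctions to Appendix E of \cite{barsukova2024direct}, reading off (S211) and (S228) with $v_D=3/2$ and $B_0=|t_1|$. It verifies only the zero mode directly, by solving the decoupled first-order equations to obtain \eqref{eqn:zero-quantum-osc}.

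You derive everything yourself:
\begin{itemize}
\item The factorization $\bigl(\tfrac23\mathcal{D}^\text{AC}(k_\parallel)\bigr)^2=\mathrm{diag}(aa^*,a^*a)=\mathrm{diag}(h-t_1,h+t_1)$, with $a=\partial_{X_1}+\kappa$, reduces the problem to a shifted harmonic oscillator.
\item Intertwining by $a,a^*$ together with the chiral symmetry $\sigma_3$ splits each two-dimensional level of $\mathcal{D}^2$ into the simple eigenvalues $\omega_s^\pm$.
\item The translation $X_1\mapsto X_1-k_\parallel/t_1$ explains why the eigenvalues do not depend on $k_\parallel$.
\item The direct-integral decomposition in $X_2$ gives infinite multiplicity.
\end{itemize}

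Your route buys transparency, and it exposes points the citation hides. The lemma needs $t_1\neq 0$. Because each $\omega_s^\pm$ has infinite multiplicity, it lies in the essential spectrum, so ``purely discrete'' in (a) can only mean pure point spectrum made of isolated eigenvalues, as you say. The states $e^{ik_\parallel X_2}\bm\psi(X_1;k_\parallel,\omega_s^\pm)$ in (b) are only generalized eigenfunctions. The genuine $L^2(\mathbb{R}^2)$ eigenfunctions are their superpositions in $k_\parallel$, and these are not themselves of the form \eqref{eqn:quantum-osc}, contrary to the wording of your last paragraph. The paper's route buys brevity and explicit normalized formulas.

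Two small repairs are needed.
\begin{itemize}
\item Your sentence saying each diagonal entry has spectrum $\{2|t_1|s\}_{s\ge0}$ is wrong as written. One of $h\mp t_1$ has spectrum $\{2|t_1|s\}_{s\ge1}$, as your next sentence in fact says.
\item To avoid domain questions about $\mathcal{D}^2$, note that the eigenvectors you construct form an orthonormal basis of Schwartz functions. Hence $\mathcal{D}^\text{AC}(k_\parallel)$ is essentially self-adjoint on their span and is diagonalized by that basis, which gives the pure point spectrum directly.
\end{itemize}
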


The proof of Lemma \ref{lemma:spectrum-landau-dirac-mag-op}, including a detailed calculation of the eigenvalues and eigenfunctions of $\mathcal{D}^\text{AC}$ and $\mathcal{D}^\text{AC}(k_\parallel)$, can be found in Appendix E of \cite{barsukova2024direct}. The spectrum of $\mathcal{D}^\text{AC}$ and $\mathcal{D}^\text{AC}(k_\parallel)$, given by $\omega^{\pm}_s = \pm \frac{3\sqrt{2}}{2} \sqrt{|t_1 s|}$ with $s = 0,1,2,\dots$, follows from Equations (S211) and (S228) of \cite{barsukova2024direct} by setting $v_D = 3/2$ and $B_0 = |t_1|$. The eigenfunctions $\bm \psi(X_1;k_\parallel,\omega_{s}^\pm)$ are the standard quantum harmonic oscillators (see Equations (S230)-(S231) and (S237) in \cite{barsukova2024direct}), i.e. a Gaussian multiplied by a Hermite polynomial, and therefore exhibit Gaussian decay.

\paragraph{Gaussian decay of eigenstates} Although we refer to \cite{barsukova2024direct} for the full proof of Lemma \ref{lemma:spectrum-landau-dirac-mag-op}, we briefly explain how the Gaussian decay of these eigenmodes mentioned in Lemma \ref{lemma:spectrum-landau-dirac-mag-op}(c) arises by examining the zero eigenfunction as a representative example; the remaining eigenmodes exhibit similar behavior. For the zero eigenmodes with $\omega_0 = 0$, the eigenvalue problem in \eqref{eqn:ac-1d-eigval} becomes two decoupled 1st-order ODEs: 
\begin{equation*}
	\Big(\partial_{X_1} - (t_1 X_1 - k_\parallel)\Big) \psi^B(X_1;k_\parallel,\omega_0) = 0, \quad \Big(\partial_{X_1} + (t_1 X_1 - k_\parallel)\Big) \psi^A(X_1;k_\parallel,\omega_0) = 0,
\end{equation*}
where the zero eigenfunctions can be solved explicitly by an integrating factor. Among all solutions, we select the physically meaningful one that stays bounded, i.e.
\begin{subequations}\label{eqn:zero-quantum-osc}
\begin{align}
    \psi^A(X_1;k_\parallel,\omega_0) &= 0, &\psi^B(X_1;k_\parallel,\omega_0) &= \sqrt{\frac{t_1}{2\pi}} \exp \left(\frac{t_1}{2}\bigg[\left(X_1 - \frac{k_\parallel}{t_1}\right)^2\bigg]\right), & &t_1 < 0, \label{eqn:zero-quantum-osc-1}\\
	\psi^B(X_1;k_\parallel,\omega_0) &= 0, & \psi^A(X_1;k_\parallel,\omega_0) &= \sqrt{\frac{t_1}{2\pi}}\exp \left(-\frac{t_1}{2} \bigg[\left(X_1 - \frac{k_\parallel}{t_1}\right)^2\bigg]\right) , & & t_1 > 0. \label{eqn:zero-quantum-osc-2}
\end{align}
\end{subequations}
where $\sqrt{\frac{2\pi}{t_1}}$ is the normalization constant such that $\norm{\bm \psi}_{L^2(\mathbb{R})}=1$. We notice that the zero eigenfunction has Gaussian decay rate with respect to $X_1$ and the center of these Gaussian modes $k_\parallel / t_1$ depends on $k_\parallel$ and $t_1$.

We also revisit our approximation of solutions \eqref{eqn:uni-ansatz-approx} to the eigenvalue problem \eqref{eqn:eig-def-hon} by plugging in the eigenvalue and eigenfunction of $\mathcal{D}^\text{AC}$. For each eigenvalue $\omega_{s}^\pm$, its corresponding eigenfunction $\bm \psi(X_1;k_\parallel,\omega_s^\pm)$ in \eqref{eqn:quantum-osc} gives rise to an approximation \eqref{eqn:uni-ansatz-approx-2}, i.e.
\begin{align}\label{eqn:ac-ansatz}
    \bm \psi_{m,n}^{(0)} &= e^{i \bm K \cdot \bm x}e^{ik_\parallel X_2} U \bm \psi(X_1;k_\parallel,\omega_s^\pm)  \Big|_{\substack{\bm x = \Cell_{m,n}\\
    \bm X = \delta \Cell_{m,n}}}.
\end{align}
Since $\bm \psi(X_1;k_\parallel,\omega_s^\pm)$ has Gaussian decay, the approximation \eqref{eqn:ac-ansatz} oscillates in the $X_2$-direction and decays in Gaussian rate in the transverse direction. In fact, our numerical simulations for the AC edge confirm this Gaussian decay predicted by this approximation \eqref{eqn:ac-ansatz} (see e.g. Figure \ref{fig:ac-comp-zero-osc} in Section \ref{sec:num-comparison}).


\paragraph{\edit{Deformations with ZZ orientation induce no effective magnetic field}} For the quadratic deformation $\bm{u}^\text{ZZ} = (X_2^2, 0)^T$ along the ZZ edge, the effective magnetic potential $\bm{A}^\text{ZZ}_\text{eff} = (A_1^\text{ZZ}, A_2^\text{ZZ})$ and magnetic Dirac operator $\mathcal{D}^\text{ZZ}$ become
\begin{equation}\label{eqn:zz-dirac}
	A_1^\text{ZZ}(\bm{X}) = 0, \quad A_2^\text{ZZ}(\bm{X})  = -t_1 X_2,\quad \mathcal{D}^\text{ZZ} =  \frac{3}{2}\Big[(-i \partial_{X_1}) \sigma_1 + (-i\partial_{X_2} +t_1 X_2) \sigma_2\Big].
\end{equation}
The corresponding pseudo-magnetic field in \eqref{eqn:mag-field} vanishes, i.e. $\bm{B}_\text{eff} = 0 \widehat{\bm{z}}$. 

Unlike $\mathcal{D}^\text{AC}$, which has purely discrete spectrum, the effective magnetic Dirac operator $\mathcal{D}^\text{ZZ}$ only has continuous spectrum with no discrete eigenvalue. To see why $\mathcal{D}^\text{ZZ}$ only has continuous spectrum, we square the operator $\mathcal{D}^\text{ZZ}$ and obtain
\begin{align}
    (\mathcal{D}^\text{ZZ})^2 &= \Big(-\Delta + t_1^2 X_2^2 - it_1 - 2it_1 X_2 \partial_{X_2}\Big) \sigma_0 \label{eqn:squared-ZZ}\\
    &= \Big(-\partial_{X_1}^2 + (-i\partial_{X_2}+t_1X_2)^2\Big) \sigma_0,\nonumber
\end{align}
where $\sigma_0 = I$. We observe that the part $(-i\partial_{X_2}+t_1X_2)^2$ is unitarily equivalent to $-\partial_{X_2}^2$. To see why, we take the mapping $f(X_2) \mapsto e^{\frac{it_1X_2^2}{2}} f(X_2)$ and use the following calculation
\begin{align*}
     -\partial^2_{X_2} \left(e^{\frac{it_1X_2^2}{2}} f\right) &= e^{\frac{it_1X_2^2}{2}} \left(-\partial_{X_2}^2 + t_1^2 X_2^2 - it_1 - 2it_1 X_2 \partial_{X_2}\right) f\\
        &= e^{\frac{it_1X_2^2}{2}} (-i\partial_{X_2}+t_1X_2)^2 f.
\end{align*}
Therefore, the spectrum of $(\mathcal{D}^\text{ZZ})^2$ is equivalent to the spectrum of $-\Delta$ with $\sigma((\mathcal{D}^\text{ZZ})^2) = \sigma(-\Delta) = [0,\infty)$. Since $\mathcal{D}^\text{ZZ}$ has chiral symmetry (it is self-adjoint and has no diagonal entries), the spectrum of $\mathcal{D}^\text{ZZ}$ is symmetric about zero, and hence $\mathcal{D}^\text{ZZ}$ only has continuous spectrum with $\sigma(\mathcal{D}^\text{ZZ}) = \mathbb{R}$. 

Returning to the quadratically deformed honeycomb lattice along the ZZ edge, this deformation does not open the conical touching at the Dirac point, since the effective operator has spectrum $\sigma(\mathcal{D}^\text{ZZ}) = \mathbb{R}$ with no gap and discrete eigenvalues. This conclusion is consistent with our numerical simulations for the ZZ edge (see e.g. Figure \ref{fig:zz-def} in Section \ref{sec:num-comparison}).


\setcounter{equation}{0}
\subsection{Discrete Fourier transform and Poisson summation formula}\label{app:prelim-dft}

We introduce two fundamental tools for analyzing the spectrum of a discrete Hamiltonian operator: the discrete Fourier transform and the Poisson summation formula. These tools connect the discrete formulation in real space to a continuous representation in momentum space.

We start by introducing the 1D discrete Fourier transform. Consider a sequence $(f_n)_{n \in \mathbb{Z}} \in l^2(\mathbb{Z})$, its discrete Fourier transform (DFT) for $(f_n)_{n \in \mathbb{Z}}$ is defined as
\begin{equation}\label{eqn:dft-definition}
	\widetilde{f}(k) := \sum_{n \in \mathbb{Z}} f_n e^{-ikn}, \qquad k \in [-\pi, \pi),
\end{equation}
where $k$ is known as the quasi-momentum. We use the tilde notation for DFT to distinct from the Fourier transform. We observe that $\widetilde{f}(k)$ is a $L^2$ function on $[-\pi,\pi]$ when $(f_n)_{n \in \mathbb{Z}} \in l^2(\mathbb{Z})$ due to the following equality
\begin{equation}\label{eqn:dft-norm}
	\|\widetilde{f}(k)\|^2_{L^2([-\pi,\pi])} = 2\pi\|f_n\|_{l^2(\mathbb{Z})}^2.
\end{equation}
The sequence $(f_n)_{n \in \mathbb{Z}} \in l^2(\mathbb{Z})$ can be recovered from its discrete Fourier transform $\widetilde{f}(k)$ via the inverse discrete Fourier transform (IDFT), given by:
\begin{equation}\label{eqn:idft}
	f_n = \frac{1}{2\pi}\int_{-\pi}^{\pi} \widetilde{f}(k) e^{ikn} dk.
\end{equation}


We are particularly interested in dealing with slowly-varying perturbations, especially sequences $(\Psi(n\delta))_{n\in \mathbb{Z}}$ with a small parameter $\delta$. For these slowly-varying sequences, their DFT can be represented by the Fourier transform $\widehat{\Psi}$ by applying the Poisson summation formula, which is stated as follows.
\begin{lemma}[Poisson summation formula]\label{lemma:psf}
	For a Schwartz function $f(x) \in \mathcal{S}(\mathbb{R})$ and its Fourier transform $\widehat{f}(\xi) = \int_{-\infty}^\infty f(x) e^{-i\xi x} dx$, the following identity holds
	{\small
    \begin{equation}\label{eqn:psf}
		\sum_{n \in \mathbb{Z}} f(n) = \sum_{n \in \mathbb{Z}} \widehat{f}(2\pi n). 
	\end{equation}}
\end{lemma}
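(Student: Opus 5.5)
The plan is the classical periodization argument. Given $f\in\mathcal S(\mathbb R)$, I will introduce the $1$-periodic function
\begin{equation*}
F(x):=\sum_{n\in\mathbb Z} f(x+n),\qquad x\in[0,1],
\end{equation*}
compute its Fourier series, and evaluate at $x=0$. The left-hand side of \eqref{eqn:psf} is exactly $F(0)$. The right-hand side will turn out to be the Fourier series of $F$ evaluated at $0$, once the coefficients are identified with samples of $\widehat f$.

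\textbf{Step 1 (regularity of $F$).} Since $f$ is Schwartz, for every $j\ge0$ we have $|f^{(j)}(y)|\le C_j(1+|y|)^{-2}$. Hence for $x\in[0,1]$ the series $\sum_n f^{(j)}(x+n)$ is dominated by $\sum_n C_j'(1+|n|)^{-2}<\infty$. By the Weierstrass M-test, the series defining $F$ and all its termwise derivatives converge uniformly on $[0,1]$, and periodicity extends this to all of $\mathbb R$. So $F\in C^\infty(\mathbb R/\mathbb Z)$, and $F$ may be differentiated term by term.

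\textbf{Step 2 (Fourier coefficients).} For $k\in\mathbb Z$, uniform convergence allows exchanging sum and integral:
\begin{equation*}
c_k:=\int_0^1 F(x)e^{-2\pi i k x}\,dx=\sum_{n\in\mathbb Z}\int_0^1 f(x+n)e^{-2\pi i k x}\,dx=\sum_{n\in\mathbb Z}\int_n^{n+1} f(y)e^{-2\pi i k y}\,dy .
\end{equation*}
Here I substituted $y=x+n$ and used $e^{2\pi i k n}=1$. The last sum equals $\int_{\mathbb R}f(y)e^{-2\pi i k y}\,dy=\widehat f(2\pi k)$, with the paper's convention $\widehat f(\xi)=\int f(x)e^{-i\xi x}\,dx$.

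\textbf{Step 3 (pointwise convergence at $x=0$).} This is the only place where some care is needed, and I expect it to be the main technical point. Because $F\in C^2$ and is periodic, integrating by parts twice gives $|c_k|\le (2\pi|k|)^{-2}\|F''\|_{L^\infty}$ for $k\neq0$. Equivalently, $\widehat f(2\pi k)$ decays rapidly because $\widehat f\in\mathcal S$. Thus $\sum_k|c_k|<\infty$, and the Fourier series $G(x)=\sum_k c_k e^{2\pi i k x}$ converges absolutely and uniformly to a continuous function. The function $G$ has the same Fourier coefficients as $F$. By completeness of $\{e^{2\pi i k x}\}$ in $L^2([0,1])$ (or by Fejér's theorem), $F=G$ almost everywhere, and since both are continuous, $F=G$ everywhere.

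Evaluating at $x=0$ then yields $\sum_n f(n)=F(0)=\sum_k c_k=\sum_k\widehat f(2\pi k)$, which is \eqref{eqn:psf}.
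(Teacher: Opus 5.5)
Your proof is correct and complete. The paper gives no proof of this lemma; it quotes the classical Poisson summation formula and uses it as a black box. Your periodization argument is the standard proof that fills this in, and each step is properly justified:
\begin{itemize}
\item The domination $|f^{(j)}(x+n)|\lesssim(1+|n|)^{-2}$ on $[0,1]$ licenses termwise differentiation and the exchange of sum and integral.
\item The identification $c_k=\widehat f(2\pi k)$ matches the paper's convention $\widehat f(\xi)=\int f(x)e^{-i\xi x}\,dx$.
\item The $O(|k|^{-2})$ decay of $c_k$, together with continuity of $F$ and $G$, gives $F(0)=\sum_k c_k$.
\end{itemize}

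Your argument also shows that far less than $f\in\mathcal S(\mathbb R)$ is used: decay like $(1+|x|)^{-2}$ of $f,f',f''$ suffices. This is worth noting because the paper later applies the scaled identity \eqref{eqn:psf-scaled} to functions such as $\ff_i\bm{\beta}_{\text{near}}$, which are not Schwartz in general. There the identity should be read as an equality in $L^2$ of the Brillouin zone, obtained from the Schwartz case by density in $H^1(\mathbb R)$.
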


\paragraph{Poisson's summation formula for slowly-varying functions} By choosing $f(x)= \Psi(x) e^{-ikx}$,  we obtain the DFT for the sequence $\Psi(n\delta)$: for a Schwartz function $\Psi(x) \in \mathcal{S}(\mathbb{R})$, the DFT of the sequence $\psi_n = \Psi(n\delta)$ can be written as
\begin{equation}\label{eqn:psf-fast-osc}
    \widetilde{\psi}(k) = \sum_{n \in \mathbb{Z}} \Psi(n\delta) e^{-ikn} = \frac{1}{\delta} \sum_{n \in \mathbb{Z}} \widehat{\Psi}\left(\frac{k + 2\pi n}{\delta}\right),
\end{equation}
where $\widehat{\Psi}(\xi) = \int_{-\infty}^\infty \Psi(x) e^{-i \xi x} dx$ is the Fourier transform of $\Psi(x)$. 

We use the following scaled Poisson's summation formula frequently used in Section \ref{sec:proof}: for a given Schwartz function $\Psi(x) \in \mathcal{S}(\mathbb{R})$, we have 
\begin{equation}\label{eqn:psf-scaled}
    \sum_{m \in \mathbb{Z}} \Psi\left(\frac{\sqrt{3}}{2} \delta m\right) e^{-ik \frac{\sqrt{3}}{2} m} = \frac{2}{\sqrt{3} \delta}\sum_{m \in \mathbb{Z}} \widehat{\Psi}\left(\frac{k + \frac{4\pi}{\sqrt{3}}m}{\delta}\right),
\end{equation}
which is obtained by replacing $\delta$ and $k$ in \eqref{eqn:psf-fast-osc} with $\frac{\sqrt{3}}{2} \delta$ and $\frac{\sqrt{3}}{2}k$. By a standard translation, we also obtain
\begin{subequations}\label{eqn:psf-scaled-1}
    \begin{align}
        \sum_{m \in \mathbb{Z}} \Psi\left(\frac{\sqrt{3}}{2} \delta (m+1) \right) e^{-ik \frac{\sqrt{3}}{2} m} &= e^{i\frac{\sqrt{3}}{2} k} \frac{2}{\sqrt{3} \delta}\sum_{m \in \mathbb{Z}} \widehat{\Psi}\left(\frac{k + \frac{4\pi}{\sqrt{3}}m}{\delta}\right),\label{eqn:psf-scaled-1a}\\
        \sum_{m \in \mathbb{Z}} \Psi\left(\frac{\sqrt{3}}{2} \delta (m+2) \right) e^{-ik \frac{\sqrt{3}}{2} m} &= e^{i\sqrt{3} k} \frac{2}{\sqrt{3} \delta}\sum_{m \in \mathbb{Z}} \widehat{\Psi}\left(\frac{k + \frac{4\pi}{\sqrt{3}}m}{\delta}\right).\label{eqn:psf-scaled-1b}
    \end{align}
\end{subequations}

\paragraph{The $l^2$ norm of sequences $\psi_m = \Psi\left(\frac{\sqrt{3}}{2}\delta m\right)$} Given a scalar function $\Psi \in \mathcal{S}(\mathbb{R})$, the $l^2$ norm of $\psi_m$ is of order $\delta^{-\frac{1}{2}}$, i.e. for sufficiently small $\delta$, we have
\begin{equation}\label{eqn:l2-norm}
	\|\psi_m\|_{l^2(\mathbb{Z};\mathbb{C})} \lesssim \delta^{-\frac{1}{2}} \|\widehat{\Psi}\|_{L^2(\mathbb{R})}.
\end{equation}
Specifically, we have the following limit
\begin{equation}\label{eqn:app-l2-L2-norm}
	\lim_{\delta \rightarrow 0}\quad \delta \|\psi_m\|^2_{l^2(\mathbb{Z};\mathbb{C})} = \frac{2}{\sqrt{3}} \|\Psi\|^2_{L^2(\mathbb{R})}.
\end{equation}
To show these, we observe that $|\Psi^2| \in \mathcal{S}(\mathbb{R})$ since $\Psi \in \mathcal{S}(\mathbb{R})$ and perform PSF in \eqref{eqn:psf-scaled} with $k=0$:
\begin{equation*}
	\|\psi_m\|^2_{l^2(\mathbb{Z};\mathbb{C})} = \sum_{m \in \mathbb{Z}} \left|\Psi\left(\frac{\sqrt{3}}{2}\delta m\right)\right|^2 = \sum_{m \in \mathbb{Z}} g\left(\frac{\sqrt{3}}{2}\delta m\right)= \frac{2}{\sqrt{3}\delta} \sum_{m \in \mathbb{Z}} \widehat{g}\left(\frac{4\pi m}{\sqrt{3}\delta}\right),
\end{equation*}
where $g(x) := |\Psi^2(x)|$. By separating the above sum with $m=0$ and $m \neq 0$, we obtain 
\begin{align*}
	&\delta \|\psi_m\|^2_{l^2(\mathbb{Z};\mathbb{C})} - \frac{2}{\sqrt{3}} \widehat{g}(0) = \frac{2}{\sqrt{3}} \sum_{m \neq 0} \widehat{g}\left(\frac{2\pi m}{\delta}\right) \lesssim \sum_{m \neq 0} \frac{\delta }{m^2}.
\end{align*}
The last inequality holds since $\widehat{g} \in \mathcal{S}(\mathbb{R})$ and we can bound $\widehat{g}$ by $|\widehat{g}(k)| \lesssim 1/|k|^2$. Since $\widehat{g}(0) = \|g\|_{L^1(\mathbb{R})} = \|\Psi\|^2_{L^2(\mathbb{R})}$, we obtain the desired bounds \eqref{eqn:l2-norm} and \eqref{eqn:app-l2-L2-norm}, i.e.
\begin{align*}
    \lim_{\delta \rightarrow 0} \quad \delta \|\psi_m\|^2_{l^2(\mathbb{Z};\mathbb{C})} = \frac{2}{\sqrt{3}} \|\Psi\|_{L^2(\mathbb{R})}^2.
\end{align*}


\setcounter{equation}{0}

\subsection{A review of classical results in exponential dichotomy theory}\label{app:proof-inv-dk}

In this section, we provide a self-contained review of some classical results from exponential dichotomy theory in \cite{palmer1984exponential} and prove the bound \eqref{eqn:inv-f-def} in Proposition \ref{prop:inv-Dk}.

Let us first review the definition of exponential dichotomy in terms of a 2-dimensional linear ODE system $\bm{\varphi}' = \bm{A}(t) \bm{\varphi}$, where $\bm{\varphi} \in \mathbb{R}^2$ and $\bm{A}(t) \in \mathbb{R}^{2\times 2}$. 
\begin{definition}[Exponential dichotomy on half lines]\label{def:exp-dich}
We say the system $\bm{\varphi}' = \bm{A}(t) \bm{\varphi}$ has an exponential dichotomy on $I_+ = [0,\infty)$ if there exists a projection matrix $\bm{P_+} \in \mathbb{R}^{2\times 2}$ and constants $K_+\geq 1, \alpha_+>0$ such that
\begin{subequations}\label{eqn:exp-dich-cst-pos}
    \begin{align}
        \|\bm{\Phi}(t) \bm{P}_+ \bm{\Phi}^{-1}(s)\| \leq K_+ e^{-\alpha_+(t-s)}, \qquad 0\leq s \leq t,\label{eqn:exp-dich-cst-1}\\
        \|\bm{\Phi}(t) (\bm{I} - \bm{P_+}) \bm{\Phi}^{-1}(s)\| \leq K_+ e^{-\alpha_+(s-t)}, \qquad s \geq t\geq 0,\label{eqn:exp-dich-cst-2}
    \end{align}
\end{subequations}
where $\bm{\Phi}(t)$ is the fundamental matrix for the ODE system. 

Similarly, we say the system $\bm{\varphi}' = \bm{A}(t) \bm{\varphi}$ has an exponential dichotomy on $(-\infty, 0]$ if there exists a projection matrix $\bm{P}_- \in \mathbb{R}^{2\times 2}$ and constants $K_-\geq 1, \alpha_->0$ such that
\begin{subequations}\label{eqn:exp-dich-cst-neg}
    \begin{align}
        \|\bm{\Phi}(t) \bm{P}_- \bm{\Phi}^{-1}(s)\| \leq K_- e^{-\alpha_-(t-s)}, \qquad s \leq t\leq 0,\label{eqn:exp-dich-cst-3}\\
        \|\bm{\Phi}(t) (\bm{I} - \bm{P_-}) \bm{\Phi}^{-1}(s)\| \leq K_- e^{-\alpha_-(s-t)}, \qquad 0\geq s \geq t.\label{eqn:exp-dich-cst-4}
    \end{align}
\end{subequations}

Moreover, we say the system $\bm{\varphi}' = \bm{A}(t) \bm{\varphi}$ has exponential dichotomy on both half lines if it has an exponential dichotomy on $[0,\infty)$ and $(-\infty,0]$ respectively.
\end{definition}

A simple but well-known example of a system $\bm{\varphi}' = \bm{A}(t) \bm{\varphi}$ exhibiting exponential dichotomy on both half lines is the constant-coefficient case $\bm{A}(t)\equiv \bm{A} \in \mathbb{R}^{2 \times 2}$ with $\det \bm A < 0$. In fact, one can check \eqref{eqn:exp-dich-cst-pos} and \eqref{eqn:exp-dich-cst-neg} by using $\bm{P}_+$ which projects onto the stable (negative-eigenvalue) subspace on $[0,\infty)$, and $\bm{P}_-$ which projects onto the unstable (positive-eigenvalue) subspace on $(-\infty,0]$. The constants $-\alpha_+$ is the negative eigenvalue of $\bm{A}$ and $\alpha_+$ is the positive eigenvalue of $\bm{A}$. Here we omit the details.

The property of exponential dichotomy is robust against certain small perturbations, which is described in Lemma 3.4 in \cite{palmer1984exponential}. Here we list it in terms of our setting:
\begin{lemma}\label{lemma:app-exp-dich-asy-0}
    Let $\bm{A}(t)$ and $\bm{B}(t)$ be two bounded and continuous $2\times 2$ matrix on $[0,\infty)$. If $\bm{\varphi}' = \bm{A}(t)\bm{\varphi}$ has an exponential dichotomy on $[0,\infty)$ and $\lim_{t \rightarrow \infty} \|\bm{B}(t)\| = 0$, then the perturbed system $\bm{\varphi}' = [\bm{A}(t) + \bm{B}(t)]\bm{\varphi}$ also has an exponential dichotomy on $[0,\infty)$.
\end{lemma}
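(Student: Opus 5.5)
The plan is a variation-of-constants argument that absorbs the decaying perturbation $\bm B(t)$ on a far half-line $[T,\infty)$, where it is small, and then transports the resulting dichotomy back to $[0,\infty)$. The first step uses the limit $\|\bm B(t)\|\to 0$ as $t\to\infty$: given $\varepsilon>0$, choose $T$ with $\sup_{t\ge T}\|\bm B(t)\|\le\varepsilon$. Restricting the dichotomy of $\bm\varphi'=\bm A\bm\varphi$ to $[T,\infty)$ gives the same constants $K_+,\alpha_+$. I will take $\varepsilon<\alpha_+/(4K_+)$, say.

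The second step is the roughness (Coppel) argument on $[T,\infty)$. Let $\bm\Phi$ be the fundamental matrix of the unperturbed system. For each $\xi$ in the range of $\bm P_+$, look for a solution of the perturbed system in the space of bounded continuous functions on $[T,\infty)$ through the integral equation
\[
\bm\varphi(t)=\bm\Phi(t)\bm P_+\bm\Phi^{-1}(T)\xi+\int_T^t \bm\Phi(t)\bm P_+\bm\Phi^{-1}(s)\bm B(s)\bm\varphi(s)\,ds-\int_t^\infty \bm\Phi(t)(\bm I-\bm P_+)\bm\Phi^{-1}(s)\bm B(s)\bm\varphi(s)\,ds .
\]
By \eqref{eqn:exp-dich-cst-1}--\eqref{eqn:exp-dich-cst-2}, the integral operator has norm at most $2K_+\varepsilon/\alpha_+<1$, so it is a contraction. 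This yields a subspace of solutions of the same dimension as $\operatorname{ran}\bm P_+$. Running the same estimate in weighted spaces with weight $e^{\beta(t-s)}$, $\beta<\alpha_+$, shows these solutions decay like $e^{-(\alpha_+-2K_+\varepsilon)(t-s)}$. A symmetric construction, starting from $\xi\in\operatorname{ran}(\bm I-\bm P_+)$ and solving backward with the corresponding integral equation, gives a complementary family that grows at the matching rate. From the perturbed fundamental matrix $\bm\Psi$ I then define the projection $\bm Q$ onto the first family along the second. The dichotomy bounds with constants $K',\alpha'$ follow from the standard Gronwall-type estimate in Coppel's lemma.

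The third step removes the restriction $t\ge T$. On the compact interval $[0,T]$ the perturbed fundamental matrix and its inverse are bounded, because $\bm A+\bm B$ is bounded and continuous; explicitly $\|\bm\Psi(t)\bm\Psi^{-1}(s)\|\le e^{C|t-s|}$. Extending the projection $\bm Q$ to $[0,\infty)$ through $\bm\Psi$ changes the dichotomy estimates only by a multiplicative factor of order $e^{(C+\alpha')T}$. That factor is absorbed into a larger constant $K$, which gives an exponential dichotomy on all of $[0,\infty)$.

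The main obstacle is the second step, and specifically checking that the projection built from the two perturbed solution families is bounded and satisfies both dichotomy inequalities uniformly in $s\le t$, not merely that the two families decay and grow individually. I would follow Coppel's proof: bound $\bm Q$ directly from the integral equations, then derive the two-sided estimates by applying the same fixed-point bounds to $\bm\Psi(t)\bm Q\bm\Psi^{-1}(s)$. In our $2\times2$ case this simplifies, since each subspace is one-dimensional and the angle between them stays bounded below for $\varepsilon$ small.
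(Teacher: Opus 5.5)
Your outline is correct: the paper gives no argument of its own and only cites Lemma 3.4 of Palmer. Your route, Coppel's roughness theorem on a tail $[T,\infty)$ where $\sup_{t\ge T}\|\bm B(t)\|$ is small, followed by extending the dichotomy across the compact interval $[0,T]$ at the cost of a larger constant, is the standard proof of that cited result. The one adjustment is that Coppel's theorem, in which the unstable family is fixed by taking $\ker\bm Q$ equal to the unperturbed unstable subspace at $t=T$, requires $\varepsilon<\alpha_+/(4K_+^2)$ rather than $\alpha_+/(4K_+)$; this is harmless since $\varepsilon$ can be taken as small as you like.
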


As mentioned above, a constant matrix with negative determinant has exponential dichotomy on both half lines $[0,\infty)$ and $(-\infty, 0]$. Therefore, a direct application of Lemma \ref{lemma:app-exp-dich-asy-0} on $[0,\infty)$ is the following Corollary:
\begin{corollary}\label{lemma:app-exp-dich-asy}
    Let $\bm{A} \in \mathbb{R}^{2\times 2}$ be a constant matrix with $\det \bm A < 0$ and $\bm{B}(t) \in \mathbb{R}^{2\times 2}$ be bounded and continuous on $[0,\infty)$. If $\lim_{t \rightarrow +\infty} \|\bm{B}(t)\| = 0$, then the perturbed system $\bm{\varphi}' = [\bm{A} + \bm{B}(t)]\bm{\varphi}$ has an exponential dichotomy on $[0,\infty)$.
\end{corollary}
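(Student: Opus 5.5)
The statement to prove is Corollary \ref{lemma:app-exp-dich-asy}. I would derive it directly from Lemma \ref{lemma:app-exp-dich-asy-0}, taking $\bm A(t)\equiv \bm A$ as the unperturbed system. That lemma needs only two inputs: the constant system $\bm\varphi'=\bm A\bm\varphi$ must have an exponential dichotomy on $[0,\infty)$, and $\bm B$ must be bounded, continuous and satisfy $\|\bm B(t)\|\to 0$. The latter two conditions are exactly the hypotheses, so the whole proof reduces to establishing the dichotomy for the constant-coefficient system.

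For that step, first note that $\det\bm A<0$ means the two eigenvalues of the real matrix $\bm A$ are real with opposite signs. They cannot be complex conjugates, since then the determinant would be $|\lambda|^2\ge 0$. Call them $-\alpha<0<\beta$. They are distinct, so $\bm A$ is diagonalizable over $\mathbb R$: write $\bm A=S\,\mathrm{diag}(-\alpha,\beta)S^{-1}$ with $S$ real and invertible.

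Next, take the fundamental matrix $\bm\Phi(t)=e^{t\bm A}$ and set $\bm P_+=S\,\mathrm{diag}(1,0)S^{-1}$. This is the real projection onto the stable eigenspace, and it commutes with $e^{t\bm A}$. Then
\[
\bm\Phi(t)\bm P_+\bm\Phi^{-1}(s)=S\,\mathrm{diag}\big(e^{-\alpha(t-s)},0\big)S^{-1},\qquad \bm\Phi(t)(\bm I-\bm P_+)\bm\Phi^{-1}(s)=S\,\mathrm{diag}\big(0,e^{\beta(t-s)}\big)S^{-1}.
\]
Let $K_+=\max\{1,\|S\|\|S^{-1}\|\}$ and $\alpha_+=\min\{\alpha,\beta\}>0$. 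The first operator then has norm at most $K_+e^{-\alpha_+(t-s)}$ for $s\le t$. The second has norm at most $K_+e^{-\alpha_+(s-t)}$ for $s\ge t$. These are exactly the bounds \eqref{eqn:exp-dich-cst-1} and \eqref{eqn:exp-dich-cst-2}.

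With the dichotomy in hand, Lemma \ref{lemma:app-exp-dich-asy-0} applies with $\bm A(t)\equiv\bm A$ and $\bm B(t)$, and yields an exponential dichotomy on $[0,\infty)$ for $\bm\varphi'=[\bm A+\bm B(t)]\bm\varphi$.

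The only point that needs any care is the eigenvalue argument in the second paragraph, namely that $\det\bm A<0$ forces real, distinct eigenvalues of opposite sign and hence a real diagonalization. Everything after that is routine.
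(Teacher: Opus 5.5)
Your proposal is correct and takes the same route as the paper. The paper also derives the corollary by applying Lemma \ref{lemma:app-exp-dich-asy-0} with $\bm A(t)\equiv\bm A$, after noting that a constant matrix with $\det\bm A<0$ has an exponential dichotomy via the projection onto its stable eigenspace; your diagonalization argument and explicit constants $K_+=\max\{1,\|S\|\|S^{-1}\|\}$ and $\alpha_+=\min\{\alpha,\beta\}$ just supply the details the paper omits.
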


A similar result of Corollary \ref{lemma:app-exp-dich-asy} holds for $(-\infty, 0]$ when $\lim_{t \rightarrow -\infty} \|\bm{B}(t)\| = 0$. Furthermore, when a linear system $\bm{\varphi}' = \bm{A}(t) \bm{\varphi}$ has exponential dichotomy on both half lines, we can explicitly solve for 
\begin{equation}\label{eqn:exp-f}
    \bm{\varphi}' - \bm{A}(t) \bm{\varphi} = \bm{f}
\end{equation}
for suitable $\bm{f} \in L^2(\mathbb{R})$. This argument is stated as
\begin{lemma}\label{lemma:exp-dich-sol}
    If $\bm A(t) \in \mathbb{R}^{2\times 2}$ is a bounded and continuous matrix on $t \in \mathbb{R}$ and the linear system $\bm{\varphi}' = \bm{A}(t) \bm{\varphi}$ has exponential dichotomy on both half lines, i.e. \eqref{eqn:exp-dich-cst-pos}-\eqref{eqn:exp-dich-cst-neg} hold, then the system \eqref{eqn:exp-f} has a bounded solution if $\bm{f}$ satisfies that
    \begin{equation}\label{eqn:exp-f-cond}
        \langle \bm{\Psi}, \bm{f}\rangle_{L^2(\mathbb{R})} = 0
    \end{equation}
    holds for all bounded solutions $\bm{\Psi}(t)$ satisfying $\bm{\Psi}' = -\bm{A}^*(t) \bm{\Psi}$.
\end{lemma}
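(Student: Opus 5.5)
We will construct the bounded solution of \eqref{eqn:exp-f} explicitly via the dichotomy projections, following Lemma 4.2 of \cite{palmer1984exponential}. Let $\bm\Phi(t)$ be the fundamental matrix with $\bm\Phi(0)=\bm I$, and let $\bm P_\pm$ be the projections from \eqref{eqn:exp-dich-cst-pos}--\eqref{eqn:exp-dich-cst-neg}. On $[0,\infty)$ we set
\begin{equation*}
\bm\varphi_+(t)=\bm\Phi(t)\bm P_+\bm\xi_+ + \int_0^t \bm\Phi(t)\bm P_+\bm\Phi^{-1}(s)\bm f(s)\,ds-\int_t^\infty \bm\Phi(t)(\bm I-\bm P_+)\bm\Phi^{-1}(s)\bm f(s)\,ds .
\end{equation*}
On $(-\infty,0]$ we set
\begin{equation*}
\bm\varphi_-(t)=\bm\Phi(t)(\bm I-\bm P_-)\bm\xi_- + \int_{-\infty}^t \bm\Phi(t)\bm P_-\bm\Phi^{-1}(s)\bm f(s)\,ds-\int_t^0 \bm\Phi(t)(\bm I-\bm P_-)\bm\Phi^{-1}(s)\bm f(s)\,ds .
\end{equation*}
Each solves \eqref{eqn:exp-f} on its half line, as one checks by differentiating. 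The free parameters $\bm\xi_\pm$ range over $\mathrm{Im}\,\bm P_+$ and $\ker\bm P_-$, respectively.

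Boundedness on each half line follows from the dichotomy estimates together with Young's inequality for the convolution-type kernels $K_\pm e^{-\alpha_\pm|t-s|}$, which lie in $L^1\cap L^2$. This yields $\|\bm\varphi_\pm\|_{L^\infty\cap L^2}\lesssim |\bm\xi_\pm|+\|\bm f\|_{L^2}$. For $\bm f\in L^2$, each integral converges, since Cauchy--Schwarz against $e^{-\alpha|t-s|}$ applies. The same argument gives the $L^2$ (hence $H^1$, via the equation) bound needed for \eqref{eqn:est-inv-Dk-u}.

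The remaining task is to glue the two pieces at $t=0$. We need $\bm\xi_\pm$ with
\begin{equation*}
\bm P_+\bm\xi_+-(\bm I-\bm P_-)\bm\xi_- = \bm c,\qquad
\bm c:=\int_{-\infty}^0\bm P_-\bm\Phi^{-1}(s)\bm f(s)\,ds+\int_0^\infty(\bm I-\bm P_+)\bm\Phi^{-1}(s)\bm f(s)\,ds .
\end{equation*}
This linear equation is solvable iff $\bm c\in \mathrm{Im}\,\bm P_+ + \ker\bm P_-$, i.e. iff $\bm c\perp(\mathrm{Im}\,\bm P_+ + \ker \bm P_-)^\perp$.

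This gluing step is the main obstacle, and we handle it with the standard adjoint identification. The bounded solutions of $\bm\Psi'=-\bm A^*\bm\Psi$ are exactly $\bm\Psi(t)=\bm\Phi^{-1}(t)^*\bm\zeta$ with $\bm\zeta\perp \mathrm{Im}\,\bm P_+$ and $\bm\zeta\perp\ker\bm P_-$. Indeed, the adjoint system has the dual dichotomy with projections $\bm I-\bm P_\pm^*$, so boundedness on $[0,\infty)$ forces $\bm P_+^*\bm\zeta=0$, and similarly at $-\infty$. For such $\bm\zeta$ we have $\bm\zeta^*\bm P_+=0$ and $\bm\zeta^*(\bm I-\bm P_-)=0$. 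Hence
\begin{equation*}
\bm\zeta^*\bm c=\int_{-\infty}^0\bm\zeta^*\bm\Phi^{-1}\bm f\,ds+\int_0^\infty\bm\zeta^*\bm\Phi^{-1}\bm f\,ds=\langle\bm\Psi,\bm f\rangle_{L^2(\mathbb R)},
\end{equation*}
which vanishes by \eqref{eqn:exp-f-cond}. This identity also justifies convergence, since bounded adjoint solutions decay exponentially.

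With $\bm c$ in the required subspace, we choose $\bm\xi_\pm$ (uniquely, modulo $\mathrm{Im}\,\bm P_+\cap\ker\bm P_-$, which is the span of the bounded homogeneous solutions). We can take them with $|\bm\xi_\pm|\lesssim|\bm c|\lesssim\|\bm f\|_{L^2}$. The glued function $\bm\varphi$ is then continuous at $0$, solves \eqref{eqn:exp-f} on $\mathbb R$, and is bounded, which completes the proof.
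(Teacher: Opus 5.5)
Your proof is correct and follows essentially the paper's route. The paper proves this lemma only by citing Lemma 4.2 of \cite{palmer1984exponential}. Your half-line representations and gluing equation are exactly those recorded in \eqref{eqn:exp-f-sol}--\eqref{eqn:xi-exp-dich} of Lemma \ref{lemma:exp-dich-bd}, except that you use two parameters $\bm\xi_\pm$ where the paper uses a single $\bm\xi$. Your identification of $\bm\zeta^*\bm c$ with $\langle\bm\Psi,\bm f\rangle_{L^2(\mathbb R)}$, via the dual dichotomy with projections $\bm I-\bm P_\pm^*$, is precisely the content of Palmer's argument, so nothing is missing.
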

Lemma \ref{lemma:exp-dich-bd} is a restatement of Lemma 4.2 in \cite{palmer1984exponential}. Moreover, any bounded solution $\bm \varphi$ to \eqref{eqn:exp-f} satisfies the following properties
\begin{lemma}\label{lemma:exp-dich-bd}
   Under the assumptions of Lemma \ref{lemma:exp-dich-sol}, any bounded solution $\bm{\varphi}$ of \eqref{eqn:exp-f} can be written as
    \begin{subequations}\label{eqn:exp-f-sol}
        \begin{align}
            \bm{\varphi}(t) &= \bm{\Phi}(t) \bm{P}_+ \bm{\xi} + \int_0^t \bm{\Phi}(t) \bm{P}_+ \bm{\Phi}^{-1}(s) \bm{f}(s) \: ds - \int_t^\infty \bm{\Phi}(t) (\bm{I} - \bm{P}_+) \bm{\Phi}^{-1}(s) \bm{f}(s) \: ds, \ t \geq 0,\label{eqn:exp-f-sol-1}\\
            \bm{\varphi}(t) &= \bm{\Phi}(t) (\bm{I} - \bm{P}_-) \bm{\xi} + \int_{-\infty}^t \bm{\Phi}(t) \bm{P}_- \bm{\Phi}^{-1}(s) \bm{f}(s) \: ds - \int_t^0 \bm{\Phi}(t) (\bm{I} - \bm{P}_-) \bm{\Phi}^{-1}(s) \bm{f}(s) \: ds, \ t \leq 0,\label{eqn:exp-f-sol-2}
        \end{align}
    \end{subequations}
    where the constant vector $\bm{\xi} \in \mathbb{R}^2$ is a solution to
    \begin{equation}\label{eqn:xi-exp-dich}
        \Big[\bm P_+ - (\bm I - \bm P_-)\Big] \bm \xi = \int_{-\infty}^0 \bm P_- \bm\Phi^{-1}(s) \bm f(s) \: ds + \int_0^{\infty} (\bm I - \bm P_+) \bm\Phi^{-1}(s) \bm f(s) \: ds.
    \end{equation}
    Moreover, the bounded solution $\bm{\varphi}$ in \eqref{eqn:exp-f-sol} satisfies the following estimate in terms of $\bm{f}$:
    \begin{equation}\label{eqn:exp-f-bd}
        \|\bm{\varphi}\|_{H^1(\mathbb{R})} \lesssim \|\bm{f}\|_{L^2(\mathbb{R})}.
    \end{equation}
\end{lemma}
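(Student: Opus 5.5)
The plan is to verify directly that the explicit formula \eqref{eqn:exp-f-sol} produces a solution, to derive the linear condition \eqref{eqn:xi-exp-dich} from continuity at $t=0$, and then to bound the result in $L^2$ using the dichotomy estimates \eqref{eqn:exp-dich-cst-pos}--\eqref{eqn:exp-dich-cst-neg}.

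\textbf{Representation.} On $[0,\infty)$, variation of constants shows that every solution of \eqref{eqn:exp-f} has the form
\[
\bm\varphi(t)=\bm\Phi(t)\bm\Phi^{-1}(0)\bm\varphi(0)+\int_0^t\bm\Phi(t)\bm\Phi^{-1}(s)\bm f(s)\,ds .
\]
Normalize $\bm\Phi(0)=\bm I$ and split the integrand using $\bm I=\bm P_++(\bm I-\bm P_+)$. By \eqref{eqn:exp-dich-cst-2}, the term $\int_t^\infty \bm\Phi(t)(\bm I-\bm P_+)\bm\Phi^{-1}(s)\bm f(s)\,ds$ converges and is bounded. Consequently $\bm\varphi$ is bounded on $[0,\infty)$ if and only if the unstable component
\[
(\bm I-\bm P_+)\Big(\bm\varphi(0)+\int_0^\infty \bm\Phi^{-1}(s)\bm f(s)\,ds\Big)
\]
vanishes. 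Under that condition $\bm\varphi$ takes the form \eqref{eqn:exp-f-sol-1} with $\bm P_+\bm\xi=\bm P_+\bm\varphi(0)$. The same argument on $(-\infty,0]$ gives \eqref{eqn:exp-f-sol-2}. Matching the two values at $t=0$, and taking a single $\bm\xi$ that carries both components, yields exactly \eqref{eqn:xi-exp-dich}. The solvability of \eqref{eqn:xi-exp-dich} is the content of Lemma \ref{lemma:exp-dich-sol}: the orthogonality condition \eqref{eqn:exp-f-cond} says that the right-hand side lies in the range of $\bm P_+-(\bm I-\bm P_-)$, as in Palmer's Lemma 4.2. Conversely, differentiating \eqref{eqn:exp-f-sol} and using $\bm\Phi'=\bm A\bm\Phi$ confirms that it solves \eqref{eqn:exp-f}.

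\textbf{$L^2$ bound.} Each integral term in \eqref{eqn:exp-f-sol} is dominated by a convolution of $|\bm f|$ with the kernel $K e^{-\alpha|t-s|}$, which lies in $L^1$. Young's inequality therefore bounds these terms in $L^2$ by a constant times $\|\bm f\|_{L^2}$.

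\textbf{Homogeneous part.} The terms $\bm\Phi(t)\bm P_+\bm\xi$ and $\bm\Phi(t)(\bm I-\bm P_-)\bm\xi$ decay exponentially, so their $L^2$ norms are $\lesssim|\bm\xi|$. We choose $\bm\xi$ as the minimal-norm solution of \eqref{eqn:xi-exp-dich}, with the kernel direction removed. In our application this is the $\bm\Psi_0$ direction, and it is removed by the constraint $\bm u\perp\bm\Psi_0$. For this choice $|\bm\xi|$ is bounded by a constant times the right-hand side of \eqref{eqn:xi-exp-dich}. By Cauchy--Schwarz together with the exponential decay of $\bm P_-\bm\Phi^{-1}(s)$ as $s\to-\infty$ and of $(\bm I-\bm P_+)\bm\Phi^{-1}(s)$ as $s\to+\infty$, that right-hand side is $\lesssim\|\bm f\|_{L^2}$.

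\textbf{Derivative.} Finally, $\bm\varphi'=\bm A\bm\varphi+\bm f$, and $\bm A$ is bounded, so $\|\bm\varphi'\|_{L^2}\le\|\bm A\|_\infty\|\bm\varphi\|_{L^2}+\|\bm f\|_{L^2}$. This gives \eqref{eqn:exp-f-bd}.

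The main obstacle is bounding $\bm\xi$ in the case where \eqref{eqn:xi-exp-dich} has a nontrivial kernel. I would handle it by fixing the complement of the kernel, as above, and using finite dimensionality: on that complement the map $\bm P_+-(\bm I-\bm P_-)$ is injective, hence bounded below. The resulting constant depends on the dichotomy data but not on $\bm f$.
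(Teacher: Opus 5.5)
Your proposal is correct and follows essentially the same route as the paper: the Young (Schur) bound for the dichotomy Green's kernel $Ke^{-\alpha|t-s|}$, the Cauchy--Schwarz bound on the right-hand side of \eqref{eqn:xi-exp-dich} via the dichotomy estimates at $t=0$, and the derivative bound from $\bm\varphi'=\bm A\bm\varphi+\bm f$; your variation-of-constants derivation of \eqref{eqn:exp-f-sol} replaces the paper's citation of Palmer. Your minimal-norm choice of $\bm\xi$ is genuinely needed, since $\bm P_+-(\bm I-\bm P_-)$ annihilates the initial value of any bounded homogeneous solution, so the estimate cannot hold for \emph{every} bounded solution, and the paper's step $\|\bm\xi\|\lesssim\|\bm f\|_{L^2(\mathbb{R})}$ tacitly makes such a choice; this is not literally the same as imposing $\bm u\perp\bm\Psi_0$, but that solution is the $L^2$-orthogonal projection of yours off the zero mode and so obeys the same bound.
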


\begin{proof}
    The explicit formula \eqref{eqn:exp-f-sol} for $\bm \varphi$ is presented on the top few lines of page 247 of \cite{palmer1984exponential} and the equation for $\bm \xi$ is presented on the bottom line of page 246 of \cite{palmer1984exponential}. To prove \eqref{eqn:exp-f-bd}, it suffices to show
    \begin{equation}\label{eqn:sm-l2}
        \|\bm{\varphi}\|_{L^2(\mathbb{R})} \lesssim \|\bm{f}\|_{L^2(\mathbb{R})},
    \end{equation}
    since the derivative $\bm \varphi'$ in \eqref{eqn:exp-f} is bounded by
    \begin{equation*}
        \|\bm\varphi'\|_{L^2(\mathbb{R})} \lesssim \|\bm A(t)\|_{L^\infty(\mathbb{R})} \|\bm\varphi\|_{L^2(\mathbb{R})} + \|\bm f\|_{L^2(\mathbb{R})} \lesssim \|\bm f\|_{L^2(\mathbb{R})}.
    \end{equation*}

    \paragraph{Proof of \eqref{eqn:sm-l2}:} We observe that using \eqref{eqn:exp-f-sol-1}, the solution $\bm\varphi(t)$ on $[0,\infty)$ becomes
    \begin{align}
        \bm\varphi(t)-\bm{\Phi}(t)\bm{P_+}\bm{\xi}
    &= \int_0^t \bm\Phi(t) \bm P_+\bm \Phi^{-1}(s)\bm f(s)\,ds
      -\int_t^\infty \bm \Phi(t)(\bm I-\bm P_+)\bm \Phi^{-1}(s)\bm f(s)\,ds \nonumber\\
      &=: \int_0^\infty \bm G_+(t,s)\bm f(s)\,ds, \qquad \forall \quad t\geq 0. \label{eqn:G-plus}
    \end{align}
    Similarly, using \eqref{eqn:exp-f-sol-2}, the solution $\bm\varphi(t)$ on $(-\infty,0]$ satisfies
    \begin{align}
        \bm\varphi(t)  - \bm\Phi(t)(\bm I-\bm P_-) \bm \xi &= \int_{-\infty}^t \bm \Phi(t)\bm P_-\bm \Phi^{-1}(s)\bm f(s)\,ds
                 -\int_t^0 \bm \Phi(t)(\bm I-\bm P_-)\bm \Phi^{-1}(s)\bm f(s)\,ds \nonumber\\
        &=: \int_{-\infty}^0 \bm G_-(t,s) \bm f(s)\,ds, \qquad \forall \quad t \leq 0.\label{eqn:G-minus}
    \end{align}
    Therefore, to prove \eqref{eqn:sm-l2}, it suffices to show the following bounds
    \begin{subequations}\label{eqn:sm-l2-eq}
        \begin{align}
            & \norm{\int_0^\infty \bm G_+(t,s)\bm f(s)\,ds}_{L^2([0,\infty))} \lesssim \norm{\bm f}_{L^2([0,\infty))}, \label{eqn:sm-l2-eq-1}\\
            & \norm{\int_0^\infty \bm G_-(t,s)\bm f(s)\,ds}_{L^2((-\infty,0])} \lesssim \norm{\bm f}_{L^2((-\infty,0])}, \label{eqn:sm-l2-eq-2}\\
            & \norm{\bm\Phi(t) \bm P_+ \bm \xi}_{L^2([0,\infty))} \lesssim \norm{\bm f}_{L^2(\mathbb{R})}, \qquad \norm{\bm\Phi(t)(\bm I-\bm P_-) \bm \xi}_{L^2((-\infty,0])} \lesssim \norm{\bm f}_{L^2(\mathbb{R})}.\label{eqn:sm-l2-eq-3}
        \end{align}
    \end{subequations}

    \paragraph{Proof of \eqref{eqn:sm-l2-eq-1} and \eqref{eqn:sm-l2-eq-2}:} To simplify our calculation, we define a scalar kernel $k_+(t,s)$
    \begin{equation*}
        k_+(t,s):=\|\bm G_+(t,s)\|, \qquad t,s\geq0.
    \end{equation*}
    Using the exponential dichotomy \eqref{eqn:exp-dich-cst-pos} on $[0,\infty)$, we obtain a bound on the kernel $k_+(t,s)$, i.e.
    \begin{equation*}
        k_+(t,s) \le
    \begin{cases}
    K_+e^{-\alpha_+(t-s)}, & 0\le s\le t,\\[1mm]
    K_+e^{-\alpha_+(s-t)}, & t\le s<\infty.
    \end{cases}
    \end{equation*}
    Therefore, for fixed $t\ge0$, we have
    \begin{align*}
    \|k_+(t,\cdot)\|_{L^1([0,\infty))}
    &\le K_+\!\left(\int_0^t e^{-\alpha_+(t-s)}\,ds
                  +\int_t^\infty e^{-\alpha_+(s-t)}\,ds\right)  \\
    &= K_+\!\left(\int_0^t e^{-\alpha_+u}\,du
                  +\int_0^\infty e^{-\alpha_+u}\,du\right)
     \le \frac{2K_+}{\alpha_+}.
    \end{align*}
    Similarly, for fixed $s\ge0$, we have
    \begin{equation*}
        \|k_+(\cdot,s)\|_{L^1([0,\infty))} \le \frac{2K_+}{\alpha_+}.
    \end{equation*}
    Let us take $h(s):=|\bm f(s)|$ for $s\ge0$. Then, for every $t\ge0$, we have
    \begin{align*}
        &\|\bm \varphi(t)\|
    = \Big\|\int_0^\infty \bm G_+(t,s) \bm f(s)\,ds\Big\|
    \le \int_0^\infty k_+(t,s)h(s)\,ds\\
    \Rightarrow \qquad & \|\bm \varphi\|_{L^2([0,\infty))}
    \le \Big\|\int_0^\infty k_+(\cdot,s)h(s)\,ds\Big\|_{L^2([0,\infty))}.
    \end{align*}
    By applying Young's inequality\footnote{The Young's inequality for kernels used in the above proof is stated as follows (a version with $r=1$ that we used here can be found in on page 9 in \cite{folland1995introduction}). } for kernels with $p=q=2$, $r=1$ and $C_x = C_y = \frac{2K_+}{\alpha_+}$, we obtain the desired bound \eqref{eqn:sm-l2-eq-1} with
    \begin{equation*}
        \Big\|\int_0^\infty k_+(\cdot,s)h(s)\,ds\Big\|_{L^2([0,\infty))}
    \le C_y^{1/2} C_x^{1/2}\|h\|_{L^2([0,\infty))}
    \le \frac{2K_+}{\alpha_+}\,\|\bm f\|_{L^2([0,\infty))}.
    \end{equation*}
    Similarly, we define the scalar kernel $k_-(t,s):=\|\bm G_-(t,s)\|$ with $t,s \in (-\infty,0]$ and it satisfies
    \begin{align*}
        \sup_{t\le0}\|k_-(t,\cdot)\|_{L^1((-\infty,0])}\le\frac{2K_-}{\alpha_-}, \qquad
    \sup_{s\le0}\|k_-(\cdot,s)\|_{L^1((-\infty,0])}\le\frac{2K_-}{\alpha_-}.
    \end{align*}
    Applying Young's inequality for kernels yields the desired bound \eqref{eqn:sm-l2-eq-2} with
    \begin{equation*}
    \|\bm \varphi\|_{L^2((-\infty,0])} \leq \frac{2K_-}{\alpha_-} \|\bm f\|_{L^2((-\infty,0])}.
    \end{equation*}

    \paragraph{Proof of \eqref{eqn:sm-l2-eq-3}:} Since $\bm \Phi(0) = \bm I$, by setting $s=0$ in \eqref{eqn:exp-dich-cst-1} and \eqref{eqn:exp-dich-cst-4}, we have
    \begin{align*}
        \norm{\bm\Phi(t) \bm P_+ \bm \xi}_{L^2([0,\infty))} \lesssim \norm{\bm \xi}, \qquad \norm{\bm\Phi(t) (\bm I - \bm P_-) \bm \xi}_{L^2([0,\infty))} \lesssim \norm{\bm \xi}.
    \end{align*}
    Therefore, it is sufficient to show 
    \begin{equation}
        \norm{\bm \xi} \lesssim \norm{\bm f}_{L^2(\mathbb{R})}. \label{eqn:bd-f-xi}
    \end{equation}
    Using \eqref{eqn:xi-exp-dich}, we obtain
    \begin{align*}
        \norm{\bm \xi} &\lesssim \Bigg|\int_{-\infty}^0 \bm P_- \bm\Phi^{-1}(t) \bm f(t) \: dt \Bigg| + \Bigg| \int_0^{\infty} (\bm I - \bm P_+) \bm\Phi^{-1}(t) \bm f(t) \: dt \Bigg|\\
        &\lesssim \norm{e^{\alpha_-s}}_{L^2_s((-\infty,0])} \norm{\bm f}_{L^2((-\infty,0])} + \norm{e^{-\alpha_+ s}}_{L^2_s([0,\infty))} \norm{\bm f}_{L^2([0,\infty))} \lesssim \norm{\bm f}_{L^2(\mathbb{R})},
    \end{align*}
    where the first inequality on the second line comes from applying $t=0$ in \eqref{eqn:exp-dich-cst-2} and \eqref{eqn:exp-dich-cst-3}. Thus, we complete the proof of Lemma \ref{lemma:exp-dich-bd}.
\end{proof}



\setcounter{equation}{0}
\subsection{Numerical scheme and numerical artifacts}\label{app:numerical-zero}

In this section, we present our numerical scheme in detail and explain the apparent numerical artifacts. 

\paragraph{Numerical scheme} We begin by describing the numerical scheme for the AC edge, using the unit cell shown in Figure \ref{fig:ac-ref-nodes}. To reproduce the Hamiltonian in \eqref{eqn:ham-hon-delta-approx} on the associated supercell with nodes $A_s, B_s, C_s, D_s$, it suffices to determine the hopping coefficients between neighboring nodes. Following the definition in \eqref{eqn:hopping-def-honeycomb}, for any pair of undeformed nearest-neighbor nodes $X,Y$, the hopping coefficient $t^\delta(X,Y)$ after applying the displacement $\bm u$ is
\begin{align*}
    t^\delta(X,Y) = \left \langle \frac{\bm{u}(X) - \bm{u}(Y)}{|X-Y|}, \frac{X-Y}{|X-Y|} \right \rangle.
\end{align*}
Then the Hamiltonian operator in \eqref{eqn:ham-hon-delta-approx} now becomes $H^\delta(q_\parallel)$ with
\begin{align*}
    (H^\delta(q_\parallel) {\Psi})_s^A &= t^\delta(A_s,B_{s+1}) \psi^B_{s+1} + t^\delta(A_s,B_s) \psi^B_s + t^\delta(A_s,D_{s+1} + \bm{v}_{\parallel}) e^{i 3 \pi q_\parallel} \psi^D_{s+1},\\
    (H^\delta(q_\parallel) {\Psi})^B_s &= t^\delta(A_s,B_s) \psi^A_s + t^\delta(A_{s-1},B_s) \psi^A_{s-1} + t^\delta(B_s,C_s) \psi^C_s,\\
    (H^\delta(q_\parallel) {\Psi})^C_s &= t^\delta(B_s,C_s) \psi^B_s + t^\delta(C_s,D_s) \psi^D_s + t^\delta(C_s,D_{s+1}) \psi^D_{s+1},\\
    (H^\delta(q_\parallel) {\Psi})^D_s &= t^\delta(C_s,D_s) \psi^C_s + t^\delta(C_{s-1},D_s) \psi^C_{s-1} + t^\delta(D_s,A_{s-1}-\bm{v}_{\parallel}) e^{-i 3 \pi q_\parallel} \psi^A_{s-1},
\end{align*}
where the factor $e^{\pm i 3 \pi q_\parallel}$ comes from the translation along the $\bm{v}_\parallel$ direction (see \eqref{eqn:num-bloch-ac}). The model on the supercell for the ZZ edge is obtained analogously, hence we omit the detail here.

\paragraph{Numerical artifacts} We now explain why there are six zero numerical eigenvalues when $q_\parallel$ is small. This arises from (i) our choice of unit cell in Figure \ref{fig:ac-ref-nodes} and (ii) boundary-localized modes that enter as numerical artifacts.

Let us first explain why our choice of unit cell leads to a doubling of all eigenvalues. This effect is a consequence of an additional symmetry that is introduced when we double the unit-cell size. The additional symmetry is easy to see in the unperturbed case $\delta=0$: the Hamiltonian $H^0(q_\parallel)$ on the infinite lattice takes a simple form since all hopping coefficients are equal to 1, i.e.
\begin{align*}
    (H^0(q_\parallel) {\Psi})_s^A &= \psi^B_{s+1} +  \psi^B_s + e^{i 3 \pi q_\parallel} \psi^D_{s+1},\\
    (H^0(q_\parallel) {\Psi})^B_s &= \psi^A_s + \psi^A_{s-1} + \psi^C_s,\\
    (H^0(q_\parallel) {\Psi})^C_s &= \psi^B_s + \psi^D_s + \psi^D_{s+1},\\
    (H^0(q_\parallel) {\Psi})^D_s &= \psi^C_s + \psi^C_{s-1} +  e^{-i 3 \pi q_\parallel} \psi^A_{s-1},
\end{align*}
where $\Psi_s = (\psi_s^A, \psi_s^B, \psi_s^C, \psi_s^D)$ with $s \in \mathbb{Z}$ is the wave function.

Since nodes $(C_s,D_s)$ are translated copy of $(A_s, B_s)$, we construct the following left-shift operator $T(q_\parallel)$, which shifts the oscillations to the left by mapping $(C_s,D_s)\mapsto(A_s,B_s)$ and $(A_{s-1},B_{s-1})\mapsto(C_s,D_s)$, together with phase factors $e^{\pm i\frac{3}{2}\pi q_\parallel}$, i.e.
\begin{align*}
    (T(q_\parallel)\psi)_s^A &= e^{i \frac{3}{2} \pi q_\parallel} \psi^C_s, & (T(q_\parallel)\psi)_s^B &= e^{i \frac{3}{2} \pi q_\parallel} \psi^D_s,\\
    (T(q_\parallel)\psi)_s^C &= e^{-i \frac{3}{2} \pi q_\parallel} \psi^A_{s-1}, & (T(q_\parallel)\psi)_s^D &= e^{-i \frac{3}{2} \pi q_\parallel} \psi^B_{s-1}.
\end{align*}
One can check that $H^0(q_\parallel)$ commutes with the left-shift operator $T(q_\parallel)$, i.e.
\begin{align}
    T(q_\parallel) H^0(q_\parallel) = H^0(q_\parallel) T(q_\parallel).\label{eqn:app-extra-sym}
\end{align}
Here, we only check $T(q_\parallel) H^0(q_\parallel) - H^0(q_\parallel)T(q_\parallel)$ on the $A_s$ node as an illustrative example:
\begin{align*}
    \Big(T(q_\parallel) H^0(q_\parallel) \psi\Big)_s^A &= e^{i \frac{3}{2} \pi q_\parallel} \Big(H^0(q_\parallel) \psi\Big)_s^C = e^{i \frac{3}{2} \pi q_\parallel} \Big(\psi^B_s + \psi^D_s + \psi^D_{s+1}\Big),\\
    \Big(H^0(q_\parallel) T(q_\parallel) \psi\Big)_s^A &= \Big(T(q_\parallel) \psi\Big)^B_{s+1} + \Big(T(q_\parallel) \psi\Big)^B_s + e^{i 3 \pi q_\parallel} \Big(T(q_\parallel) \psi\Big)^D_{s+1}\\
    &=e^{i \frac{3}{2} \pi q_\parallel} \psi^D_{s+1} + e^{i \frac{3}{2} \pi q_\parallel} \psi^D_s + e^{i \frac{3}{2} \pi q_\parallel} \psi^B_s =  \Big(T(q_\parallel) H^0(q_\parallel) \psi\Big)_s^A.
\end{align*}
Therefore, due to the symmetry \eqref{eqn:app-extra-sym}, we know that if $\psi$ is an eigenvector of $H^0(q_\parallel)$ with corresponding eigenvalue $E$, then $T(q_\parallel) \psi$ is also an eigenvector of $H^0(q_\parallel)$ associated with eigenvalue $E$. The two eigenvectors $\psi$ and $T(q_\parallel) \psi$ are linearly independent and share the same envelope behavior, which explains the doubled multiplicity of each eigenvalue of $H^0(q_\parallel)$.

We now explain the ``doubled'' multiplicity of each eigenvalue of $H^\delta(q_\parallel)$. In fact, a more accurate description is that each eigenvalue of $H^\delta(q_\parallel)$ is accompanied by another eigenvalue at an $O(\delta)$ distance. This behavior is a perturbative effect: at $\delta=0$, each eigenvalue of $H^0(q_\parallel)$ is exactly doubled due to the symmetry \eqref{eqn:app-extra-sym}. For small $\delta$, since the hopping coefficients vary slowly by an $O(\delta)$ perturbation\footnote{The $O(\delta)$ perturbation viewpoint does not apply to the quadratic deformation on the infinite lattice. Nevertheless, we numerically truncate the lattice at a large finite size with zero boundary conditions, and under this truncation the hopping coefficient differs from the $\delta=0$ case by $O(\delta)$.}, we have $\|H^\delta(q_\parallel) - H^0(q_\parallel)\| \lesssim \delta$
and the double eigenvalues generically split into two distinct eigenvalues separated by $O(\delta)$.

So far, we have explained how our choice of unit cell results in an apparent ``doubling'' of the eigenvalues of the infinite-lattice Hamiltonian $H^\delta(q_\parallel)$. We recall that our numerical scheme imposes a zero truncation on a large but finite piece and computes the eigenvalues of the truncated operator. For sufficiently large truncation size, the eigenvalues of the truncated problem approximate the discrete eigenvalues of $H^\delta(q_\parallel)$ with the same multiplicities. This explains the eigenvalue ``doubling'' in the computed band structures: each numerical band is accompanied by another band that lies extremely close to it (often on top of it).

\paragraph{Boundary modes at zero energy} We now turn to the numerical observation that the zero eigenvalue appears with multiplicity six. After factoring out the doubling effect, the zero eigenvalue has multiplicity three, and the two extra zero eigenmodes are in fact numerical artifacts of the zero truncation. We shall explain that the zero truncation generates a pair of boundary localized edge modes, one concentrated near the left boundary and the other near the right.

To find the edge modes among the zero eigenspace, we use an optimization scheme. We know that for a given small $q_\parallel$, the zero eigenspace for the zero-truncated version of $H^\delta(q_\parallel)$ has dimensional six. We name the zero eigenspace as $\mathcal{V}_6$. To find the edge modes in $\mathcal{V}_6$ that concentrate most on the left boundary, we use the following optimization problem
\begin{align}
    \max_{\psi \in \mathcal{V}_6} \quad \|P_\text{left} \psi\|_2^2,\label{eqn:max-left-bdry}
\end{align}
where $P_\text{left}$ denotes the projection onto a narrow boundary layer adjacent to the left edge. The optimizer of \eqref{eqn:max-left-bdry} is shown in the left panel of Figure \ref{fig:ac-bdry-modes}. 
We know that due to the doubling effect, there are two eigenmodes localized near the left boundary. To find the second left-edge mode, we maximize \eqref{eqn:max-left-bdry} over $\mathcal{V}_6$ after modulating out the optimizer of \eqref{eqn:max-left-bdry} (the second left-edge mode is very close to the first left-edge mode, so we omit its plot). By the same procedure, we obtain two right-edge modes and plot one in the right panel of Figure \ref{fig:ac-bdry-modes}. The remaining two modes in the zero eigenspace $\mathcal{V}_6$ are bulk modes, and we plot one of them in the middle panel of Figure \ref{fig:ac-bdry-modes}.

\begin{figure}[!htb]
	\centering
	\subfloat[]{
		\includegraphics[width=0.7\linewidth]{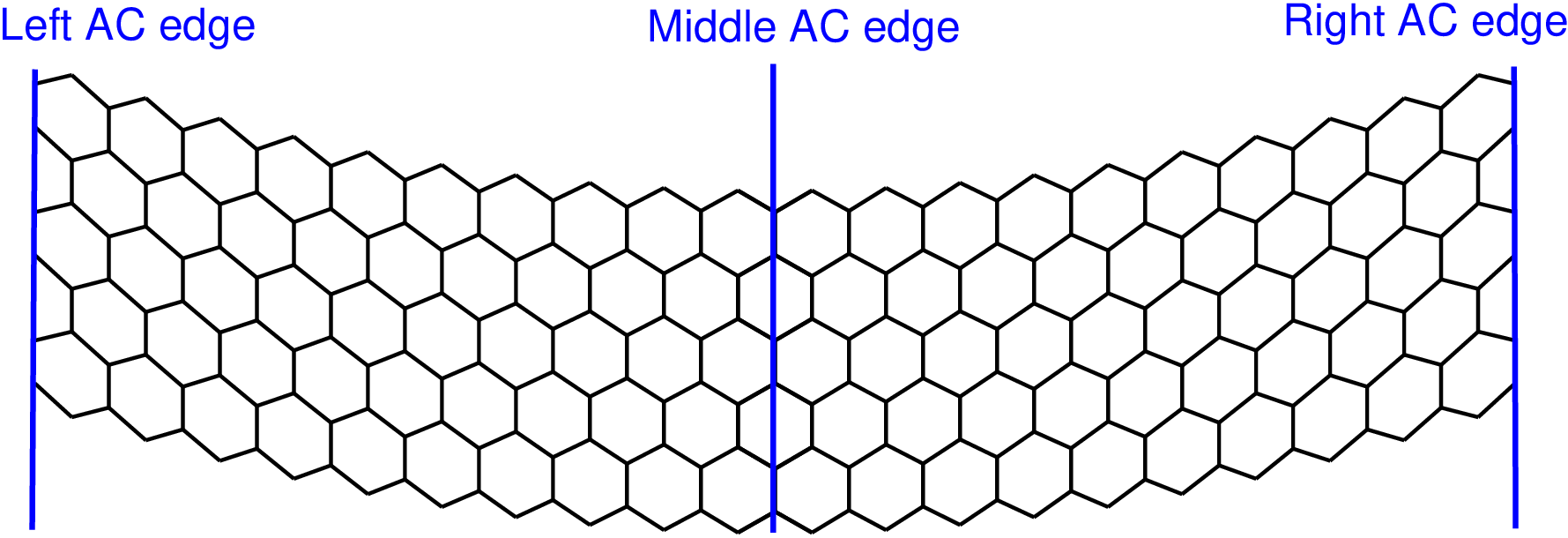}\label{fig:ac-bdry}
	}\\
	\subfloat[]{
		\includegraphics[width=0.7\linewidth]{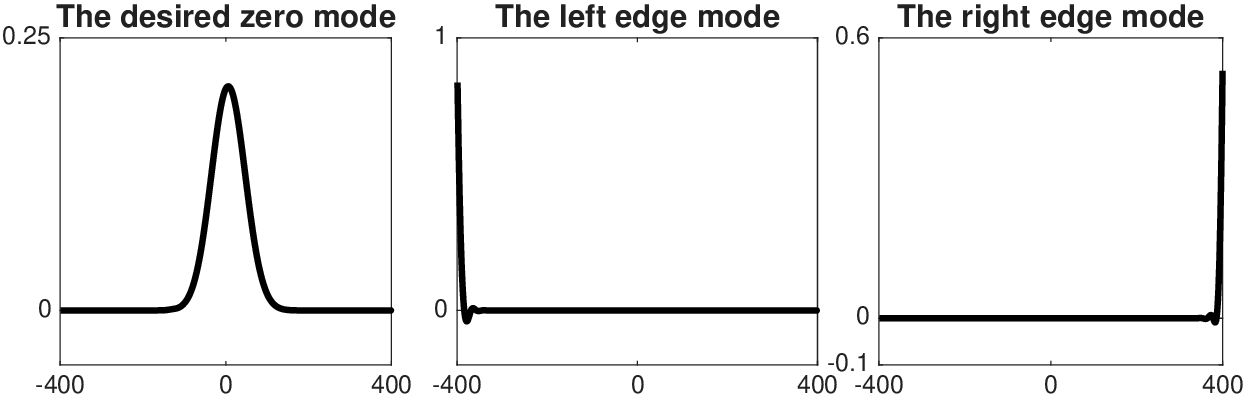}\label{fig:ac-bdry-modes}
	}
	\caption{Numerical artifacts in the zero eigenmodes for the quadratic deformation with $\delta = 0.04$ and $N_T = 400$: (a) the truncated honeycomb with left/right AC edges. (b) three computed zero modes: one bulk mode (physical) and two edge-modes that concentrate near boundaries (numerical artifacts).}
	\label{fig:numerical-artifacts}
\end{figure}


\setcounter{equation}{0}


\subsection{Derivation of the 2D effective envelope equations}\label{app:dev-2d-eff-eqn}

We derive the following hierarchy effective envelope equations for $\bm \Phi_j(\bm X)$ with $j \geq 0$ in \eqref{eqn:EXPAND}:
\begin{itemize}
    \item At order $\delta^0$: we obtain an algebraic equation \eqref{eqn:main-order-0}, i.e.
    \begin{align*}
        \widetilde{H}^0(\bm k) \begin{pmatrix}
            \Phi_0^A\big(\bm X \big)\\
            \Phi_0^B\big(\bm X \big)
        \end{pmatrix} = E_0 \begin{pmatrix}
            \Phi_0^A\big(\bm X \big)\\
            \Phi_0^B\big(\bm X \big)
        \end{pmatrix}
    \end{align*}
    where $\widetilde{H}^0(\bm k)$ is displayed in \eqref{eqn:tb-bloch-hamiltonian}.

    \item At order $\delta^1$: the equation \eqref{eqn:main-order-1} for $\bm \Phi_0(\bm X)$ and $E_1$ is 
    \begin{align*}
        \mathcal{H}_\text{eff}\begin{pmatrix}
            \Phi^A_0\big(\bm X \big)\\
            \Phi^B_0\big(\bm X \big)
        \end{pmatrix} = E_1\ \begin{pmatrix}
            \Phi^A_0\big(\bm X \big)\\
            \Phi^B_0\big(\bm X\big)
        \end{pmatrix}\ , 
    \end{align*}
    where $\mathcal{H}_\text{eff}$ is the effective Hamiltonian given in \eqref{eqn:eff-ham}.

    \item At order $\delta^2$: the equation \eqref{eqn:next-order} for $\bm \Phi_1(\bm X)$ and $E_2$ is
    \begin{equation*}
    	 \Big(\mathcal{H}_{\text{eff}} - E_1\Big) \bm{\Phi}_1 = E_2 \bm{\Phi}_0 + \bm{R}_2[\bm{\Phi}_0], \qquad \bm{R}_2[\bm{\Phi}_0] = (R_2^A, R_2^B)^T,
    \end{equation*}
    where $\bm{R}_2[\bm{\Phi}_0]$ is given in \eqref{eqn:2d-rem-terms}.
\end{itemize}

To obtain these equations, we substitute \eqref{eqn:main-exp-ansatz-2} into \eqref{eqn:multi-calc}. By expanding every term in \eqref{eqn:multi-calc} around $\bm X = \delta \Cell_{m,n}$ to order $\delta^2$, we obtain
\begin{subequations}\label{eqn:app-multi-calc}
    \begin{align}
        & \Big(\delta E_1 + \delta^2 E_2 \Big) \sum_{j=0}^2 \delta^j \Phi^A_j(\bm X)+ O(\delta^3) \label{eqn:app-multi-calc-1}\\
        =\:& \sum_{\nu = 1}^3 e^{i \bm{k} \cdot \bm w_\nu}  \sum_{j=0}^2 \delta^j \bigg(\Phi^B_j (\bm X) + \delta \nabla_{\bm{X}} \Phi^B_j (\bm X) \cdot \bm w_\nu + \frac{\delta^2}{2} \left \langle\bm w_\nu, \ \Big(D^2_{\bm X} \Phi^B_j (\bm X)\Big) \bm w_\nu \right \rangle \bigg) \nonumber\\
        +\:& \delta t_1 \sum_{\nu = 1}^3 e^{i \bm{k} \cdot \bm w_\nu} f_\nu(\bm X) \sum_{j=0}^2 \delta^j \bigg(\Phi^B_j (\bm X) + \delta \nabla_{\bm{X}} \Phi^B_j (\bm X) \cdot \bm w_\nu + \frac{\delta^2}{2} \left \langle\bm w_\nu, \ \Big(D^2_{\bm X} \Phi^B_j (\bm X)\Big) \bm w_\nu \right \rangle \bigg),\nonumber\\
        & \Big(\delta E_1 + \delta^2 E_2 \Big) \sum_{j=0}^2 \delta^j \Phi^B_j(\bm X)+ O(\delta^3) \label{eqn:app-multi-calc-2}\\
        =\:& \sum_{\nu = 1}^3 e^{-i \bm{k} \cdot \bm w_\nu}  \sum_{j=0}^2 \delta^j \bigg(\Phi^A_j(\bm X) - \delta \nabla_{\bm{X}} \Phi^A_j(\bm X) \cdot \bm w_\nu + \frac{\delta^2}{2} \left \langle\bm w_\nu, \Big(D^2_{\bm X} \Phi^A_j(\bm X)\Big) \bm w_\nu \right \rangle\bigg) \nonumber\\
        +\:& \delta t_1 \sum_{\nu = 1}^3 e^{-i \bm{k} \cdot \bm w_\nu} \bigg(f_\nu(\bm X) - \delta \nabla_{\bm{X}} f_\nu(\bm X) \cdot \bm w_\nu + \frac{\delta^2}{2} \left \langle\bm w_\nu, \Big(D^2_{\bm X} f_\nu(\bm X)\Big) \bm w_\nu \right \rangle\bigg) \nonumber\\
        &\times \sum_{j=0}^2 \delta^j \bigg(\Phi^A_j(\bm X) - \delta \nabla_{\bm{X}} \Phi^A_j(\bm X) \cdot \bm w_\nu + \frac{\delta^2}{2} \left \langle\bm w_\nu, \Big(D^2_{\bm X} \Phi^A_j(\bm X)\Big) \bm w_\nu \right \rangle\bigg) .\nonumber
    \end{align}
\end{subequations}

\paragraph{At order $\delta^0$:} By matching terms at order $\delta^0$, we obtain
\begin{align*}
    \sum_{\nu = 1}^3 e^{i \bm k \cdot \bm w_\nu} \Phi_0^B(\bm X)  = 0, \qquad \sum_{\nu = 1}^3 e^{-i \bm k \cdot \bm w_\nu} \Phi_0^A(\bm X) = 0,
\end{align*}
which is equivalent to its matrix-vector form
\begin{align}
    \widetilde{H}^0(\bm k) \ \bm \Phi_0(\bm X)  = 0, \label{eqn:app-2d-eff-order0}
\end{align}
where $\widetilde{H}^0(\bm k)$ is defined in \eqref{eqn:tb-bloch-hamiltonian}. Since $\widetilde{H}^0(\bm k)$ has zero eigenvalues only at $\bm k = \bm K$ or $\bm k = \bm K'$, we henceforth choose $\bm k = \bm K$. Notice that we cannot determine $\bm \Phi_0$ at this stage since $\widetilde{H}^0(\bm k) = 0$.

\paragraph{At order $\delta^1$:} By matching terms at order $\delta^1$, we obtain 
\begin{align*}
    E_1 \bm \Phi_0^A(\bm X) &= \sum_{\nu = 1}^3 e^{i \bm K \cdot \bm w_\nu} \Big(\nabla_{\bm{X}} \Phi_0^B(\bm X) \cdot \bm w_\nu + t_1 f_\nu(\bm X) \Phi_0^B(\bm X) \Big),\\
    E_1 \bm \Phi_0^B(\bm X) &= \sum_{\nu = 1}^3 e^{-i \bm K \cdot \bm w_\nu} \Big(-\nabla_{\bm{X}} \Phi_0^A(\bm X) \cdot \bm w_\nu + t_1 f_\nu(\bm X) \Phi_0^A(\bm X) \Big).
\end{align*}
By canceling out the terms related to $\bm \Phi_1$, we obtain an equivalent matrix-vector form 
\begin{align}
    \mathcal{H}_\text{eff} \ \bm \Phi_0(\bm X)= E_1 \ \bm \Phi_0(\bm X), \label{eqn:app-2d-eff-order1}
\end{align}
where $\mathcal{H}_\text{eff}$ is the same as the one in \eqref{eqn:eff-ham}.

\paragraph{At order $\delta^2$:} By matching all terms at order $\delta^2$, we obtain 
\begin{align*}
    &\Big(E_1 \Phi_1^A(\bm X) + E_2 \Phi_0^A(\bm X)\Big) \\
    =& \sum_{\nu = 1}^3 e^{i \bm K \cdot \bm w_\nu} \left(\frac{1}{2}  \left \langle\bm w_\nu, \ \Big(D^2_{\bm X} \Phi^B_0 (\bm X)\Big) \bm w_\nu \right \rangle + \nabla_{\bm X} \Phi^B_1(\bm X) \cdot \bm w_\nu + \Phi^B_2(\bm X)\right)\\
    +& t_1 \sum_{\nu = 1}^3 e^{i \bm K \cdot \bm w_\nu} f_\nu(\bm X) \Big(\nabla_{\bm X}\Phi_0^B(\bm X) \cdot \bm w_\nu + \Phi_1^B(\bm X)\Big),\\
    &\Big(E_1 \Phi_1^B(\bm X) + E_2 \Phi_0^B(\bm X)\Big) \\
    =& \sum_{\nu = 1}^3 e^{-i \bm K \cdot \bm w_\nu} \left(\frac{1}{2} \left \langle\bm w_\nu, \ \Big(D^2_{\bm X} \Phi^A_0 (\bm X)\Big) \bm w_\nu \right \rangle - \nabla_{\bm X} \Phi^A_1(\bm X) \cdot \bm w_\nu + \Phi^A_2(\bm X)\right)\bigg|_{\bm X = \delta \Cell_{m,n}}\\
    +& t_1 \sum_{\nu = 1}^3 e^{-i \bm K \cdot \bm w_\nu} \Big(-f_\nu(\bm X) \big(\nabla_{\bm X} \Phi_1^A(\bm X) \cdot \bm w_\nu\big) - \big(\nabla_{\bm X} f_\nu(\bm X) \cdot \bm w_\nu \big)\Phi_0^A(\bm X)  \Big),
\end{align*}
where the last line can be simplified to
\begin{align*}
    -t_1 \sum_{\nu = 1}^3 e^{-i \bm K \cdot \bm w_\nu} \Big(\nabla_{\bm X} \big(f_\nu(\bm X) \Phi_1^A(\bm X)\big) \cdot \bm w_\nu\big) \Big).
\end{align*}
By canceling out the terms related to $\bm \Phi_2$, we reorganize and obtain the equation for $\bm \Phi_1$
\begin{align}
    \Big(\mathcal{H}_{\text{eff}} - E_1\Big) \ \bm \Phi_1 (\bm X) = \bm E_2 \Phi_0 (\bm X) + \bm R_2[\Phi_0](\bm X) , \label{eqn:app-2d-eff-order2}
\end{align}
where $\bm R_2[\Phi_0] = \big(R_2^A[\bm \Phi_0], R_2^B[\bm \Phi_0]\big)^T$ is given by
\begin{subequations}\label{eqn:app-2d-R2}
    \begin{align}
        R_2^A[\bm \Phi_0] &= - \sum_{\nu = 1}^3 e^{i \bm K \cdot \bm w_\nu} \left(\frac{1}{2}  \left \langle\bm w_\nu, \ \Big(D^2_{\bm X} \Phi^B_0 (\bm X)\Big) \bm w_\nu \right \rangle + t_1 f_\nu(\bm X) \big(\nabla_{\bm X}\Phi_0^B(\bm X) \cdot \bm w_\nu \big) \right), \label{eqn:app-2d-R2-a}\\
        R_2^B[\bm \Phi_0] &= \sum_{\nu = 1}^3 e^{-i \bm K \cdot \bm w_\nu} \left(-\frac{1}{2} \left \langle\bm w_\nu, \ \Big(D^2_{\bm X} \Phi^A_0 (\bm X)\Big) \bm w_\nu \right \rangle + t_1 \nabla_{\bm X} \big(f_\nu(\bm X)\Phi_1^A(\bm X)\big) \cdot \bm w_\nu\right). \label{eqn:app-2d-R2-b}
    \end{align}
\end{subequations}
where $\bm R_2[\Phi_0]$ is the same as the one in \eqref{eqn:2d-rem-terms}. Thus, we complete our derivation for the effective envelope equations appeared in Section \ref{sec:derivation}.

\subsection{Derivation of the 1D effective envelope equations}\label{app:dev-1d-eff-eqn}

We derive the following 1D effective envelope equations for $\bm \Psi_0(X_1;k_\parallel), \bm \Psi_1(X_1;k_\parallel)$ in Section \ref{sec:uni-displacement}:
\begin{itemize}
    \item At order $\delta^1$: the equation for $\bm \Psi_0(X_1;k_\parallel)$ and $E_1(k_\parallel)$ is \eqref{eqn:1st-expansion}, i.e.
    \begin{align*}
        \mathcal{D}_{\bm A}(k_\parallel) \bm \Psi_0 = E_1(k_\parallel) \bm \Psi_0,
    \end{align*}
    where $\mathcal{D}_{\bm A}(k_\parallel)$ is given in \eqref{eqn:dirac-1d}.

    \item At order $\delta^2$: the equation for $\bm \Psi_1(X_1;k_\parallel)$ and $E_2(k_\parallel)$ is \eqref{eqn:2nd-expansion}, i.e.
    \begin{align*}
        \Big(\mathcal{D}_{\bm A}(k_\parallel) - E_1(k_\parallel)\Big) \bm \Psi_1 = E_2(k_\parallel) \ \bm \Psi_0 + \bm{\mathcal{R}}_2[\bm \Psi_0;k_\parallel],
    \end{align*}
    where $\bm{\mathcal{R}}_2[\bm \Psi_0;k_\parallel]$ are the remaining terms at $O(\delta^2)$ given in \eqref{eqn:app-rem-delta-2}.
\end{itemize}

To obtain these equations, we substitute the relation \eqref{eqn:phi-psi-relation} between $\bm \Psi_i(X_1;k_\parallel)$ and $\bm \Phi_i(\bm X)$ into the 2D evelope equations in Section \ref{sec:derivation}.

\paragraph{At order $\delta^1$:} We substitute \eqref{eqn:phi-psi-relation} into \eqref{eqn:main-order-0} and obtain
\begin{align*}
    &\mathcal{H}_{\text{eff}} \ \bm \Phi_0 = U \mathcal{D}_{\bm A} U^* \Big(U e^{i \frac{k_\parallel}{3} X_2} \bm \Psi_0\Big) = U \mathcal{D}_{\bm A} \Big(e^{i \frac{k_\parallel}{3} X_2} \bm \Psi_0\Big) = U e^{i \frac{k_\parallel}{3} X_2}\mathcal{D}_{\bm A}(k_\parallel) \bm \Psi_0,\\
    &E_1(k_\parallel) \ \bm \Phi_0 = E_1(k_\parallel) \ U e^{i \frac{k_\parallel}{3} X_2} \bm \Psi_0.
\end{align*}
Therefore \eqref{eqn:main-order-0} becomes $\mathcal{D}_{\bm A}(k_\parallel) \bm \Psi_0 = E_1(k_\parallel) \bm \Psi_0$, which is the same as \eqref{eqn:1st-expansion}.

\paragraph{At order $\delta^2$:} We substitute \eqref{eqn:phi-psi-relation} into \eqref{eqn:next-order} and \eqref{eqn:2d-rem-terms}. Following the same procedure, we obtain
\begin{align*}
    &\Big(\mathcal{D}_{\bm A}(k_\parallel) - E_1(k_\parallel)\Big) \bm \Psi_1 = E_2(k_\parallel) \ \bm \Psi_0 + \bm{\mathcal{R}}_2[\bm \Psi_0;k_\parallel],
\end{align*}
where $\bm{\mathcal{R}}_2[\bm \Psi_0;k_\parallel]$ is given by
\begin{align}
    \bm{\mathcal{R}}_2[\bm \Psi_0;k_\parallel] := U^* e^{-i \frac{k_\parallel}{3} X_2} \bm R_2[U e^{i \frac{k_\parallel}{3} X_2}\bm \Psi_0].\label{eqn:app-rem-psi}
\end{align}

We now simplify $\bm R_2[U e^{i \frac{k_\parallel}{3} X_2}\bm \Psi_0]$ in \eqref{eqn:2d-rem-terms}. Notice that for unidirectional displacement $\bm u = (0,d(X_1))^T$, $f_\nu(\bm X)$ in \eqref{eqn:f-u-relation} depends only on $X_1$ with
\begin{subequations}\label{eqn:ham-uni-dir}
    \begin{align}
        f_1(\bm X) &= \frac{\sqrt{3}}{4}\partial_{X_1} u_2 = \frac{\sqrt{3}}{4} d'(X_1), \label{eqn:ham-uni-dir-1}\\
        f_2(\bm X) &= -\frac{\sqrt{3}}{4}\partial_{X_1} u_2  = -\frac{\sqrt{3}}{4} d'(X_1), \label{eqn:ham-uni-dir-2} \qquad f_3(\bm X) = \partial_{X_2} u_2 = 0.
\end{align}
\end{subequations}
Therefore, we replace $f_1(\bm X)$ with $-f_2(\bm X)$ and write $\bm R_2[\bm \Phi_0]$ in \eqref{eqn:2d-rem-terms} as follows
\begin{subequations}\label{eqn:app-2d-rem-terms}
\begin{align}
    R_2^A[\bm \Phi_0] =& -\frac{ e^{i \frac{4\pi}{3}}}{2} \left \langle\bm w_2, \ \Big(D^2_{\bm X} \Phi_0^B(\bm X)\Big) \bm w_2 \right \rangle - \frac{e^{i \frac{2\pi}{3}}}{2}  \left \langle\bm w_3, \ \Big(D^2_{\bm X} \Phi_0^B(\bm X)\Big) \bm w_3 \right \rangle \nonumber \\
    &- t_1 e^{i \frac{4\pi}{3}} f_2(\bm{X}) \ \Big(\nabla_{\bm X} \Phi_0^B \cdot \bm w_2\Big), \label{eqn:app-2d-rem-terms-a}\\
	R_2^B[\bm \Phi_0] =& -\frac{ e^{-i \frac{4\pi}{3}}}{2} \left \langle\bm w_2, \ \Big(D^2_{\bm X} \Phi_0^A(\bm X)\Big) \bm w_2 \right \rangle - \frac{e^{-i \frac{2\pi}{3}}}{2}  \left \langle\bm w_3, \ \Big(D^2_{\bm X} \Phi_0^A(\bm X)\Big) \bm w_3 \right \rangle \nonumber\\
    &+ t_1 e^{-i \frac{4\pi}{3}} \nabla_{\bm X} \Big(f_2(\bm{X}) \Phi_0^A\Big) \cdot \bm w_2.\label{eqn:app-2d-rem-terms-b}
\end{align}
\end{subequations}

To further simplify $\bm{\mathcal{R}}_2[\Psi_0;k_\parallel]$ in \eqref{eqn:app-rem-psi} with $\bm \Phi_0(\bm X) = U e^{i\frac{k_\parallel}{3} X_2} \Psi_0(X_1;k_\parallel)$, we provide some useful calculations for a scalar function $G(\bm X) = e^{i \widetilde{k} X_2}g(X_1)$ with $\bm w = (w_1,w_2)^T$:
\begin{subequations}\label{eqn:app-calc-G}
    \begin{align}
        \left\langle\bm w, \ \Big(D^2_{\bm X} G(\bm X)\Big) \bm w \right\rangle &= w_1^2 \partial_{X_1}^2 G + 2w_1w_2 \partial_{X_1 X_2} G + w_2^2 \partial_{X_2}^2 G \nonumber\\
        &= e^{i \widetilde{k} X_2} \Big(w_1^2 \partial_{X_1}^2 g + 2w_1 w_2 (i \widetilde{k}) \partial_{X_1} g  - w_2^2 \widetilde{k}^2 g\Big),\label{eqn:app-calc-G1}\\
        \Big(\nabla_{\bm X} G(\bm X)\Big) \cdot \bm w &= e^{i \widetilde{k} X_2} \Big(w_1 \partial_{X_1} g + w_2 (i \widetilde{k}) g\Big).\label{eqn:app-calc-G2}
    \end{align}
\end{subequations}
Therefore, applying \eqref{eqn:app-calc-G} into \eqref{eqn:app-2d-R2} with $\widetilde{k} = \frac{k_\parallel}{3}$, we obtain
\begin{align*}
    R_2^A [U e^{i \frac{k_\parallel}{3} X_2} \bm \Psi_0]=&-e^{i\frac{\pi}{6}} e^{i \frac{k_\parallel}{3} X_2} \left[\frac{ e^{i \frac{4\pi}{3}}}{2} \Big( 3\partial_{X_1}^2 \Psi_0^B\Big) +\sqrt{3} t_1 e^{i\frac{4\pi}{3}} f_2(\bm X) \partial_{X_1}\Psi_0^B\right]\\
    &- e^{i\frac{\pi}{6}} e^{i \frac{k_\parallel}{3} X_2}\left[\frac{ e^{i \frac{2\pi}{3}}}{2} \Big(\frac{3}{4}\partial_{X_1}^2 \Psi_0^B + \frac{3\sqrt{3}}{2}\left(i\frac{k_\parallel}{3}\right)\partial_{X_1}\Psi_0^B - \frac{9}{4}\left(\frac{k_\parallel}{3}\right)^2\Psi_0^B\Big)\right],\\
    R_2^B [U e^{i \frac{k_\parallel}{3} X_2} \bm \Psi_0]=&-e^{-i\frac{\pi}{6}} e^{i \frac{k_\parallel}{3} X_2} \left[\frac{ e^{-i \frac{4\pi}{3}}}{2} \Big( 3\partial_{X_1}^2 \Psi_0^A\Big) - \sqrt{3} t_1 e^{-i\frac{4\pi}{3}} \partial_{X_1} \Big(f_2(\bm X)\ \Psi_0^A\Big)\right]\\
    &- e^{-i\frac{\pi}{6}} e^{i \frac{k_\parallel}{3} X_2}\left[\frac{ e^{-i \frac{2\pi}{3}}}{2} \Big(\frac{3}{4}\partial_{X_1}^2 \Psi_0^A + \frac{3\sqrt{3}}{2}\left(i\frac{k_\parallel}{3}\right)\partial_{X_1}\Psi_0^A - \frac{9}{4} \left(\frac{k_\parallel}{3}\right)^2\Psi_0^A\Big)\right].
\end{align*}
Therefore $\bm{\mathcal{R}}_2[\bm \Psi_0;k_\parallel] = U^* e^{-i \frac{k_\parallel}{3} X_2} \bm R_2[U e^{i\frac{k_\parallel}{3} X_2}\bm \Phi_0]$ can be simplified as
\begin{align*}
    &\mathcal{R}^A_2[\bm \Psi_0;k_\parallel] = e^{i\frac{\pi}{6}}e^{-i \frac{k_\parallel}{3} X_2} R_2^A [U e^{i \frac{k_\parallel}{3} X_2} \bm \Psi_0] \\
    =& \Big(-\frac{3}{2}e^{i\frac{5\pi}{3}} + \frac{3}{8}\Big) \partial_{X_1}^2 \Psi_0^B - \frac{1}{8}k_\parallel^2 \Psi_0^B + \Big(-\sqrt{3}t_1 e^{i\frac{5\pi}{3}} f_2(\bm X) + \frac{\sqrt{3}}{4} (ik_\parallel)\Big) \partial_{X_1} \Psi_0^B,\\
    & \mathcal{R}^B_2[\bm \Psi_0;k_\parallel] =  e^{-i\frac{\pi}{6}}e^{-i \frac{k_\parallel}{3} X_2} R_2^B [U e^{i \frac{k_\parallel}{3} X_2} \bm \Psi_0]\\
    =&\Big(-\frac{3}{2}e^{-i\frac{5\pi}{3}} + \frac{3}{8}\Big) \partial_{X_1}^2 \Psi_0^A - \frac{1}{8}k_\parallel^2 \Psi_0^A + \frac{\sqrt{3}}{4} (ik_\parallel) \partial_{X_1} \Psi_0^A + \sqrt{3}t_1 e^{-i\frac{5\pi}{3}} \partial_{X_1} \Big( f_2(\bm X) \Psi_0^A \Big),
\end{align*}
which are the same as the ones in \eqref{eqn:app-rem-delta-2}.

\subsection{The equation for $\widetilde{\eta}(k)$, the corrector}\label{app:eqn-eta-AB}

We present the detailed derivation of \eqref{eqn:eta-full} for the correctors $\widetilde{\bm \eta}(k)$. We first recall the eigenvalue problem \eqref{eqn:eig-def-hon}:
\begin{align*}
        \begin{pmatrix}
            \sum_{\nu=1}^3  \Big(1 + \delta t_1 f_\nu(\delta \Cell_{m,n}) \Big)\psi_{m+m_\nu,n+n_\nu}^B\\
            \sum_{\nu=1}^3  \Big(1 + \delta t_1 f_\nu(\delta \Cell_{m-m_\nu,n-n_\nu})\Big)\psi_{m-m_\nu,n-n_\nu}^A
        \end{pmatrix} &= E \begin{pmatrix}
            \psi_{m,n}^A\\
            \psi_{m,n}^B
        \end{pmatrix}.
\end{align*}
For simplicity, we denote
\begin{align}
    \ff_1(X_1) = \frac{\sqrt{3}}{4}d'(X_1), \qquad \ff_2(X_1) = -\frac{\sqrt{3}}{4} d'(X_1), \qquad \ff_3(X_1) = 0.\label{eqn:app-f-nu}
\end{align}
Therefore, we have $f_\nu(\delta \Cell_{m,n})= \ff_\nu\left(\frac{\sqrt{3}}{2}\delta m\right)$, where $\frac{\sqrt{3}}{2}m$ comes from the $x_1$-coordinate of $\Cell_{m,n}$ (see \eqref{eqn:ref-ab-honeycomb}). Then our eigenvalue problem becomes
\begin{align}
    \begin{pmatrix}
        \sum_{\nu=1}^3 \Big(1 + \delta t_1 \ff_\nu\left(\frac{\sqrt{3}}{2}\delta m\right) \Big)\psi_{m+m_\nu,n+n_\nu}^B\\
        \sum_{\nu=1}^3 \Big(1 + \delta t_1 \ff_\nu\left(\frac{\sqrt{3}}{2}\delta (m-m_\nu) \right)\Big)\psi_{m-m_\nu,n-n_\nu}^A
    \end{pmatrix} &= E \begin{pmatrix}
        \psi_{m,n}^A\\
        \psi_{m,n}^B
    \end{pmatrix}.\label{eqn:app-eig-pb}
\end{align}
We also recall the eigenstate and eigenvalue in \eqref{eqn:ansatz-1d}
\begin{align*}
    \bm \psi_{m,n} = \delta^{\frac{1}{2}} e^{i \bm K \cdot \bm x} e^{ik_\parallel X_2} U \Bigg(\bm \Psi_0(X_1) + \delta \bm \eta_m\Bigg)\Bigg|_{\substack{\bm x = \Cell_{m,n}\\ \bm X = \delta \Cell_{m,n}}}, \qquad E = \delta E_1 + \delta^2 \mu,
\end{align*}
where the $k_\parallel$-dependence in $\bm \Psi_0, E, E_1$ and $\mu$ are omitted for brevity. Substituting the explicit coordinates of $\Cell_{m,n}$ in \eqref{eqn:ref-ab-honeycomb}, the eigenstate \eqref{eqn:ansatz-1d} becomes 
\begin{align}
    \bm \psi_{m,n} = \delta^{\frac{1}{2}} e^{i \frac{2\pi m}{3}} e^{i k_\parallel \delta n} e^{-i k_\parallel \frac{1}{2} \delta m} U \left[\bm \Psi_0\left(\frac{\sqrt{3}}{2}\delta m\right) + \delta \bm \eta_m\right].\label{eqn:app-eta-eqn}
\end{align}

To derive the equation for $\bm \eta_m$, we substitute \eqref{eqn:app-eta-eqn} into the eigenvalue problem \eqref{eqn:app-eig-pb}. Since both sides of \eqref{eqn:app-eig-pb} have the factor $\delta^{\frac{1}{2}} e^{i \frac{2\pi m}{3}}  e^{i k_\parallel \delta n} e^{-i k_\parallel \frac{1}{2} \delta m}$, we cancel it out and obtain (we also multiple $e^{\pm i\frac{\pi}{6}}$ on both sides so that no phase factor appears in front of $E_1$)
\begin{subequations}\label{eqn:app-sim-eig}
    \begin{align}
        \bigg[\delta E_1 + \delta^2 \mu\bigg] &\Bigg[\Psi_0^A \left(\frac{\sqrt{3}}{2} \delta m\right) + \delta \eta_m^A \Bigg] =e^{i \frac{\pi}{3}} \Bigg[1 + \delta t_1 \ff_1 \left(\frac{\sqrt{3}}{2}\delta m\right)\Bigg] \Bigg[\Psi_0^B \left(\frac{\sqrt{3}}{2} \delta m\right) + \delta \eta_m^B\Bigg]\nonumber\\
         & + e^{i \frac{\pi}{6}} e^{i \frac{4\pi}{3}} \Bigg[1 + \delta t_1 \ff_2 \left(\frac{\sqrt{3}}{2}\delta m\right)\Bigg] \Bigg[\Psi_0^B \left(\frac{\sqrt{3}}{2} \delta (m+2)\right) + \delta \eta_{m+2}^B\Bigg] \nonumber\\
		&+ \:e^{i \frac{\pi}{6}} e^{i \frac{2\pi}{3}} e^{i \frac{1}{2} k_\parallel \delta} \Bigg[\Psi_0^B \left(\frac{\sqrt{3}}{2} \delta (m+1) \right) + \delta \eta_{m+1}^B\Bigg],\label{eqn:app-sim-eig-1a}\\
        \bigg[\delta E_1 + \delta^2 \mu\bigg] &\Bigg[\Psi_0^B \left(\frac{\sqrt{3}}{2} \delta m\right) + \delta \eta_m^B \Bigg] = e^{-i \frac{\pi}{3}} \Bigg[1 + \delta t_1 \ff_1 \left(\frac{\sqrt{3}}{2}\delta m\right)\Bigg] \Bigg[\Psi_0^A \left(\frac{\sqrt{3}}{2} \delta m\right) + \delta \eta_m^A\Bigg] \nonumber\\
		& + \: e^{-i \frac{\pi}{6}} e^{i \frac{4\pi}{3}} \Bigg[1 + \delta t_1 \ff_2 \left(\frac{\sqrt{3}}{2}\delta (m-2)\right)\Bigg] \Bigg[\Psi_0^A \left(\frac{\sqrt{3}}{2} \delta (m-2) \right) + \delta \eta_{m-2}^A\Bigg] \nonumber\\
		& +\:e^{-i \frac{\pi}{6}} e^{i \frac{2\pi}{3}} e^{i \frac{1}{2} k_\parallel \delta} \Bigg[\Psi_0^A \left(\frac{\sqrt{3}}{2} \delta (m-1) \right) + \delta \eta_{m-1}^A\Bigg].\label{eqn:app-sim-eig-1b}
    \end{align}
\end{subequations}

\paragraph{Derivation of \eqref{eqn:eta-full-1}} We first derive \eqref{eqn:eta-full-1} from the first equation in the eigenvalue problem \eqref{eqn:app-sim-eig-1a}. To seek the equation for $\widetilde{\bm\eta}(k)$, the scaled DFT \eqref{eqn:scale-dft} of $\bm \eta_m$, we multiply $e^{-ik\frac{\sqrt{3}}{2}m}$ on both sides of \eqref{eqn:app-sim-eig} and sum over $m$. We observe that for terms like $\bm \Psi_0\left(\frac{\sqrt{3}}{2} \delta m\right)$ and $\ff_\nu \left(\frac{\sqrt{3}}{2} \delta m\right) \bm \Psi_0\left(\frac{\sqrt{3}}{2} \delta m\right)$ with $\nu=1,2$, their scaled DFT simplify via the scaled Poisson summation formula presented in \eqref{eqn:psf-scaled} and \eqref{eqn:psf-scaled-1} to simplify them. Therefore \eqref{eqn:app-sim-eig-1a} becomes
\begin{align}
    & \big(\delta E_1 + \delta^2 \mu\big) \Bigg[\frac{2}{\sqrt{3} \delta}\sum_{m \in \mathbb{Z}} \widehat{\Psi_0^A}\left(\frac{k + \frac{4\pi}{\sqrt{3}}m}{\delta}\right) + \delta \widetilde{\eta}^A(k) \Bigg] \label{eqn:app-sim-eig-2}\\
    =\: &  e^{i \frac{\pi}{3}}
    \Big(1 + e^{i \frac{4\pi}{3}} e^{i \sqrt{3} k} + e^{i \frac{2\pi}{3}} e^{i \frac{1}{2}  \delta k_\parallel} e^{i \frac{\sqrt{3}}{2} k}\Big) \Bigg[ \frac{2}{\sqrt{3} \delta}\sum_{m \in \mathbb{Z}} \widehat{\Psi_0^B}\left(\frac{k + \frac{4\pi}{\sqrt{3}}m}{\delta}\right) + \delta \widetilde{\eta}^B(k)\Bigg] \nonumber\\
    +\: & \frac{2}{\sqrt{3}} e^{i \frac{\pi}{3}} t_1 \sum_{m \in \mathbb{Z}} \widehat{\ff_1\Psi_0^B}\left(\frac{k + \frac{4\pi}{\sqrt{3}}m}{\delta}\right) + \frac{2}{\sqrt{3}} e^{i \frac{\pi}{3}} e^{i \frac{4\pi}{3}} e^{i \sqrt{3} k} t_1 \sum_{m \in \mathbb{Z}} \widehat{\ff_2\Psi_0^B}\left(\frac{k + \frac{4\pi}{\sqrt{3}}m}{\delta}\right) \nonumber\\
    +\: & \delta^2 e^{i \frac{\pi}{3}} t_1 \sum_{m \in \mathbb{Z}} \ff_1\left(\frac{\sqrt{3}}{2}\delta m\right) \eta_m^B e^{-ik\frac{\sqrt{3}}{2}m} + \delta^2 e^{i \frac{\pi}{3}} e^{i \frac{4\pi}{3}} t_1 \sum_{m \in \mathbb{Z}} \ff_2\left(\frac{\sqrt{3}}{2}\delta m\right) \eta_{m+2}^B e^{-ik\frac{\sqrt{3}}{2}m} \nonumber\\
    +\: & \delta e^{i \frac{\pi}{3}} e^{i \frac{4\pi}{3}} t_1 \sum_{m \in \mathbb{Z}} \Bigg(\ff_2\left(\frac{\sqrt{3}}{2}\delta m\right) - \ff_2\left(\frac{\sqrt{3}}{2}\delta (m+2)\right)\Bigg) \Psi_0^B\left(\frac{\sqrt{3}}{2} \delta (m+2) \right)e^{-ik\frac{\sqrt{3}}{2}m}, \nonumber
\end{align}
where the last line of \eqref{eqn:app-sim-eig-2} comes from the mismatch of $\ff_2$ and $\Psi_0^B$ on $\frac{\sqrt{3}}{2} \delta m$ and $\frac{\sqrt{3}}{2} \delta (m+2)$ (see the second line of \eqref{eqn:app-sim-eig-1a}). Notice that the last line of \eqref{eqn:app-sim-eig-2} is of order $O(\delta^2)$ when $\ff_2' \in L^\infty(\mathbb{R})$.

\underline{\textit{Canceling leading order $\bm \Psi_0$ terms}:} Before we further simplify \eqref{eqn:app-sim-eig-2}, we provide a useful observation on the Fourier modes of $\bm \Psi_0$ with $|m|\geq 1$ in the sum -- they are arbitrarily small since $\bm \Psi_0$ decay fast. Let us first separate the Fourier modes related to $\bm \Psi_0$ in \eqref{eqn:app-sim-eig-2} as follows
\begin{equation*}
    \sum_{m \in \mathbb{Z}} \widehat{\bm \Psi_0}\left(\frac{k + \frac{4\pi}{\sqrt{3}}m}{\delta}\right) =\widehat{\bm \Psi_0}\left(\frac{k}{\delta}\right) + \sum_{|m| \geq 1} \widehat{\bm \Psi_0}\left(\frac{k + \frac{4\pi}{\sqrt{3}}m}{\delta}\right).
\end{equation*}
Since $\bm \Psi_0$ has exponential decay and $\bm \Psi_0\in H^s(\mathbb{R})$ for any $s \geq 1$ (see Lemma \ref{lemma:1d}), we have
\begin{equation}\label{eqn:sum-M-ge1-bd}
    \left\| \sum_{|m| \geq 1} \widehat{\bm \Psi_0}\left(\frac{k + \frac{4\pi}{\sqrt{3}}m}{\delta}\right) \right\|_{L^2\left(\left[-\frac{2\pi}{\sqrt{3}}, \frac{2\pi}{\sqrt{3}}\right]\right)}^2
\lesssim
\delta^{(2s+1)}\|\bm \Psi_0\|_{H^s(\mathbb{R})}^2, \quad \forall \: s \geq 1.
\end{equation}
The proof of \eqref{eqn:sum-M-ge1-bd} is presented in Appendix \ref{app:some-bds} (see \eqref{eqn:app-sum-M-bd}). Similarly, analogous bounds hold for $\ff_1 \bm \Psi_0$ and $\ff_2 \bm \Psi_0$, since $\ff_1, \ff_2$ and all their derivatives are $L^\infty(\mathbb{R})$.

We now simplify \eqref{eqn:app-sim-eig-2} by canceling the leading order terms related to $\bm \Psi_0$. We observe that the leading order contributions in \eqref{eqn:app-sim-eig-2} are $O(\delta^{-1})$, i.e. the terms appearing in the second line of \eqref{eqn:app-sim-eig-2} with $m=0$). We claim that both the $O(\delta^{-1})$ and $O(1)$ terms involving $\bm\Psi_0$ vanish since $\bm\Psi_0$ satisfies the effective equation \eqref{eqn:1st-expansion}. We formalize this claim in the following proposition:
\begin{proposition}\label{prop:app-cancel-psi0}
    Assume that $\bm \Psi_0$ satisfies \eqref{eqn:1st-expansion}. The $O(\delta^{-1})$ and $O(1)$ terms in \eqref{eqn:app-sim-eig-2} vanish, i.e.
    \begin{align}
        O(\delta)=&-\frac{2}{\sqrt{3}} E_1 \widehat{\Psi_0^A}\left(\frac{k}{\delta}\right) + e^{i \frac{\pi}{3}}
    \Big(1 + e^{i \frac{4\pi}{3}} e^{i \sqrt{3} k} + e^{i \frac{2\pi}{3}} e^{i \frac{1}{2}  \delta k_\parallel} e^{i \frac{\sqrt{3}}{2} k}\Big) \frac{2}{\sqrt{3} \delta}\widehat{\Psi_0^B}\left(\frac{k}{\delta}\right) \label{eqn:app-cancel-psi0}\\
    & \ + \frac{2}{\sqrt{3}} e^{i \frac{\pi}{3}} t_1 \widehat{\ff_1\Psi_0^B}\left(\frac{k}{\delta}\right) + \frac{2}{\sqrt{3}} e^{i \frac{\pi}{3}} e^{i \frac{4\pi}{3}} e^{i \sqrt{3} k} t_1 \widehat{\ff_2\Psi_0^B}\left(\frac{k}{\delta}\right),\nonumber
    \end{align}
    for sufficiently small $\delta $ and $k \in \Icell$.
\end{proposition}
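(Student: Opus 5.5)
The plan is to write $k=\delta\xi$, Taylor-expand the trigonometric multiplier to first order around $(k,\delta)=(0,0)$, and recognize the surviving $O(1)$ terms as the Fourier transform of the $A$-row of the effective equation \eqref{eqn:1st-expansion}, which vanishes. Everything left over is then bounded by $\delta(1+|\xi|^2)$ times Fourier transforms of $\bm\Psi_0$ and $\ff_\nu\bm\Psi_0$. The $O(\delta)$ in \eqref{eqn:app-cancel-psi0} is understood as pointwise in $k$ with this weight, and hence, after the change of variables, in $L^2_k$ as well.

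First I handle the multiplier
\[
S(k,\delta):=e^{i\frac{\pi}{3}}\Big(1+e^{i\frac{4\pi}{3}}e^{i\sqrt3 k}+e^{i\frac{2\pi}{3}}e^{i\frac12\delta k_\parallel}e^{i\frac{\sqrt3}{2}k}\Big).
\]
Since $1+e^{i4\pi/3}+e^{i2\pi/3}=0$, we have $S(0,0)=0$; this is exactly the Dirac-point condition $\sum_\nu e^{i\bm K\cdot\bm w_\nu}=0$. The function $S$ is smooth and entire in $(k,\delta)$, and $k$ ranges over the bounded cell $\Icell$ while $\delta<1$. Taylor's theorem with remainder therefore gives, uniformly on this domain,
\[
S(k,\delta)=\alpha k+\beta\delta k_\parallel+\mathcal{O}\big(k^2+\delta^2\big),
\]
where $\alpha=\partial_kS(0,0)=e^{i\frac{\pi}{3}}\big(i\sqrt3e^{i\frac{4\pi}{3}}+i\tfrac{\sqrt3}{2}e^{i\frac{2\pi}{3}}\big)$ and $\beta=\tfrac{i}{2}e^{i\pi}$. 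Substituting $k=\delta\xi$ yields
\[
\frac{2}{\sqrt3\delta}S(\delta\xi,\delta)\,\widehat{\Psi_0^B}(\xi)=\frac{2}{\sqrt3}(\alpha\xi+\beta k_\parallel)\widehat{\Psi_0^B}(\xi)+\mathcal{O}\big(\delta(1+\xi^2)\big)\widehat{\Psi_0^B}(\xi).
\]
In the same way, $e^{i\sqrt3k}=1+\mathcal{O}(\delta|\xi|)$ in the $\ff_2$ term. These two expansions account for all of the $\delta$-dependence in the four terms of \eqref{eqn:app-cancel-psi0}.

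Next I collect the $O(1)$ terms. Since $\alpha\xi\,\widehat g(\xi)=\widehat{(-i\alpha\,\partial_{X_1}g)}(\xi)$, and using $\ff_1=-\ff_2=\tfrac{\sqrt3}{4}d'$, the $O(1)$ part is $\tfrac{2}{\sqrt3}$ times the Fourier transform of
\[
-E_1\Psi_0^A-i\alpha\,\partial_{X_1}\Psi_0^B+\beta k_\parallel\Psi_0^B+t_1e^{i\frac{\pi}{3}}\big(1-e^{i\frac{4\pi}{3}}\big)\tfrac{\sqrt3}{4}d'\,\Psi_0^B.
\]
I will check by a direct computation that this is the $A$-row of $\big(\mathcal{D}(k_\parallel)-E_1\big)\bm\Psi_0$. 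The expected coefficients are $-i\alpha=-\tfrac{3}{2}i$ for the derivative term and $-i\tfrac32\kappa$ for the $\sigma_2$ term, with $\kappa=\tfrac{k_\parallel}{3}-\tfrac{t_1}{2}d'$. This is the same algebra that produced \eqref{eqn:dirac-1d} via the phase $U$ and the factor $e^{-i\frac{\sqrt3}{3}k_\parallel X_1}$, and the reason the scaled DFT \eqref{eqn:scale-dft} was chosen. By \eqref{eqn:1st-expansion} this expression vanishes identically. There is no separate $O(\delta^{-1})$ term left over, because $S(0,0)=0$ already absorbed it.

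Finally, the remainders are controlled by the regularity of $\bm\Psi_0$. Lemma \ref{lemma:1d}(b)--(c) gives $\bm\Psi_0\in H^s$ for all $s$, and $d'\in C_b^\infty$, so $\bm\Psi_0$, $\ff_\nu\bm\Psi_0$ and their derivatives lie in $H^2$. Each remainder has the form $\delta\, m(\xi)\,\widehat g(\xi)$ with $|m(\xi)|\lesssim 1+\xi^2$, so it is $O(\delta)$ with weight $(1+\xi^2)|\widehat g|$. Measured in $L^2_k\big(\Icell\big)$, it is bounded by $\delta\cdot\delta^{1/2}\|g\|_{H^2}$, using $\|h(\cdot/\delta)\|_{L^2_k}=\delta^{1/2}\|h\|_{L^2}$.

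The main obstacle is the bookkeeping of phases in the second step: verifying that $\alpha$, $\beta$ and the coefficient $e^{i\pi/3}(1-e^{i4\pi/3})$ match exactly the normalization $\tfrac32[-i\partial\sigma_1+\kappa\sigma_2]$ after the conjugations by $U$ and by the $k_\parallel$-phase. If a constant phase mismatch appears, I would trace it back to the factors $e^{\pm i\pi/6}$ multiplied into \eqref{eqn:app-sim-eig} and to the $e^{-ik_\parallel\delta m/2}$ phase in \eqref{eqn:app-eta-eqn}.
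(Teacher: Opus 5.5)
Your proposal is correct and follows the paper's proof: substitute $k=\delta\xi$, use $S(0,0)=e^{i\pi/3}+e^{i5\pi/3}-1=0$ to remove the $O(\delta^{-1})$ term, and, with $\ff_1=-\ff_2=\tfrac{\sqrt3}{4}d'$, identify the surviving $O(1)$ terms as $\tfrac{2}{\sqrt3}$ times the Fourier transform of the $A$-row of $(\mathcal{D}(k_\parallel)-E_1)\bm\Psi_0$, which vanishes. The phase bookkeeping you flag closes with no mismatch, since $\alpha=\tfrac32$, $\beta=-\tfrac{i}{2}$ and $e^{i\pi/3}(1-e^{i4\pi/3})=i\sqrt3$ give exactly $\tfrac32(-i\partial_{X_1}-i\kappa)\Psi_0^B$; your weighted remainder bound $\delta(1+\xi^2)|\widehat g(\xi)|$ also makes precise the $O(\delta)$ that the paper leaves informal.
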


\begin{proof}
Since $k/\delta$ appears frequently in \eqref{eqn:app-cancel-psi0}, we simplify \eqref{eqn:app-cancel-psi0} by taking $k = \delta \xi$. Then we simplify the $O(\delta^{-1})$ term in \eqref{eqn:app-cancel-psi0} by the expansions of $e^{i \sqrt{3} k}$ and $e^{i \frac{\sqrt{3}}{2}k}$
\begin{equation}
    e^{i \sqrt{3} k} \approx 1 + i \sqrt{3}\delta \xi - \frac{3}{2} \delta^2 \xi^2 + \dots, \qquad
        e^{i \frac{\sqrt{3}}{2} k} \approx 1 + i \frac{\sqrt{3}}{2} \delta \xi - \frac{3}{8} \delta^2 \xi^2 + \dots,\label{eqn:app-expo-exp}
\end{equation}
and the $O(\delta^{-1})$ term becomes $O(1)$ with
\begin{align}
    & e^{i \frac{\pi}{3}}
    \Big(1 + e^{i \frac{4\pi}{3}} e^{i \sqrt{3} k} + e^{i \frac{2\pi}{3}} e^{i \frac{1}{2} \delta k_\parallel} e^{i \frac{\sqrt{3}}{2} k}\Big) \frac{2}{\sqrt{3} \delta}\widehat{\Psi_0^B}\left(\frac{k}{\delta}\right) \label{eqn:sim-leading}\\
    =\:& \frac{2}{\sqrt{3}\delta} \Bigg[e^{i \frac{\pi}{3}} +e^{i \frac{5\pi}{3}} \left(1 + i \sqrt{3}\delta \xi + \dots\right) - \left(1+i\frac{1}{2}\delta k_\parallel + \dots\right) \left(1 + i \frac{\sqrt{3}}{2} \delta \xi + \dots\right)\Bigg] \widehat{\Psi_0^B}(\xi) \nonumber\\
    = \: & \frac{2}{\sqrt{3} \delta} \Bigg[ e^{i \frac{5\pi}{3}} i \sqrt{3} \delta \xi- i \frac{1}{2} \delta k_\parallel - i \frac{\sqrt{3}}{2} \delta \xi \Bigg] \widehat{\Psi_0^B}(\xi) + O(\delta) = \frac{2}{\sqrt{3}} \Bigg[ \frac{3}{2} \xi- i \frac{1}{2} k_\parallel \Bigg] \widehat{\Psi_0^B}(\xi) + O(\delta).\nonumber
\end{align}
The $O(\delta^{-1})$ term vanishes due to the equality $-1 + e^{i \frac{\pi}{3}} + e^{i \frac{5\pi}{3}} = 0$. 

Then we further simplify the $O(1)$ contributions in \eqref{eqn:app-cancel-psi0}. We notice that in addition to the $O(1)$ term in \eqref{eqn:sim-leading}, there are three extra $O(1)$ terms in \eqref{eqn:app-sim-eig-2}. In fact, using $\ff_1 = -\ff_2 = \frac{\sqrt{3}}{4} d'$ in \eqref{eqn:app-f-nu}, these four $O(1)$ contributions cancel since $\bm \Psi_0$ is the zero eigenfunction of $\mathcal{D}  - E_1$, i.e.
\begin{align}
    &-E_1 \frac{2}{\sqrt{3}} \widehat{\Psi_0^A}(\xi) + \frac{2}{\sqrt{3}} \Big( \frac{3}{2} \xi- i \frac{1}{2} k_\parallel \Big) \widehat{\Psi_0^B}(\xi) + \frac{2}{\sqrt{3}} e^{i \frac{\pi}{3}} t_1 \widehat{\ff_1\Psi_0^B}(\xi) +  \frac{2}{\sqrt{3}} e^{i \frac{5\pi}{3}} t_1 \widehat{\ff_2\Psi_0^B}(\xi)\nonumber\\
    =\:& \frac{2}{\sqrt{3}} \Bigg[-E_1 \widehat{\Psi_0^A} + \Big( \frac{3}{2} \xi- i \frac{1}{2} k_\parallel \Big) \widehat{\Psi_0^B} + \frac{\sqrt{3}}{4} t_1 \Big(e^{i \frac{\pi}{3}} - e^{i \frac{5\pi}{3}}\Big) \widehat{d'\Psi_0^B}\Bigg]\nonumber\\
    =\:& \frac{2}{\sqrt{3}} \Bigg[-E_1 \widehat{\Psi_0^A} + \Big( \frac{3}{2} \xi- i \frac{1}{2} k_\parallel \Big) \widehat{\Psi_0^B} + i \frac{3}{4} t_1 \widehat{d'\Psi_0^B}\Bigg] \nonumber\\
    = \: & \frac{2}{\sqrt{3}} \Bigg[-E_1 \Psi_0^A + \frac{3}{2}\Big(-i \partial_{X_1} - i \frac{k_\parallel}{3} + i\frac{t_1}{2} d'(X_1)\Big)\Psi_0^B\Bigg]^\wedge \qquad (\text{apply } \kappa(X_1) = \frac{k_\parallel}{3}- \frac{t_1}{2} d'(X_1))\nonumber\\
    = \: & \frac{2}{\sqrt{3}} \Bigg[-E_1 \Psi_0^A + \frac{3}{2}\Big(-i \partial_{X_1} - i\kappa(X_1) \Big)\Psi_0^B\Bigg]^\wedge = 0.\label{eqn:app-psi-cancel}
\end{align}
Thus, the $O(\delta^{-1})$ and $O(1)$ terms in \eqref{eqn:app-cancel-psi0} vanish, and we complete our proof of \eqref{eqn:app-cancel-psi0}.
\end{proof}

\underline{\textit{After canceling the leading-order $\bm \Psi_0$ terms:}} Using Proposition \ref{prop:app-cancel-psi0}, we cancel the $O(\delta^{-1})$ and $O(1)$ terms in \eqref{eqn:app-sim-eig-2} and obtain
\begin{align}
    &\delta^2 E_1 \widetilde{\eta}^A(k) + \delta^3 \mu \widetilde{\eta}^A(k) + \frac{2}{\sqrt{3}} E_1 \sum_{|m| \geq 1} \widehat{\Psi_0^A}\left(\frac{k + \frac{4\pi}{\sqrt{3}}m}{\delta}\right) + \frac{2}{\sqrt{3}} \delta \mu \sum_{m \in \mathbb{Z}} \widehat{\Psi_0^A}\left(\frac{k + \frac{4\pi}{\sqrt{3}}m}{\delta}\right) \label{eqn:app-eta-A}\\
        =\: & \delta e^{i \frac{\pi}{3}}
        \Big(1 + e^{i \frac{4\pi}{3}} e^{i \sqrt{3} k} + e^{i \frac{2\pi}{3}} e^{i \frac{1}{2}  \delta k_\parallel} e^{i \frac{\sqrt{3}}{2} k}\Big)  \widetilde{\eta}^B(k) \nonumber\\
        +\: & \delta^2 e^{i \frac{\pi}{3}} t_1 \sum_{m \in \mathbb{Z}} \ff_1\left(\frac{\sqrt{3}}{2}\delta m\right) \eta_m^B e^{-ik\frac{\sqrt{3}}{2}m} + \delta^2 e^{i \frac{\pi}{3}} e^{i \frac{4\pi}{3}} t_1 \sum_{m \in \mathbb{Z}} \ff_2\left(\frac{\sqrt{3}}{2}\delta m\right) \eta_{m+2}^B e^{-ik\frac{\sqrt{3}}{2}m} \nonumber\\
        + \: & \frac{2}{\sqrt{3} \delta} e^{i \frac{\pi}{3}} 
        \Bigg[e^{i \frac{4\pi}{3}} \left(e^{i \sqrt{3} k}-1- i \sqrt{3}k\right) + e^{i \frac{2\pi}{3}} \left(e^{i \frac{1}{2}  \delta k_\parallel} e^{i \frac{\sqrt{3}}{2} k} -1 - i\frac{1}{2}\delta k_\parallel - i \frac{\sqrt{3}}{2}k\right)\Bigg] \widehat{\Psi_0^B}\left(\frac{k}{\delta}\right) \nonumber\\
        + \: &\frac{2}{\sqrt{3} \delta} e^{i \frac{\pi}{3}}
        \Big(1 + e^{i \frac{4\pi}{3}} e^{i \sqrt{3} k} + e^{i \frac{2\pi}{3}} e^{i \frac{1}{2}  \delta k_\parallel} e^{i \frac{\sqrt{3}}{2} k}\Big) \sum_{|m| \geq 1} \widehat{\Psi_0^B}\left(\frac{k + \frac{4\pi}{\sqrt{3}}m}{\delta}\right) \nonumber\\
        +\: & \frac{2}{\sqrt{3}} e^{i \frac{\pi}{3}} t_1 \sum_{|m| \geq 1} \widehat{\ff_1\Psi_0^B}\left(\frac{k + \frac{4\pi}{\sqrt{3}}m}{\delta}\right) + \frac{2}{\sqrt{3}} e^{i \frac{\pi}{3}} e^{i \frac{4\pi}{3}} t_1 \sum_{|m| \geq 1} \widehat{\ff_2\Psi_0^B}\left(\frac{k + \frac{4\pi}{\sqrt{3}}m}{\delta}\right) \nonumber\\
        +\:& \delta e^{i \frac{\pi}{3}} e^{i \frac{4\pi}{3}} t_1 \sum_{m \in \mathbb{Z}} \Bigg(\ff_2\left(\frac{\sqrt{3}}{2}\delta m\right) - \ff_2\left(\frac{\sqrt{3}}{2}\delta (m+2)\right)\Bigg) \Psi_0^B\left(\frac{\sqrt{3}}{2} \delta (m+2) \right)e^{-ik\frac{\sqrt{3}}{2}m}.\nonumber
\end{align}
Notice that the leading order terms in \eqref{eqn:app-eta-A} are $O(\delta)$. To simplify \eqref{eqn:app-eta-A}, we denote
\begin{subequations}
    \begin{align}
        & \Gamma_1[k,k_\parallel,\delta] := e^{i \frac{\pi}{3}}
        \Big(1 + e^{i \frac{4\pi}{3}} e^{i \sqrt{3} k} + e^{i \frac{2\pi}{3}} e^{\frac{i}{2}  \delta k_\parallel} e^{i \frac{\sqrt{3}}{2} k}\Big), \label{eqn:app-gamma-1}\\
        &\Gamma_3[k,k_\parallel,\delta] := \Gamma_1[k,k_\parallel,\delta] - k\partial_k\Gamma_1[0,k_\parallel,\delta] \label{eqn:gamma-3}\\
    =\:&e^{i \frac{\pi}{3}} 
        \Bigg[e^{i \frac{4\pi}{3}} \left(e^{i \sqrt{3} k}-1- i \sqrt{3}k\right) + e^{i \frac{2\pi}{3}} \left(e^{ \frac{i}{2}  \delta k_\parallel} e^{i \frac{\sqrt{3}}{2} k} -1 - \frac{i}{2}\delta k_\parallel - i \frac{\sqrt{3}}{2}k\right)\Bigg].\nonumber
    \end{align}
\end{subequations}
Then we obtain \eqref{eqn:eta-full-1} by grouping all terms related to $\bm \Psi_0$ together and dividing $\delta$ on both sides, i.e.
\begin{align}
    &- \Gamma_1[k,k_\parallel,\delta] \widetilde{\eta}^B(k) + \delta E_1 \widetilde{\eta}^A(k) + \delta^2 \mu \widetilde{\eta}^A(k) + \delta \widetilde{F}_1[k;\widetilde{\bm \eta}] = \widetilde{I}_1[k;\bm \Psi_0, \mu, \delta], \label{eqn:app-eta-full-1}
\end{align}
where $\widetilde{F}_1[k;\widetilde{\bm \eta}]$ and $\widetilde{I}_1[k;\bm \Psi_0, \mu, \delta]$ are defined as
\begin{align}
    &\widetilde{F}_1[k;\widetilde{\bm \eta}] = -e^{i \frac{\pi}{3}} t_1\sum_{m \in \mathbb{Z}} \ff_1\left(\frac{\sqrt{3}}{2}\delta m\right) \eta_m^B e^{-ik\frac{\sqrt{3}}{2}m} -  e^{i \frac{5\pi}{3}} t_1 \sum_{m \in \mathbb{Z}} \ff_2\left(\frac{\sqrt{3}}{2}\delta m\right) \eta_{m+2}^B e^{-ik\frac{\sqrt{3}}{2}m},\label{eqn:app-F1}\\
    &\widetilde{I}_1[k;\bm \Psi_0, \mu, \delta] :=  -\frac{2}{\sqrt{3}} \mu \sum_{m \in \mathbb{Z}} \widehat{\Psi_0^A}\left(\frac{k + \frac{4\pi}{\sqrt{3}}m}{\delta}\right) + I_{1,\text{ind}} [k;\bm \Psi_0,\mu, \delta] \label{eqn:I1-full}\\
     &I_{1,\text{ind}} [k;\bm \Psi_0, \mu, \delta]:= - \frac{2}{\sqrt{3}\delta} E_1 \sum_{|m| \geq 1} \widehat{\Psi_0^A}\left(\frac{k + \frac{4\pi}{\sqrt{3}}m}{\delta}\right) \nonumber\\
     & \quad + \frac{2}{\sqrt{3} \delta^2} \Gamma_3[k,k_\parallel,\delta] \widehat{\Psi_0^B}\left(\frac{k}{\delta}\right) + \frac{2}{\sqrt{3} \delta^2} \Gamma_1[k,k_\parallel,\delta] \sum_{|m| \geq 1} \widehat{\Psi_0^B}\left(\frac{k + \frac{4\pi}{\sqrt{3}}m}{\delta}\right) \nonumber\\
    & \quad + \frac{2}{\sqrt{3}\delta} e^{i \frac{\pi}{3}} t_1 \sum_{|m| \geq 1} \widehat{\ff_1\Psi_0^B}\left(\frac{k + \frac{4\pi}{\sqrt{3}}m}{\delta}\right) + \frac{2}{\sqrt{3}\delta} e^{i \frac{5\pi}{3}} t_1 \sum_{|m| \geq 1} \widehat{\ff_2\Psi_0^B}\left(\frac{k + \frac{4\pi}{\sqrt{3}}m}{\delta}\right) \nonumber\\
    & \quad + e^{i \frac{5\pi}{3}} t_1 \sum_{m \in \mathbb{Z}} \Bigg(\ff_2\left(\frac{\sqrt{3}}{2}\delta m\right)- \ff_2\left(\frac{\sqrt{3}}{2}\delta (m+2)\right)\Bigg) \Psi_0^B\left(\frac{\sqrt{3}}{2} \delta (m+2) \right)e^{-ik\frac{\sqrt{3}}{2}m}. \nonumber
\end{align}
Thus, we complete our derivation of \eqref{eqn:eta-full-1}.

\paragraph{Derivation of \eqref{eqn:eta-full-2}} Following a similar procedure, we derive \eqref{eqn:eta-full-2} for $\widetilde{\bm \eta}(k)$ from the second line of the eigenvalue problem \eqref{eqn:app-sim-eig-1b}. Here we omit the details and present the final result
\begin{align}
    & - \Gamma_2(k,k_\parallel,\delta) \ \widetilde{\eta}^A(k) + \delta E_1 \widetilde{\eta}^B(k) + \delta^2 \mu \widetilde{\eta}^B(k) + \delta \widetilde{F}_2[k;\widetilde{\bm \eta}] = \widetilde{I}_2[k;\bm \Psi_0, \mu, \delta](k),
\end{align}
where the constant $\Gamma_2(k,k_\parallel,\delta)$ is given by
\begin{align}
    \Gamma_2(k,k_\parallel,\delta) = e^{-i \frac{\pi}{3}}
        \Big(1 + e^{-i \frac{4\pi}{3}} e^{-i \sqrt{3} k} + e^{-i \frac{2\pi}{3}} e^{- \frac{i}{2} \delta k_\parallel} e^{-i \frac{\sqrt{3}}{2} k}\Big).
\end{align}
The terms $\widetilde{F}_2[k;\widetilde{\bm \eta}]$ and $\widetilde{I}_2[k; \bm \Psi_0, \mu, \delta]$ are defined as
\begin{align}
    & \widetilde{F}_2[k;\widetilde{\bm \eta}] = - e^{-i \frac{\pi}{3}} t_1 \sum_{m \in \mathbb{Z}} \ff_1\left(\frac{\sqrt{3}}{2}\delta m\right) \eta_m^B e^{-ik\frac{\sqrt{3}}{2}m}\label{eqn:app-F2}\\
    &\qquad \qquad \ - e^{-i \frac{5\pi}{3}} t_1 \sum_{m \in \mathbb{Z}} \ff_2\left(\frac{\sqrt{3}}{2}\delta (m-2)\right) \eta_{m-2}^B e^{-ik\frac{\sqrt{3}}{2}m}\nonumber \\
     &\widetilde{I}_2[k;\bm \Psi_0, \mu, \delta] := -\frac{2}{\sqrt{3}} \mu \sum_{m \in \mathbb{Z}} \widehat{\Psi_0^B}\left(\frac{k + \frac{4\pi}{\sqrt{3}}m}{\delta}\right) + \widetilde{I}_{2,\text{ind}}[k; \bm \Psi_0, \mu, \delta]\label{eqn:I2-full}\\
    &\widetilde{I}_{2,\text{ind}}[k;\bm \Psi_0, \mu, \delta] := - \frac{2}{\sqrt{3}\delta} E_1 \sum_{|m| \geq 1} \widehat{\Psi_0^B}\left(\frac{k + \frac{4\pi}{\sqrt{3}}m}{\delta}\right) \nonumber\\
    & \quad + \frac{2}{\sqrt{3} \delta^2} \Gamma_4(k,k_\parallel,\delta) \widehat{\Psi_0^A}\left(\frac{k}{\delta}\right) + \frac{2}{\sqrt{3} \delta^2} \Gamma_2(k,k_\parallel, \delta) \sum_{|m| \geq 1} \widehat{\Psi_0^A}\left(\frac{k + \frac{4\pi}{\sqrt{3}}m}{\delta}\right) \nonumber\\
    & \quad + \frac{2}{\sqrt{3}\delta} e^{-i \frac{\pi}{3}} t_1 \sum_{|m| \geq 1} \widehat{\ff_1\Psi_0^A}\left(\frac{k + \frac{4\pi}{\sqrt{3}}m}{\delta}\right) + \frac{2}{\sqrt{3}\delta} e^{-i \frac{5\pi}{3}} t_1 \sum_{|m| \geq 1} \widehat{\ff_2\Psi_0^A}\left(\frac{k + \frac{4\pi}{\sqrt{3}}m}{\delta}\right) \nonumber\\
    & \quad + e^{-i \frac{5\pi}{3}} t_1 \Big(e^{-ik\sqrt{3}} - 1\Big) \widehat{\ff_2 \Psi_0^A}\left(\frac{k}{\delta}\right),  \nonumber
\end{align}
where $\Gamma_4(k,k_\parallel,\delta)$ is given by
\begin{align}
    &\Gamma_4(k,k_\parallel,\delta) := \Gamma_2(k,k_\parallel,\delta) - k\partial_k\Gamma_2(0,k_\parallel,\delta) \label{eqn:gamma-4}\\
    =\:&e^{-i \frac{\pi}{3}} 
        \Bigg[e^{-i \frac{4\pi}{3}} \left(e^{-i \sqrt{3} k}-1+ i \sqrt{3}k\right) + e^{-i \frac{2\pi}{3}} \left(e^{-\frac{i}{2}  \delta k_\parallel} e^{-i \frac{\sqrt{3}}{2} k} -1 + \frac{i}{2}\delta k_\parallel + i \frac{\sqrt{3}}{2}k\right)\Bigg].\nonumber
\end{align}
Thus, we complete our derivation of \eqref{eqn:eta-full-2}.

\subsection{The derivation for the near-momentum equation $\widehat{\bm \beta}_\text{near}$}\label{app:dev-near}

In this appendix, we provide the detailed derivation of \eqref{eqn:beta-system} by substituting \eqref{eqn:near-hat} and \eqref{eqn:eta-idft} into \eqref{eqn:eta-near}. We start with \eqref{eqn:eta-near} and change $k=\delta \xi$
\begin{subequations}\label{eqn:app-near-full}
\begin{align}
    &\delta^{-1} \Gamma_1(\delta \xi,k_\parallel,\delta) \: \widehat{\beta}^B_\text{near}(\xi) - E_1 \widehat{\beta}^A_\text{near}(\xi) - \delta \mu \widehat{\beta}^A_\text{near}(\xi) \label{eqn:app-near-full-1}\\
    =\ &\delta \bigg(\widetilde{F}_1[\delta \xi;\bm{\widetilde{\eta}}_\text{near}] + \widetilde{F}_1[\delta \xi;\bm{\widetilde{\eta}}_\text{far}]\bigg) \: \chi(|\xi|\leq \delta^{\tau-1}) -\widetilde{I}_1[\delta \xi;\bm \Psi_0, \mu, \delta]\:\chi(|\xi|\leq \delta^{\tau-1}), \nonumber\\
    &\delta^{-1} \Gamma_2(\delta \xi,k_\parallel,\delta) \ \widehat{\beta}^A_\text{near}(\xi) - E_1 \widehat{\beta}^B_\text{near}(\xi) - \delta \mu \widehat{\beta}^B_\text{near}(\xi)\label{eqn:app-near-full-2}\\
    = \ & \delta \bigg(\widetilde{F}_2[\delta \xi;\bm{\widetilde{\eta}}_\text{near}] + \widetilde{F}_2[\delta \xi;\bm{\widetilde{\eta}}_\text{far}]\bigg) \chi(|\xi|\leq \delta^{\tau-1}) -\widetilde{I}_2[\delta \xi;\bm \Psi_0, \mu, \delta]\chi(|\xi|\leq \delta^{\tau-1}).\nonumber
\end{align}
\end{subequations}
We claim that the leading-order terms in \eqref{eqn:app-near-full} coincide with the band-limited Dirac operator $\widehat{\mathcal{D}} ^\delta$ introduced in \eqref{eqn:dirac-1d-fourier}. We formulate this claim as the following Proposition:
\begin{proposition}\label{prop:app-simp-beta}
    Assume $\widehat{\bm \beta}_\text{near}(\xi) \in L^{2,1}(\mathbb{R})$. For sufficiently small $\delta$ and $k \in \Icell$, we have
    \begin{align}
        \begin{pmatrix}
            \delta^{-1}\Gamma_1(\delta \xi,k_\parallel,\delta) \: \widehat{\beta}^B_\text{near}(\xi) - E_1 \widehat{\beta}^A_\text{near}(\xi) - \delta \widetilde{F}_1[\delta \xi;\bm{\widetilde{\eta}}_\text{near}] \: \chi(|\xi|\leq \delta^{\tau-1})\\
            \delta^{-1}\Gamma_2(\delta \xi,k_\parallel,\delta) \ \widehat{\beta}^A_\text{near}(\xi) - E_1\widehat{\beta}^B_\text{near}(\xi) - \delta \widetilde{F}_2[\delta \xi;\bm{\widetilde{\eta}}_\text{near}]\: \chi(|\xi|\leq \delta^{\tau-1})
        \end{pmatrix} = \widehat{\mathcal{D}} ^\delta \widehat{\bm \beta}_\text{near}(\xi) + O(\delta).\label{eqn:app-simp-beta}
    \end{align}
\end{proposition}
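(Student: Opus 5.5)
The plan is to prove \eqref{eqn:app-simp-beta} componentwise. I read ``$O(\delta)$'' the way it is used in Proposition~\ref{prop:app-cancel-psi0}: the difference between the left-hand side and $\widehat{\mathcal{D}}^\delta\widehat{\bm\beta}_\text{near}$ is $\delta$ times a remainder $\widehat{\bm r}(\xi;\delta)$. The remainder is linear in $\widehat{\bm\beta}_\text{near}$ and is controlled uniformly in $\delta$ by
\[
|\widehat{\bm r}(\xi;\delta)| \lesssim (1+|\xi|^2)\,|\widehat{\bm\beta}_\text{near}(\xi)| + \text{(aliased terms)},
\]
where the aliased terms are bounded in $L^2_\xi$ by $\|\widehat{\bm\beta}_\text{near}\|_{L^{2,1}}$. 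I treat the $A$-row; the $B$-row is the complex-conjugate computation with $\Gamma_2$, $\widetilde F_2$ and the shift $m-2$ in place of $m+2$. The left-hand side splits into two pieces, the multiplier piece $\delta^{-1}\Gamma_1(\delta\xi,k_\parallel,\delta)\widehat\beta^B_\text{near}$ and the coefficient piece $-\delta\widetilde F_1[\delta\xi;\widetilde{\bm\eta}_\text{near}]\chi$. These should produce respectively $\frac32\xi\widehat\beta^B$ together with the constant part of $\frac32\kappa$, and the $d'$-part of $\frac32\kappa$.

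\textbf{Step 1 (the multiplier).} Taylor expand $\Gamma_1$ in $(k,\delta k_\parallel)$ around $0$, exactly as in \eqref{eqn:sim-leading}. The zeroth-order term vanishes because $e^{i\pi/3}(1+e^{i4\pi/3}+e^{i2\pi/3})=0$. The first-order term is $\partial_k\Gamma_1(0)\,k$ plus the $\delta k_\parallel$-linear term. Dividing by $\delta$ with $k=\delta\xi$ gives $\frac32\xi-\frac i2k_\parallel$. Using $\Gamma_3$ from \eqref{eqn:gamma-3}, the remaining Taylor error satisfies $|\Gamma_3(\delta\xi)|\lesssim\delta^2(\xi^2+k_\parallel^2)$, so $\delta^{-1}\Gamma_3$ contributes $\delta\cdot O(1+\xi^2)\widehat\beta^B$. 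The term $-E_1\widehat\beta^A$ passes through unchanged.

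\textbf{Step 2 (the coefficient term via Poisson summation).} By \eqref{eqn:eta-idft}, $\eta^{B}_m=\frac{\sqrt3}{2}\beta^B(\frac{\sqrt3}{2}\delta m)$, so the first sum in $\widetilde F_1$ is the scaled DFT of samples of $g_1:=\ff_1\beta^B$.
\begin{itemize}
\item Applying \eqref{eqn:psf-scaled}, $\delta\widetilde F_1$ equals $-t_1\big(e^{i\pi/3}\widehat{\ff_1\beta^B}(\xi)+e^{i5\pi/3}\widehat{\ff_2\beta^B}(\xi)\big)$ plus three kinds of correction.
\item The first correction is the aliased modes $\sum_{m\ne0}\widehat g(\xi+\frac{4\pi m}{\sqrt3\delta})$. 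On $|\xi|\le\delta^{\tau-1}$ these frequencies have size $\gtrsim |m|/\delta$. Cauchy--Schwarz in $m$ with the weight $|\xi+\frac{4\pi m}{\sqrt3\delta}|$ then bounds their $L^2_\xi$ norm by $\delta\|g\|_{H^1}\lesssim\delta\|\ff\|_{W^{1,\infty}}\|\widehat{\bm\beta}_\text{near}\|_{L^{2,1}}$, which is finite since $d'\in C^\infty_b$.
\item The second correction comes from the $\ff_2$ sum, which is evaluated at $m$ while $\eta^B$ is evaluated at $m+2$. Writing $\ff_2(\frac{\sqrt3}{2}\delta m)=\ff_2(\frac{\sqrt3}{2}\delta(m+2))+O(\delta\|d''\|_\infty)$ makes it an $O(\delta)$ bounded multiplication, handled as in the last line of \eqref{eqn:I1-full}.
\item The third correction is the phase $e^{i\sqrt3\delta\xi}$ produced by the shift, \eqref{eqn:psf-scaled-1b}. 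Writing $e^{i\sqrt3\delta\xi}=1+O(\delta|\xi|)$ gives a further $\delta(1+|\xi|)$-weighted error.
\end{itemize}
Finally, using $\ff_1=-\ff_2=\frac{\sqrt3}{4}d'$ and $e^{i\pi/3}-e^{i5\pi/3}=i\sqrt3$, the leading term becomes $-i\frac34t_1\widehat{d'\beta^B}$.

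\textbf{Step 3 (assembly).} Combine $-\frac i2k_\parallel\widehat\beta^B$ with $+i\frac34t_1\widehat{d'\beta^B}$ (the sign of the latter coming from $-\delta\widetilde F_1$ on the left-hand side) to get $-\frac32 i\,\widehat{\kappa\beta^B}$, by the same algebra as \eqref{eqn:app-psi-cancel}. This reproduces the $A$-row of $\frac32[\xi\sigma_1+\sigma_2\widehat{\kappa\,\cdot}\,]\widehat{\bm\beta}$. Multiplying by $\chi(|\xi|\le\delta^{\tau-1})$, which is already present on the left-hand side and on $\widehat{\bm\beta}_\text{near}$, produces exactly $\widehat{\mathcal D}^\delta$ as in \eqref{eqn:dirac-1d-fourier}.

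\textbf{Main obstacle.} I expect the hardest point to be the aliasing estimate in Step 2. The coefficient $d'$ is bounded but not decaying, so $\widehat{\ff_1}$ is only a distribution, and the Poisson formula \eqref{eqn:psf-scaled} must be applied to the product $g=\ff_1\beta^B$ rather than to $\ff_1$ alone. To justify this, I would first prove the identity for Schwartz $\beta$ and then extend by density in $H^1$, using that $g\in H^1$ and the $\delta\|g\|_{H^1}$ alias bound. This yields the remainder in the $O(\delta)$ form stated above.
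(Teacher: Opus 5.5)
Your proposal is correct and follows essentially the same route as the paper. Both expand $\Gamma_{1,2}$ to first order, with the Taylor errors $\Gamma_{3,4}$. Both rewrite $\widetilde F_i$ through the samples of $\beta_\text{near}$ using the scaled Poisson summation formula, and then discard three corrections as $O(\delta)$: the $|m|\ge 1$ aliased modes (via the $s=1$ case of \eqref{eqn:sum-M-ge1-bd}), the $e^{\pm i\sqrt3\delta\xi}-1$ phase, and the $\ff_2$ shift mismatch. Both recombine using $\ff_1=-\ff_2$ and the definition of $\kappa$.

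Your density justification of Poisson summation for the non-Schwartz product $\ff_1\beta^B$ is a slight refinement the paper leaves implicit. So is your explicit description of the remainder, whose $(1+|\xi|^2)$ weight gives only an $O(\delta^\tau)$ bound in $L^2$, exactly as recorded in Lemma~\ref{lemma:eta-near-bds}.
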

\begin{proof}
We first review the band-limited Dirac operator $\widehat{\mathcal{D}} ^\delta$ in \eqref{eqn:dirac-1d-fourier}
\begin{align}
    \widehat{\mathcal{D}} ^\delta \widehat{\bm \beta}_\text{near} = \chi(|\xi|\leq \delta^{\tau-1}) \widehat{\mathcal{D}}  \widehat{\bm \beta}_\text{near} = \chi(|\xi|\leq \delta^{\tau-1}) \begin{pmatrix}
        \frac{3}{2}\xi \widehat{\beta}_\text{near}^B - \frac{3}{2}i  \widehat{\kappa \beta}_\text{near}^B - E_1 \widehat{\beta}_\text{near}^A\\
    \frac{3}{2}\xi \widehat{\beta}_\text{near}^A + \frac{3}{2}i \widehat{\kappa \beta}_\text{near}^A - E_1 \widehat{\beta}_\text{near}^B
    \end{pmatrix},\label{eqn:app-dirac-band-limited}
\end{align}
where the expression of $\widehat{\mathcal{D}} $ is given in \eqref{eqn:dirac-1d-fourier}. 

We now compute the terms in \eqref{eqn:app-simp-beta}, beginning with the $O(\delta^{-1})$ terms:
\begin{align}
    &\delta^{-1} \Gamma_1(\delta \xi,k_\parallel,\delta) \: \widehat{\beta}^B_\text{near}(\xi) = \frac{1}{\delta}e^{i\frac{\pi}{3}} \left(e^{i\frac{4\pi}{3}} (i\sqrt{3}\delta \xi) + e^{i\frac{2\pi}{3}} \left(\frac{i}{2} \delta k_\parallel + \frac{\sqrt{3}}{2} i \delta \xi\right)\right) \widehat{\beta}^B_\text{near}(\xi) + O(\delta) \label{eqn:app-simp-beta-1}\\
    =&\Big[\frac{3}{2}\xi \widehat{\beta}_\text{near}^B - \frac{i}{2} k_\parallel \widehat{\beta}_\text{near}^B \Big] + O(\delta) = \Big[\frac{3}{2}\xi \widehat{\beta}_\text{near}^B - \frac{3}{2}i \widehat{\kappa \beta}_\text{near}^B \Big]\chi(|\xi|\leq \delta^{\tau-1}) + O(\delta). \nonumber
\end{align}
Similarly, we have 
\begin{align}
    \delta^{-1}\Gamma_2(\delta \xi,k_\parallel,\delta) \ \widehat{\beta}^A_\text{near}(\xi) = \Big[\frac{3}{2}\xi \widehat{\beta}_\text{near}^A + \frac{3}{2}i \widehat{\kappa \beta}_\text{near}^A\Big] \chi(|\xi|\leq \delta^{\tau-1}) + O(\delta).\label{eqn:app-simp-beta-2}
\end{align}

Then we simplify the terms related to $\widetilde{F}_i[\delta \xi;\bm{\widetilde{\eta}}_\text{near}]$ in \eqref{eqn:app-simp-beta}. Notice that $\widetilde{F}_i[\delta \xi;\bm{\widetilde{\eta}}_\text{near}]$ in \eqref{eqn:widetilde-F} can be reformulated by using $\bm \beta_\text{near}$ introduced in \eqref{eqn:eta-idft}: for $\widetilde{F}_1[\delta \xi;\bm{\widetilde{\eta}}_\text{near}]$, we have
\begin{align*}
    &\widetilde{F}_1[\delta \xi;\widetilde{\bm{\eta}}_\text{near}] = - e^{i \frac{\pi}{3}} t_1 \sum_{m \in \mathbb{Z}} \frac{\sqrt{3}}{2} \ff_1\left(\frac{\sqrt{3}}{2}\delta m\right) \beta_\text{near}^B \left(\frac{\sqrt{3}}{2}\delta m\right) e^{-i\delta \xi\frac{\sqrt{3}}{2}m} \\
    &- e^{i \frac{5\pi}{3}} t_1 \sum_{m \in \mathbb{Z}} \frac{\sqrt{3}}{2} \ff_2\left(\frac{\sqrt{3}}{2}\delta m\right) \beta_\text{near}^B \left(\frac{\sqrt{3}}{2}\delta (m+2)\right) e^{-i \delta \xi\frac{\sqrt{3}}{2}m}.
\end{align*}
Using the scaled PSF in \eqref{eqn:psf-scaled}-\eqref{eqn:psf-scaled-1}, the term related to $\widetilde{F}_1[\delta \xi;\widetilde{\bm{\eta}}_\text{near}]$ in \eqref{eqn:app-simp-beta} becomes
\begin{align}
    &\delta \widetilde{F}_1[\delta \xi;\widetilde{\bm{\eta}}_\text{near}] \chi(|\xi|\leq \delta^{\tau-1}) = - e^{i \frac{\pi}{3}} t_1 \chi(|\xi|\leq \delta^{\tau-1}) \sum_{m \in \mathbb{Z}} \widehat{\ff_1 \beta_\text{near}^B} \left(\frac{\delta \xi + \frac{4\pi}{\sqrt{3}}m}{\delta}\right) \label{eqn:app-F1}  \\
    &- e^{i \frac{5\pi}{3}} t_1 \chi(|\xi|\leq \delta^{\tau-1}) \sum_{m \in \mathbb{Z}} \widehat{\ff_2 \beta_\text{near}^B} \left(\frac{\delta \xi + \frac{4\pi}{\sqrt{3}}m}{\delta}\right) \nonumber\\
    & - e^{i \frac{5\pi}{3}} \Big(e^{i\sqrt{3}\delta \xi}-1\Big)t_1 \chi(|\xi|\leq \delta^{\tau-1}) \sum_{m \in \mathbb{Z}} \widehat{\ff_2 \beta_\text{near}^B} \left(\frac{\delta \xi + \frac{4\pi}{\sqrt{3}}m}{\delta}\right) \nonumber\\
    &- \delta e^{i \frac{5\pi}{3}} t_1 \chi(|\xi|\leq \delta^{\tau-1}) \sum_{m \in \mathbb{Z}} \frac{\sqrt{3}}{2} \left[ \ff_2\left(\frac{\sqrt{3}}{2}\delta m\right) - \ff_2\left(\frac{\sqrt{3}}{2}\delta (m+2)\right) \right] \beta_\text{near}^B \left(\frac{\sqrt{3}}{2}\delta (m+2)\right) e^{-i \delta \xi\frac{\sqrt{3}}{2}m},\nonumber
\end{align}
where the last two lines are $O(\delta)$. Similarly, the term related to $\widetilde{F}_2[\delta \xi;\widetilde{\bm{\eta}}_\text{near}]$ in \eqref{eqn:app-simp-beta} becomes
\begin{align}
    &\delta \widetilde{F}_2[\delta \xi;\widetilde{\bm{\eta}}_\text{near}] \chi(|\xi|\leq \delta^{\tau-1})= - e^{-i \frac{\pi}{3}} t_1 \chi(|\xi|\leq \delta^{\tau-1}) \sum_{m \in \mathbb{Z}} \widehat{\ff_1 \beta_\text{near}^A} \left(\frac{\delta \xi + \frac{4\pi}{\sqrt{3}}m}{\delta}\right) \label{eqn:app-F2}\\
    &- e^{-i \frac{5\pi}{3}} t_1 \chi(|\xi|\leq \delta^{\tau-1}) \sum_{m \in \mathbb{Z}} \widehat{\ff_2 \beta_\text{near}^A} \left(\frac{\delta \xi + \frac{4\pi}{\sqrt{3}}m}{\delta}\right) \nonumber\\
    &- e^{-i \frac{5\pi}{3}} \Big(e^{-i\sqrt{3}\delta \xi} - 1\Big) t_1 \chi(|\xi|\leq \delta^{\tau-1}) \sum_{m \in \mathbb{Z}} \widehat{\ff_2 \beta_\text{near}^A} \left(\frac{\delta \xi + \frac{4\pi}{\sqrt{3}}m}{\delta}\right), \nonumber
\end{align}
where the last line is $O(\delta)$.

Lastly, we collect the $O(1)$ terms in \eqref{eqn:app-simp-beta-1}-\eqref{eqn:app-F2} to complete our proof of \eqref{eqn:app-simp-beta}. Notice that $\widehat{\bm \beta}_{\text{near}}, \widehat{\ff_i\bm \beta}_{\text{near}}\in L^{2,1}(\mathbb{R})$ with $i=1,2$, since $\bm \beta_{\text{near}} \in H^1(\mathbb{R})$ and $\ff_i \in C_b^\infty(\mathbb{R})$. Using \eqref{eqn:sum-M-ge1-bd}, the Fourier modes of $\widehat{\bm \beta}_{\text{near}}, \widehat{\ff_i\bm \beta}_{\text{near}}$ with $|m|\geq 1$ are $O(\delta)$ (by taking $s=1$ in \eqref{eqn:sum-M-ge1-bd}). Therefore, for the first line of \eqref{eqn:app-simp-beta}, the $O(1)$ terms are
\begin{align}
    &\delta^{-1}\Gamma_1(\delta \xi,k_\parallel,\delta) \: \widehat{\beta}^B_\text{near}(\xi) - E_1 \widehat{\beta}^A_\text{near}(\xi) - \delta \widetilde{F}_1[\delta \xi;\bm{\widetilde{\eta}}_\text{near}] \: \chi(|\xi|\leq \delta^{\tau-1}) \label{eqn:app-simp-beta-f1}\\
    =& \Big[\frac{3}{2}\xi \widehat{\beta}_\text{near}^B - \frac{3}{2}i k_\parallel \widehat{\beta}_\text{near}^B - E_1 \widehat{\beta}^A_\text{near}(\xi)\Big]\chi(|\xi|\leq \delta^{\tau-1}) \nonumber\\
    -& t_1\Big[e^{i \frac{\pi}{3}} \widehat{\ff_1 \beta_\text{near}^B} (\xi) - e^{i \frac{5\pi}{3}} \widehat{\ff_2 \beta_\text{near}^B} (\xi)\Big] \chi(|\xi|\leq \delta^{\tau-1})  + O(\delta)\nonumber\\
    =&\Big[\frac{3}{2}\xi \widehat{\beta}_\text{near}^B - \frac{3}{2}i  \widehat{\kappa \beta^B_\text{near}} - E_1 \widehat{\beta}_\text{near}^A\Big] \chi(|\xi|\leq \delta^{\tau-1}) + O(\delta),\nonumber
\end{align}
where the last line holds since $\ff_1 = -\ff_2 = -\frac{\sqrt{3}}{4}d'$ and $\kappa = k_\parallel - \frac{t_1}{2}d'$. Similarly, for the second line of \eqref{eqn:app-simp-beta}, the $O(1)$ terms are
\begin{align}
    &\delta^{-1}\Gamma_2(\delta \xi,k_\parallel,\delta) \ \widehat{\beta}^A_\text{near}(\xi) - E_1\widehat{\beta}^B_\text{near}(\xi) - \delta \widetilde{F}_2[\delta \xi;\bm{\widetilde{\eta}}_\text{near}]\: \chi(|\xi|\leq \delta^{\tau-1})\label{eqn:app-simp-beta-f2}\\
    =&\Big[\frac{3}{2}\xi \widehat{\beta}_\text{near}^A + \frac{3}{2}i \widehat{\kappa \beta_\text{near}^A} - E_1 \widehat{\beta}_\text{near}^B\Big] \chi(|\xi|\leq \delta^{\tau-1}) + O(\delta).\nonumber
\end{align}
Thus, we complete the proof of \eqref{eqn:app-simp-beta}.
\end{proof}

So far, Proposition \ref{prop:app-simp-beta} shows that \eqref{eqn:app-near-full} is a band-limited Dirac equation with an $O(\delta)$ perturbation. Moreover, we have identified all $O(\delta)$ perturbative contributions except for the term involving $\widetilde{\bm \eta}_{\mathrm{far}}$. We now first write down the $O(\delta)$ perturbation arising from $\widetilde{\bm \eta}_{\mathrm{far}}$.
Since $\bm{\widetilde{\eta}}_\text{far}$ has representation \eqref{eqn:far-affine} and $\widetilde{F}_i$ are linear, we can write the far-momentum part as
\begin{align}
    \widetilde{F}_i[\delta \xi;\bm{\widetilde{\eta}}_\text{far}] = \widetilde{F}_i[\delta \xi;\mathcal{A}\bm{\widetilde{\eta}}_\text{near}] + \widetilde{F}_i[\delta \xi;\mathcal{B}], \qquad i=1,2.\label{eqn:tildeF-affine-far}
\end{align}
Therefore, the $O(\delta)$ contributions from \eqref{eqn:app-near-full} are
\begin{align} \label{eqn:tildeF-affine-far-1}
    \delta \widetilde{F}_i[\delta \xi;\bm{\widetilde{\eta}}_\text{far}] \: \chi(|\xi|\leq \delta^{\tau-1}) = \delta \bigg(\widetilde{F}_i[\delta \xi;\mathcal{A}\bm{\widetilde{\eta}}_\text{near}] + \widetilde{F}_i[\delta \xi;\mathcal{B}]\bigg) \: \chi(|\xi|\leq \delta^{\tau-1}), \qquad i=1,2.
\end{align}

We complete the derivation of \eqref{eqn:app-near-full} by grouping all $O(\delta)$ perturbation in \eqref{eqn:app-near-full} together. We omit the detailed calculations and present only the final result:
\begin{align*}
    &\Big[\frac{3}{2}\xi \widehat{\beta}_\text{near}^B - \frac{3}{2}i  \widehat{\kappa \beta_\text{near}^B} - E_1 \widehat{\beta}_\text{near}^A\Big] \chi(|\xi|\leq \delta^{\tau-1}) + [\widehat{\mathcal{L}_1^\delta}(\mu) \bm{\widehat{\beta}_\text{near}}](\xi) - \delta \mu \widehat{\beta}_\text{near}^A(\xi) = \mu \widehat{\mathcal{M}_1}(\xi;\delta) + \widehat{\mathcal{N}_1}(\xi;\mu,\delta),\\
    &\Big[\frac{3}{2}\xi \widehat{\beta}_\text{near}^A + \frac{3}{2}i \widehat{\kappa \beta_\text{near}^A} - E_1 \widehat{\beta}_\text{near}^B\Big] \chi(|\xi|\leq \delta^{\tau-1}) + [\widehat{\mathcal{L}_2^\delta}(\mu) \bm{\widehat{\beta}_\text{near}}](\xi) - \delta \mu \widehat{\beta}_\text{near}^B(\xi) = \mu \widehat{\mathcal{M}_2}(\xi;\delta) + \widehat{\mathcal{N}_2}(\xi;\mu,\delta),
\end{align*}
where $[\widehat{\mathcal{L}_1^\delta}(\mu) \bm{\widehat{\beta}_\text{near}}](\xi)$ and $[\widehat{\mathcal{L}_2^\delta}(\mu) \bm{\widehat{\beta}_\text{near}}](\xi)$ are
\begin{subequations}\label{eqn:Ldelta}
\begin{align}
    &[\widehat{\mathcal{L}_1^\delta}(\mu) \bm{\widehat{\beta}_\text{near}}](\xi) := \frac{1}{\delta} \Gamma_3(\delta\xi,k_\parallel,\delta) \widehat{\beta}_\text{near}^B(\xi) + \:t_1 e^{i \frac{5\pi}{3}} \Big(e^{i\sqrt{3}\delta \xi} - 1\Big) \chi(|\xi|\leq \delta^{\tau-1}) \widehat{\ff_2 \beta_\text{near}^B} (\xi) \label{eqn:Ldelta-1}\\
    & + \: t_1 e^{i \frac{\pi}{3}} \chi(|\xi|\leq \delta^{\tau-1}) \sum_{|m|\geq 1} \widehat{\ff_1 \beta_\text{near}^B} \left(\frac{\delta \xi + \frac{4\pi}{\sqrt{3}}m}{\delta}\right) - \delta \chi(|\xi|\leq \delta^{\tau-1}) \widetilde{F}_1[\delta \xi;\mathcal{A}\bm{\widetilde{\eta}}_\text{near}(\mu,\delta)] \nonumber\\
    & +\:t_1 e^{i \frac{5\pi}{3}} e^{i\sqrt{3}\delta \xi} \chi(|\xi|\leq \delta^{\tau-1}) \Bigg(\sum_{|m|\geq 1} \widehat{\ff_2 \beta_\text{near}^B} \left(\frac{\delta \xi + \frac{4\pi}{\sqrt{3}}m}{\delta}\right)\nonumber\\
    &\qquad + \sum_{m \in \mathbb{Z}} \frac{\sqrt{3}}{2} \left[ \ff_2\left(\frac{\sqrt{3}}{2}\delta m\right) - \ff_2\left(\frac{\sqrt{3}}{2}\delta (m+2)\right) \right] \beta_\text{near}^B \left(\frac{\sqrt{3}}{2}\delta (m+2)\right) e^{-i \delta \xi\frac{\sqrt{3}}{2}m} \Bigg), \nonumber\\
    &[\widehat{\mathcal{L}_2^\delta}(\mu) \bm{\widehat{\beta}_\text{near}}](\xi) := \frac{1}{\delta} \Gamma_4(\delta\xi,k_\parallel,\delta)\widehat{\beta}_\text{near}^A(\xi) + t_1 e^{-i \frac{5\pi}{3}}  \Big(e^{-i\sqrt{3}\delta \xi} -1 \Big) \chi(|\xi| \leq \delta^{\tau-1})  \widehat{\ff_2 \beta_\text{near}^A}(\xi) \label{eqn:Ldelta-2}\\
    & + \:  t_1 e^{-i \frac{\pi}{3}} \chi(|\xi| \leq \delta^{\tau-1})\sum_{|m|\geq 1} \widehat{\ff_1 \beta_\text{near}^A} \left(\frac{\delta \xi + \frac{4\pi}{\sqrt{3}}m}{\delta}\right) - \delta \chi(|\xi|\leq \delta^{\tau-1}) \widetilde{F}_2[\delta \xi;\mathcal{A}\bm{\widetilde{\eta}}_\text{near}(\mu,\delta)] \nonumber\\
    & +\: t_1 e^{-i \frac{5\pi}{3}} e^{-i\sqrt{3}\delta \xi} \chi(|\xi| \leq \delta^{\tau-1}) \sum_{|m|\geq 1} \widehat{\ff_2 \beta_\text{near}^A} \left(\frac{\delta \xi + \frac{4\pi}{\sqrt{3}}m}{\delta}\right) \nonumber\\
    &+ \: \delta t_1 e^{-i \frac{5\pi}{3}} \Big(e^{-i\sqrt{3}\delta \xi} -1\Big)\chi(|\xi| \leq \delta^{\tau-1}) \widehat{\ff_2 \beta_\text{near}^A} (\xi).\nonumber
\end{align}
\end{subequations}
The terms $\widehat{\mathcal{M}_i}(\xi;\delta)$ and $\widehat{\mathcal{N}_i}(\xi;\delta)$ with $i=1,2$ are
\begin{subequations}\label{eqn:MN-12}
\begin{align}
    \widehat{\mathcal{M}_1}(\xi;\delta) &= \frac{2}{\sqrt{3}} \chi(|\xi|\leq \delta^{\tau-1}) \sum_{m \in \mathbb{Z}} \widehat{\Psi_0^A}\left(\frac{\delta \xi + \frac{4\pi}{\sqrt{3}}m}{\delta}\right)\label{eqn:M1},\\
    \widehat{\mathcal{M}_2}(\xi;\delta)  &= \frac{2}{\sqrt{3}} \chi(|\xi|\leq \delta^{\tau-1})\sum_{m \in \mathbb{Z}} \widehat{\Psi_0^B}\left(\frac{\delta \xi + \frac{4\pi}{\sqrt{3}}m}{\delta}\right)\label{eqn:M2},\\
    \widehat{\mathcal{N}_1}(\xi;\mu,\delta) &= \bigg(-\widetilde{I}_{1,\text{ind}}[\delta \xi;\bm \Psi_0, \delta] + \delta \widetilde{F}_1[\delta \xi;\mathcal{B}(\mu,\delta)]\bigg) \chi(|\xi|\leq \delta^{\tau-1}),\label{eqn:N1}\\
    \widehat{\mathcal{N}_2}(\xi;\mu,\delta) &= \bigg(-\widetilde{I}_{2,\text{ind}}[\delta \xi;\bm \Psi_0, \delta] + \delta \widetilde{F}_2[\delta \xi;\mathcal{B}(\mu,\delta)]\bigg) \chi(|\xi|\leq \delta^{\tau-1}).\label{eqn:N2}
\end{align}
\end{subequations}

\bibliography{ref}
\bibliographystyle{plain}

\clearpage
\section*{Supplementary Material} 

\setcounter{section}{0}
\renewcommand{\thesection}{S\arabic{section}}

\setcounter{equation}{0}
\makeatletter
\@addtoreset{equation}{section}
\makeatother

\renewcommand{\theequation}{\thesection.\arabic{equation}}

\renewcommand{\thefigure}{S\arabic{figure}}
\setcounter{figure}{0}

\setcounter{theorem}{0}
\makeatletter
\@addtoreset{theorem}{section}
\makeatother
\renewcommand{\thetheorem}{\thesection.\arabic{theorem}}

This supplementary information section acts as a companion to the main text, where we provide additional details on calculations and proofs.

\setcounter{equation}{0}
\section{A general choice of hopping coefficients}\label{app:hopping}

We provide a general choice of hopping coefficients for the strained honeycomb lattice. Since the nearest-neighbor distances in the strained honeycomb are no longer uniform, the hopping coefficients need to be modified to reflect these variations. We define the hopping coefficient as a function of the distance between deformed nearest-neighbor nodes (see Equation (172) in \cite{castro2009electronic}), i.e. the hopping coefficient $t(\widetilde{X}, \widetilde{Y})$ for a pair of deformed nearest neighbors $\widetilde{X}$ and $\widetilde{Y}$ is given by
\begin{equation}\label{eqn:def-hopping}
	t(\widetilde{X}, \widetilde{Y}) = h(|\widetilde{X} - \widetilde{Y}|),
\end{equation}
where $h:\mathbb{R}_+ \rightarrow \mathbb{R}_+$ is a smooth function satisfying $h(1) = 1$, which ensures that the hopping coefficient is uniformly 1 in the undeformed honeycomb with nearest-neighbor distance equal to 1. 

In fact, for the slowly varying deformation $\bm{u}(\delta \bm{X})$, the hopping coefficients defined in \eqref{eqn:def-hopping} are slowly perturbed away from 1 and can be expressed as 1 plus an $O(\delta)$ correction. As a representative example, we expand the hopping coefficient $t(\widetilde{A}_{m,n}, \widetilde{B}_{m,n})$ using the Taylor series of $h$ (the other hopping coefficients are be expanded analogously): 
\begin{align*}
    &t(\widetilde{A}_{m,n} , \widetilde{B}_{m,n}) = h(|\widetilde{A}_{m,n} - \widetilde{B}_{m,n}|) \\
        =&\:  h(1) + t_1\Big(|\widetilde{A}_{m,n} - \widetilde{B}_{m,n}|-1\Big) + o\Big(\big|\big(|\widetilde{A}_{m,n} - \widetilde{B}_{m,n}|-1\big)\big|\Big),
\end{align*}
where $t_1$ is defined as $t_1 := h'(1)$. To further expand $|\widetilde{A}_{m,n} - \widetilde{B}_{m,n}|-1$, we use the Taylor series expansion of the norm function at $\bm{X}_0$:
\begin{equation}\label{eqn:norm-exp}
	|\bm{X}| = |\bm{X}_0| + \frac{1}{|\bm{X}_0|} \bm{X}_0 \cdot (\bm{X} -\bm{X}_0) + o(|\bm{X} - \bm{X}_0|).
\end{equation}
Since $A_{m,n} - B_{m,n} = \bm{e}_1$ is a unit vector, we use \eqref{eqn:norm-exp} with $\bm{X}_0 = {A}_{m,n} - {B}_{m,n}$ and obtain
\begin{align*}
    &|\widetilde{A}_{m,n} - \widetilde{B}_{m,n}|-1 = |A_{m,n} - B_{m,n} + \bm{u}(\delta A_{m,n}) - \bm{u}(\delta B_{m,n})| - 1 \\
		=\: & \big \langle \bm{u}(\delta A_{m,n}) - \bm{u}(\delta B_{m,n}), A_{m,n} - B_{m,n}\big \rangle + o\Big(|\bm{u}(\delta A_{m,n}) - \bm{u}(\delta B_{m,n})|\Big)\\
		=\: & \delta \big\langle \nabla \bm{u}(\delta A_{m,n}) (A_{m,n} - B_{m,n}), A_{m,n} - B_{m,n}\big\rangle + o(\delta) = \delta \bm{e}_1^T\nabla \bm{u}(\delta A_{m,n}) \bm{e}_1 + o(\delta).
\end{align*}
Therefore, the hopping coefficient $t(\widetilde{A}_{m,n} , \widetilde{B}_{m,n})$ is perturbed away from 1 by 
\begin{equation}
    t(\widetilde{A}_{m,n} , \widetilde{B}_{m,n}) = 1 + \delta t_1 \bm{e}_1^T\nabla \bm{u}(\delta A_{m,n}) \bm{e}_1 + O(\delta^2).\label{eqn:app-hopping-AB}
\end{equation}
We observe that the $O(\delta)$-order term coincides with that defined in \eqref{eqn:hopping-def-honeycomb}. Hence, the same effective operator $\mathcal{H}_\text{eff}$ in \eqref{eqn:eff-ham} is obtained under this general choice of hopping coefficients.


\setcounter{equation}{0}
\section{The unitary equivalence between $\mathcal{H}_{\text{eff}}$ and $\mathcal{D}_{\mathbf{A}}$}\label{app:dirac-op}

We show that $\mathcal{H}_\text{eff}$ is unitarily equivalent to the magnetic Dirac operator $\mathcal{D}_{\bm{A}}$ as presented in \eqref{eqn:dirac-mag}. We start with the detailed form of $\mathcal{H}_{\text{eff}}$ by substituting $\bm w_\nu = \mathcal{C}_{m_\nu,n_\nu}$ in \eqref{eqn:c-vec} into \eqref{eqn:eff-ham} and obtain
\begin{subequations}\label{eqn:app-eff-ham}
\begin{align}
	(\mathcal{H}_\text{eff} \bm{\Phi})_A&= \Big(e^{i \frac{4\pi}{3}} \bm w_2 \cdot \nabla_{\bm{X}} + e^{i \frac{2\pi}{3}} \bm w_3 \cdot \nabla_{\bm{X}}\Big) \Phi^B \nonumber \\
    &\: +t_1 \Big(f_1(\bm X) + e^{i \frac{4\pi}{3}} f_2(\bm X) + e^{i \frac{2\pi}{3}} f_3(\bm X)\Big) \Phi^B,\label{eqn:app-eff-ham-a}\\
    (\mathcal{H}_\text{eff} \bm{\Phi})_B &= \Big(-e^{-i \frac{4\pi}{3}} \bm w_2 \cdot \nabla_{\bm X} - e^{-i \frac{2\pi}{3}} \bm w_3 \cdot \nabla_{\bm X}\Big) \Phi^A \nonumber \\
    &\: + t_1 \Big(f_1(\bm X) + e^{-i \frac{4\pi}{3}} f_2(\bm X) + e^{-i \frac{2\pi}{3}}f_3(\bm X) \Big) \Phi^A,\label{eqn:app-eff-ham-b}
\end{align}
\end{subequations} 
where the phase factors come from $e^{\pm i \bm K \cdot \bm w_1} = 1, e^{\pm i \bm K \cdot \bm w_2} = e^{\pm i \frac{4\pi}{3}}$ and $e^{\pm i \bm K \cdot \bm w_3} = e^{\pm i \frac{2\pi}{3}}$. Then we simplify the differential operator in the first row of \eqref{eqn:app-eff-ham-a} using $\bm w_\nu = \mathcal{C}_{m_\nu,n_\nu}$ in \eqref{eqn:c-vec} and obtain
\begin{align}
    &\Big(e^{i \frac{4\pi}{3}} \bm w_2 \cdot \nabla_{\bm{X}} + e^{i \frac{2\pi}{3}} \bm w_3 \cdot \nabla_{\bm{X}}\Big)\Phi_0^B =\sqrt{3} e^{i\frac{4}{3}\pi} \partial_{X_1}  \Phi_0^B+ \frac{\sqrt{3}}{2} e^{i\frac{2}{3}\pi} \partial_{X_1}  \Phi_0^B+  \frac{3}{2} e^{i\frac{2}{3}\pi} \partial_{X_2} \Phi_0^B \label{eqn:eff-sim-cal-1}\\
    = \:& \sqrt{3} \left( e^{i\frac{4}{3}\pi} + \frac{1}{2}  e^{i\frac{2}{3}\pi}\right) \partial_{X_1}  \Phi_0^B+ \frac{3}{2} e^{i\frac{2}{3}\pi}  \partial_{X_2}  \Phi_0^B \nonumber\\
    = \: &\frac{3}{2} e^{i\frac{7}{6}\pi} \partial_{X_1}  \Phi_0^B  + \frac{3}{2} e^{i\frac{2}{3}\pi}  \partial_{X_2}  \Phi_0^B = \frac{3}{2} e^{-i\frac{\pi}{3}} \Big[-i \partial_{X_1}  \Phi_0^B - \partial_{X_2}  \Phi_0^B\Big].\nonumber
\end{align}
Similarly, the differential operator in the second row of \eqref{eqn:eff-ham} can be simplified as
\begin{equation}\label{eqn:eff-sim-cal-2}
	\Big(-e^{-i \frac{4\pi}{3}} \bm w_2 \cdot \nabla_{\bm X} - e^{-i \frac{2\pi}{3}} \bm w_3 \cdot \nabla_{\bm X} \Big) \Phi_0^A = \frac{3}{2} e^{i\frac{\pi}{3}} \Big[-i \partial_{X_1} \Phi_0^A + \partial_{X_2} \Phi_0^A\Big].
\end{equation}

Then we replace terms related to $f_\nu(\bm X)$ with $\nu = 1,2,3$ in \eqref{eqn:f-u-relation} as combinations of $\widetilde{A}_\nu(\bm X)$, which are defined as
\begin{align*}
    \widetilde{A}_1 = \partial_{X_1} u_1 - \partial_{X_2} u_2, \quad \widetilde{A}_2 = -(\partial_{X_1} u_2 + \partial_{X_2} u_1), \quad \widetilde{A}_3 = \partial_{X_1} u_1 + \partial_{X_2} u_2.
\end{align*}
We observe that $\widetilde{A}_1, \widetilde{A}_2$ are scaled versions of $A_1, A_2$ in \eqref{eqn:eff-mag}, i.e.
\begin{align}\label{eqn:tilde-A-relation}
    A_1 = -\frac{t_1}{2}\widetilde{A}_1, \qquad A_2 = -\frac{t_1}{2}\widetilde{A}_2.
\end{align}
Since $\partial_{X_1} u_1 = (\widetilde{A}_1 + \widetilde{A}_3)/2$ and $\partial_{X_2} u_2 = (\widetilde{A}_3 - \widetilde{A}_1)/2$, we can represent $f_\nu$ in terms of $\widetilde{A}_\nu$
\begin{align*}
    f_1&= \frac{3}{8} (\widetilde{A}_1 + \widetilde{A}_3) - \frac{\sqrt{3}}{4} \widetilde{A}_2 + \frac{1}{8} (\widetilde{A}_3 - \widetilde{A}_1) = \frac{1}{4} \widetilde{A}_1 - \frac{\sqrt{3}}{4} \widetilde{A}_2 + \frac{1}{2} \widetilde{A}_3,\\
	f_2& = \frac{3}{8} (\widetilde{A}_1 + \widetilde{A}_3) + \frac{\sqrt{3}}{4} \widetilde{A}_2 + \frac{1}{8} (\widetilde{A}_3 - \widetilde{A}_1) = \frac{1}{4} \widetilde{A}_1 + \frac{\sqrt{3}}{4} \widetilde{A}_2 + \frac{1}{2} \widetilde{A}_3,\quad f_3= \frac{1}{2} (\widetilde{A}_3 - \widetilde{A}_1).
\end{align*}

By writing the $t_1$ related terms in the first row of \eqref{eqn:eff-ham} using $\widetilde{A}_\nu$ with $\nu = 1,2,3$, we obtain
\begin{align*}
    &t_1 \Big(f_1 +e^{i\frac{4}{3}\pi} f_2 + e^{i\frac{2}{3}\pi} f_3\Big) \Phi_0^B \\
    =\:& t_1 \left(\Big[\frac{1}{4} \widetilde{A}_1 - \frac{\sqrt{3}}{4} \widetilde{A}_2 + \frac{1}{2} \widetilde{A}_3\Big] + e^{i\frac{2}{3}\pi}  \Big[\frac{1}{2}\widetilde{A}_3 -\frac{1}{2} \widetilde{A}_1\Big] + e^{i\frac{4}{3}\pi}\Big[\frac{1}{4} \widetilde{A}_1 + \frac{\sqrt{3}}{4} \widetilde{A}_2 + \frac{1}{2} \widetilde{A}_3\Big] \right) \Phi_0^B\\
		=\: &t_1 \left(\left[\frac{1}{4} - \frac{1}{2} e^{i\frac{2}{3}\pi} + \frac{1}{4} e^{i\frac{4}{3}\pi}\right] \widetilde{A}_1 + \left[- \frac{\sqrt{3}}{4} + \frac{\sqrt{3}}{4} e^{i\frac{4}{3}\pi}\right] \widetilde{A}_2 + \left[\frac{1}{2} + \frac{1}{2} e^{i\frac{2}{3}\pi}  + \frac{1}{2} e^{i\frac{4}{3}\pi}\right] \widetilde{A}_3\right) \Phi_0^B \\
    =\:& \frac{3}{4} t_1  e^{-i\frac{\pi}{3}}\Big[\widetilde{A}_1 - i \widetilde{A}_2\Big] \Phi_0^B.
\end{align*}
Using the relationship between $\widetilde{A}_1, \widetilde{A}_2$ and $A_1, A_2$ in \eqref{eqn:tilde-A-relation}, we have
\begin{equation}\label{eqn:eff-sim-cal-3}
	t_1 \Big(f_1 +e^{i\frac{4}{3}\pi} f_2 + e^{i\frac{2}{3}\pi} f_3\Big) \Phi_0^B = \frac{3}{2} e^{-i\frac{\pi}{3}} \Big[- A_1 +  i A_2\Big] \Phi_0^B
\end{equation}
Similarly, we can write the $t_1$ related terms in the second row of \eqref{eqn:eff-ham} as
\begin{equation}\label{eqn:eff-sim-cal-4}
	t_1 \Big(f_1 + e^{-i\frac{4}{3}\pi} f_2 + e^{-i\frac{2}{3}\pi} f_3\Big) \Phi_0^A = \frac{3}{2} e^{i\frac{\pi}{3}} \Big[- A_1 -  i A_2\Big] \Phi_0^A.
\end{equation}
Combining \eqref{eqn:eff-sim-cal-1}-\eqref{eqn:eff-sim-cal-2} with \eqref{eqn:eff-sim-cal-3}-\eqref{eqn:eff-sim-cal-4}, the effective operator $\mathcal{H}_\text{eff}$ in \eqref{eqn:eff-ham} becomes
\begin{align}
    \mathcal{H}_\text{eff} &= \frac{3}{2} \begin{pmatrix}
				e^{-i\frac{\pi}{3}} & 0\\
				0 & e^{i\frac{\pi}{3}}
			\end{pmatrix} \begin{pmatrix}
				0 & -i \partial_{X_1} + \partial_{X_2} - A_1 + iA_2\\
				-i \partial_{X_1} - \partial_{X_2} -A_1 - iA_2 & 0
			\end{pmatrix} \nonumber\\
			&= \begin{pmatrix}
				e^{-i\frac{\pi}{3}} & 0\\
				0 & e^{i\frac{\pi}{3}}
			\end{pmatrix} \mathcal{D}_{\bm{A}} =   \begin{pmatrix}
				e^{-i\frac{\pi}{6}} & 0\\
				0 & e^{i\frac{\pi}{6}}
			\end{pmatrix} \mathcal{D}_{\bm{A}}  \begin{pmatrix}
				e^{i\frac{\pi}{6}} & 0\\
				0 & e^{-i\frac{\pi}{6}}
			\end{pmatrix} = U \mathcal{D}_{\bm{A}} U^*,\label{eqn:uni-similar}
\end{align}
where $U = \text{diag}(e^{-i\frac{\pi}{6}}, e^{i\frac{\pi}{6}})$ is a constant unitary matrix. Thus, we complete our proof of \eqref{eqn:dirac-mag}.

\section{Proof of Lemma \ref{lemma:1d}}\label{sm:proof-Lemma6-1}

In this section, we prove Lemma \ref{lemma:1d}. First, we review that the one-dimensional Dirac operator $\mathcal{D}$. For simplicity, we replace $X_1$ by $x$ and $\mathcal{D}$ becomes
\begin{equation}\label{eqn:sm-dirac-k}
    \mathcal{D} = \frac{3}{2}\begin{pmatrix}
        0 & -i \partial_x - i\kappa(x)\\
        -i\partial_{x} + i\kappa(x) & 0
    \end{pmatrix}.
\end{equation}
We observe that this operator is closely related to another one-dimensional Dirac operator $\mathcal{D}^r$ with real coefficients
\begin{align}
    \mathcal{D}  &= \frac{3}{2} \begin{pmatrix}
        i & 0\\
        0 & -i
    \end{pmatrix} \mathcal{D}^r , \qquad \mathcal{D}^r =\begin{pmatrix}
        0 & -\partial_{x} - \kappa(x)\\
        \partial_{x} - \kappa(x) & 0
    \end{pmatrix}, \label{eqn:sm-dirac-real}\\
    \Rightarrow \qquad \mathcal{D} &= \frac{3}{2} \begin{pmatrix}
        e^{i\frac{\pi}{4}} & 0\\
        0 & e^{-i\frac{\pi}{4}} 
    \end{pmatrix} \mathcal{D}^r  \begin{pmatrix}
        e^{-i\frac{\pi}{4}} & 0\\
        0 & e^{i\frac{\pi}{4}} 
    \end{pmatrix}= \frac{3}{2} V \mathcal{D}^r  V^*, \qquad V = \text{diag}(e^{i\frac{\pi}{4}}, e^{-i\frac{\pi}{4}}).\nonumber
\end{align}
Therefore, Lemma \ref{lemma:1d} is equivalent to the following Lemma, which lists the spectral properties of $\mathcal{D}^r $ in \eqref{eqn:sm-dirac-real} for general real-valued $\kappa(x)$.

\begin{lemma}\label{lemma:sm-spec-Dk}
    Suppose $\kappa(x)$ is continuous and has bounded values with
	\begin{equation}\label{eqn:sm-kappa-condition}
		\lim_{x \rightarrow +\infty} \kappa(x) = \kappa_+, \qquad \lim_{x \rightarrow -\infty} \kappa(x) = \kappa_-, \qquad \kappa_+, \kappa_- \neq 0.
	\end{equation}
    We further assume that
	\begin{equation}\label{eqn:d-condition-2}
		\kappa(x) - \kappa_+ \in L^1([0,\infty)), \qquad \kappa(x) - \kappa_- \in L^1((-\infty,0]),
	\end{equation}
	then the following statements hold for $\mathcal{D}^r $ in \eqref{eqn:sm-dirac-real}:\\[1ex]
    (a) The operator $\mathcal{D}^r $ has essential spectrum $(-\infty, -a] \cup [a, \infty)$ with
		\begin{equation}\label{eqn:sm-spec-gap}
			a = \min \Big\{|\kappa_+|, |\kappa_-|\Big\}.
		\end{equation}
    (b) If $\mathcal{D}^r $ has an eigenvalue $E$ in between the gap $(-a,a)$, then it is \textbf{simple} and the corresponding bounded zero eigenstates $\bm \psi = \big(\psi^A, \psi^B\big)^T$ have \textbf{exponential decay}, i.e. there exists $\lambda_- < 0 < \lambda_+$ and $\bm p_\pm = \big(p_\pm^A, p_\pm^B\big)^T \in \mathbb{R}^2$ such that
		\begin{equation}\label{eqn:psi0-decay}
			\begin{aligned}
			    &\lim_{x +\infty} \bm \psi(x) e^{\lambda_+ x} = \bm p_+,\qquad \lim_{x \rightarrow -\infty} \bm \psi(x) e^{\lambda_- x} = \bm p_-,
			\end{aligned}
		\end{equation}
        More specifically, $-\lambda_+$ is the negative eigenvalue of $\bm{A}_+$ and $-\lambda_-$ is the positive eigenvalue of $\bm{A}_-$, where $\bm{A}_+, \bm{A}_-$ are defined as
        \begin{equation}\label{eqn:sm-Amat-pm}
            \bm{A_+} :=\begin{pmatrix}
            \kappa_+ & E\\
            -E & \kappa_+
        \end{pmatrix}, \qquad \bm{A_-} :=\begin{pmatrix}
            \kappa_- & E\\
            -E & \kappa_-
        \end{pmatrix}
        \end{equation}
        and $\bm p_\pm$ form eigenvectors of $\bm{A}_\pm$, i.e.
        \begin{equation}\label{eqn:sm-A-eig-vec}
            \bm{A_+} \bm p_+ = -\lambda_+ \bm p_+, \qquad \bm{A_-} \bm p_- = -\lambda_- \bm p_-.
        \end{equation}
		  Furthermore, if all derivatives of $\kappa(x)$ are also bounded, i.e. $\kappa^{(j)}(x) \in L^\infty(\mathbb{R})$ with $j\geq 1$, all derivatives of $\bm \psi$ have exponential decay and we have
		\begin{equation}\label{eqn:psi0-Hs-bd}
			\| \bm \psi\|_{H^s(\mathbb{R})} < \infty, \qquad \forall s \in \mathbb{N}.
		\end{equation}
    (c) If $\kappa_+ \kappa_- < 0$, then there exists a simple zero eigenvalue in the spectral gap $(-a, a)$. 
\end{lemma}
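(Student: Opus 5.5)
We plan to prove the three parts separately, working throughout with the first-order system form of the eigenvalue equation. Writing $\bm\psi=(\psi^A,\psi^B)^T$, the equation $\mathcal{D}^r\bm\psi=E\bm\psi$ is equivalent to $\partial_x\bm\psi=\bm A(x)\bm\psi$ with
\[
\bm A(x)=\begin{pmatrix}\kappa(x)&E\\-E&-\kappa(x)\end{pmatrix},
\]
which is the same matrix as in \eqref{eqn:app-A-def}, up to the factor $\tfrac23$. Its limits $\bm A_\pm$ have $\operatorname{tr}\bm A_\pm=0$ and $\det\bm A_\pm=E^2-\kappa_\pm^2$. For $|E|<a$ each limit therefore has real eigenvalues $\pm\sqrt{\kappa_\pm^2-E^2}\neq0$. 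We will use these hyperbolic limits in place of the matrices written in \eqref{eqn:sm-Amat-pm}, reading the diagonal there as $(\kappa_\pm,-\kappa_\pm)$; the exponents $\lambda_\pm$ are then the corresponding eigenvalues.

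\textbf{Part (a).} We use a Weyl/decomposition argument. Choose a smooth $\kappa_0$ equal to $\kappa_\pm$ for $\pm x\ge1$. The difference $\kappa-\kappa_0$ is bounded, continuous and tends to $0$ at $\pm\infty$. Multiplication by such a function composed with $(\mathcal{D}^r_{\kappa_0}-i)^{-1}\colon L^2\to H^1$ is compact, by Rellich on bounded intervals plus smallness at infinity. Weyl's theorem then gives $\sigma_{\rm ess}(\mathcal{D}^r_\kappa)=\sigma_{\rm ess}(\mathcal{D}^r_{\kappa_0})$. For $\kappa_0$ we argue in two directions.
\begin{itemize}
\item Weyl sequences supported far to the right (resp.\ left) show that the spectrum of the constant-coefficient operator with mass $\kappa_+$ (resp.\ $\kappa_-$) is contained in $\sigma_{\rm ess}$. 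By Fourier transform, that spectrum is $\{\pm\sqrt{\xi^2+\kappa_\pm^2}\}=(-\infty,-|\kappa_\pm|]\cup[|\kappa_\pm|,\infty)$.
\item Conversely, a partition of unity $\chi_-^2+\chi_c^2+\chi_+^2=1$ has commutators with $\mathcal{D}^r$ that are bounded and compactly supported, hence relatively compact. This shows that no other points lie in $\sigma_{\rm ess}$.
\end{itemize}
The union of the two ranges is $(-\infty,-a]\cup[a,\infty)$.

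\textbf{Part (b).} Fix $|E|<a$.
\begin{itemize}
\item \emph{Simplicity.} Since $\operatorname{tr}\bm A(x)=0$, the Wronskian of any two solutions is constant. Two $L^2$ solutions of a first-order ODE with bounded coefficients are $H^1$, hence tend to $0$, so their Wronskian vanishes and they are linearly dependent.
\item \emph{Precise decay.} Write $\bm A=\bm A_++(\bm A-\bm A_+)$ on $[0,\infty)$, where the perturbation lies in $L^1$ by \eqref{eqn:d-condition-2}. Since $\bm A_+$ has distinct eigenvalues, Levinson's asymptotic theorem (Coddington–Levinson, Ch.~3) gives a fundamental system asymptotic to $e^{\mp\lambda_+x}\bm p$, with $\bm p$ the corresponding eigenvectors. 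An $L^2$ solution cannot contain the growing mode, so $\bm\psi(x)e^{\lambda_+x}\to\bm p_+$. The same argument on $(-\infty,0]$ gives $\bm p_-$.
\item \emph{Higher regularity.} If all derivatives of $\kappa$ are bounded, differentiate the system repeatedly: $\bm\psi^{(j)}$ is a bounded-coefficient combination of $\bm\psi,\dots,\bm\psi^{(j-1)}$. Induction then gives exponential decay of every derivative, hence \eqref{eqn:psi0-Hs-bd}.
\end{itemize}

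\textbf{Part (c).} At $E=0$ the system decouples into $\psi^{A\prime}=\kappa\psi^A$ and $\psi^{B\prime}=-\kappa\psi^B$. The solutions are $e^{\pm\int_0^x\kappa}$, exactly as in \eqref{eqn:zero-b}–\eqref{eqn:zero-a}. When $\kappa_+\kappa_-<0$, exactly one of these two is integrable at both ends, which gives an eigenvalue $0\in(-a,a)$. It is simple by part (b).

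The main technical point is the Levinson step in (b), since the exact limits $\bm p_\pm$ rather than mere exponential bounds rely on the $L^1$ hypothesis. If only exponential decay were needed, the exponential-dichotomy Corollary~\ref{lemma:app-exp-dich-asy} would suffice, with the decay rate slightly reduced.
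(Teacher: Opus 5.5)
Your proposal is correct and follows the paper's route. The decay in (b) comes from the same Coddington--Levinson asymptotic-integration result for a constant hyperbolic matrix plus an $L^1$ perturbation (the paper's Lemma~\ref{lemma:sm-lemma-ode-asy}); higher regularity comes from repeatedly differentiating the system; and (c) comes from the explicit solutions $e^{\pm\int_0^x\kappa}$. Your reading of $\bm A_\pm$ with diagonal $(\kappa_\pm,-\kappa_\pm)$ is what the paper's proof actually uses, since the displayed \eqref{eqn:sm-Amat-pm} is a typo.

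The remaining differences are minor. You get simplicity from the constant-Wronskian (trace-zero) argument, whereas the paper gets it from uniqueness of the bounded half-line solution. Your step-profile, Weyl-sequence and localization argument for (a) actually justifies the union formula. The paper only asserts it by calling $\mathcal{D}^r$ a compact perturbation of each $\mathcal{D}_\pm$, which is not literally true because $\kappa-\kappa_+$ does not vanish at $-\infty$.
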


\begin{proof}
    We start with (a). We observe that $\mathcal{D}^r $ is a compact perturbation of $\mathcal{D}_\pm$, given by
    \begin{equation*}
        \mathcal{D}_+ = \begin{pmatrix}
            0 & -\partial_x - \kappa_+\\
            \partial_x - \kappa_+ & 0
        \end{pmatrix}, \qquad \mathcal{D}_- = \begin{pmatrix}
            0 & -\partial_x - \kappa_-\\
            \partial_x - \kappa_- & 0
        \end{pmatrix}.
    \end{equation*}
    Therefore, the essential spectrum of $\mathcal{D}^r $ can be computed by $\sigma_{\text{ess}}(\mathcal{D}^r ) = \sigma_{\text{ess}}(\mathcal{D}_+) \cup \sigma_{\text{ess}}(\mathcal{D}_-)$, where the essential spectrum of $\mathcal{D}_+$ and $\mathcal{D}_-$ are well-known:
    \begin{equation*}
        \sigma_{\text{ess}}(\mathcal{D}_+) = (-\infty, -|\kappa_+|] \cup [|\kappa_+|,\infty), \quad \sigma_{\text{ess}}(\mathcal{D}_-) = (-\infty, -|\kappa_-|] \cup [|\kappa_-|,\infty).
    \end{equation*}
    Therefore, we finish the proof of (a).\\
    
    We now prove (b). For a given eigenvalue $E$ of $\mathcal{D} ^r$, we know the corresponding eigenfunction $\bm \psi(x)$ is already bounded in $H^1(\mathbb{R})$, hence $\bm \psi(x) \in L^\infty(\mathbb{R})$. Therefore, it is sufficient to show that $\bm \psi(x)$ has exponential decay at $\pm \infty$. 
    
    We prove the exponential decay of $\bm \psi(x)$ by relating it to an ODE system. We observe that the eigenvalue problem 
    \begin{equation}\label{eqn:sm-ode-full}
        \mathcal{D}^r  \begin{pmatrix}
            \psi^A\\
            \psi^B
        \end{pmatrix} = E \begin{pmatrix}
            \psi^A\\
            \psi^B
        \end{pmatrix} \quad \Leftrightarrow \quad \begin{cases}
            \partial_x \psi^B + E \psi^A + \kappa(x) \psi^B = 0\\
            \partial_x \psi^A - \kappa(x) \psi^A - E \psi^B = 0
        \end{cases}
    \end{equation}
    is equivalent to the following 1st-order ODE system
    \begin{equation}\label{eqn:sm-ode}
        \begin{aligned}
            \partial_x \begin{pmatrix}
                \psi^A\\
                \psi^B
            \end{pmatrix} = \begin{pmatrix}
                \kappa(x) & E\\
                -E & -\kappa(x)
            \end{pmatrix} \begin{pmatrix}
                \psi^A\\
                \psi^B
            \end{pmatrix}.
        \end{aligned}
    \end{equation}
    
    In fact, we can construct exact bounded solutions for \eqref{eqn:sm-ode} on $[0, \infty)$ and $(-\infty,0]$ separately by viewing \eqref{eqn:sm-ode} as a perturbation of constant-coefficient ODE system. On $[0, \infty)$, the linear ODE system \eqref{eqn:sm-ode} can be viewed as
    \begin{align*}
        \partial_x \begin{pmatrix}
            \psi^A\\
            \psi^B
        \end{pmatrix} = \left[\begin{pmatrix}
            \kappa_+ & E\\
            -E & -\kappa_+
        \end{pmatrix} + \bm{R}_+(x)\right] \begin{pmatrix}
            \psi^A\\
            \psi^B
        \end{pmatrix},
    \end{align*}
    where $\bm{R}_+(x)$ is the perturbative term that vanishes at $\infty$, given by
    \begin{align}\label{eqn:sm-ode-pos-1}
        \bm{R}_+(x) := \begin{pmatrix}
                \kappa(x) - \kappa_+ & 0\\
                0 & -\kappa(x)+\kappa_+
            \end{pmatrix}.
    \end{align}
    Equivalently, we have
    \begin{equation}\label{eqn:sm-ode-pos-2}
        \bm{\psi}' = \bm{A}_+ \bm{\psi} + \bm{R}_+(x) \bm{\psi},
    \end{equation}
    where $\bm{A}_+$ is defined in \eqref{eqn:sm-Amat-pm}. Similarly, on $(-\infty, 0]$, we can write our ODE system as
    \begin{equation}\label{eqn:sm-ode-neg-2}
        \bm{\psi}' = \bm{A}_- \bm{\psi} + \bm{R}_-(x) \bm{\psi},
    \end{equation}
    where $\bm{A}_-$ is defined in \eqref{eqn:sm-Amat-pm} and $\bm{R}_-(x)$ is defined as
    \begin{equation}\label{eqn:sm-ode-neg-1}
        \bm{R}_-(x) := \begin{pmatrix}
                \kappa(x) - \kappa_- & 0\\
                0 & -\kappa(x)+\kappa_-
            \end{pmatrix}.
    \end{equation}
    Since $\kappa(x)$ satisfies \eqref{eqn:sm-kappa-condition}, the two perturbation matrices $\bm{R}_+(x)$ and $\bm{R}_-(x)$ are $L^1$-integrable, i.e.
    \begin{equation}\label{eqn:sm-A-l1-int}
        \int_{0}^\infty \|\bm{R}_+(x)\| \, dx < \infty, \qquad \int_{-\infty}^0 \|\bm{R}_-(x)\| \, dx < \infty.
    \end{equation}
    
    From classical ODE theory, we know that for a linear system obtained by adding an $L^1$-integrable perturbation to a constant-coefficient system, the solutions of the perturbed system have the same asymptotic behavior at infinity as those of the unperturbed system. This statement is presented in detail in Lemma \ref{lemma:sm-lemma-ode-asy}. Given Lemma \ref{lemma:sm-lemma-ode-asy}, it suffices to study the solutions of the following ODE systems with exponential decay
    \begin{align}
        \bm{\varphi}'_+ &= \bm{A}_+ \bm{\varphi}_+, \qquad x\geq 0,\qquad \bm{\varphi}'_- = \bm{A}_- \bm{\varphi}_-, \qquad x\leq 0. \label{eqn:supp-mat-pm-ode}
    \end{align}
    
    We observe that $\det\bm{A}_\pm < 0$ since $E$ sits in the spectral gap with $|E| < a=\min\{|\kappa_+|, |\kappa_-|\}$ (see \eqref{eqn:sm-spec-gap}). Therefore, for the positive branch $[0,\infty)$ and the negative branch $(-\infty, 0]$, there always exists solutions with exponential decay, i.e.
    \begin{equation*}
        \bm{\varphi}_+(x) = e^{-\lambda_+ x} \bm p_+ \quad \text{at }x \geq 0, \qquad \bm{\varphi}_-(x) = e^{-\lambda_- x} \bm p_- \quad \text{at }x \leq 0,
    \end{equation*}
    where $-\lambda_+$ is the negative eigenvalue of $\bm{A}_+$ and $-\lambda_-$ is the positive eigenvalue of $\bm{A}_-$, and $\bm p_\pm$ form the eigenvectors of $\bm{A}_\pm$ as mentioned in \eqref{eqn:sm-A-eig-vec}. 
    
    By applying Lemma \ref{lemma:sm-lemma-ode-asy}, we obtain a unique (up to scaling) bounded solution $\bm \psi_+(x)$ for \eqref{eqn:sm-ode} on $[0,\infty)$ with
    \begin{subequations}\label{eqn:sm-psi-eqn-1}
        \begin{align}
            &\bm{\psi}_+' =  \Big(\bm{A}_+ + \bm{R}_+(x) \Big) \bm{\psi}_+,\qquad x\geq 0, \label{eqn:sm-psi-eqn-1a}\\
            &\lim_{x\rightarrow \infty}\bm{\psi}_+(x) = \bm p_+.\label{eqn:sm-psi-eqn-1b}
        \end{align}
    \end{subequations}
    Due to the uniqueness, we have
    \begin{align}
        \bm \psi(x) = c_+ \bm{\psi}_+(x), \qquad x\in [0,\infty) \label{eqn:sm-psi-pos-asym}
    \end{align}
    for some constant $c_+ \in \mathbb{R}$. Similarly, we obtain a unique (up to scaling) bounded solution $\bm \psi_-(x)$ for \eqref{eqn:sm-ode} on $(-\infty, 0]$ with
    \begin{subequations}\label{eqn:sm-psi-eqn-2}
        \begin{align}
            &\bm{\psi}_-' =  \Big(\bm{A}_- + \bm{R}_-(x) \Big) \bm{\psi}_-, \qquad x\leq 0, \label{eqn:sm-psi-eqn-2a}\\
            &\lim_{x\rightarrow \infty}\bm{\psi}_-(x) = \bm p_-.\label{eqn:sm-psi-eqn-2b}
        \end{align}
    \end{subequations}
    and due to the uniqueness, we have
    \begin{align}
        \bm \psi(x) = c_- \bm{\psi}_-(x), \qquad x\in (-\infty,0] \label{eqn:sm-psi-neg-asym}
    \end{align}
    on $(-\infty,0]$ for some constant $c_- \in \mathbb{R}$. Thus, from \eqref{eqn:sm-psi-pos-asym} and \eqref{eqn:sm-psi-neg-asym}, we obtain the proof of (b). \\
    
    Lastly, we briefly explain why $k^{(j)}(x) \in L^\infty(\mathbb{R})$ guarantees the integrability of all derivatives of $\bm \psi$. We first observe that $\partial_x \bm \psi\in L^2(\mathbb{R})$ without using the condition on derivatives of $\kappa(x)$ since $\partial_x\psi^A, \partial_x\psi^B$ are linear combinations of $\psi^A, \psi^B$ by using \eqref{eqn:sm-ode-full}. Now we calculate $\partial_x^2 \psi^A$, which is
    \begin{equation*}
        \partial_x^2 \psi^A = \partial_x \left(\kappa(x) \psi^A\right) + E \psi^B.
    \end{equation*}
    Since $\psi^A, \psi^B,\partial_x \psi^A, \partial_x \psi^A \in L^2(\mathbb{R})$, we obtain $\partial_x^2 \psi^A \in L^2(\mathbb{R})$ when $\kappa'(x) \in L^\infty(\mathbb{R})$. Similarly, this process can be continued to all higher order derivatives of $\psi^A, \psi^B$ when all derivatives of $\kappa$ are bounded in $L^\infty(\mathbb{R})$. Therefore, we obtain all derivatives of $\psi^A, \psi^B$ are $L^2$-integrable and \eqref{eqn:psi0-Hs-bd} holds.\\
    
    The proof of (c) is briefly explained in the main text by directly calculating the zero eigenstate. Hence, we omit the proof here. Thus, we complete the proof of Lemma \ref{lemma:sm-spec-Dk}.
\end{proof}

We now state the classical ODE result in \cite{coddington1956theory}, which is an exercise problem (exercise 29) on page 104 in \cite{coddington1956theory}\footnote{The original exercise problem is stated for $\mathbb{R}^n$. Here we choose $n=2$ to fit into our setting.}. Since this is an exercise problem, we also provide the proof. 
\begin{lemma}\label{lemma:sm-lemma-ode-asy}
Let $\bm{A} \in \mathbb{R}^{2\times 2}$ be a constant matrix and $\bm{R}(t)\in \mathbb{R}^{2\times 2}$ a continuous and integrable matrix such that
\begin{equation}
    \int_0^\infty \|\bm{R}(t)\|\,dt < \infty .
\end{equation}
If $\lambda_1,\lambda_2$ are eigenvalues of $\bm{A}$ and
$\bm{p}_1(t),\bm{p}_2(t)$ are the corresponding eigenvectors, then the following ODE problem
\begin{equation}\label{eqn:sm-lemma-ode}
    \bm{x}' = \bm{A} \bm{x} + \bm{R}(t)\bm{x}
\end{equation}
has two solutions $\bm{\varphi}_1(t),\bm{\varphi}_2(t)$ such that
\begin{equation}
    \lim_{t\to +\infty} \bm{\varphi}_i(t)e^{-\lambda_i t} = \bm{p}_i, \qquad i=1,2.\label{eqn:sm-lemma-asym}
\end{equation}
Moreover, if $\det A < 0$ and $\lambda_1 < 0 < \lambda_2$, then every bounded solution of \eqref{eqn:sm-lemma-ode} is a scalar multiple of $\bm \varphi_1(t)$.
\end{lemma}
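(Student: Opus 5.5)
I would prove Lemma \ref{lemma:sm-lemma-ode-asy} by the classical variation-of-constants argument (Levinson-type asymptotics). The idea is to treat $\bm R(t)\bm x$ as a small forcing for $t$ large and to construct each $\bm\varphi_i$ as the fixed point of an integral equation posed on $[T,\infty)$, with $T$ chosen so that $\int_T^\infty\|\bm R\|$ is small. The solution is then extended back to $[0,\infty)$ by linear ODE uniqueness. I will work with a diagonalizable $\bm A$ with real eigenvalues $\lambda_1\neq\lambda_2$, which covers the case $\det\bm A<0$ needed in Lemma \ref{lemma:sm-spec-Dk}. The eigenvectors $\bm p_i$ are constant vectors. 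Let $Q_1,Q_2$ be the spectral projections of $\bm A$, so that $e^{\bm A t}=e^{\lambda_1t}Q_1+e^{\lambda_2t}Q_2$.

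For a fixed index $i$, I set $\bm y(t)=e^{-\lambda_i t}\bm x(t)$ and look for bounded solutions of
\[
\bm y(t)=\bm p_i+\int_T^t e^{(\bm A-\lambda_i)(t-s)}P_{\le}\bm R(s)\bm y(s)\,ds-\int_t^\infty e^{(\bm A-\lambda_i)(t-s)}P_{>}\bm R(s)\bm y(s)\,ds .
\]
Here $P_{\le}$ is the sum of the projections $Q_j$ with $\lambda_j\le\lambda_i$, and $P_>$ is the sum of those with $\lambda_j>\lambda_i$. Differentiating shows that any bounded solution satisfies $\bm y'=(\bm A-\lambda_i)\bm y+\bm R\bm y$, so $\bm x=e^{\lambda_it}\bm y$ solves \eqref{eqn:sm-lemma-ode}. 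Both kernels are bounded by a constant $C$: the first because $t-s\ge0$ and the exponent is nonpositive, the second because $t-s\le0$ and the exponent is positive. Consequently the right-hand side defines a contraction on $L^\infty([T,\infty))$ as soon as $C\int_T^\infty\|\bm R\|<1/2$, and we obtain a unique bounded fixed point $\bm y$.

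To get the limit \eqref{eqn:sm-lemma-asym}, I split the first integral over $[T,t]$ at $s=t/2$. The piece over $[T,t/2]$ is bounded by $e^{-|\lambda_j-\lambda_i|t/2}$ (for $j\neq i$) times a bounded quantity. The piece over $[t/2,t]$ is bounded by $C\|\bm y\|_\infty\int_{t/2}^\infty\|\bm R\|$. For the $Q_i$-component the kernel is the constant matrix $Q_i$, so that term equals $\int_T^t Q_i\bm R\bm y\,ds$, which need not tend to zero as $t\to\infty$. I would remove this term by replacing the lower limit $T$ by $\infty$ for the $Q_i$-component, i.e. by moving $Q_i$ into the second integral. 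That second integral is bounded by $\|\bm y\|_\infty\int_t^\infty\|\bm R\|\to0$. With this choice every term tends to zero and $\bm y(t)\to\bm p_i$. I expect the correct placement of the projections, so that the limit is exactly $\bm p_i$, to be the main bookkeeping obstacle.

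For uniqueness, assume $\det\bm A<0$, so that $\lambda_1<0<\lambda_2$, and let $\bm x$ be any bounded solution. The pair $\bm\varphi_1,\bm\varphi_2$ is linearly independent, because $|\bm\varphi_2|\sim e^{\lambda_2t}|\bm p_2|\to\infty$ while $\bm\varphi_1\to0$. Hence $\bm x=c_1\bm\varphi_1+c_2\bm\varphi_2$ for some constants $c_1,c_2$. If $c_2\neq0$, then $|\bm x(t)|\to\infty$, contradicting boundedness. Therefore $c_2=0$, and $\bm x$ is a scalar multiple of $\bm\varphi_1$.
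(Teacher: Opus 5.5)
Your proof is correct. It follows the same overall scheme as the paper: a variation-of-constants integral equation on a tail interval with $e^{\bm A t}$ split spectrally, a contraction (Picard iteration) driven by the smallness of $\int_a^\infty\|\bm R\|$, extension back to $[0,a]$, and the linear-independence argument for the bounded-solution claim.

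The substantive difference is exactly the bookkeeping point you flagged, and there your version is the one that works.

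\begin{itemize}
\item \textbf{The paper's arrangement.} The paper keeps the $e^{\lambda_1 t}$-component $\bm Y_1$ inside $\int_a^t$ and iterates in the weighted norm $e^{\sigma t}$ with $\sigma>\lambda_1$. This only yields $\bm\varphi_1=O(e^{\sigma t})$. In the subsequent bound on $\bm I_1(t)$, the factor $e^{-\lambda_1 t}$ is replaced by the smaller $e^{-\sigma t}$, which is not legitimate. Indeed, $e^{-\lambda_1 t}\int_a^t \bm Y_1(t-s)\bm R\bm\varphi_1\,ds=\int_a^t e^{-\lambda_1 s}Q_1\bm R\bm\varphi_1\,ds$ has no reason to tend to zero.
\item \textbf{Your arrangement.} You normalize by $e^{-\lambda_i t}$, work in $L^\infty$, and put $Q_i$ together with the larger eigenvalues into $\int_t^\infty$. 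This is the standard Coddington--Levinson placement. It gives $e^{-\lambda_i t}\bm\varphi_i(t)\to\bm p_i$ exactly, and it treats $i=1,2$ uniformly rather than by ``similarly''.
\end{itemize}

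Your restriction to diagonalizable $\bm A$ with real $\lambda_1\neq\lambda_2$ is also what the paper's proof tacitly assumes, and it is all that Lemma~\ref{lemma:sm-spec-Dk} uses. The restriction is in fact necessary. For a nontrivial Jordan block the conclusion can fail: take $x_1'=x_2$, $x_2'=-c\,t^{-2}x_1$ with $0<c<1/4$. Every nontrivial solution has $x_1$ a combination of $t^{\alpha_\pm}$ with $0<\alpha_-<\alpha_+<1$, so no solution tends to the eigenvector $(1,0)^T$, even though $\int^\infty\|\bm R\|<\infty$.
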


\begin{proof}
We first prove the existence of $\bm \varphi_1(t)$ and $\bm \varphi_2(t)$. It is sufficient to construct a solution that grows like $e^{\lambda_1 t}$ when $t \to \infty$. Let us write
\begin{equation*}
    e^{\bm{A}t}=\bm{Y}_1(t)+\bm{Y}_2(t),
\end{equation*}
where $\bm{Y}_1(t)$ corresponds to $e^{\lambda_1 t}$ and $\bm{Y}_2(t)$ is the sum of all terms $e^{\lambda_2 t}$. Since there are only two eigenvalues, there exists $\sigma < 0, \delta>0$ such that
$\lambda_1\le\sigma-\delta$ and $\lambda_2 \ge \sigma$.
Hence there exists constants $K_1,K_2>0$ with
\begin{equation}\label{eq:Y-bounds}
\|\bm{Y}_1(t)\|\le K_1 e^{(\sigma-\delta)t}\quad\text{for }t\ge0,\qquad
\|\bm{Y}_2(t)\|\le K_2 e^{\sigma t}\quad \text{for }t\le0.
\end{equation}

\noindent \textbf{Construction of a solution via Picard iteration:} We look for a solution on $[a,\infty)$ for some $a\ge1$ in the form of a
solution of the integral equation
\begin{equation}\label{eq:fixed-point}
\bm{\psi}(t) = e^{\lambda_1 t}\bm{p}_1
  + \int_a^t \bm{Y}_1(t-s)\bm{R}(s)\bm{\psi}(s)\,ds
  - \int_t^\infty \bm{Y}_2(t-s)\bm{R}(s)\bm{\psi}(s)\,ds ,
  \qquad t\ge a .
\end{equation}
One can check that any continuously differentiable solution $\bm{\psi}$ of \eqref{eq:fixed-point} satisfies \eqref{eqn:sm-lemma-ode}. Therefore, a solution to \eqref{eq:fixed-point} is a solution to the desired ODE system \eqref{eqn:sm-lemma-ode}.

\begin{equation*}
    \bm{\psi}_0(t)=e^{\lambda_1 t}\bm{p}_1,\qquad t\ge a\geq 1,
\end{equation*}
and for $\ell\ge0$, we construct
\begin{equation*}
    \bm{\psi}_{\ell+1}(t)
 = e^{\lambda_1 t}\bm{p}_1
 + \int_a^t \bm{Y}_1(t-s)\bm{R}(s)\bm{\psi}_\ell(s)\,ds
 - \int_t^\infty \bm{Y}_2(t-s)\bm{R}(s)\bm{\psi}_\ell(s)\,ds .
\end{equation*}
We can choose $K_0>0$ so that
\begin{equation*}
    \|\bm{\psi}_0(t)\|=\|e^{\lambda_1 t}\bm{p}_1\|\le K_0 e^{\sigma t},\qquad t\ge a \geq 1.
\end{equation*}
Moreover, we can choose sufficiently large $a$ such that
\begin{equation}\label{eq:a-choice}
(K_1+K_2)\int_a^\infty \|\bm{R}(s)\|\,ds < \frac12.
\end{equation}

We now claim that, for all $\ell\ge0$ and all $t\ge a\geq 1$, we have
\begin{equation}\label{eq:iteration-estimate}
\|\bm{\psi}_{\ell+1}(t)-\bm{\psi}_\ell(t)\|
\le \frac{K_0 e^{\sigma t}}{2^{\ell+1}} .
\end{equation}
We prove our claim by induction. For the base case with $\ell=0$, we have
\begin{equation*}
    \bm{\psi}_1(t)-\bm{\psi}_0(t)
 = \int_a^t \bm{Y}_1(t-s)\bm{R}(s)\bm{\psi}_0(s)\,ds
 - \int_t^\infty \bm{Y}_2(t-s)\bm{R}(s)\bm{\psi}_0(s)\,ds .
\end{equation*}
Using \eqref{eq:Y-bounds} and $\|\bm{\psi}_0(s)\|\le K_0 e^{\sigma s}$, we obtain
\begin{align*}
\|\bm{\psi}_1(t)-\bm{\psi}_0(t)\| &\le
\int_a^t K_1 e^{(\sigma-\delta)(t-s)}\|\bm{R}(s)\|K_0 e^{\sigma s}\,ds + \int_t^\infty K_2 e^{\sigma(t-s)}\|\bm{R}(s)\|K_0 e^{\sigma s}\,ds\\
&= K_0 e^{\sigma t} \Bigg[
\int_a^t K_1 e^{-\delta(t-s)}\|\bm{R}(s)\|\,ds
+\int_t^\infty K_2\|\bm{R}(s)\|\,ds
\Bigg]\\
&\le K_0 e^{\sigma t}(K_1+K_2)\int_a^\infty\|\bm{R}(s)\|\,ds \le \frac12 K_0 e^{\sigma t},
\end{align*}
where the last inequality comes from \eqref{eq:a-choice}. Therefore, we prove our claim for $\ell=0$. We now assume \eqref{eq:iteration-estimate} holds for some $\ell\ge0$ and prove for the $\ell + 1$ case. We observe that 
\begin{align*}
\bm{\psi}_{\ell+1}(t)-\bm{\psi}_\ell(t)
&= \int_a^t \bm{Y}_1(t-s)\bm{R}(s)\bigl(\bm{\psi}_\ell(s)-\bm{\psi}_{\ell-1}(s)\bigr)\,ds\\
&\quad -\int_t^\infty \bm{Y}_2(t-s)\bm{R}(s)\bigl(\bm{\psi}_\ell(s)-\bm{\psi}_{\ell-1}(s)\bigr)\,ds .
\end{align*}
Using the induction hypothesis
\begin{equation*}
    \|\bm{\psi}_\ell(s)-\bm{\psi}_{\ell-1}(s)\|
\le \frac{K_0 e^{\sigma s}}{2^\ell},\qquad s\ge a
\end{equation*}
and \eqref{eq:Y-bounds}, we obtain
\begin{align*}
\|\bm{\psi}_{\ell+1}(t)-\bm{\psi}_\ell(t)\|
&\le \frac{K_0 e^{\sigma t}}{2^\ell}
(K_1+K_2)\int_a^\infty\|\bm{R}(s)\|\,ds\le \frac{K_0 e^{\sigma t}}{2^{\ell+1}},
\end{align*}
which completes our induction proof. Thus, by a standard Picard iteration technique, we can show that $\bm{\psi}_\ell(t)$ is a Cauchy sequence and its limit
\begin{equation}\label{eqn:sm-lemma-ode-sol}
    \bm{\varphi}_1(t) = \lim_{\ell\rightarrow \infty} \bm{\psi}_l(t)
\end{equation}
is the solution of \eqref{eq:fixed-point} and \eqref{eqn:sm-lemma-ode}.\\

\noindent \textbf{The asymptotic behavior of $\bm{\varphi}_1(t)$:} Lastly, we present the asymptotic behavior of $\bm{\varphi}_1(t)$. We multiply \eqref{eq:fixed-point} by $e^{-\lambda_1 t}$ and write it as
\begin{equation*}
    e^{-\lambda_1 t}\bm{\varphi}_1(t)-\bm{p}_1 = \bm{I}_1(t)-\bm{I}_2(t),
\end{equation*}
where
\begin{align*}
    \bm{I}_1(t) &= e^{-\lambda_1 t}
\int_a^t \bm{Y}_1(t-s)\bm{R}(s)\bm{\varphi}_1(s)\,ds,\\[1mm]
\bm{I}_2(t) &= e^{-\lambda_1 t}
\int_t^\infty \bm{Y}_2(t-s)\bm{R}(s)\bm{\varphi}_1(s)\,ds .
\end{align*}
We shall show that $\bm{I}_1(t)\to0$ and $\bm{I}_2(t)\to0$ as $t\to\infty$. Using \eqref{eq:Y-bounds} and $\|\bm{\varphi}_1(s)\|\le C e^{\sigma s}$
(we can choose $C=2K_0$),
\begin{align*}
\|\bm{I}_1(t)\|
&\le e^{-\sigma t}\int_a^t
K_1 e^{(\sigma-\delta)(t-s)}\|\bm{R}(s)\| C e^{\sigma s}\,ds\\
&= C K_1 \int_a^t e^{-\delta(t-s)}\|\bm{R}(s)\|\,ds .
\end{align*}
The right-hand side is the convolution of $\|\bm{R}(s)\|$ with $e^{-\delta s}$.
Since $\|\bm{R}\|\in L^1([a,\infty))$ and $0\le e^{-\delta(t-s)}\le e^{-\delta(t-a)}$, by a standard integral bound, we have 
\begin{equation*}
    \lim_{t \rightarrow 0}\int_a^t e^{-\delta(t-s)}\|\bm{R}(s)\|\,ds = 0.
\end{equation*}
Therefore, we obtain $\|\bm{I}_1(t)\|\to0$ as $t \to 0$.\\

Similarly, to bound $\|\bm{I}_2(t)\|$, we use
$\|\bm{Y}_2(t-s)\|\le K_2 e^{\sigma(t-s)}$ for $s\ge t$ and
$\|\bm{\varphi}_1(s)\|\le C e^{\sigma s}$:
\begin{align*}
\|\bm{I}_2(t)\|
&\le e^{-\sigma t}\int_t^\infty
K_2 e^{\sigma(t-s)}\|\bm{R}(s)\| C e^{\sigma s}\,ds = C K_2 \int_t^\infty \|\bm{R}(s)\|\,ds,
\end{align*}
and the tail of the integrable function $\|\bm{R}(s)\|$ tends to $0$ as
$t\to\infty$. Hence $\|\bm{I}_2(t)\|\to0$. Therefore
\begin{equation*}
    \lim_{t\to\infty}\bigl(e^{-\lambda_1 t}\bm{\varphi}_1(t)-\bm{p}_1\bigr)
= \lim_{t\to\infty}\bm{I}_1(t)-\bm{I}_2(t)=0,
\end{equation*}
which completes the proof of the existence of $\bm \varphi_1(t)$. \\

\noindent \textbf{Final construction of $\bm \varphi_1(t)$ and $\bm \varphi_2(t)$:} So far, we have constructed a solution $\bm \varphi_1(t)$ on $[a,\infty)$ such that $\bm \varphi_1(t) \rightarrow e^{\lambda_1 t}\bm p_1$. We can extend the solution between $[0,a]$ to complete our construction of $\bm \varphi_1(t)$. Similarly, using $\bm \psi_0(t) = e^{\lambda_2 t}\bm p_2$, the Picard iteration leads to a solution $\bm \varphi_2(t) \rightarrow e^{\lambda_2 t}\bm p_2$ as $t\rightarrow \infty$.\\

\noindent \textbf{The uniqueness of the bounded solution:} Lastly, we prove the argument related to $\det A < 0$ given the existence of $\bm \varphi_1(t)$ and $\bm \varphi_2(t)$. Since $\lambda_1 < 0 < \lambda_2$, $\bm \varphi_1(t)$ and $\bm \varphi_2(t)$ are linearly independent due to their different asymptotic behavior at $\infty$. Therefore, every bounded solution is a scalar multiple of $\bm \varphi_1(t)$.
\end{proof}

\section{Some useful calculations and bounds}\label{app:some-bds}

In this section, we state and prove several useful bounds and calculations.\\

\noindent \underline{\textit{ $H^\delta$ is self-adjoint}:} We verify that $H^\delta=H^0+\delta P^\delta$ in \eqref{eqn:ham-hon-delta-approx} is self-adjoint, where 
\begin{align*}
   (P^\delta \psi)_{m,n} \begin{pmatrix}
        (P^\delta \psi)_{m,n}^A\\
        (P^\delta \psi)_{m,n}^B
    \end{pmatrix} = \sum_{\nu = 1}^3 \begin{pmatrix}
        f_\nu(\delta \Cell_{m,n}) \psi_{m+m_\nu,n+n_\nu}^B\\
        f_\nu(\delta \Cell_{m-m_\nu,n-n_\nu}) \psi_{m-m_\nu,n-n_\nu}^A
    \end{pmatrix}
\end{align*}

Since $H^0$ is self-adjoint, it suffices to check that  $P^\delta[\bm{u}]$ is. For any $\varphi, \psi \in l^2(\mathbb{H})$, 
\begin{align*}
    \langle \varphi, P^\delta \psi\rangle_{l^2(\mathbb{H})} &= \sum_\nu \sum_{m,n} \overline{\varphi_{m,n}^A}\ f_\nu(\delta\Cell_{mn})\ \psi^B_{m+m_\nu,n+n_\nu}
    + \sum_\nu \sum_{m,n} \overline{\varphi_{m,n}^B}\ f_\nu(\delta\Cell_{m-m_\nu,n-n_\nu})\  \psi^A_{m-m_\nu,n-n_\nu} \\
    &=\sum_\nu \sum_{m,n} \overline{\varphi_{m-m_\nu,n-n_\nu}^A}\ f_\nu(\delta\Cell_{m-m_\nu,n-n_\nu})\ \psi^B_{m,n}
    + \sum_\nu \sum_{m,n } \overline{\varphi_{m+m_\nu,n+n_\nu}^B}\ f_\nu(\delta\Cell_{m,n})\  \psi^A_{m,n}\\
    &=  \sum_\nu \sum_{m,n } \overline{ f_\nu(\delta\Cell_{m,n})\ \varphi_{m+m_\nu,n+n_\nu}^B} \  \psi^A_{m,n} + \sum_\nu \sum_{m,n} \overline{ f_\nu(\delta\Cell_{m-m_\nu,n-n_\nu})\ \varphi_{m-m_\nu,n-n_\nu}^A}\ \psi^B_{m,n}\\
     &= \langle P^\delta  \varphi, \psi\rangle_{l^2(\mathbb{H})}.
\end{align*}

\noindent \underline{\textit{The real valued $E_2(k_\parallel)$ in \eqref{eqn:E2}}:} We prove that $E_2(k_\parallel) \in \mathbb{R}$ by the following calculation: 
\begin{align*}
    & \langle \Psi_0^A, \mathcal{R}_2^A\rangle_{L^2(\mathbb{R})} + \langle \Psi_0^B, \mathcal{R}_2^B\rangle_{L^2(\mathbb{R})}= -\frac{1}{4} k_\parallel^2 \int_{\mathbb{R}} \text{Re}(\overline{\Psi_0^A} \Psi_0^B) \: dX_1 +\frac{3}{4} \int_{\mathbb{R}} \text{Re}\left( \overline{\Psi_0^A} \partial_{X_1}^2 \Psi_0^B\right) \: dX_1 \\
    &\qquad - \frac{3}{2} \int_{\mathbb{R}} e^{i\frac{5}{3}\pi} \overline{\Psi_0^A} \partial_{X_1}^2 \Psi_0^B \: d X_1 - \frac{3}{2} \int_{\mathbb{R}} e^{-i\frac{5}{3}\pi} \overline{\Psi_0^B} \partial_{X_1}^2 \Psi_0^A\: d X_1\\
    &\qquad + \Big(\frac{\sqrt{3}}{4} k_\parallel i\Big)\int_{\mathbb{R}}
    \overline{\Psi_0^A} \partial_{X_1} \Psi_0^B \: dX_1 + \Big(\frac{\sqrt{3}}{4} k_\parallel i\Big)\int_{\mathbb{R}} \overline{\Psi_0^B} \partial_{X_1} \Psi_0^A \: dX_1\\
    &\qquad - \sqrt{3} t_1\int_{\mathbb{R}} e^{i\frac{5}{3}\pi} \overline{\Psi_0^A} \Big(f_2 \partial_{X_1} \Psi_0^B\Big) \: dX_1 + \sqrt{3} t_1\int_{\mathbb{R}} e^{-i\frac{5}{3}\pi} \overline{\Psi_0^B} \partial_{X_1} \Big(f_2\Psi_0^A\Big) \: dX_1.
\end{align*}
Using integration by parts, we obtain $\langle \Psi_0^A, \mathcal{R}_2^A\rangle_{L^2(\mathbb{R})} + \langle \Psi_0^B, \mathcal{R}_2^B\rangle_{L^2(\mathbb{R})} \in \mathbb{R}$ with
\begin{align}
    &\langle \Psi_0^A, \mathcal{R}_2^A\rangle_{L^2(\mathbb{R})} + \langle \Psi_0^B, \mathcal{R}_2^B\rangle_{L^2(\mathbb{R})}= -\frac{1}{4} k_\parallel^2 \int_{\mathbb{R}} \text{Re}(\overline{\Psi_0^A} \Psi_0^B) \: dX_1 + \frac{3}{4} \int_{\mathbb{R}} \text{Re}\left( \overline{\Psi_0^A} \partial_{X_1}^2 \Psi_0^B\right) \: dX_1 \nonumber\\
    &\qquad - 3 \int_{\mathbb{R}} \text{Re} \left(e^{i\frac{5}{3}\pi} \overline{\Psi_0^A} \partial_{X_1}^2 \Psi_0^B\right) \: d X_1 +\frac{\sqrt{3}}{4} k_\parallel \int_{\mathbb{R}} \text{Re} \left(i\overline{\Psi_0^A} \partial_{X_1} \Psi_0^B\right)\: d X_1 \nonumber\\
    &\qquad - 2\sqrt{3} t_1\int_{\mathbb{R}} \text{Re}\left(e^{i\frac{5}{3}\pi} \overline{\Psi_0^A} \Big(f_2 \partial_{X_1} \Psi_0^B\Big)\right) \: dX_1. \label{eqn:app-real-E2}
\end{align}

\noindent \underline{\textit{Bound on $\widehat{\Psi}\left(\frac{k}{\delta}\right) \chi(|k|\geq \delta^\tau)$:}} Given that $\tau \in (0,1)$ and $\Psi\in H^s(\mathbb R)$ with $s\geq 1$. Then the tail of $\widehat{\Psi}\left(\frac{k}{\delta}\right)$ is arbitrarily small, i.e.
\begin{align}\label{eqn:bd-psi-far}
    \left\|\widehat{\Psi}\!\left(\frac{k}{\delta}\right)\chi\!\left(|k|\ge \delta^\tau\right)\right\|_{L^2_k\left(\left[-\frac{2\pi}{\sqrt{3}}, \frac{2\pi}{\sqrt{3}}\right]\right)} \leq \delta^{\frac{1+2s(1-\tau)}{2}}\,\|\Psi\|_{H^s(\mathbb{R})}.
\end{align}
The proof of \eqref{eqn:bd-psi-far} uses a change of parameter $k = \delta q$ and a standard trick by multiplying and dividing $|q|^{2s}$. We write the left hand side of \eqref{eqn:bd-psi-far} as
\begin{align*}
    \int_{-\frac{2\pi}{\sqrt{3}}}^{\frac{2\pi}{\sqrt{3}}}\left|\widehat{\Psi}\!\left(\frac{k}{\delta}\right)\right|^2 \chi\!\left(|k|\ge \delta^\tau\right)\,dk &\leq \delta \int_{|q| \geq \delta^{\tau-1}} \left|\widehat{\Psi}(q)\right|^2 \,dq \leq \delta \int_{|q| \geq \delta^{\tau-1}} \frac{1}{|q|^{2s}} \Big(|q|^{2s} \left|\widehat{\Psi}(q)\right|^2\Big) \,dq.
\end{align*}
Since $|q| \geq \delta^{\tau-1}$, we have $|q|^{-2s} \leq \delta^{2s(1-\tau)}$. Therefore, our desired bound \eqref{eqn:bd-psi-far} holds.\\

\noindent \underline{\textit{Bounds on Fourier modes with $|m|\geq 1$}:} Given that $\Psi \in H^s(\mathbb{R})$ with $s\geq 1$, we have
\begin{equation}\label{eqn:app-sum-M-bd}
    \left\| \sum_{|m| \geq 1} \widehat{\Psi}\left(\frac{k + \frac{4\pi}{\sqrt{3}}m}{\delta}\right) \right\|_{L^2\left(\left[-\frac{2\pi}{\sqrt{3}}, \frac{2\pi}{\sqrt{3}}\right]\right)}
\lesssim
\delta^{\frac{2s+1}{2}}\|\Psi\|_{H^s(\mathbb{R})}.
\end{equation}
To simplify the proof, we define a constant $c:=\cconst$ and the following functions
\begin{equation}\label{eqn:app-sk}
    S(k)\;:=\;\sum_{|m|\ge 1} a_M(k), \qquad a_m(k):=\widehat{\Psi}\big(\frac{k+cm}{\delta}\big),
\qquad k\in \Icell = \left[-\frac{c}{2}, \frac{c}{2}\right].
\end{equation}
Using the new notation, it is equivalent to show that for any $\Psi\in H^s(\mathbb{R})$ with $s\geq 1$, we have
\begin{equation}\label{eqn:app-Hs-bd}
    \|S(k)\|_{L^2\left(\Icell\right)}^2 \leq C_s \delta^{2s+1} \|\Psi\|^2_{H^s(\mathbb{R})}.
\end{equation}
for constant $C_s>0$ that depends only on $s$ and $c$.

We now start our proof of \eqref{eqn:app-sum-M-bd}. We observe that for each $k\in I$, we use Cauchy-Schwarz inequality
\begin{equation}\label{eq:wcs}
|S(k)|^2 = \Big|\sum_{|m|\ge 1} a_m(k)\Big|^2
\le
\Big(\sum_{|m|\ge 1}|k+cm|^{-2s}\Big)
\Big(\sum_{|m|\ge 1}|k+cm|^{2s}|a_m(k)|^2\Big).
\end{equation}
Since $k\in [-c/2,c/2]$, we have $|k+cm|\ge c(|m|-1/2) \geq c|m|/2$ for all $|m|\geq 1$. Hence, we can define the constant
\begin{align*}
    A_s:=\sup_{k\in \Icell} \quad \sum_{|m|\ge 1}|k+cm|^{-2s}<\infty
\end{align*}
for all $s \geq 1$. Therefore, we have
\begin{equation}\label{eq:L2reduce}
\|S\|_{L^2\left(\Icell\right)}^2
\le
A_s \sum_{|m|\ge 1} \int_{-\frac{c}{2}}^{\frac{c}{2}} |k+cm|^{2s}\Big|\widehat{\Psi}\!\left(\frac{k+cm}{\delta}\right)\Big|^2\,dk.
\end{equation}
By choosing $\xi=\frac{k+cm}{\delta}$, we obtain $|k+cm|^{2s}=\delta^{2s}|\xi|^{2s}\le \delta^{2s}(1+|\xi|^2)^s$. For each $m$, we have
\begin{equation*}
    \int_{-\frac{c}{2}}^{\frac{c}{c}} |k+cm|^{2s}\Big|\widehat{\Psi}\!\left(\frac{k+cm}{\delta}\right)\Big|^2\,dk = \delta^{2s+1} \int_{\frac{c\left(m - \frac{1}{2}\right)}{\delta}}^{\frac{c\left(m + \frac{1}{2}\right)}{\delta}} (1+|\xi|^2)^s\Big|\widehat{\Psi}(\xi)\Big|^2\,d\xi.
\end{equation*}
Summing over $|m|\ge 1$ yields a disjoint union of intervals covering $\mathbb{R}\setminus \left(-\frac{c}{2\delta}, \frac{c}{2\delta}\right)$, hence
\begin{equation}\label{eq:cover}
\sum_{|m|\ge 1}  \int_{-\frac{c}{2}}^{\frac{c}{2}} |k+cm|^{2s}\Big|\widehat{\Psi}\!\left(\frac{k+cm}{\delta}\right)\Big|^2\,dk
\leq 
\delta^{2s+1}\int_{\mathbb{R}}(1+|\xi|^2)^s\Big|\widehat{\Psi}(\xi)\Big|^2\,d\xi=
\delta^{2s+1} \|\Psi\|_{H^s(\mathbb{R})}^2.
\end{equation}
Combining \eqref{eq:L2reduce} with \eqref{eq:cover}, we obtain the desired inequality \eqref{eqn:app-Hs-bd}.\\

\noindent \underline{\textit{Bounds on $\widetilde{I}_1[k;\bm \Psi_0, \mu, \delta]$ and $\widetilde{I}_2[k;\bm \Psi_0, \mu,\delta]$:}} Given that $\bm \Psi_0 \in H^s(\mathbb{R})$ for all $s \geq 1$, $|\mu|\leq M$, $\delta \in (0,\delta_0)$ and $\tau \in (0,1)$, We prove the bounds \eqref{eqn:bd-I-far-main} for $\widetilde{I}_i[k;\bm \Psi_0,\mu,\delta]$ with $i=1,2$, i.e.
\begin{align}\label{eqn:bd-I-far}
   \left\|\widetilde{I}_i[k; \bm \Psi_0, \mu, \delta] \: \chi(|k|\geq \delta^\tau)\right\|_{L^2\left(\left[-\frac{2\pi}{\sqrt{3}}, \frac{2\pi}{\sqrt{3}}\right]\right)} \lesssim \delta^{\frac{1}{2}} \norm{\bm \Psi_0}_{H^2(\mathbb{R})}, \qquad i=1,2,
\end{align}
Let us first explain the $\delta^{\frac{1}{2}}$-contributions in \eqref{eqn:bd-I-far}, which actually come from the following term:
\begin{align*}
    &\left\|\frac{2}{\sqrt{3} \delta^2} \left[\Gamma_3[k,k_\parallel,\delta] \: \widehat{\Psi_0^B}\left(\frac{k}{\delta}\right) \: + \Gamma_4[k,k_\parallel,\delta] \: \widehat{\Psi_0^A}\left(\frac{k}{\delta}\right) \right] \: \chi(|k|\geq \delta^\tau)\right\|_{\LtwoBrill}^2 \\
    \lesssim \: & \delta \int_{|\xi|\geq \delta^{\tau-1}} \frac{|k|^4}{\delta^4} \left(\left|\widehat{\Psi_0^B}(\xi)\right|^2 + \left|\widehat{\Psi_0^A}(\xi)\right|^2\right) \: d\xi\lesssim  \delta^{\frac{1}{2}} \Big( \norm{\partial_{X_1}^2 \Psi_0^A}_{L^2{(\mathbb{R})}} + \norm{\partial_{X_1}^2 \Psi_0^B}_{L^2{(\mathbb{R})}}\Big).
\end{align*}
The first inequality comes from the fact that $|\Gamma_3[k,k_\parallel,\delta]| + |\Gamma_4[k,k_\parallel,\delta]|\lesssim |k|^2$ (see near \eqref{eqn:gamma-3} and \eqref{eqn:gamma-4}). The rest terms in $\widetilde{I}_i[k;\bm \Psi_0,\mu,\delta]$ are arbitrarily small in the far-momentum regime: the Fourier modes with $|m|\geq 1$ are controlled by \eqref{eqn:app-sum-M-bd}, and the remaining contributions in \eqref{eqn:I1-full} and \eqref{eqn:I2-full} can be estimated by
\begin{align*}
    \text{(i)} \ & \left\|-\frac{2}{\sqrt{3}}\mu\widehat{\bm \Psi_0}\left(\frac{k}{\delta}\right) \: \chi(|k|\geq \delta^\tau)\right\|_{\LtwoBrill} \lesssim \delta^{1+2s(1-\tau)} \: \|\bm \Psi_0\|_{H^s(\mathbb{R})},\\
    \text{(ii)} \ & \left\|\chi(|k|\geq \delta^\tau) \sum_{m \in \mathbb{Z}} \Bigg[ \ff_2 \left(\frac{\sqrt{3}}{2} \delta m\right)-\ff_2\left(\frac{\sqrt{3}}{2} \delta (m+2) \right)\Bigg] \bm \Psi_0\left(\frac{\sqrt{3}}{2} \delta (m+2) \right)e^{-ik\frac{\sqrt{3}}{2}m} \right\|_{\LtwoBrill} \\
    \lesssim \: & \delta \:\left\|\chi(|k|\geq \delta^\tau) \sum_{m \in \mathbb{Z}} \bm \Psi_0\left(\frac{\sqrt{3}}{2} \delta (m+2) \right)e^{-ik\frac{\sqrt{3}}{2}m} \right\|_{\LtwoBrill} \\
    \lesssim  \: & \norm{\chi(|k|\geq \delta^\tau) \sum_{m \in \mathbb{Z}} \widehat{\bm \Psi_0}\left(\frac{k + \frac{4\pi}{\sqrt{3}}m}{\delta}\right)}_{\LtwoBrill} \lesssim \Big(\delta^{1+2s(1-\tau)} + \delta^{s+\frac{1}{2}}\Big) \: \|\bm \Psi_0\|_{H^s(\mathbb{R})}.
\end{align*}
We first prove (i): for any Schwartz function $\Psi \in H^s(\mathbb{R})$ for any $s \geq 1$, its far-momentum part is sufficiently small and bounded by
\begin{align}
    &\norm{\widehat{\Psi}\left(\frac{k}{\delta}\right) \: \chi(|k|\geq \delta^\tau)}_{\LtwoBrill}^2 \leq \delta^{1+2s(1-\tau)} \|\Psi\|_{H^s(\mathbb{R})}.\label{eqn:main-far-adv}
\end{align}
The proof of \eqref{eqn:main-far-adv} follows by a straightforward calculation: 
\begin{align*}
    &\norm{\widehat{\Psi}\left(\frac{k}{\delta}\right) \: \chi(|k|\geq \delta^\tau)}_{\LtwoBrill}^2 = \delta \int_{|\xi|\geq \delta^{\tau-1}} |\widehat{\Psi}(\xi)|^2 \: d\xi \qquad (\text{take }k = \delta \xi )\nonumber\\
    &\leq \delta^{1+2s(1-\tau)} \int_{|\xi|\geq \delta^{\tau-1}} |\xi|^{2s} \: |\widehat{\Psi}(\xi)|^2 \: d\xi \leq \delta^{1+2s(1-\tau)} \|\Psi\|_{H^s(\mathbb{R})},
\end{align*}
where the first inequality on the second line holds since it uses $|\xi|^{-1} \leq \delta^{1-\tau}$. For (ii), we observe that the first inequality in (ii) comes from the uniform bound on $f_2'$ and the second inequality comes from \eqref{eqn:psf-scaled-1}. The last inequality in (ii) comes from (i) and \eqref{eqn:app-sum-M-bd}. Thus, we obtain the bound \eqref{eqn:bd-I-far}.\\

\noindent \underline{\textit{Lipschitz estimates in Proposition \ref{prop:far-energy}}:} We prove the Lipschitz continuity in $\mu$ for the far-momentum solution in Proposition \ref{prop:far-energy}, i.e. the bounds \eqref{eqn:A-map-lip-mu} and \eqref{eqn:B-map-lip-mu}. It is sufficient to show the Lipschitz estimates for the contraction mapping $\mathcal{E}[\bm{\widetilde{\eta}}_\text{far};\bm{\widetilde{\eta}}_\text{near},\mu, \delta]$ in \eqref{eqn:far-contraction}, i.e. for bounded $\bm{\widetilde{\eta}}_\text{far} \in B_{\text{far},\delta^\tau}(\rho_\delta)$ and $\bm{\widetilde{\eta}}_\text{near} \in B_{\text{near},\delta^\tau}(R)$, we have
\begin{subequations}\label{eqn:bd-map-E}
    \begin{align}
        &\norm{\mathcal{E}[\bm{\widetilde{\eta}}_\text{far};\bm{\widetilde{\eta}}_\text{near},\mu_1, \delta] - \mathcal{E}[\bm{\widetilde{\eta}}_\text{far};\bm{\widetilde{\eta}}_\text{near},\mu_2, \delta]}_{\LtwoBrill} \label{eqn:bd-map-E-1}\\
        &\lesssim  \ |\mu_1 - \mu_2| \left[\delta^{\frac{1}{2}-\tau} \norm{\bm \Psi_0^A}_{L^2(\mathbb{R})} + \delta^{2-\tau} \norm{\bm{\widetilde{\eta}}_\text{far}}_{\LtwoBrill}\right], \nonumber\\
        &\norm{\mathcal{E}[\bm{\widetilde{\eta}}_\text{far}^1;\bm{\widetilde{\eta}}_\text{near},\mu, \delta] - \mathcal{E}[\bm{\widetilde{\eta}}_\text{far}^2;\bm{\widetilde{\eta}}_\text{near},\mu, \delta]}_{\LtwoBrill} \lesssim \delta^{1-\tau} \norm{\bm{\widetilde{\eta}}_\text{far}^1 - \bm{\widetilde{\eta}}^2_\text{far}}_{\LtwoBrill}.\label{eqn:bd-map-E-2}
    \end{align}
\end{subequations}
We briefly explain why \eqref{eqn:bd-map-E} leads to \eqref{eqn:A-map-lip-mu} and \eqref{eqn:B-map-lip-mu}. We fix $\delta, \bm{\widetilde{\eta}}_\text{near}$ and assume $\bm{\widetilde{\eta}}_\text{far}^1,\bm{\widetilde{\eta}}_\text{far}^2$ are the solutions of \eqref{eqn:far-contraction} for $\mu_1, \mu_2$. Substituting the representation \eqref{eqn:far-affine} into these solutions, we obtain
\begin{align}
    \bm{\widetilde{\eta}}_\text{far}^1 - \bm{\widetilde{\eta}}_\text{far}^2 &= [\mathcal{A}\widetilde{\bm{\eta}}_\text{near}](k;\mu_1, \delta) + \mathcal{B}(k;\mu_1,\delta) - [\mathcal{A}\widetilde{\bm{\eta}}_\text{near}](k;\mu_2, \delta) - \mathcal{B}(k;\mu_2,\delta) \nonumber\\
    &=\mathcal{E}[\bm{\widetilde{\eta}}_\text{far}^1;\bm{\widetilde{\eta}}_\text{near},\mu_1, \delta] - \mathcal{E}[\bm{\widetilde{\eta}}_\text{far}^2;\bm{\widetilde{\eta}}_\text{near},\mu_2, \delta].\label{eqn:map-E-equality}
\end{align}
We observe that when $\Psi_0^A = \Psi_0^B$, the source terms in \eqref{eqn:eta-far} vanish. Consequently, we have $\mathcal{B}(k;\mu,\delta) = 0$ and \eqref{eqn:map-E-equality} becomes
\begin{align*}
    &\bm{\widetilde{\eta}}_\text{far}^1 - \bm{\widetilde{\eta}}_\text{far}^2 = [\mathcal{A}\widetilde{\bm{\eta}}_\text{near}](k;\mu_1, \delta)  - [\mathcal{A}\widetilde{\bm{\eta}}_\text{near}](k;\mu_2, \delta)\\
    =& \mathcal{E}[\bm{\widetilde{\eta}}_\text{far}^1;\bm{\widetilde{\eta}}_\text{near},\mu_1, \delta] - \mathcal{E}[\bm{\widetilde{\eta}}_\text{far}^1;\bm{\widetilde{\eta}}_\text{near},\mu_2, \delta] + \mathcal{E}[\bm{\widetilde{\eta}}_\text{far}^1;\bm{\widetilde{\eta}}_\text{near},\mu_2, \delta] - \mathcal{E}[\bm{\widetilde{\eta}}_\text{far}^2;\bm{\widetilde{\eta}}_\text{near},\mu_2, \delta].
\end{align*}
Using \eqref{eqn:bd-map-E-1} and \eqref{eqn:bd-map-E-2}, we obtain \eqref{eqn:A-map-lip-mu}
\begin{align*}
    &\norm{\bm{\widetilde{\eta}}_\text{far}^1 - \bm{\widetilde{\eta}}_\text{far}^2}_{\LtwoBrill} = \norm{[\mathcal{A}\widetilde{\bm{\eta}}_\text{near}](k;\mu_1, \delta)  - [\mathcal{A}\widetilde{\bm{\eta}}_\text{near}](k;\mu_2, \delta)}_{\LtwoBrill} \\
    \lesssim \: & |\mu_1 - \mu_2| \Big(\delta^{2-\tau}\norm{\bm{\widetilde{\eta}}_\text{far}^1}_{\LtwoBrill}\Big) + \delta^{1-\tau} \norm{\bm{\widetilde{\eta}}_\text{far}^1 - \bm{\widetilde{\eta}}_\text{far}^2}_{\LtwoBrill}\\
    \Rightarrow \qquad & \norm{[\mathcal{A}\widetilde{\bm{\eta}}_\text{near}](k;\mu_1, \delta)  - [\mathcal{A}\widetilde{\bm{\eta}}_\text{near}](k;\mu_2, \delta)}_{\LtwoBrill} \lesssim |\mu_1 - \mu_2| \Big(\delta^{2-\tau}\norm{\bm{\widetilde{\eta}}_\text{far}^1}_{\LtwoBrill}\Big) \\
    \lesssim \: & |\mu_1 - \mu_2| \Big(\delta^{3-2\tau} \|\widetilde{\bm{\eta}}_\text{near}\|_{\LtwoBrill}\Big) \lesssim  |\mu_1 - \mu_2| \Big(\delta^{1-\tau} \|\widetilde{\bm{\eta}}_\text{near}\|_{\LtwoBrill}\Big).
\end{align*}
Using a similar procedure by taking $\bm{\widetilde{\eta}}_\text{near} = 0$, we obtain \eqref{eqn:B-map-lip-mu} from the bounds \eqref{eqn:bd-map-E}.

We now prove \eqref{eqn:bd-map-E} by first writing the explicit form of the contraction map $\mathcal{E}[\bm{\widetilde{\eta}}_\text{far};\bm{\widetilde{\eta}}_\text{near},\mu, \delta] = (\mathcal{E}_1[\bm{\widetilde{\eta}}_\text{far};\bm{\widetilde{\eta}}_\text{near},\mu, \delta], \mathcal{E}_2[\bm{\widetilde{\eta}}_\text{far};\bm{\widetilde{\eta}}_\text{near},\mu, \delta])^T$ with
\begin{subequations}\label{eqn:app-map-E}
    \begin{align}
        \mathcal{E}_1[\bm{\widetilde{\eta}}_\text{far};\bm{\widetilde{\eta}}_\text{near},\mu, \delta] &= -\frac{\chi(|k|\geq \delta^\tau)}{\Gamma_2(k,k_\parallel,\delta)} \bigg(\widetilde{I}_2[k;\bm \Psi_0, \mu, \delta] -\delta \widetilde{F}_2[k;\bm{\widetilde{\eta}}_\text{near}]\bigg) \nonumber\\
        &+ \frac{\delta E_1}{\Gamma_2(k,k_\parallel,\delta)} \widetilde{\eta}^B_\text{far}(k) + \frac{\delta^2 \mu}{\Gamma_2(k,k_\parallel,\delta)} \widetilde{\eta}^B_\text{far}(k) + \delta \frac{\chi(|k|\geq \delta^\tau)}{\Gamma_2(k,k_\parallel,\delta)} \widetilde{F}_2[k;\bm{\widetilde{\eta}}_\text{far}]. \label{eqn:map-E1}\\
        \mathcal{E}_2[\bm{\widetilde{\eta}}_\text{far};\bm{\widetilde{\eta}}_\text{near},\mu, \delta] &= -\frac{\chi(|k|\geq \delta^\tau)}{\Gamma_1(k,k_\parallel,\delta)} \bigg(\widetilde{I}_1[k;\bm \Psi_0, \mu, \delta] -\delta \widetilde{F}_1[k;\bm{\widetilde{\eta}}_\text{near}]\bigg) \nonumber\\
        &+ \frac{\delta E_1}{\Gamma_1(k,k_\parallel,\delta)} \widetilde{\eta}^A_\text{far}(k) + \frac{\delta^2 \mu}{\Gamma_1(k,k_\parallel,\delta)} \widetilde{\eta}^A_\text{far}(k) + \delta \frac{\chi(|k|\geq \delta^\tau)}{\Gamma_1(k,k_\parallel,\delta)} \widetilde{F}_1[k;\bm{\widetilde{\eta}}_\text{far}]. \label{eqn:map-E2}
    \end{align}
\end{subequations}
We prove \eqref{eqn:bd-map-E} for $\mathcal{E}_1[\bm{\widetilde{\eta}}_\text{far};\bm{\widetilde{\eta}}_\text{near},\mu,\delta]$; the estimate for $\mathcal{E}_2$ is analogous. Since $|k|^{-1} \leq \delta^{-\tau}$ on the far-momentum regime, the proof for $\mathcal{E}_1$ follows from the bounds below: for $i=1,2$, we have
\begin{align*}
    \text{(i)} \quad &\norm{-\frac{\chi(|k|\geq \delta^\tau)}{\Gamma_i(k,k_\parallel,\delta)} \widetilde{I}_i[k;\bm \Psi_0, \mu_1, \delta] + \frac{\chi(|k|\geq \delta^\tau)}{\Gamma_i(k,k_\parallel,\delta)} \widetilde{I}_i[k;\bm \Psi_0, \mu_2, \delta]}_{\LtwoBrill} \\
    \lesssim \:& |\mu_1 - \mu_2| \: \delta^{-\tau} \: \norm{\chi(|k|\geq \delta^\tau)\sum_{m \in \mathbb{Z}} \widehat{\bm \Psi_0}\left(\frac{k + \frac{4\pi}{\sqrt{3}}m}{\delta}\right)}_{\LtwoBrill} \\
    \lesssim \:& |\mu_1 - \mu_2| \delta^{\frac{1}{2}-\tau} \norm{\bm \Psi_0}_{L^2(\mathbb{R})},\\
    \text{(ii)} \quad & \norm{\frac{\delta^s}{\Gamma_i(k,k_\parallel,\delta)} \widetilde{\bm \eta}_\text{far}(k)}_{\LtwoBrill} \lesssim \delta^{s-\tau}  \norm{\bm{\widetilde{\eta}}_\text{far}}_{\LtwoBrill}, \qquad (\text{use }s=1,2)\\
    \text{(iii)} \quad & \norm{\delta \frac{\chi(|k|\geq \delta^\tau)}{\Gamma_i(k,k_\parallel,\delta)} \widetilde{F}_i[k;\bm{\widetilde{\eta}}_\text{far}]}_{\LtwoBrill} \lesssim \delta^{1-\tau} \norm{\bm{\widetilde{\eta}}_\text{far}}_{\LtwoBrill}.
\end{align*}
The bound (i) comes from \eqref{eqn:main-far-adv} and \eqref{eqn:app-sum-M-bd}. The bound (iii) comes from \eqref{eqn:bd-F-main}. Thus, we finish our proof  of \eqref{eqn:bd-map-E} for $\mathcal{E}_1[\bm{\widetilde{\eta}}_\text{far};\bm{\widetilde{\eta}}_\text{near},\mu,\delta]$.\\

\noindent \underline{\textit{Bounds in Lemma \ref{lemma:eta-near-bds}}:} We now prove \eqref{eqn:near-op-bd} and \eqref{eqn:MN-bds-main}. The bound on $\widehat{\mathcal{D} ^\delta}-\widehat{\mathcal{D} }$ in \eqref{eqn:near-op-bd} follows the same argument as in the proof of Proposition 6.13 of \cite{fefferman2017topologically}. To prove the bounds for $\widehat{L^\delta}(\mu)$, we prove \eqref{eqn:near-op-bd} and \eqref{eqn:lip-L-delta} for $\widehat{L^\delta_1}(\mu)$ as a representative example. We begin by proving the bound \eqref{eqn:near-op-bd}, which follows from the estimates below: for $i=1,2$,
\begin{align*}
    \text{(i)}\quad &\norm{\frac{1}{\delta} \Gamma_3(\delta\xi,k_\parallel,\delta) \widehat{\beta}_\text{near}^B(\xi)}_{L^2_\xi(\mathbb{R})} \lesssim \delta^{-1} \norm{\delta^2|\xi|^2 \widehat{\beta}_\text{near}^B(\xi)}_{L^2_\xi(\mathbb{R})} \\
    &\lesssim \delta^{\tau-1}\norm{\delta|\xi| \widehat{\beta}_\text{near}^B(\xi)}_{L^2_\xi(\mathbb{R})} \lesssim \delta^\tau \norm{\widehat{\bm{\beta}}_\text{near}}_{L^{2,1}_\xi(\mathbb{R})},\\
    \text{(ii)}\quad & \norm{\big(e^{i\sqrt{3}\delta \xi} - 1\big)\chi(|\xi|\leq \delta^{\tau-1}) \widehat{\ff_i \beta_\text{near}^B} (\xi)}_{L^2_\xi(\mathbb{R})} \lesssim \norm{\delta|\xi| \widehat{\ff_i \beta_\text{near}^B} (\xi)}_{L^2_\xi(\mathbb{R})} \\
    &\lesssim \delta \norm{\widehat{\ff_i \beta_\text{near}^B} (\xi)}_{L^{2,1}_\xi(\mathbb{R})} \lesssim \delta \norm{\widehat{\bm{\beta}}_\text{near}}_{L^{2,1}_\xi(\mathbb{R})},\\
    \text{(iii)}\quad & \norm{\sum_{|m|\geq 1} \widehat{\ff_i \beta_\text{near}^B} \left(\frac{\delta \xi + \frac{4\pi}{\sqrt{3}}m}{\delta}\right)}_{L^2_\xi(\mathbb{R})} = \delta^{-\frac{1}{2}} \norm{\sum_{|m|\geq 1} \widehat{\ff_i \beta_\text{near}^B} \left(\frac{k + \frac{4\pi}{\sqrt{3}}m}{\delta}\right)}_{\LtwoBrill} \\
    &\lesssim \delta \|\widehat{\ff_i \beta_\text{near}^B}\|_{L^{2,1}_\xi(\mathbb{R})} \lesssim \delta \|\widehat{\bm{\beta}}_\text{near}\|_{L^{2,1}_\xi(\mathbb{R})},\\
    \text{(iv)}\quad & \norm{\delta \chi(|\xi|\leq \delta^{\tau-1}) \widetilde{F}_1[\delta \xi;\mathcal{A}\bm{\widetilde{\eta}}_\text{near}(\mu,\delta)]}_{L^2_\xi(\mathbb{R})} \lesssim \delta^{\frac{1}{2}} \norm{ \widetilde{F}_1[k;\mathcal{A}\bm{\widetilde{\eta}}_\text{near}(\mu,\delta)]}_{\LtwoBrill}\\
    &\lesssim \delta^{\frac{1}{2}} \delta^{1-\tau} \norm{\bm{\widetilde{\eta}}_\text{near}}_{\LtwoBrill} \lesssim \delta^{\frac{1}{2}} \delta^{1-\tau} \norm{\bm{\widetilde{\eta}}_\text{near}}_{\LtwoBrill} \lesssim \delta^{1-\tau} \norm{\bm{\widehat{\beta}}_\text{near}}_{L^2_\xi(\mathbb{R})},\\
    \text{(v)}\quad &\norm{\delta \chi(|\xi|\leq \delta^{\tau-1}) \sum_{m \in \mathbb{Z}} \left[ \ff_i\left(\frac{\sqrt{3}}{2}\delta m\right) - \ff_i\left(\frac{\sqrt{3}}{2}\delta (m+2)\right) \right] \beta_\text{near}^B \left(\frac{\sqrt{3}}{2}\delta (m+2)\right) e^{-i \delta \xi\frac{\sqrt{3}}{2}m}}_{L^2_\xi(\mathbb{R})}\\
    & \lesssim \delta^2 \norm{\chi(|\xi|\leq \delta^{\tau-1}) \sum_{m \in \mathbb{Z}} \beta_\text{near}^B \left(\frac{\sqrt{3}}{2}\delta (m+2)\right) e^{-i \delta \xi\frac{\sqrt{3}}{2}m}}_{L^2_\xi(\mathbb{R})}\\
    &\lesssim \delta \norm{\sum_{m \in \mathbb{Z}} \widehat{\beta}_\text{near}^B\left(\frac{\delta \xi + \frac{4\pi}{\sqrt{3}}m}{\delta}\right)}_{L^2_\xi(\mathbb{R})} \lesssim \delta \norm{\widehat{\beta}_\text{near}^B(\xi)}_{L^2_\xi(\mathbb{R})} + \delta \norm{\sum_{|m|\geq 1} \widehat{\beta}_\text{near}^B\left(\frac{\delta \xi + \frac{4\pi}{\sqrt{3}}m}{\delta}\right)}_{L^2_\xi(\mathbb{R})} \\
    &\lesssim \delta \norm{\widehat{\beta}_\text{near}^B}_{L^2_\xi(\mathbb{R})} + \delta^2 \|\widehat{\beta}_\text{near}^B\|_{L^{2,1}_\xi(\mathbb{R})} \lesssim \delta \|\widehat{\bm{\beta}}_\text{near}\|_{L^{2,1}_\xi(\mathbb{R})}.
\end{align*}
Inequality (i) uses that $\delta|\xi|\le \delta^\tau$ in the near-momentum regime, while (ii) follows from the uniform boundedness of $\ff_i$ and $\ff_i'$ for $i=1,2$. Inequality (iii) is a direct application of \eqref{eqn:app-sum-M-bd} by taking $s=1$, and (iv) comes from \eqref{eqn:eta-idft-bd-2}. Finally, the last line of inequality (v) is a consequence of (iii). Using bounds (i)-(v), we complete the proof of \eqref{eqn:near-op-bd} for $\widehat{L_1^\delta}(\mu)$. 

The Lipschitz bound for $\widehat{L_1^\delta}(\mu)$ follows from the estimate below, since this term is the only $\mu$-dependent contribution in $\widehat{L_1^\delta}(\mu)$:
\begin{align*}
    &\norm{\widehat{L_1^\delta}(\mu_1) \widehat{\bm{\eta}}_\text{near} - \widehat{L_1^\delta}(\mu_2) \widehat{\bm{\eta}}_\text{near}}_{L^2_\xi(\mathbb{R})} = \norm{\delta \chi(|\xi|\leq \delta^{\tau-1}) \Big(\widetilde{F}_1[\delta \xi;\mathcal{A}\bm{\widetilde{\eta}}_\text{near}(\mu_1,\delta)] -  \widetilde{F}_1[\delta \xi;\mathcal{A}\bm{\widetilde{\eta}}_\text{near}(\mu_2,\delta)]\Big)}\\
    \lesssim \: & \delta^{\frac{1}{2}} \norm{\mathcal{A}\bm{\widetilde{\eta}}_\text{near}(k;\mu_2,\delta) - \mathcal{A}\bm{\widetilde{\eta}}_\text{near}(k;\mu_2,\delta)}_{\LtwoBrill} \lesssim \delta^{\frac{1}{2}} \delta^{1-\tau} \: |\mu_1 - \mu_2| \lesssim \delta^{1-\tau} \: |\mu_1 - \mu_2|,
\end{align*}
where the last line follows from (iv) and \eqref{eqn:A-map-lip-mu}.

Lastly, we prove the bounds \eqref{eqn:MN-bds-main} for the source terms. We begin with \eqref{eqn:M-limit} and observe that
\begin{align*}
    &\left\langle \bm{\widehat{\Psi_0}}(\xi), \widehat{\bm{\mathcal{M}}}(\xi; \delta) \right\rangle_{L^2_\xi(\mathbb{R})} = \left\langle \widehat{\Psi_0^A}(\xi), \widehat{\mathcal{M}}_1(\xi; \delta) \right\rangle_{L^2_\xi(\mathbb{R})} + \left\langle \widehat{\Psi_0^B}(\xi), \widehat{\mathcal{M}}_2(\xi; \delta) \right\rangle_{L^2_\xi(\mathbb{R})}\\
   =\: & \frac{2}{\sqrt{3}} \left\langle \widehat{\Psi_0^A}(\xi), \chi(|\xi|\leq \delta^{\tau-1}) \widehat{\Psi_0^A}(\xi)\right\rangle_{L^2_\xi(\mathbb{R})} + \frac{2}{\sqrt{3}} \left\langle \widehat{\Psi_0^B}(\xi), \chi(|\xi|\leq \delta^{\tau-1}) \widehat{\Psi_0^B} (\xi)\right\rangle_{L^2_\xi(\mathbb{R})}\\
   & +\frac{2}{\sqrt{3}} \left \langle\widehat{\Psi_0^A}(\xi),  \chi(|\xi|\leq \delta^{\tau-1}) \sum_{|m|\geq 1} \widehat{\Psi_0^A}\left(\frac{\delta \xi + \frac{4\pi}{\sqrt{3}}m}{\delta}\right)\right\rangle_{L^2_\xi(\mathbb{R})} \\
   & + \frac{2}{\sqrt{3}} \left \langle\widehat{\Psi_0^B}(\xi),  \chi(|\xi|\leq \delta^{\tau-1}) \sum_{|m|\geq 1} \widehat{\Psi_0^B}\left(\frac{\delta \xi + \frac{4\pi}{\sqrt{3}}m}{\delta}\right)\right\rangle_{L^2_\xi(\mathbb{R})}.
\end{align*}
Using \eqref{eqn:app-sum-M-bd}, we know that the last two lines are arbitrarily small and vanishes as $\delta \rightarrow 0$. Therefore, we obtain the desired limit \eqref{eqn:M-limit} with
\begin{align*}
    &\lim_{\delta \rightarrow 0}\left\langle \bm{\widehat{\Psi_0}}(\xi), \widehat{\bm{\mathcal{M}}}(\xi; \delta) \right\rangle_{L^2_\xi(\mathbb{R})} = \frac{2}{\sqrt{3}} \left( \norm{\Psi_0^A}^2_{L^2(\mathbb{R})} + \norm{\Psi_0^A}^2_{L^2(\mathbb{R})} \right) = 1.
\end{align*}
Similarly, to prove the limit \eqref{eqn:N-limit} for $\widehat{\bm{\mathcal{N}}}(\xi; \mu, \delta)$, we observe that
\begin{align*}
    &\left\langle \bm{\widehat{\Psi_0}}(\xi), \widehat{\bm{\mathcal{N}}}(\xi; \mu, \delta) \right\rangle_{L^2_\xi(\mathbb{R})} = \left\langle \widehat{\Psi_0^A}(\xi), \widehat{\mathcal{N}}_1(\xi; \mu,\delta) \right\rangle_{L^2_\xi(\mathbb{R})} + \left\langle \widehat{\Psi_0^B}(\xi), \widehat{\mathcal{N}}_2(\xi; \mu,\delta) \right\rangle_{L^2_\xi(\mathbb{R})}\\
    =\: & -\left\langle \widehat{\Psi_0^A}(\xi), \chi(|\xi|\leq \delta^{\tau-1}) \widetilde{I}_{1,\text{ind}}[\delta \xi;\bm \Psi_0, \delta]\right\rangle_{L^2_\xi(\mathbb{R})} + \left \langle \widehat{\Psi_0^A}(\xi), \delta \chi(|\xi|\leq \delta^{\tau-1}) \widetilde{F}_1[\delta \xi;\mathcal{B}(\mu,\delta)]\right \rangle_{L^2_\xi(\mathbb{R})}\\
    &-\left\langle \widehat{\Psi_0^B}(\xi), \chi(|\xi|\leq \delta^{\tau-1}) \widetilde{I}_{2,\text{ind}}[\delta \xi; \bm \Psi_0, \delta]\right\rangle_{L^2_\xi(\mathbb{R})} + \left \langle \widehat{\Psi_0^B}(\xi), \delta \chi(|\xi|\leq \delta^{\tau-1}) \widetilde{F}_2[\delta \xi;\mathcal{B}(\mu,\delta)]\right \rangle_{L^2_\xi(\mathbb{R})}.
\end{align*}
Using a similar bound shown in (iv), the inner products related to $\widetilde{F}_i$ are small and vanish as $\delta \rightarrow 0$. Therefore, we have
\begin{align*}
    \lim_{\delta \rightarrow 0} \left\langle \bm{\widehat{\Psi_0}}(\xi), \widehat{\bm{\mathcal{N}}}(\xi; \mu, \delta) \right\rangle_{L^2_\xi(\mathbb{R})} =& -\left\langle \widehat{\Psi_0^A}(\xi), \chi(|\xi|\leq \delta^{\tau-1}) \widetilde{I}_{1,\text{ind}}[\delta \xi;\bm \Psi_0, \delta]\right\rangle_{L^2_\xi(\mathbb{R})}\\
    & -\left\langle \widehat{\Psi_0^B}(\xi), \chi(|\xi|\leq \delta^{\tau-1}) \widetilde{I}_{2,\text{ind}}[\delta \xi;\bm \Psi_0, \delta]\right\rangle_{L^2_\xi(\mathbb{R})}.
\end{align*}
To finish our proof of \eqref{eqn:N-limit}, we shall show that
\begin{subequations}\label{eqn:app-N-lim}
    \begin{align}
    \lim_{\delta \rightarrow 0} \left\langle \widehat{\Psi_0^A}(\xi), \chi(|\xi|\leq \delta^{\tau-1}) \widetilde{I}_{1,\text{ind}}[\delta \xi;\bm \Psi_0, \delta]\right\rangle_{L^2_\xi(\mathbb{R})} &= -\frac{2}{\sqrt{3}} \langle \Psi_0^A, \mathcal{R}_2^A\rangle_{L^2(\mathbb{R})}, \label{eqn:app-N-lim-1}\\
    \lim_{\delta \rightarrow 0} \left\langle \widehat{\Psi_0^B}(\xi), \chi(|\xi|\leq \delta^{\tau-1}) \widetilde{I}_{2,\text{ind}}[\delta \xi;\bm \Psi_0, \delta]\right\rangle_{L^2_\xi(\mathbb{R})} &= -\frac{2}{\sqrt{3}} \langle \Psi_0^B, \mathcal{R}_2^B\rangle_{L^2(\mathbb{R})},\label{eqn:app-N-lim-2}
\end{align}
\end{subequations}
where $\mathcal{R}_2^A, \mathcal{R}_2^B$ are shown in \eqref{eqn:app-rem-delta-2}. The desired limit \eqref{eqn:N-limit} comes directly from \eqref{eqn:app-N-lim} and \eqref{eqn:E2}.

We prove \eqref{eqn:app-N-lim-1} as a representative case; the proof of \eqref{eqn:app-N-lim-2} is analogous. We notice that the Fourier modes with $|m|\geq 1$ in $\widetilde{I}_{1,\text{ind}}[\delta \xi;\bm \Psi_0, \delta]$ are also small and vanish as $\delta \rightarrow 0$. As $\delta \rightarrow 0$, the only two terms in $\widetilde{I}_{1,\mathrm{ind}}[\delta \xi;\bm \Psi_0, \delta]$ in \eqref{eqn:I1-full} that remain nonzero are the following:
\begin{align*}
    \text{(a)}\quad & \frac{2}{\sqrt{3} \delta^2} \Gamma_3[\delta \xi,k_\parallel,\delta] \: \widehat{\Psi_0^B}(\xi) = \frac{2}{\sqrt{3}}\left[\frac{9}{8}k_\parallel^2+\frac{3\sqrt3}{4}k_\parallel\xi+\Big(-\frac{3}{2} e^{i\frac{5}{3}\pi} + \frac{3}{8}\Big)\xi^2\right] + O(\delta),\\
    \text{(b)}\quad & e^{i \frac{\pi}{3}} e^{i \frac{4\pi}{3}} t_1 \sum_{m \in \mathbb{Z}} \Bigg[ \Psi_0^B\left(\frac{\sqrt{3}}{2} \delta (m+2) \right) - \Psi_0^B\left(\frac{\sqrt{3}}{2} \delta m \right)\Bigg] \ff_2\left(\frac{\sqrt{3}}{2} \delta m\right)e^{-ik\frac{\sqrt{3}}{2}m}\\
    =&  e^{i \frac{5\pi}{3}} t_1 \sum_{m \in \mathbb{Z}} \Bigg[ (\sqrt{3} \delta )\partial_{X_1}\Psi_0^B\left(\frac{\sqrt{3}}{2} \delta m\right) + O(\delta^2)\Bigg] \ff_2\left(\frac{\sqrt{3}}{2} \delta m\right)e^{-ik\frac{\sqrt{3}}{2}m}\\
    =& 2 e^{i \frac{5\pi}{3}} t_1 \sum_{m \in \mathbb{Z}} \widehat{\ff_2\partial_{X_1}\Psi_0^B}\left(\frac{\delta \xi + \frac{4\pi}{\sqrt{3}}m}{\delta}\right) + O(\delta).
\end{align*}
Using (a)-(b) and \eqref{eqn:app-rem-delta-2-1}, we obtain the desired limit \eqref{eqn:app-N-lim-1}
\begin{align*}
    &\lim_{\delta \rightarrow 0} \left\langle \widehat{\Psi_0^A}(\xi), \chi(|\xi|\leq \delta^{\tau-1}) \widetilde{I}_{1,\text{ind}}[\delta \xi; \bm \Psi_0, \delta]\right\rangle_{L^2_\xi(\mathbb{R})} \\
    =& \frac{2}{\sqrt{3}} \left\langle \widehat{\Psi_0^A}, \left[\frac{1}{8}k_\parallel^2+\frac{\sqrt3}{4}k_\parallel\xi+\Big(-\frac{3}{2} e^{i\frac{5}{3}\pi} + \frac{3}{8}\Big)\xi^2\right] \widehat{\Psi_0^B}\right\rangle_{L^2_\xi(\mathbb{R})}+ \frac{2}{\sqrt{3}} \left\langle \widehat{\Psi_0^A}, e^{i \frac{5\pi}{3}} t_1  \widehat{\ff_2\partial_{X_1}\Psi_0^B}(\xi)\right\rangle_{L^2_\xi(\mathbb{R})}\\
    =& \frac{2}{\sqrt{3}} \left\langle \Psi_0^A, \left[\frac{1}{8}k_\parallel^2 \Psi_0^B -\frac{\sqrt3i}{4}k_\parallel\partial_{X_1}\Psi_0^B +\Big(\frac{3}{2} e^{i\frac{5}{3}\pi} - \frac{3}{8}\Big) \partial_{X_1}^2 \Psi_0^B\right] \right\rangle_{L^2(\mathbb{R})} \\
    &+ \frac{2}{\sqrt{3}} \left\langle \Psi_0^A, \sqrt{3} e^{i \frac{5\pi}{3}} t_1 \ff_2 \partial_{X_1} \Psi_0^B \right\rangle_{L^2(\mathbb{R})} = -\frac{2}{\sqrt{3}} \langle \Psi_0^A, \mathcal{R}_2^A\rangle_{L^2(\mathbb{R})},
\end{align*}
Thus, we complete our proof of bounds in Lemma \ref{lemma:eta-near-bds}.



\end{document}